\theoremstyle{plain}
\newtheorem{theorem}{Theorem}[section]
\newtheorem{lemma}[theorem]{Lemma}
\newtheorem{corollary}[theorem]{Corollary} 
\newtheorem{proposition}[theorem]{Proposition}
\theoremstyle{definition}
\newtheorem{definition}[theorem]{Definition}
\newcounter{cntLemmaNumber}
\newcounter{cntTheoremNumber}
\newcommand{\Dcal}{\mathcal{D}}
\newcommand{\Ical}{\mathcal{I}}
\newcommand{\Tcal}{\mathcal{T}}
\newcommand{\Qcal}{\mathcal{Q}}
\newcommand{\Odd}{\mathrm{odd}}
\newcommand{\Even}{\mathrm{even}}
\newcommand{\OL}[1]{\overline{#1}}
\newcommand{\dist}{\mathrm{dist}}
\newcommand{\Op}{\alpha} %
\newcommand{\Nop}{\beta} %
\newcommand{\Vardist}{\textsf{Dist}}
\newcommand{\Varparent}{\textsf{Par}}
\newcommand{\Varnontree}{\textsf{Nlist}}
\newcommand{\Varcontraction}{\textsf{C}}
\newcommand{\Varchecklist}{\textsf{R}}
\newcommand{\Varort}{\textsf{a}}
\newcommand{\Varvol}{\textsf{Vol}}
\newcommand{\Outset}{\ensuremath{\tilde{E}}}
\newcommand{\Minset}{\ensuremath{\tilde{E}^{\min}}}
\newcommand{\Ccset}{\ensuremath{E^{\ast}}}
\newcommand{\Minedge}{\ensuremath{e^{\ast}}}
\newcommand{\level}{\ensuremath{\mathsf{level}}}
\newcommand{\vlevel}{\ensuremath{\mathsf{vlevel}}}
\newcommand{\hlevel}{\ensuremath{\mathsf{hlevel}}}
\newcommand{\parent}{\ensuremath{\mathsf{p}}}
\newcommand{\parset}{\ensuremath{\mathsf{P}}}
\newcommand{\paredgeset}{\ensuremath{\mathsf{EP}}}
\newcommand{\freeroot}{\ensuremath{\mathsf{r}}}
\newcommand{\Augt}{\ensuremath{\mathsf{Aug}}}
\newcommand{\Idle}{\ensuremath{\textsf{idle}}}
\newcommand{\Dead}{\ensuremath{\textsf{dead}}}
\newcommand{\Active}{\ensuremath{\textsf{active}}}
\newcommand{\Mmax}{\mu}
\newcommand{\InvMmax}{\OL{\mu}}
\newcommand{\ot}{\leftarrow}
\DeclareMathOperator*{\argmin}{arg\,min}
\DeclareMathOperator*{\argmax}{arg\,max}
\title{Forgetting Alternation and Blossoms:\\A New Framework for Fast Matching Augmentation and\\Its Applications to Sequential/Distributed/Streaming Computation}
\author{Taisuke Izumi\thanks{Osaka University. Emails: izumi.taisuke.ist@osaka-u.ac.jp, \{n-kitamura, yutaro.yamaguchi\}@ist.osaka-u.ac.jp}
\and Naoki Kitamura$^*$
\and Yutaro Yamaguchi$^*$
}
\date{}
\begin{document}

\maketitle
\thispagestyle{empty}

\begin{abstract}
Finding a maximum cardinality matching in a graph is one of the most fundamental problems in the field of combinatorial optimization and graph algorithms.
An algorithm proposed by Micali and Vazirani (1980) is well-known to solve the problem in $O(m\sqrt{n})$ time, where $n$ and $m$ are the numbers of vertices and of edges, respectively, which is still one of the fastest algorithms in general.
While the MV algorithm itself is not so complicated and is indeed convincing, its correctness proof is extremely challenging, which can be seen from the history: after the first algorithm paper had appeared in 1980, Vazirani has made several attempts to give a complete proof for more than 40 years.
It seems, roughly speaking, caused by the nice but highly complex structure of the shortest alternating paths in general graphs that are deeply intertwined with the so-called (nested) blossoms.

In this paper, we propose a new structure theorem on the shortest alternating paths in general graphs without taking into the details of blossoms.
The high-level idea is to forget the alternation (of matching and non-matching edges) as early as possible.
A key ingredient is a notion of alternating base trees (ABTs) introduced and utilized by the recent work of Izumi, Kitamura, and Yamaguchi (2024) to develop a nearly linear-time algorithm under one of the standard distributed computation models.
Our structure theorem refines the properties of ABTs exploited in their algorithm, and we also give simpler alternative proofs for them.
Based on our structure theorem, we propose a new algorithm under the standard sequential computation model, which is slightly slower but more implementable and much easier to confirm its correctness than the MV algorithm.

As applications of our framework, we also present new $(1 - \epsilon)$-approximation maximum matching algorithms for general graphs in the distributed and semi-streaming settings. In the CONGEST model, which is one of the standard distributed computation models, the proposed deterministic algorithm runs in $\tilde{O}(\epsilon^{-4})$ rounds. The semi-streaming algorithm uses $O(\epsilon^{-4})$ passes and $O(n \log n)$-bit memory. Both algorithms are deterministic, and substantially improve the best known upper bounds.
The algorithms are built on the top of a novel framework of amplifying approximation factors of given matchings, which is of independent interest.

\end{abstract}

\setcounter{page}{0}
\newpage
\section{Introduction}

\subsection{Background}

\emph{Matching} is often recognized as one of the central topics in graph theory and algorithms, and particularly finding a maximum cardinality matching (MCM) is a very fundamental problem in that context.
In this paper, we consider the MCM problem for general graphs. %
While the problem is known to be polynomial-time solvable, we still lack the clear understanding and exposition of efficient MCM algorithms.
Looking back to the history of sequential algorithms, the seminal blossom algorithm by Edmonds~\cite{Edmonds} opened the door to polynomial-time algorithms for this problem. 
After a few follow-up improvements~\cite{balinski1967labelling, gabow1976efficient, even1975n2}, the algorithm by Micali and Vazliani (MV algorithm) has been proposed in 1980~\cite{MV1980}, which runs in $O(m \sqrt{n})$ time (where $m$ and $n$ are the number of edges and vertices of the input graph).
This still achieves one of the best provable running time bound of deterministic computation of MCMs in general graphs, while a slightly better bound (by at most $\log n$ factor) for dense graphs is known~\cite{goldberg2004maximum}.
However, the correctness proof of this algorithm is also known to be very complex: %
Since the original paper~\cite{MV1980} lacks its full correctness proof, several follow-up arguments of ``dispelling'' its complication are attempted by one of the authors~\cite{vazirani1994theory,Vazirani12,vazirani2024theory}, and the state-of-the-art complete proof results in the paper of almost 40 pages~\cite{vazirani2024theory}. 
There are a few attempts of exploring simpler algorithms~\cite{Blum1990, GT91, gabow2017weighted}, but some of them seems incomplete (cf.~\cite{Vazirani12}) and some is via the weighted problem, which is interesting but somewhat hides the essence of tractable feature of MCM itself.

We briefly review the approach of the MV algorithm, which is the most purely graph-theoretic algorithm attaining $O(m\sqrt{n})$ time and related to our result.
The high-level strategy is to find a maximal set of disjoint \emph{shortest augmenting paths}, resorting to the seminal analysis by Hopcroft and Karp~\cite{HK73}.
More precisely, they prove that $O(\sqrt{n})$ iterations of matching augmentation by maximal sets of disjoint shortest augmenting paths finally reaches an MCM.
The MV algorithm computes such a set in $O(m)$ time for general graphs, which is the most complicated part.

The difficulty lying behind this complication is threefold:
First, shortest alternating paths in general (i.e., non-bipartite) graphs does not satisfy the property of \emph{BFS-honesty}: given a shortest alternating path, its subpath is not necessarily the shortest among the alternating paths of the same parity.
This obstacle prevents us from utilizing the standard tree-growth approach like as BFS, although it works 
for bipartite matching case.
If we do not need a shortest augmenting path, this obstacle is addressed by the approach based on \emph{blossoms}, which is originally introduced by Edmonds~\cite{Edmonds}.
In that approach, the algorithm makes a BFS-like forest rooted by free vertices (i.e., vertices without incident matching edges), and if the process encounters a blossom, the algorithm shrinks its odd cycle part into a single vertex and recurses.
This operation preserves the existence of augmenting paths, and thus finally the algorithm finds it in the shrunk graph (if any).
That path might contain a vertex associated with a shrunk odd cycle, and then the corresponding augmenting path in the original instance is obtained by expanding such vertices one by one.

The fundamental idea of Edmonds' algorithm is clean and easy to follow, but there is a critical shortcoming that it does not necessarily provide a \emph{shortest} augmenting path, mainly due to the existence of \emph{nested} blossoms.
This causes the second difficulty, because it is crucial to find shortest augmenting paths to apply the analysis by Hopcroft and Karp~\cite{HK73}.
The operation of shrinking an odd cycle obscures the structure of shortest augmenting paths, and thus the MV algorithm handles blossoms carefully by introducing a notion of blossom size (called the \emph{tenacity}).
Speaking very informally, the MV algorithm tries to manage the two (shortest) alternating paths terminating with a matching edge and a non-matching edge, which respectively referred to as an $\Even$-alternating path and an $\Odd$-alternating path.
The notion of blossom size is utilized in the MV algorithm for identifying a \emph{smallest} blossom containing each vertex $t$, which also deduces the shortest $\Even$- and $\Odd$-alternating paths to $t$: intuitively, given a vertex $t$ and a blossom $B$ containing $t$ strictly in its odd cycle, $B$ contains both $\Even$- and $\Odd$-alternating paths. Under some appropriate definitions of blossoms and their sizes, those paths are guaranteed to be the shortest if $B$ is the smallest.

However, as mentioned in~\cite{vazirani2024theory}, this approach faces a ``chicken-and-egg'' problem, because we want to identify smallest blossoms to identify the shortest alternating paths, but we need the information of the length of shortest alternating paths to each vertex to identify the smallest size blossoms.
Furthermore, the situation is more complicated when trying to take a maximal set of shortest augmenting paths.
Consider the situation that a shortest augmenting path $P$ is found.
To establish a maximal set of disjoint shortest augmenting paths, we further need to find another shortest augmenting path in the graph after the removal of $P$.
However, if a vertex $v$ (not in $P$) is not contained in any shortest augmenting paths after the removal of $P$, $v$ must be also removed for attaining $O(m)$ running time (leaving it will waste the search cost).
In other words, the algorithm needs to know quickly which vertices still have a chance of organizing a shortest augmenting path.
To overcome these issues, the correctness proof of the MV algorithm is required to introduce a bunch of extra notions and needs a heavy inductive argument~\cite{vazirani2024theory}.

\subsection{Our Results}
The key contribution of this paper is to present a new simple structure theorem for a maximal set of disjoint shortest augmenting paths, based on the approach recently proposed by Izumi, Kitamura, and Yamaguchi~\cite{IKY24}.
Different from the traditional approaches stated above (Edmonds and MV), it does not maintain an \emph{alternating forest} (with blossoms), which is a collection of trees rooted by a free vertex such that the path from roots to each vertex is a (shortest) alternating path.
Instead, it maintains yet another tree structure called \emph{alternating base tree (ABT)}\footnote{To be precise, this should be called an alternating base forest in an analogous way to an alternating forest, but throughout this paper we consider a single unified tree containing all components of such a forest by adding a super root $f$ to the original graph. Hence we use terminology tree instead of forest.}.
Intuitively, an alternating base tree is a collection of the edges in the BFS-like tree growth part of Edmonds' algorithm.
In other words, an ABT is a tree such that the edge from a vertex $u$ to its parent is the last edge of a shortest alternating path from the free root vertex.
The concept of ABTs is originally presented by Kitamura and Izumi~\cite{KI22}, and its structural property is extended to find a \emph{single} shortest 
augmenting path~\cite{IKY24}.
In this paper, we further generalize it so that one can utilize ABTs for computing a maximal set of disjoint shortest augmenting paths by extending the tree structure into the DAGs, as well as a shorter and concise proof of the structure theorem in~\cite{IKY24}.
For explaining its impact, we informally state our extended structure theorem. 

\begin{theorem}[informal]  \label{thm:structuralTheorem}
Let $U$ be the set of all free nodes in the current matching system $(G, M)$, where $G = (V(G), E(G))$ is the input graph and $M \subseteq E(G)$ is the current matching. 
Define an \emph{alternating base DAG (ABD)} $H$ as the DAG such that the vertex set is $V(G)$ and an edge from $u$ to $v$ is in $H$ if and only if $v$ is the immediate predecessor of $u$ in a shortest $U$--$u$ alternating path.\footnote{For the reader familiar with the MV algorithm, this $H$ can be explained as the DAG consisting of the props (the edges giving the minlevel of one of the endpoints), in which the procedure DDFS (double depth first search) runs in the base case (before shrinking any blossoms). The important difference is that, in the MV algorithm, after shrinking some blossom, the DAG in which DDFS runs should be updated by changing the heads of some edges to skip the shrunk blossom later, but we do not need to care about such things in this paper.}
Let $\Ccset \subseteq E(G)$ be some appropriate set of edges not contained in $H$ (which is formally defined later).\footnote{For the reader familiar with the MV algorithm, again, this $E^*$ can be explained as the set of bridges for which DDFS results in a shortest augmenting path. We give a different characterization for such edges in terms of ABTs.}
A \emph{double path} is two vertex-disjoint paths to $U$ staring from the two endpoints of an edge in $\Ccset$.
Given a maximal set $\Dcal$ of vertex-disjoint double paths in $H$, it can be efficiently transformed into a maximal set of vertex-disjoint shortest augmenting paths in $(G, M)$ by a very simple algorithm.
\end{theorem}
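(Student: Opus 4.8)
The plan is to establish a correspondence between double paths in the ABD $H$ and shortest augmenting paths in $(G,M)$, and then to promote this bijective-type correspondence to a statement about maximal vertex-disjoint families. First I would make precise the ``very simple algorithm'' promised in the statement: given an edge $e = uv \in \Ccset$ together with its two vertex-disjoint paths $P_u, P_v$ to $U$ in $H$, I would walk each $P_u, P_v$ from its $U$-endpoint back toward $u$ (resp.\ $v$), reading off the edges of $H$ in order, then glue them through $e$. The claim to verify is that each $H$-edge traversed, which by definition is the last edge of some shortest $U$--$w$ alternating path, can be locally ``alternation-corrected'' so that the concatenation is a genuine $M$-alternating path, and that the parities at $u$ and at $v$ are compatible with inserting $e$ (this is precisely what the characterization of $\Ccset$ is designed to guarantee). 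I would lean on the structural properties of ABTs proved earlier in the paper to argue that the lengths add up: $|P_u| + |P_v| + 1$ equals the length of a shortest augmenting path, so the output is not merely augmenting but \emph{shortest} augmenting.

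Next I would prove the two directions needed for maximality. For the forward direction: if $\Dcal$ is a maximal set of vertex-disjoint double paths, I must show the resulting set $\Acal$ of augmenting paths is maximal, i.e.\ that the graph $(G, M')$ obtained after augmenting along all paths in $\Acal$ has no shortest augmenting path of the original length left. Suppose toward contradiction that $Q$ is such a leftover path; I would project $Q$ back into $H$ — using that every vertex on a shortest augmenting path lies on a shortest $U$--$w$ alternating path, hence is a vertex of $H$, and that $Q$ must cross some edge of $\Ccset$ — to extract a double path disjoint from all those already in $\Dcal$, contradicting maximality of $\Dcal$. For the reverse direction I would check that vertex-disjointness is preserved by the transformation: distinct double paths in $H$, being vertex-disjoint, yield vertex-disjoint augmenting paths, because the local alternation-correction does not introduce new vertices (it only reinterprets which edges are matching/non-matching along already-used vertices).

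The main obstacle I expect is exactly the ``alternation-correction'' step: an edge of $H$ is only promised to be the \emph{final} edge of \emph{some} shortest alternating path, so a naive concatenation of last-edges need not alternate, and this is the very phenomenon (failure of BFS-honesty) that makes the MV algorithm hard. The resolution should be to use the refined ABT properties — in particular the level/parity bookkeeping ($\level$, $\vlevel$, and the even/odd labels) — to show that walking up the ABD from a vertex $w$ along parent edges, while respecting the recorded parity at each step, always produces an honest alternating path to $U$ of the recorded length; this is where the bulk of the work, and the appeal to the earlier structure theorem, will concentrate. A secondary subtlety is ensuring that ``vertex-disjoint double paths in $H$'' translate to augmenting paths that are disjoint \emph{and} remain shortest simultaneously, i.e.\ that augmenting along one does not shorten or destroy another; I would handle this with the standard Hopcroft--Karp-style observation that disjoint shortest augmenting paths can be applied in parallel, adapted to the ABD setting.
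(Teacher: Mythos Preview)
Your proposal has a genuine gap rooted in a misconception about the transformation from double paths to augmenting paths. You assume the transformation ``does not introduce new vertices'' and merely reinterprets edges along the double path. This is false. The paper's transformation (\textsc{DoubleToAug}) works as follows: given a double path $(P,Q,y,z)$, it builds an ABT $T$ in which $P$ and $Q$ are tree paths, so that $y$ and $z$ lie in distinct subtrees $T(u)$ and $T(u')$ of the forest $F_T$; it then invokes the recursive \textsc{PathConstruction} to produce shortest $\rho(\{y,z\})$-alternating paths $Y$ (from $u$ to $y$) and $Z$ (from $u'$ to $z$) and outputs $Y \circ \{y,z\} \circ \overline{Z}$. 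The crucial point is that $Y$ and $Z$ may use vertices of $T(u)$ and $T(u')$ that are \emph{not} on $P$ or $Q$ at all --- \textsc{PathConstruction} recurses through minimum incoming edges, not along the ABD parent pointers. Walking up the ABD and ``locally alternation-correcting'' is precisely what fails due to non-BFS-honesty; the paper's fix is not local but global (via the IKY theorem and MIEs).

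This misconception breaks both halves of your maximality argument. For disjointness, the paper does not argue that each augmenting path uses only the vertices of its double path; instead it argues that each augmenting path is confined to the pair of subtrees $T(u), T(u')$ determined by its double path, and that distinct double paths in $\Dcal$ occupy distinct such subtrees (since one common ABT $T$ respects all of $\Dcal$). For maximality, your contradiction argument covers only the case where the leftover shortest augmenting path $X$ is vertex-disjoint from every double path in $\Dcal$ (this is Case~1 in the paper). But because $\Augt(D)$ and $D$ need not share vertices, it can happen that $X$ is disjoint from $\Augt(\Dcal)$ yet intersects some $D \in \Dcal$; projecting $X$ back to a double path then yields something that collides with $\Dcal$, and you get no contradiction. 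The paper handles this Case~2 by a separate argument: it rebuilds an ABT $T'$ respecting $\Dcal$ and respecting $X$ ``as much as possible'', locates the first crossable edge of $X$ out of $T'(u_0)$, and shows via Lemma~\ref{lma:crossableIncidentEdge} that the \textsc{DoubleToAug} output for the offending $D$ must in fact pass through a vertex of $X$, contradicting disjointness of $X$ from $\Augt(\Dcal)$. You are missing this entire case.
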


Following the theorem above, we actually provide yet another MCM algorithm running in $O(m \sqrt{n} \log n)$ time.
It is slightly worse than the MV algorithm, but more structured and admits a shorter and self-contained correctness proof.
A key distinguished advantage of our algorithm is higher modularity: With the aid of our structure theorem, the construction of a maximal set of disjoint shortest augmenting paths is decomposed into almost independent two phases.
The first phase is the construction of $H$. 
In this phase, we only need to know the last edges of shortest $U$--$u$ alternating paths to each vertex $u$, which can be easily identified only from the information of the lengths of shortest $\Odd$- and $\Even$-alternating paths from $U$ to each vertex (cf.~Lemma~\ref{lma:fundamental}).
Interestingly, this is only the part where we have to pay attention to alternation of paths.
In the second phase, one can completely forget alternating paths, where the task is reduced to a slight variant of the standard maximal disjoint path construction.
This simplification allows us to start the ``plain'' DDFS in the obtained DAGs without caring path alternation or blossoms (cf.~\cite[Sections 2, 5.3, and 8.3.1]{vazirani2024theory}), which mitigates the complication the MV algorithm faces.

As other applications of our framework, we also provide new $(1 - \epsilon)$-approximation maximum matching algorithms
working in the distributed and semi-streaming settings. We obtain the following theorem:
\begin{theorem} \label{thm:mainApproximate}
There exists:
\begin{itemize}
\item a $(1 - \epsilon)$-approximation maximum matching algorithm for a given graph $G$ which runs in the CONGEST model
with $O(\epsilon^{-4} \mbox{\rm\textsf{MM}}(\epsilon^{-2}n))$ rounds, where $\mbox{\rm\textsf{MM}}(N)$ means the time complexity of computing
a maximal matching in graphs on $N$ vertices;
the algorithm is deterministic except for the part of computing maximal matchings.
\item a deterministic $(1 - \epsilon)$-approximation maximum matching algorithm for a given graph $G$ which runs in the semi-streaming model
with $O(\epsilon^{-4})$ passes and $O(n \log n)$-bit memory. 
\end{itemize}

\end{theorem}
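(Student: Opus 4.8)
The plan is to bootstrap the structure theorem (Theorem~\ref{thm:structuralTheorem}) with a Hopcroft--Karp-style phasing and then give distributed and streaming implementations of a single phase. Recall the classical bound, which is valid for general graphs: if a matching $M$ admits no augmenting path of length at most $2k-1$, then $|M|\ge\bigl(1-\tfrac{1}{k+1}\bigr)|M^{*}|$ for a maximum matching $M^{*}$, and augmenting along a maximal set of vertex-disjoint \emph{shortest} augmenting paths strictly increases the length of the shortest remaining augmenting path~\cite{HK73}. Hence, starting from any matching --- concretely a maximal matching, which is already a $\tfrac12$-approximation with no length-$1$ augmenting path --- and performing $\Theta(\epsilon^{-1})$ phases, each augmenting along a maximal set of vertex-disjoint shortest augmenting paths, yields a $(1-\epsilon)$-approximate matching; throughout these phases the current shortest augmenting-path length is $\ell=O(\epsilon^{-1})$. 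It therefore suffices to implement one phase in the CONGEST and semi-streaming models.

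A phase follows the recipe of Theorem~\ref{thm:structuralTheorem}: (i) compute, for every vertex $v$, the lengths of the shortest $\Odd$- and $\Even$-alternating paths from the free-vertex set $U$ to $v$, truncated at depth $\ell$; by Lemma~\ref{lma:fundamental} these lengths already pin down, edge by edge, the alternating base DAG $H$ and the edge set $\Ccset$; (ii) extract a maximal set $\Dcal$ of vertex-disjoint double paths in $H$; (iii) apply the simple transformation of Theorem~\ref{thm:structuralTheorem} to convert $\Dcal$ into a maximal set of vertex-disjoint shortest augmenting paths, and augment along them. Step~(i) is a bounded-depth alternating BFS: one propagates the level labels layer by layer, which costs $O(\ell)$ synchronous rounds in CONGEST and $O(\ell)$ passes in semi-streaming, and since all labels are $O(\epsilon^{-1})$-bounded they occupy only $O(n\log n)$ bits. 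Step~(iii) is declared cheap by the theorem. The heart of the matter is step~(ii), which, crucially, Theorem~\ref{thm:structuralTheorem} has already stripped of all alternation and blossom structure, leaving a ``plain'' vertex-disjoint paths problem on the DAG $H$ (this is exactly the simplification that makes the two-phase decomposition of~\cite{IKY24} useful here).

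For step~(ii) in CONGEST we reduce the extraction of a maximal set of vertex-disjoint double paths in the layered DAG $H$ to a sequence of maximal-matching computations on a layered auxiliary digraph of size $O(\epsilon^{-2}n)$ (replicating each vertex over the $O(\epsilon^{-1})$ levels, with a further $O(\epsilon^{-1})$ bookkeeping factor for running the two halves of the double search in lock-step): repeatedly taking a maximal matching between consecutive layers and discarding the partial paths that get blocked before reaching $U$ produces a maximal disjoint family after $\mathrm{poly}(\epsilon^{-1})$ such computations. This is deterministic except for the maximal-matching calls, and aggregating over the $\Theta(\epsilon^{-1})$ phases gives the stated $O(\epsilon^{-4}\,\mathsf{MM}(\epsilon^{-2}n))$ round bound. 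In semi-streaming the same layered extraction is performed directly over the stream: each pass advances every surviving partial double path by one layer and commits one greedy (hence deterministic, single-pass) matching layer, so a phase costs $\mathrm{poly}(\epsilon^{-1})$ passes and the whole algorithm $O(\epsilon^{-4})$ passes; the memory remains $O(n\log n)$ bits because at any moment we store only $M$, the current level labels, and $O(n)$ path pointers, while $H$ and $\Ccset$ are reconstructed on the fly from the stored labels as edges pass by.

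I expect step~(ii) to be the main obstacle. The obvious ``find one augmenting path, delete it, repeat'' would cost $\Omega(n)$ rounds or passes, so the entire maximal family must be extracted in a parallel, layer-synchronous manner; making this converge in $\mathrm{poly}(\epsilon^{-1})$ layered iterations --- and, in CONGEST, phrasing each iteration as a maximal-matching instance of size $O(\epsilon^{-2}n)$ so that the sole non-deterministic ingredient is the maximal-matching oracle --- is where the technical work and the precise exponents of $\epsilon$ reside. Everything else (the phasing bound, the alternating BFS of step~(i), and the transformation of step~(iii)) is either classical or handed to us by Theorem~\ref{thm:structuralTheorem}.
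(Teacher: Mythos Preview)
Your plan has a genuine gap at step~(ii), and this is exactly the step the paper identifies as the central obstacle and solves by an entirely different mechanism.

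You assert that ``repeatedly taking a maximal matching between consecutive layers and discarding the partial paths that get blocked before reaching $U$ produces a maximal disjoint family after $\mathrm{poly}(\epsilon^{-1})$ such computations.'' This is not justified, and there is no reason to believe it is true. Even in a plain layered DAG, greedy layer-by-layer matching does not produce a \emph{maximal} set of vertex-disjoint source--sink paths: early local choices can block paths that would otherwise have been available, and after discarding blocked partial paths the residual graph may still admit many disjoint paths that your procedure never recovers. The double-path requirement (two disjoint paths from the endpoints of an edge in $\Ccset$) only compounds this. The paper explicitly says: ``it is not so easy to take a maximal set of disjoint paths in distributed or streaming settings. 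While many known algorithms \dots\ try to find a substantially large number of short augmenting paths not limited to shortest ones for circumventing maximality requirement, our framework helps only for finding shortest augmenting paths, and thus one cannot adopt such an approach.''

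The paper's actual route abandons maximality altogether. The subroutine $\textsc{AugAndHit}_\ell$ returns a (not necessarily maximal) set $\Qcal$ of disjoint shortest augmenting paths together with a small \emph{hitting set} $B$ that meets every shortest augmenting path, with $|B| \le 16|\Qcal|(\ell+1)^2 + |M|/(4(\ell+1))$. This is obtained via a parallel DFS of bounded depth (the Goldberg--Plotkin--Vaidya technique) followed by region merging. The amplification algorithm $\textsc{Amplifier}_\alpha$ then performs a \emph{length-stretch} step: it subdivides the matching edges incident to $B$, which forces every surviving shortest augmenting path to grow by at least two. After $O(\alpha^{-1})$ phases no short augmenting path remains in the modified graph, and a recovery step pulls the matching back to the original graph. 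The $\epsilon^{-4}$ exponent and the $\epsilon^{-2}n$ blowup in graph size both come from balancing the number of phases, the hitting-set bound, and the size growth from subdivision---none of which appears in your outline.

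A secondary issue: your step~(i) is not ``bounded-depth alternating BFS.'' The paper's introduction stresses that shortest alternating paths in general graphs are not BFS-honest; computing $\dist^{\Nop}(v)$ requires the minimum-incoming-edge machinery and the contraction technique of $\textsc{ComputeDist}$, which costs $O(\ell^2)$ CONGEST rounds (Lemma~\ref{lma:subroutines}), not $O(\ell)$.
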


Using the known best algorithm, $\mbox{\rm\textsf{MM}}(N)$ becomes $O(\log N)$ in the randomized case, and $O(\log^{3/2} N)$
in the deterministic case~\cite{GG23}. Our first result in the CONGEST model substantially improves the known best bound of 
$\tilde{O}(\epsilon^{-10} \mbox{\rm\textsf{MM}}(n))$ rounds~\cite{MMSS25}. In the semi-streaming model, there are two perspectives
on pass complexity --- complexity depending on $n$, and independent of $n$. Our main focus is the latter one. Along that line, 
the known best complexity is $O(\epsilon^{-6})$ passes and $O(\epsilon^{-1}n \log n)$-bit space\cite{MS25}, and our second result
also attains the substantial improvement. 

As utilized in many approximation algorithms, our algorithm relies on the seminal analysis by Hopcroft and 
Karp~\cite{HK73}, which states that if a matching $M$ does not admit shortest augmenting paths 
of length less than $\lceil 2\epsilon^{-1} \rceil$, one can conclude that $M$ is a $(1 - \epsilon)$-approximate
matching. Since augmentation by a maximal set of (vertex-)disjoint shortest augmenting paths increases the length of 
shortest augmenting paths at least by two, $O(\epsilon^{-1})$ iterations of the maximal set augmentation 
yields the desired algorithm. It should be emphasized that deducing these algorithms is not straightforward even
utilizing our framework. In more details we face two technical difficulties for obtaining those: complication arising 
from alternating paths in general graphs, and simultaneous construction of multiple augmenting paths with 
guaranteeing maximality. While the first difficulty is substantially mitigated by our framework, the second one 
is still challenging in distributed and semi-streaming settings. For coping with that, we present a new common framework
of amplifying approximation factors of given matchings, which is of independent interest.

\subsection{Related Works}

\paragraph{Sequential Algorithms}
There is another line of designing fast algorithms for the MCM problem with the aid of matrix formulations.
Initiated by Mulmuley, Vazirani, and Vazirani~\cite{mulmuley1987matching}, several algebraic (randomized) algorithms~\cite{rabin1989maximum, mucha2004maximum, harvey2009algebraic} for computing the MCM size and finding 
an MCM in a general graph have been developed. The best known running time bound asymptotically coincides with the matrix multiplication time $O(n^\omega)$, where $\omega$ is known to be at most $2.371339$ \cite{alman2025more}.
Thus, for relatively dense graphs, randomization currently achieves an essential speeding-up for finding an MCM.
This approach is also extended to more general problems such as the so-called exact matching problem and the linear matroid parity problem~\cite{camerini1992random, cheung2014algebraic, sato2025exact}.

For the MCM problem for bipartite graphs, the Hopcroft--Karp algorithm~\cite{HK73} has been one of the fastest combinatorial (deterministic) algorithms for a long while (almost 50 years).
Very recently, on the top of the so-called Laplacian paradigm, $m^{1+o(1)}$-time flow algorithms have been developed~\cite{chen2022maximum, van2023deterministic}.
This immediately leads to an algorithm for finding an MCM in a bipartite graph with the same running time bound, which beats the Hopcroft--Karp bound in general.
More recently, Chuzhoy and Kahnna~\cite{chuzhoy2024faster, chuzhoy2024maximum} also proposed more direct, combinatorial algorithms, which run in $\tilde{O}(m^{1/3} n^{5/3})$ time (deterministic) and $n^{2 + o(1)}$ time (randomized); for relatively dense graphs, these algorithms outperform the Hopcroft--Karp algorithm.
These breakthroughs raise a natural question: for computation of MCMs in general graphs, is there any analogous breakthrough beating the $O(m\sqrt{n})$ bound?

\paragraph{CONGEST Algorithms}
There have been many known results for the MCM computation in distributed systems~\cite{II86,ABI86,FTR06,KMW16,LPP08,BCDELP19,Harris19,CS22,HS23,GG23,AKO18,AK20,BKS18,KI22,IKY24,FMJ22,
MMSS25}. Table~\ref{tab:MM-congest} summarizes the known upper-bound results specific to 
$(1 - \epsilon)$-approximation in the CONGEST model. In that context, the dependence
of running time on the inverse of $\epsilon$ has been the main focus. In early results, polynomial dependence
on $\epsilon^{-1}$ had been achieved only for bipartite graphs, while for general graphs only algorithms
with exponential dependence on $\epsilon^{-1}$ had been known. The existence of an algorithm for general graphs
with running time polynomially dependent on $\epsilon^{-1}$ remained a major open problem, but recently
this question has been positively solved by Fischer, Slobodan, and Uitto~\cite{FMJ22}. More recently, 
its complexity (i.e., the exponent of $\mathrm{poly}(1/\epsilon)$ factor) is improved in\cite{MMSS25}.
Our result can be placed on this line, with further improvement. To the best of our knowledge, all previous
results are based on Edmonds' blossom argument, whereas the approach of our algorithm is essentially 
different from those.

As for results not mentioned in Table~\ref{tab:MM-congest}, we mention the results on lower bounds and the algorithms
with approximation ratio other than $1 - \epsilon$. A lower bound of 
$\Omega(\sqrt{n} + D)$ rounds for $\epsilon = O(1 / \sqrt{n})$ and $D = O(\log n)$ (where $D$ is the diameter 
of the input graph) is presented in \cite{BKS18}. It also deduces the lower bound of $\Omega(1/\epsilon)$ rounds 
with respect to the dependence on $\epsilon$. Algorithms of attaining weaker guarantee on approximation factor (e.g., 
$(1/2 - \epsilon)$ or $(2/3 - \epsilon)$) have been proposed in several literatures~\cite{AKO18,LPP08,WW05}.
On the computation of exact solutions, an algorithm with $O(\mu(G))$ rounds for bipartite graphs is presented in~\cite{AKO18}, where $\mu(G)$ denotes the maximum cardinality of a matching of $G$.
In general graphs, the algorithm with the same bound is presented in~\cite{IKY24}. The algorithms in~\cite{IKY24} and its
former version~\cite{KI22} are based on the concept of alternating base trees, which we employ in this paper.

\paragraph{Semi-Streaming Algorithms}

Known upper-bound results for $(1 - \epsilon)$-approximate matching in the semi-streaming model is 
summarized in Table~\ref{tab:MM-stream}. As mentioned in the introduction, our algorithm belongs to
the class of algorithms with pass complexity independent of $n$. The situation in this class is very
close to the history in CONGEST. Up to~\cite{FMJ22}, polynomial dependence on $\epsilon^{-1}$ has been
known for bipartite graphs, and after that some improvement and generalization are proposed~\cite{MS25,HS23}.
Yet another line, admitting dependence on $\log n$ in return for moderate exponent of 
$\mathrm{poly}(\epsilon^{-1})$, is also considered in several papers~\cite{AS11,AG13,Assadi25}.
Lower bounds for the number of required passes are consider in several literatures~\cite{CKPSRSY21,Assadi21,AS23}.
The currently best bound for $(1 - \epsilon)$-approximate matching is $\Omega(\log (1 / \epsilon))$, which holds
for bipartite graphs.

\begin{table*}[t!]
\caption{Known upper bounds for the $(1 - \epsilon)$-approximate maximum cardinality/weighted matching problem 
in the CONGEST model. The parameter $\Delta$ and $W$ respectively mean the maximum degree and the maximum edge weight of the input graph $G$.}
\center
\label{tab:MM-congest}
\begin{tabular}{ c l c c }
\hline
Citation & \#rounds & Input  & Det./Rand. \\
\hline
\cite{LPP08} & $O(\epsilon^{-3}\log n)$ & bipartite & rand. \\
\cite{BKS18}& $O(2^{O(1/\epsilon)}\log \Delta  / \log \log \Delta)$ & bipartite & rand. \\

\cite{AKO18}& $O\left(\epsilon^{-1}(\log^2 \Delta+\log^{*}n)\right)$ & bipartite & rand. \\
\cite{AKO18}& $O\left(\epsilon^{-2}(\log W\Delta)  + \epsilon^{-1}(\log^2 \Delta+\log^{*}n) \right)$ 
& bipartite/weighted & rand.\\ \hline
\cite{LPP08} & $O(2^{O(1/\epsilon)}\epsilon^{-4} \log (1/\epsilon) \log n)$ & general & rand. \\
\cite{FFK21} & $O(2^{O(1/\epsilon)} \mathrm{polylog}(n))$ & general/weighted  & det. \\
\cite{FMJ22}& $O\left(\epsilon^{-63} \mathsf{MM}(n)\right)$ & general & det. \\
\cite{HS23} & $O\left(\epsilon^{-c} \mathrm{polylog}(n)\right)$ for $c >63$ & general/weighted & det. \\ 
\cite{MS25}     & $O\left(\epsilon^{-10} \log (\epsilon^{-1}) +  \epsilon^{-7} \log (\epsilon^{-1}) \mathsf{MM}(n))\right)$ & 
general & rand. \\
\textbf{This work}& \boldmath{$O\left(\epsilon^{-4} \mathsf{MM}(n)\right)$} & general & det. \\
\hline
\end{tabular}

\vspace{3mm}

\caption{Known upper bounds for the $(1 - \epsilon)$-approximate maximum cardinality/weight matching problem 
in the semi-streaming model. The notation $\tilde{O}_\epsilon$ means that it omits the factors dependent on 
$\epsilon$ due to lack of precise analyses in the original paper.}
\center
\label{tab:MM-stream}
\begin{tabular}{ c l l c c}
\hline
Citation & \#passes & Space & Input & Det./Rand. \\
\hline
\cite{EKS09} & $O(\epsilon^{-8})$ & $\tilde{O}(n)$ & bipartite & det. \\
\cite{EKMS12} & $O(\epsilon^{-8})$ & $\tilde{O}(n)$ & bipartite & det. \\
\cite{AG13} &  $O(\epsilon^{-2} \log (1/\epsilon))$ & $\tilde{O}(\epsilon^{-O(1)}n)$ & bipartite/weighted & det. \\
\cite{ALT21} & $O(\epsilon^{-2})$ & $\tilde{O}(n)$ & bipartite & det. \\
\hline
\cite{McGregor05} & $\epsilon^{-O(\epsilon^{-1})}$ & $\tilde{O}_{\epsilon}(n)$ & general & rand.  \\
\cite{AS11} & $O(\epsilon^{-1} \log n)$ & $\tilde{O}(\epsilon^{-O(1)}n)$ & general & rand. \\
\cite{AG13} &  $O(\epsilon^{-4} \log n)$ & $\tilde{O}(\epsilon^{-O(1)}n)$ & general/weighted & det. \\
\cite{Tirodkar18} & $\epsilon^{-O(\epsilon^{-1})}$ & $\tilde{O}_{\epsilon}(n)$ & general & det. \\
\cite{FMJ22} & $O(\epsilon^{-19})$ & $\tilde{O}(\epsilon^{-O(1)}n)$ & general & det. \\
\cite{HS23} & $O(\epsilon^{-39})$ & $\tilde{O}(\epsilon^{-O(1)}n)$ & general/weighted & det. \\
\cite{Assadi25} & $O(\epsilon^{-1} \log n)$ & $\tilde{O}(\epsilon^{-O(1)}n)$ & general/weighted & rand. \\
\cite{MMSS25}& $O(\epsilon^{-6})$ & $\tilde{O}(\epsilon^{-O(1)}n)$ & general & det. \\
\textbf{This work}& \boldmath{$O(\epsilon^{-4})$} & \boldmath{$\tilde{O}(n)$} & general & det. \\
\hline
\end{tabular}
\end{table*}

\section{Technical Outline}

This section provides the high-level ideas of our technical contribution. 

\subsection{Notations and Terminologies}

Throughout this paper, we only consider simple undirected graphs as the inputs.
An edge connecting two vertices $u$ and $v$ is denoted by $\{u, v\}$. We denote the vertex set and edge set of a given undirected graph $G$ by $V(G)$ and $E(G)$, respectively. 
The terminology ``path'' is used for referring to a simple path. While a path is formally defined as an alternating sequence of vertices and edges as usual, it is often regarded as a subgraph of $G$. For a path $P = v_0, e_1, v_1, e_2, \dots, e_{\ell}, v_\ell$ of $G$ and an edge 
$e = \{v_\ell, u\} \in E(G)$ satisfying $u \not\in V(P)$, we denote by 
$P \circ e$ or $P \circ u$ the path obtained by adding $e$ and $u$ to the tail of $P$. Similarly, for another path $P'$ starting at $v_\ell$ with $V(P) \cap V(P') = \{v_\ell\}$, we denote by $P \circ P'$ the path obtained by concatenating $P'$ to the tail of $P$ (without duplication of $v_\ell$). The inversion of the path $P$ 
(i.e., the path $v_\ell, e_\ell, v_{\ell - 1}, e_{\ell - 1}, \dots, e_1, v_0$) is denoted by $\overline{P}$. Given a path $P$ containing two vertices $u$ and $v$ such 
that $u$ precedes $v$, we denote by $P[u, v]$ the subpath of $P$ starting from $u$ and terminating with $v$.

For a graph $G$, a \emph{matching} $M \subseteq E(G)$ is 
a set of edges that do not share endpoints. We denote by $\Mmax(G)$ the maximum size of a matching of a graph $G$. A vertex $v$ is called \emph{free} if $v$ has no incident edge in $M$.
Given a pair $\Ical = (G, M)$, which we refer to as a \emph{matching system}, an \emph{alternating path} of $\Ical$ is a path $P = v_0, e_1, v_1, e_2, \dots, e_\ell, v_\ell$ such that exactly one of $e_{i-1}$ and $e_i$ is in $M$ for every $1 < i \le \ell$.
An \emph{augmenting path} of $\Ical$ is an alternating path connecting two different free
vertices.
Throughout this paper, we introduce a super free vertex $f$ 
that is connected with each free vertex by a length-two alternating path (where the edge incident to $f$ is a non-matching edge).
While the graph after adding $f$ only contains a single free vertex $f$, we use terminology 
``free  vertex'' for referring to free vertices in the original system. The set of all free 
vertices in this sense is denoted by $U_{\Ical}$. Since $|U_{\Ical}| \le 1$ implies that $M$ is a maximum matching, we usually assume $|U_{\Ical}| \geq 2$. 
Similarly, an ``augmenting path" is also defined as an alternating path connecting two vertices in $U_{\Ical}$ starting and ending with non-matching edges. 
We denote the length of a shortest augmenting path in $\Ical$ by  $2 \ell_{\Ical} + 1$; thus, $\ell_{\Ical}$ is the number of matching edges contained in a shortest augmenting path.

An alternating path of odd (resp.\ even) length is called an \emph{odd-alternating path} 
(resp.\ \emph{even-alternating path}).
We denote the length of the shortest odd (resp.\ even)-alternating path from $f$ to a vertex $u$ in $\Ical$ by $\dist^{\Odd}_{\Ical}(u)$ (resp.\ $\dist^{\Even}_{\Ical}(u)$). If $u$ does not admit any odd (resp.\ even)-alternating path from $f$ in $\Ical$,
we define $\dist^{\Odd}_{\Ical}(u) = \infty$ (resp.\ $\dist^{\Even}_{\Ical}(u) = \infty$). For $\theta \in \{\Odd, \Even\}$, $\OL{\theta}$ represents 
the parity different from $\theta$. The \emph{orthodox (resp.\ unorthodox) parity} of $u$ in $\Ical$ is $\theta \in \{\Odd, \Even\}$ satisfying $\dist^{\theta}_{\Ical}(u) < \dist^{\OL{\theta}}_{\Ical}(u)$ (resp.\ $\dist^{\theta}_{\Ical}(u) > \dist^{\OL{\theta}}_{\Ical}(u)$), 
denoted by $\Op_\Ical(u)$ (resp.\ $\Nop_\Ical(u) = \OL{\Op_\Ical(u)}$). Alternating paths from $f$ to $u$ of parity $\Op(u)$ (resp.\ $\Nop(u)$) are referred to as \emph{orthodox alternating paths} (resp.\ \emph{unorthodox alternating paths}).
We also mean $\dist^{\Op_\Ical(u)}_\Ical(u)$ and $\dist^{\Nop_\Ical(u)}_\Ical(u)$ by $\dist^\Op_\Ical(u)$ and $\dist^\Nop_\Ical(u)$, respectively. For each edge $e = \{u, v\} \in E$, we define $\rho_\Ical(e) = \Odd$ if $e \in M$ and $\rho_\Ical(e) = \Even$ 
otherwise. Intuitively, $\rho_{\Ical}(e)$ represents the parity of the alternating path $P$ from $f$ such that $P \circ e$ can become an alternating path. 

For any vertex $u$, we define $\paredgeset_{\Ical}(u)$
as the set of the edges $e$ incident to $u$ such that there exists a shortest orthodox alternating path from $f$ to $u$
terminating with edge $e$. We also define $\parset_{\Ical}(u) = \{u' \mid \{u', u\} \in \paredgeset_{\Ical}(u)\}$, and 
$\paredgeset_{\Ical} = \bigcup_{u \in V(G)} \paredgeset_{\Ical}(u)$. 
An edge $e = \{y, z\}$ not contained in $\paredgeset_\Ical$ is called \emph{critical}.

For the above notations, we often omit the subscript $\Ical$ if it is obvious in the context.

\subsection{Refinement of the Izumi--Kitamura--Yamaguchi Theorem}
Our new theorem is bulit on the top of the structure theorem by Izumi, Kitamura, and 
Yamaguchi~\cite{IKY24} (referred to as the IKY theorem) based on the framework of \emph{alternating base 
trees}~\cite{KI22}. To explain whole outline, we first introduce the concept and implication of the IKY theorem, 
as well as a few necessary notions and terminologies. An alternating base tree $T$ of 
a matching system $\Ical$ is defined as follows:

\begin{definition}[Alternating Base Tree]
An \emph{alternating base tree (ABT)} of $\Ical = (G, M)$ is a spanning tree $T$ of $G$
rooted by $f$ that is obtained by choosing any vertex in $\parset(u)$ as the parent of each vertex $u \in 
V(G) \setminus \{f\}$, denoted by $\parent_T(u)$. Note that this construction necessarily deduces a tree (cf.\ Lemma~\ref{lma:fundamental} (P1)).
For each vertex $v \in V(G)$, the subtree of $T$ rooted by $v$ is denoted by $T(v)$.
\end{definition}

Note that ABTs do \emph{not} guarantee that a tree path from $f$ to another vertex is an alternating path (see the vertex 
$v$ in Fig.~\ref{fig:example_ABT}). 

A trivial but important observation holding for any alternating base tree $T$ is that if there exists an 
unorthodox alternating path $P$ from $f$ to $t$, it must enter $T(t)$ using an edge not contained in $T$ 
(referred to as an \emph{incoming edge}\footnote{It is originally called an outgoing edge in \cite{IKY24}.} of $T(t)$). Considering the situation of finding such $P$, a happy case is
that $P$ crosses between the outside and inside of $T(t)$ exactly once. Then one can decompose the task
of finding $P$ into two independent sub-tasks of finding the prefix of $P$ at the outside of $T(t)$ 
and suffix of $P$ at the inside of $T(t)$. The technical idea of \cite{IKY24} is to provide a systematic 
way of identifying the incoming edge $e$ of $T(t)$ such that some shortest unorthodox 
alternating path $P$ from $f$ to $t$ containing $e$ satisfies this happy case. More precisely, 
it introduces the notion of \emph{edge level}, defined as follows: 

\begin{figure*}[tb]
    \centering
    \includegraphics[width=0.4\columnwidth]{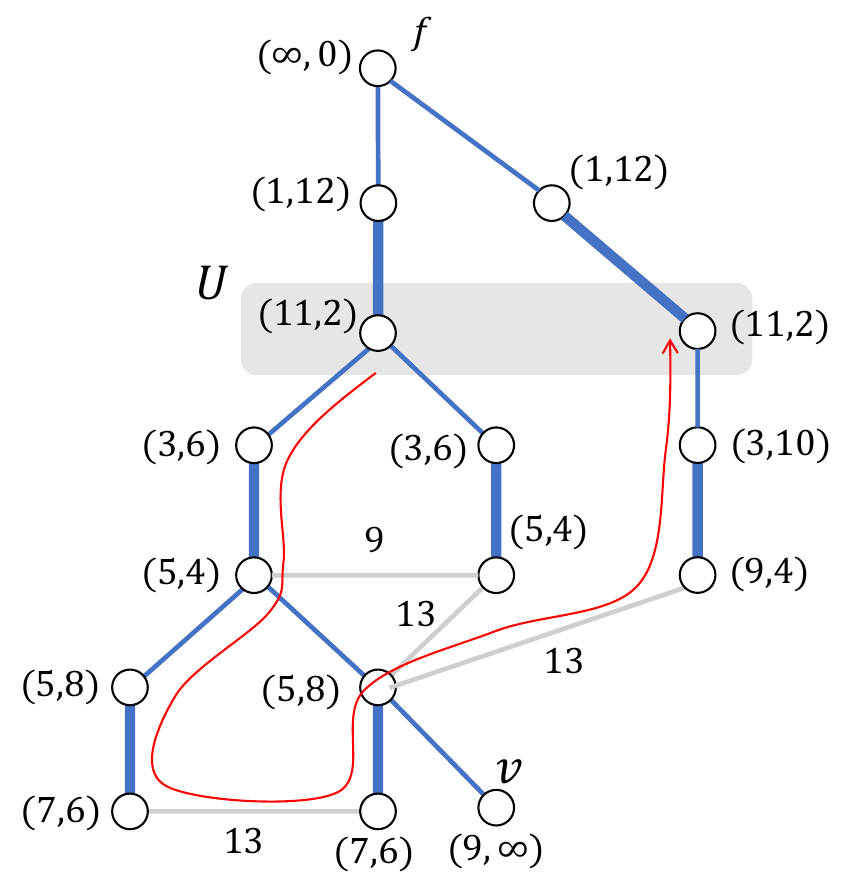}
    \caption{An example of ABT and edge volume. The tree edges are colored with blue. The pair of values assigned to each vertex means
    the values of $\dist^{\Odd}(\cdot)$ and $\dist^{\Even}(\cdot)$. The value assigned to each non-tree edge is its volume. The path indiceted by a red arrow is a shortest augmenting path.}
    \label{fig:example_ABT}
\end{figure*}

\begin{definition}[Edge volume, height, and level]
\sloppy{
For any edge $e = \{y, z\}$ of $G$, we define the \emph{volume} and \emph{height} of $e$ as 
$\vlevel(e)=\dist^{\rho(e)}(y)+\dist^{\rho(e)}(z) + 1$ and 
$\hlevel(e)=\max\{\dist^{\rho(e)}(y),\, \dist^{\rho(e)}(z)\}$, respectively.
We then define the \emph{level} of $e$ as $\level(e) = n' \cdot \vlevel(e) + \hlevel(e)$, where $n'$ is an
arbitrary predefined value larger than $|V(G)|$.\footnote{This is slightly modified from the definition in \cite{IKY24}, but 
essentially the same. The intention is to introduce the lexicographic ordering over the edges 
that $e_1$ is smaller than $e_2$ if and only if either $\vlevel(e_1)<\vlevel(e_2)$ or $(\vlevel(e_1)=\vlevel(e_2)) \wedge 
(\hlevel(e_1)<\hlevel(e_2))$ holds.}
}
\end{definition}
The IKY theorem shows that picking up any \emph{minimum incoming edge 
(MIE)} of $T(t)$, which is a minimum-level one among the incoming edges of $T(t)$, always yields the 
happy case as stated above. Given an ABT $T$ and an alternating path $P$, we define the level (resp.\ volume) 
of $P$ with respect to $T$ as the maximum level (resp.\ volume) of all non-tree edges in $P$, which is denoted 
by $\level_T(P)$ (resp.\ $\vlevel_T(P)$). If $P$ does not contain any non-tree edge, both are defined to be zero. 
The formal statement of the IKY theorem is below:

\begin{theorem}[Izumi, Kitamura, and Yamaguchi\cite{IKY24}, rephrased]
\label{thm:IKY24}
Let $T$ be any ABT, $t$ be any vertex in $T$, and $\Minedge = \{y, z\}$ be 
any minimum incoming edge of $T(t)$, where $z$ is the vertex contained in $V(T(t))$. 
For any $\theta \in \{\Odd, \Even\}$ such that 
$\dist^{\theta}(t) < \infty$ holds, there exists a shortest $\theta$-alternating path $P$ from $f$ to $t$ satisfying the following conditions:
\begin{itemize}
    \item If $\theta = \Op(t)$, $\vlevel_T(P) < \vlevel(\Minedge)$.
    \item Otherwise, $P$ has the form of $P[f, y] \circ \{y, z\} \circ P[z, t]$, and satisfies
    $\level_T(P[f, y]) < \level(\Minedge)$ and $\level_T(P[z, t]) < \level(\Minedge)$; 
    in other words, $\level_T(P) = \level(\Minedge)$ and $\Minedge$ is the unique edge attaining the maximum.
\end{itemize}
\end{theorem}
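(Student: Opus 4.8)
The plan is to prove Theorem~\ref{thm:IKY24} by strong induction, treating the two bullets as the cases $\theta=\Op(t)$ and $\theta=\Nop(t)$; the natural induction parameter is $\level(\Minedge)$ (equivalently, $(\vlevel,\hlevel)$ lexicographically) for the unorthodox case and $\dist^{\Op}_{\Ical}(t)$ for the orthodox case, and the two feed one another. First I would record a few structural facts about ABTs, most of which should be restatements of, or two-line corollaries to, Lemma~\ref{lma:fundamental}: (i) $\parent_T(t)\in\parset(t)$, so some shortest orthodox $f$--$t$ path ends with the tree edge $\{\parent_T(t),t\}$ and its prefix is a shortest orthodox $f$--$\parent_T(t)$ path; (ii) $\dist^{\Op}_{\Ical}$ strictly increases along every downward path of $T$, hence $\dist^{\Op}_{\Ical}(v)\ge\dist^{\Op}_{\Ical}(t)$ for all $v\in V(T(t))$; and (iii) since $\Minedge=\{y,z\}$ has $z\in V(T(t))$, $\vlevel(\Minedge)=\dist^{\rho(\Minedge)}(y)+\dist^{\rho(\Minedge)}(z)+1\ge\dist^{\Op}_{\Ical}(z)+1>\dist^{\Op}_{\Ical}(t)$. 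I would then prove the ``tenacity identity'' $\vlevel(\Minedge)=\dist^{\Op}_{\Ical}(t)+\dist^{\Nop}_{\Ical}(t)$. The inequality ``$\le$'' is the substantive half: any unorthodox $f$--$t$ path must cross into $V(T(t))$ through some non-tree incoming edge of $T(t)$ (it cannot enter only through $\{\parent_T(t),t\}$, since then swapping its prefix for the shorter orthodox one of (i) would give a strictly shorter unorthodox path); taking the last such edge $e=\{a,b\}$ crossed inward (so the suffix from $b$ to $t$ stays inside $V(T(t))$), the prefix has length $\ge\dist^{\rho(e)}(a)$, the suffix has length $\ge\dist^{\rho(e)}(b)-\dist^{\Op}_{\Ical}(t)$ (prepend a shortest orthodox $f$--$t$ path and shortcut the walk to $b$), so the whole path has length $\ge\vlevel(e)-\dist^{\Op}_{\Ical}(t)\ge\vlevel(\Minedge)-\dist^{\Op}_{\Ical}(t)$. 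The reverse inequality ``$\ge$'' drops out of the construction below.

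\emph{Orthodox case.} This is the easier direction, so I would only sketch it: via (i), write a shortest orthodox path as $P=P'\circ\{\parent_T(t),t\}$ with $P'$ a shortest orthodox path to $\parent_T(t)$, apply the induction hypothesis to $\parent_T(t)$ to obtain such a $P'$ all of whose non-tree edges have volume below $\vlevel(\Minedge)$ (comparing the minimum incoming edges of $T(\parent_T(t))$ and $T(t)$ through (ii)--(iii)), and note that $P'$ is simple and avoids $t$, so $P$ inherits the bound $\vlevel_T(P)<\vlevel(\Minedge)$.

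\emph{Unorthodox case (the crux).} Rather than reroute an arbitrary shortest unorthodox path, I would build one of the claimed shape from scratch. For the prefix, apply the theorem inductively to $y$ with parity $\rho(\Minedge)$: when $T(y)$ and $T(t)$ are disjoint subtrees, $\Minedge$ is itself an incoming edge of $T(y)$, so the relevant minimum incoming edge of $T(y)$ has level $\le\level(\Minedge)$, and the theorem yields a $\rho(\Minedge)$-alternating $f$--$y$ path $P[f,y]$ with $\vlevel_T(P[f,y])<\vlevel(\Minedge)$ (hence $\level_T(P[f,y])<\level(\Minedge)$) and $|P[f,y]|=\dist^{\rho(\Minedge)}(y)$ (the case where $y$ is an ancestor of $t$ requires a separate, shorter argument). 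For the suffix $P[z,t]$ inside $T(t)$, I would run a nested induction that climbs from $z$ up toward $t$, at each stage either traversing a tree edge or jumping across the minimum incoming edge of a strictly smaller subtree $T(v)$ with $z\in V(T(v))$; this is where the $\hlevel$-component of $\level$ is indispensable, since those nested incoming edges may share the volume $\vlevel(\Minedge)$, so one needs their level to be strictly below $\level(\Minedge)$ for the recursion to decrease. Gluing $P[f,y]$, $\Minedge$, and $P[z,t]$, one then checks: (a) the result is a \emph{simple} alternating path --- here $\level_T(P[f,y])<\level(\Minedge)$ and (ii) are used to forbid $P[f,y]$ from re-entering $V(T(t))$, which is exactly the ``crosses the boundary of $T(t)$ exactly once'' phenomenon the theorem guarantees; (b) its length is $\dist^{\rho(\Minedge)}(y)+1+|P[z,t]| = \vlevel(\Minedge)-\dist^{\rho(\Minedge)}(z)+|P[z,t]|$, which by the tenacity identity and the length bound on $P[z,t]$ equals $\dist^{\Nop}_{\Ical}(t)$; and (c) $\Minedge$ strictly dominates the level of every other non-tree edge of $P$.

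The hard part is the suffix $P[z,t]$ together with the disjointness bookkeeping in (a): this is precisely where the MV algorithm is forced to descend into nested blossoms, and the ABT viewpoint's contribution is to replace the blossom hierarchy with the subtree hierarchy of the single tree $T$. Getting the nested induction to strictly decrease --- so that it terminates --- is exactly why $\level$ refines $\vlevel$ by the tiebreaker $\hlevel$, and proving that the ``happy case'' of a single boundary crossing is always achievable through the minimum incoming edge is the structural core of the theorem; once it is established, the length and uniqueness checks in (b)--(c) are routine arithmetic with the tenacity identity.
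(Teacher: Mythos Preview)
Your overall architecture is right --- induct, establish the tenacity identity (the paper's Lemma~\ref{lma:volumeBound}), and assemble the unorthodox path as prefix ${}\circ{}\Minedge{}\circ{}$ suffix --- but there is one significant simplification you are missing and two genuine gaps.

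\textbf{The suffix is free.} The ``nested induction climbing from $z$ toward $t$'' is where you anticipate the hard work, but the paper sidesteps it entirely. Induct on the single parameter $\dist^{\theta}(t)$ (not on $\level(\Minedge)$), and first prove $\hlevel(\Minedge) < \dist^{\Nop}(t)$ (Lemma~\ref{lma:heightBound}); then both $\dist^{\rho(\Minedge)}(y)$ and $\dist^{\rho(\Minedge)}(z)$ are strictly below $\dist^{\Nop}(t)$, so the induction hypothesis applies to $(z,\rho(\Minedge))$ exactly as it does to $(y,\rho(\Minedge))$. This yields a full shortest path $Z$ from $f$ to $z$ with $\level_T(Z) < \level(\Minedge)$; since $Z$ uses no incoming edge of $T(t)$, it enters $T(t)$ exactly once, through the tree edge $\{\parent_T(t),t\}$, and the reversed suffix $\overline{Z}[z,t]$ is your $P[z,t]$ with no further recursion needed.

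\textbf{Two missing lemmas.} In the orthodox case, going up one step and ``comparing the MIEs of $T(\parent_T(t))$ and $T(t)$ through (ii)--(iii)'' does not work: $\Minedge_T(t)$ need not be an incoming edge of $T(\parent_T(t))$ at all (its outside endpoint $y$ may lie in $T(\parent_T(t))\setminus T(t)$), so no direct inequality between the two volumes follows from your (ii)--(iii). The paper instead walks up the \emph{maximal} alternating tree path from $t$ to a vertex $v$ at which both incident tree edges are non-matching, applies the hypothesis to $(v,\Even)$, and invokes a dedicated structural fact (Lemma~\ref{lma:oddOrthodoxProp}): when $\Op(v)=\Odd$ and $v'$ is a non-matching child of $v$, every edge touching $T(v')$ has volume strictly above $\vlevel(\Minedge_T(v))$, which in particular forces $\vlevel(\Minedge_T(t)) > \vlevel(\Minedge_T(v))$. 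Similarly, in the unorthodox case your prefix bound is not strict for free: observing that $\Minedge$ is an incoming edge of $T(y)$ only gives $\level(\Minedge_T(y)) \le \level(\Minedge)$, and at equality your recursion does not decrease. The paper closes this with Lemma~\ref{lma:rootIncidence}: any non-tree edge $e$ incident to $y$ with $\rho(e)=\Nop(y)$ or $e\in\paredgeset(y)$ has $\level(e) > \level(\Minedge_T(y))$, so if $\Minedge$ is an MIE of $T(y)$ one must have $\rho(\Minedge)=\Op(y)$, landing in the orthodox case at $y$ where the hypothesis already gives a strict volume bound.
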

Note that in the case of $\theta = \Nop(t)$, $P[f, y]$ and $P[z, t]$ are 
respectively contained in the outside/inside of $T(t)$ because their levels are less than $\level(\Minedge)$.
While the original proof of the theorem requires a long and complicated argument, we present a much 
simpler and concise proof of it, which is also a part of our contribution.

Since the proof of Theorem~\ref{thm:IKY24} is constructive, it also deduces a simple recursive 
algorithm of computing a shortest orthodox/unorthodox path from $f$ to a given vertex $t$ (see Alg.~\ref{alg:pathConstruction}), 
but it requires the prior information on $\dist^{\Op}(u)$, $\dist^{\Nop}(u)$, and an MIE of $T(u)$ for all 
$u \in V(G)$ (note that $\paredgeset(u)$ for all $u$ (and thus an ABT $T$ itself) is easily computed from these 
information). To address this matter, we present a simple algorithm of computing those information based on 
our structure theorem, which runs in $O(m \log n)$ time (cf.\ Sections~\ref{subsec:compDist} and \ref{subsec:compDistDetails}).

\subsection{Extension to a Maximal Set of Disjoint Shortest Augmenting Paths}
\label{subsec:maximalAugPath}

Let $T$ be any ABT, and recall that $\ell_\Ical$ denotes the number of matching edges contained in a shortest augmenting path of $\Ical = (G, M)$.
We denote by $F_T$ the forest obtained by removing $f$ and its neighbors (equivalently, 
the set of subtrees $T(u)$ for all $u \in U_\Ical$) and by $\freeroot_T(v)$ the root of the tree in $F_T$ 
containing $v \in V(G)$.
Considering simultaneous construction of multiple shortest augmenting paths following the IKY theorem, it
has a shortcoming that the prefix of the path $P[f, y]$ is ``not localized'', which means that $P[f, y]$
may cross many components in $F_T$. This can be an obstacle to establish a collection of disjoint augmenting
paths. The key observation toward Theorem~\ref{thm:structuralTheorem} is that 
such a case never occurs if $\Minedge$ taken in Theorem~\ref{thm:IKY24} is critical (i.e., 
$\Minedge \not\in \paredgeset_\Ical$). In such a case, the augmenting path found by the approach of 
Theorem~\ref{thm:IKY24} necessarily lies in the two components in $F_T$ bridged by $\Minedge$.
Even better, if $\Minedge$ is critical, it is not necessary to take a minimum-level one, but suffices
to take a minimum-volume one, which is equal to $2\ell_\Ical + 5$ by a simple argument 
(cf.~Lemma~\ref{lma:wastingEdge}).

This observation straightforwardly yields the double-path argument 
in Theorem~\ref{thm:structuralTheorem}. We take all the critical edges of volume $2\ell_\Ical + 5$ which 
can cross two subtrees in $F_T$ for some $T$ as the set $\Ccset$, and consider the
construction of a double path $(P, Q)$ for $e = \{y, z\} \in \Ccset$ in the ABD $H$ (for the formal definitions of double paths and ABDs, see Section~\ref{subsec:DPtoSAP}).
Since all the edges in the double path are contained in $\paredgeset_\Ical$, one can construct an ABT $T$ \emph{respecting} 
$(P, Q)$, i.e., an ABT containing $P$ and $Q$ as its paths. Then $e$ is a non-tree edge of $T$ 
bridging two components $T(\freeroot_T(y))$ and $T(\freeroot_T(z))$ in $F_T$ ($\freeroot_T(y) \neq \freeroot_T(z)$).
The application of the augmenting-path construction algorithm by Theorem~\ref{thm:IKY24}
yields the transformation of a double path $(P, Q)$ into a shortest augmenting path from 
$\freeroot_T(y)$ to $\freeroot_T(z)$ lying 
within $T(\freeroot_T(y))$ and $T(\freeroot_T(z))$ (cf.\ Algorithm~\ref{alg:AugPathConstruction}).
Importantly, this construction can be applied simultaneously to 
the set $\Dcal$ of vertex-disjoint double paths: One can construct an ABT $T$ respecting 
all the double paths in $\Dcal$ (which is obviously possible because $\Dcal$ is vertex-disjoint),
and transform $\Dcal$ into a set of shortest augmenting paths simultaneously. They all lie at 
the insides of distinct subtrees, and thus are disjoint. Theorem~\ref{thm:structuralTheorem} 
is proved by showing that this transformation preserves the maximality.

\subsection{An Overview of a New MCM Algorithm}

\subsubsection{Whole Structure}

We present a new MCM algorithm based on our structure theorem, which runs in 
$O(m\sqrt{n} \log n)$ time. As the MV algorithm, it iterates the construction of a maximal 
set of disjoint shortest augmenting paths. The seminal analysis by Hopcroft and Karp~\cite{HK73} guarantees 
that $O(\sqrt{n})$ iterations suffices to reach an MCM.
Hence our aim is constructing a maximal set of disjoint shortest augmenting paths in $O(m \log n)$ time.
By Theorem~\ref{thm:structuralTheorem}, it suffices to construct a maximal set of disjoint double paths. 
\begin{enumerate}
    \item Compute $\dist^{\theta}(u)$ for all $u \in V(G)$ and $\theta \in \{\Even, \Odd\}$, and identify
    the edge set $\paredgeset_{\Ical}$ (and $H$), which includes the identifiaction of value $2\ell + 5$ (by
    Corollary~\ref{corol:IKY24}). 
    \item Following the information on $H$, compute a maximal set 
    $\Dcal$ of disjoint double paths. If $\Dcal$ is empty, the algorithm terminates. 
    \item Construct an ABT $T$ respecting all double paths in $\Dcal$, and find any MIE of $T(u)$ for all 
    $u \in V(H)$. 
    \item Augment the current matching by the set $\Augt(\Dcal)$ of disjoint shortest augmenting paths obtained from $\Dcal$.
\end{enumerate}
We have already explained that the last step is implemented with $O(m)$ time. Hence we see
the technical outline of other steps.

\subsubsection{Step 1: Computing Orthodox/Unorthodox Distance}
\label{subsec:compDist}

The hardest part in computing distance information is the computation of $\dist^{\Nop}(u)$. Our algorithm 
computes it through an observation derived from Theorem~\ref{thm:IKY24}, which states that for any $v \in V(G)$ 
admitting unorthodox alternating path from $f$ to $v$, any ABT $T$, and an MIE $\Minedge$ of $T(v)$, $\dist^{\Nop}(v) = \vlevel(\Minedge) - \dist^{\Op}(v)$ holds (cf.\ Corollary \ref{corol:IKY24}). That is, the identification of an 
MIE of $T(v)$ implies the computation of $\dist^{\Nop}(v)$. Our algorithm grows up an ABT $T$ in 
the round-by-round manner. At round $r$, it identifies all the pairs $(v, \theta)$ satisfying $\dist^{\theta}(v) = r$ 
and all the edges $e$ of height $r$. The main technical challenge lies at the point that we need to conduct it concurrently with the growth of $T$, because the value of $\dist^{\Nop}(v)$ is also necessary to identify the distances of its children for computing $\vlevel(\Minedge)$. A straightforward idea of implementing this approach is 
as follows: when the volume of a non-tree edge $e$ is determined, the algorithm checks all the vertices $t$
such that $e$ can be an MIE of $T(t)$
by upward traversal of $T$ from the two endpoints of 
$e$ to $f$. However, this idea is not efficient because the upward search might check a vertex $t'$ 
which has already decided $\dist^{\Nop}(t')$ again. To avoid such a wasting search, our algorithm adopts a 
simple strategy of contracting an edge $\{t, \parent_T(t)\}$, i.e., shrinking the two vertices into a single vertex, when $\dist^{\Nop}(t)$ is decided. Not only 
for saving search cost, this strategy also provides a simpler mechanism for identifying MIEs: 
By repeating tree edge contraction, any MIE $\Minedge$ of $T(t)$ eventually becomes an edge incident to 
$t$ earlier than round $\dist^{\Nop}(t)$ (cf.\ Lemma~\ref{lma:distComp}). Hence $\dist^{\Nop}(t)$ is 
correctly decided only by checking the minimum-level edge incident to $t$ when the set of incident edges and 
their volumes are modified (where self-loops incident to $t$ are excluded, because they are the edges 
incident to a vertex in $T(t)$
but not incoming edges of $T(t)$).
This contraction strategy is implemented by two data structures:
The contraction operation naturally induces a partition of $V(G)$, where each subset is a vertex set shrunk into the same vertex.
It is managed by the standard union-find data structure.
The set of incident non-tree edges for each (shrunk) vertex is managed by any meldable heap supporting \textsc{find-min}, \textsc{merge}, \textsc{add}, and \textsc{delete-min} with $O(\log n)$-amortized time. 

\paragraph{Note on CONGEST/semi-streaming Implementation}

Our distance computation algorithm is naturally transformed into the ones for CONGEST
and semi-streaming settings. Basically, the algorithm performs the two tasks of 
expanding $T$ and aggregation of the volume information of MIEs along $T$, which are
conducted in the round-by-round manner. Hence the aforementioned models can easily emulate one round of the sequential implementation with a bundle of rounds/passes (cf.\ Section~\ref{subsec:subroutine}). It provides CONGEST/semi-streaming implementations 
of bounded-length distance computation, which outputs all orthodox/unorthodox distances up to $2\ell + 5$. The running time of the CONGEST implementation is $O(\ell^2)$ rounds, and the number of passes taken by the semi-streaming 
implementation is $O(\ell)$. Note that in the CONGEST model we need an extra $O(\ell)$
factor due to the simulation of shrunk vertices, each of which correspond to 
a subgraph of diameter $O(\ell)$. 

\subsubsection{Step 2: Computing a Maximal Set of Disjoint Double Paths}
\label{subsec:DDFS}
This part is almost the same as construction of a maximal set of disjoint shortest augmenting paths by the \emph{double DFS (DDFS)} in \cite{MV1980,vazirani2024theory}.
The only difference is that they start it in DAGs involving nested blossoms already shrunk during the computation of $\dist^\theta(v)$ (which are called the oddlevel/evenlevel of $v$ in \cite{MV1980,vazirani2024theory}), but we first do it simply in ABDs (and then, in the later step, transform the obtained double paths into shortest augmenting paths in the original graph).
Our algorithm iteratively grows a set of disjoint double paths until no further disjoint double path is found.
More precisely, for each critical edge $e = \{y, z\}$ of volume $2\ell + 5$ (which is not necessarily crossable), we run the DDFS %
on $H$ with starting vertices $y$ and $z$. 
Here we explain its properties our algorithm utilizes, without stating the implementation details: 
It outputs a double path $D = (P, Q, y, z)$ or a \emph{bottleneck} $v$, and an \emph{omissible set} $W \subseteq V(H)$, which will be removed if $D$ is found and shrunk into $v$ otherwise.
Intuitively, $W$ is a set of vertices $w$ such that any $w$--$U$ path necessarily intersects some vertex already removed, $V(D)$, or $v$.
Thus, if $D$ is found, $V(D) \subseteq W$ and we just remove $W$.
Otherwise, $W$ contains the vertices reachable from $y$ or $z$ except for $v$ and we shrink $W$ into $v$.
For the resulting graph $H$ after deletion/shrinking, the next try of DDFS is performed for another critical edge $e'$ of volume $2\ell + 5$ (if at least one of the endpoints has been removed or both of them have been shrunk into the same vertex, it is simply skipped). 
The running time of the DDFS algorithm is $O(|E_W|)$, where $E_W$ is the set of edges incident to a vertex in $W$.
Since the vertices in $W$ do not join the subsequent runs of DDFS, whole running time (up to the construction of a maximal set of disjoint double paths) is $O(m)$.

\subsubsection{Step 3: Identification of MIEs}

The strategy of Step 3 is almost the same as Step 1: Let $T$ be any given ABT, and each vertex $v \in V(T)$ manages the
set of incident non-tree edges by an efficient meldable heap. The contraction of each tree edge from the leaf side 
obviously admits the identification of an MIE for each subtree $T(v)$ ($v \in V(T)$). That is, the minimum element 
in the heap at any leaf vertex $v$ is an MIE of $T(v)$. This step is implemented within $O(m \log n)$ time.

\subsection{Overview of Approximate MCM}

This section sketches how our framework is utilized for leading an efficient algorithm for $(1 - \epsilon)$-approximation of a maximum matching in distributed and semi-streaming settings.
For simplicity, we assume that $\epsilon$ is represented as $\epsilon = 2^{-x}$ for some integer $x > 0$.

Following the Hopcroft--Karp analysis, our $(1 - \epsilon)$-approximation algorithm aims 
to obtaining a matching not admitting an augmenting path of length shorter than $2\epsilon^{-1}$,
which is accomplished by iteratively computing a maximal set of shortest augmenting paths. However, 
as mentioned in the introduction, it is not so easy to take a maximal set of disjoint paths in 
distributed or streaming settings. While many known algorithms~\cite{McGregor05,FMJ22,MMSS25} in this context
try to find a substantially large number of short augmenting paths not limited to shortest ones for 
circumventing maximality requirement, our framework helps only for finding shortest augmenting paths, 
and thus one cannot adopt such an approach. Instead, our algorithm outputs 
a \emph{hitting set} $B \subseteq V(G)$, which a vertex subset any shortest augmenting path in $(G, M)$ intersects,
in addition to a set $\Qcal$ of disjoint shortest augmenting paths.
More precisely, one can construct an algorithm $\textsc{AugAndHit}_{\ell}(G, M)$ of 
outputting a set $\Qcal$ of shortest augmenting paths of length at most $2\ell + 1$ and 
a hitting set $B$ of size at most $16|\Qcal|(\ell+1)^2 + |M| / (4(\ell+1))$ (if the length of 
shortest augmenting paths is larger than $2\ell + 1$, it returns empty $\Qcal$). This 
algorithm is implemented in the CONGEST model with $O(\ell^2 \mbox{\rm\textsf{MM}}(n))$ 
rounds, where $\mbox{\rm\textsf{MM}}(n)$ is the running time of computing a maximal 
matching of graphs on $n$ vertices, and implemented
in the semi-streaming model with $O(n\log n)$-bit space and $O(\ell^2)$ paths.
The formal description is given in Lemma~\ref{lma:AugAndHit}, and its proof and implementation details are shown in Section~\ref{subsec:AugAndHit}.

Using $\textsc{AugAndHit}_{\ell}(G, M)$, we construct an algorithm $\textsc{Amplifier}_{\alpha}$, which takes a parameter $\alpha > 0$ (represented as a power of two) and any matching $M$ such that $|M| = (1 - \alpha')\Mmax(G)$ holds for some $\alpha' \leq \alpha$, and outputs a matching $M'$ of $G$ whose approximation factor is at least 
$\min\{(1 - \alpha' + \Theta(\alpha^2)), (1- \alpha/2)\}$ (Lemma~\ref{lma:amplifier}).
Iteratively applying $\textsc{Amplifier}_{\alpha}$ $O(\alpha^{-1})$ times, we obtain a $(1 - \alpha/2)$-approximate matching.
Starting with any maximal matching (which is a $1/2$-approximate matching) and calling that iteration process $O(\log \epsilon^{-1})$ times with $\alpha = 1/2, 1/4, 1/8, \dots 1/\epsilon^{-1}$, we finally obtain a $(1 - \epsilon)$-approximate matching.

We sketch the outline of $\textsc{Amplifier}_\alpha$ (see Section~\ref{subsec:Amplifier} for the details).
The execution of $\textsc{Amplifier}_\alpha$ consists of repetition of $K = 4\alpha^{-1}$ \emph{phases} with modification of the input graph.
Let $(G_i, M_i)$ be the matching system at the beginning of the $i$-th phase.
The behavior of the $i$-th phase consists of the following three steps:
\begin{enumerate}
    \item \emph{Path finding}: Call $\textsc{AugAndHit}_K(G_i, M_i)$ and find a set of shortest augmenting paths $\Qcal_i$ and a hitting set $B_i$. 
    \item \emph{Length stretch}: For each vertex $v \in B_i$, if $v$ is not free, the algorithm subdivides the matching edge incident to $v$ into a length-three alternating path such that the middle edge is a non-matching edge. Otherwise, we rename $v$ into $v'$, and add a length-two alternating path to $v'$ where the edge incident to $v'$ is a matching edge and the endpoint of the path other than $v'$ becomes the new $v$. The resultant graph and matching is denoted by $(G_{i+1}, M'_i)$.
    \item \emph{Augmentation}: Obtain a matching $M_{i+1}$ in $G_{i+1}$ by augmenting $M'_i$ along augmenting paths corresponding to $\Qcal_i$.
\end{enumerate}
After the $K$ phases, the algorithm finally executes the \emph{recovery} step.
In this step, the algorithm iteratively restores each subdivided (or extended) part, and finally outputs an augmented matching $\hat{M}_1$ in the original graph $G_1$.

Since almost all parts of $\textsc{Amplifier}_{\alpha}$ is to execute $\textsc{AugAndHit}_{\ell}$, its 
CONGEST/semi-streaming implementations are straightforwardly deduced from Lemma~\ref{lma:AugAndHit}.
By a simple complexity analysis, we obtain Theorem~\ref{thm:mainApproximate}.

\section{Proof of Theorem~\ref{thm:structuralTheorem}}
\label{sec:details}

\subsection{A Shorter Proof for the IKY Theorem}

The goal of this section is to provide a simpler proof of Theorem~\ref{thm:IKY24}. 
We first introduce further additional notations. We define $\Outset_T(t)$ as the set of all incoming edges of $T(t)$, 
and $\Minset_T(t)$ as the set of all MIEs 
of $T(t)$. When referring to two endpoints of $e \in \Outset_T(t)$ by symbols $y$ and $z$ (and their variants with 
upper/lower scripts), we always assume $z \in V(T(t))$. To refer to the level, volume, height of MIEs for $T(t)$, 
we choose an arbitrary edge in $\Minset_T(t)$ as a \emph{canonical MIE} of $T(t)$, which is denoted by 
$\Minedge_T(t)$.

In the following argument, $T$ is fixed as any ABT of $\Ical$. In addition, when considering some edge 
$e \in \Outset_T(t)$, we implicitly assume that $t$ is a vertex such that $\Outset_T(t) \neq \emptyset$ (and 
thus $t \neq f$). For the proof, we first present a few auxiliary lemmas.

\begin{lemma} \label{lma:fundamental}
For any $u \in V(T) \setminus \{f\}$, the following properties hold:
\sloppy{
\begin{itemize}
    \item[(P1)] For any $u' \in \parset(u)$, $\dist^{\Op}(u) > \dist^{\Op}(u')$ holds, which also implies that the construction in the definition of alternating base trees certainly provides a spanning tree. It also implies 
    $\dist^{\Op}(u) < \dist^{\Op}(u'')$ for any descendant $u''$ of $u$ in any ABT $T$.
    \item[(P2)] For any neighbor $u'$ of $u$, $\dist^{\OL{\Op(u)}}(u') + 1 = 
    \dist^{\Op}(u)$ holds if and only if $u' \in \parset(u)$ holds.
\end{itemize}
}
\end{lemma}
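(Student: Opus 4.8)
The plan is to prove both (P1) and (P2) by elementary surgery on alternating paths from $f$: \emph{truncating} a shortest such path to read off a distance bound, and \emph{extending} a shortest such path by one edge to exhibit another path. The only non-routine ingredient is a way to guarantee that such an extension stays a \emph{simple} path, so I would first isolate the following anti-self-intersection claim and reuse it throughout: if $R$ is a simple alternating path from $f$ with $|R| < \dist^{\Op}(u)$ and $u$ is an interior vertex of $R$, then $R$ truncated at $u$ is a strictly shorter alternating path to $u$ whose parity is either $\Op(u)$, contradicting minimality of $\dist^{\Op}(u)$, or $\Nop(u)$, contradicting $\dist^{\Op}(u) < \dist^{\Nop}(u)$; hence no such $R$ exists. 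I would also record two bookkeeping facts that hold because $f$ is incident only to non-matching edges: the last edge of any alternating path from $f$ of parity $\theta$ has $\rho$-value $\OL{\theta}$, and (this is just the intuitive meaning of $\rho$ given in the paper) an alternating path from $f$ of parity $\pi$ extends through an incident edge $e$ to a new vertex precisely when $\rho(e) = \pi$.

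For (P1): given $u' \in \parset(u)$, pick a shortest $\Op(u)$-alternating path $P$ from $f$ to $u$ whose last edge is $\{u',u\}$. Then $P[f,u']$ is an alternating path from $f$ to $u'$ of length $\dist^{\Op}(u)-1$, hence of parity $\OL{\Op(u)}$, so $\dist^{\OL{\Op(u)}}(u') \le \dist^{\Op}(u)-1$; combining with $\dist^{\Op}(u') = \min\{\dist^{\Odd}(u'), \dist^{\Even}(u')\} \le \dist^{\OL{\Op(u)}}(u')$ yields $\dist^{\Op}(u') < \dist^{\Op}(u)$. The spanning-tree assertion is then immediate: in the standard setting where every vertex is reachable from $f$ by an alternating path, $\dist^{\Op}(\cdot)$ is a finite nonnegative integer equal to $0$ only at $f$, so following parent pointers strictly decreases it and must reach $f$ without repeating a vertex, i.e., the parent pointers form a spanning tree rooted at $f$. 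The last assertion follows by chaining $\dist^{\Op}(\parent_T(w)) < \dist^{\Op}(w)$ along the tree path from $u$ to any descendant $u''$.

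For (P2): abbreviate $\theta = \Op(u)$ and $m = \dist^{\theta}(u)$, and note $m < \dist^{\OL\theta}(u)$. To see that $u' \in \parset(u)$ forces $\dist^{\OL\theta}(u') + 1 = m$: as in (P1), $\dist^{\OL\theta}(u') \le m-1$; and if $\dist^{\OL\theta}(u') \le m-3$, then, taking a shortest $\OL\theta$-alternating path $Q$ to $u'$ and using $\rho(\{u',u\}) = \OL\theta$ (since $\{u',u\}$ caps a shortest $\theta$-path to $u$), either $u \notin V(Q)$, so $Q \circ \{u',u\}$ is a $\theta$-alternating path to $u$ of length $\le m-2 < m$, a contradiction, or $u \in V(Q)$, which is impossible by the anti-self-intersection claim as $|Q| < m = \dist^{\Op}(u)$. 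For the converse, suppose $\dist^{\OL\theta}(u') + 1 = m$; the first step is to verify $\rho(\{u',u\}) = \OL\theta$, so that $\{u',u\}$ can cap an $\OL\theta$-alternating path, and then, taking a shortest $\OL\theta$-alternating path $Q$ to $u'$ and ruling out $u \in V(Q)$ exactly as above, $Q \circ \{u',u\}$ is a shortest $\theta$-alternating path to $u$ with last edge $\{u',u\}$, i.e., $u' \in \parset(u)$.

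The step I expect to be the main obstacle is the verification $\rho(\{u',u\}) = \OL{\Op(u)}$ in the converse direction of (P2): the hypothesis constrains only a distance, not an edge type, so one must argue separately that $\{u',u\}$ is actually of the type that can be appended to an $\OL{\Op(u)}$-alternating path. I would handle this by a short case split on the parity $\OL\theta$, exploiting that $u'$ is incident to at most one matching edge together with the orthodoxy $\dist^{\theta}(u) < \dist^{\OL\theta}(u)$ at $u$ to derive a contradiction whenever $\rho(\{u',u\}) = \theta$; a condition of this flavour (equivalently, $\{u',u\} \in \paredgeset(u)$) is exactly the subtle point that separates the two directions, and I would state (P2) with this edge-type proviso made explicit if it is not already subsumed. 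Everything else reduces to the two path operations plus the single anti-self-intersection claim.
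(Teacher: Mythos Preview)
Your overall approach---truncate a shortest orthodox path for (P1) and the ``$\Leftarrow$'' half of (P2), extend a shortest $\OL{\Op(u)}$-path by $\{u',u\}$ for the ``$\Rightarrow$'' half, and use the anti-self-intersection observation to keep everything simple---is exactly what the paper does. Your proof of (P1) and of the implication $u'\in\parset(u)\Rightarrow\dist^{\OL{\Op(u)}}(u')+1=\dist^{\Op}(u)$ is correct and essentially identical to the paper's.

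Your instinct about the edge-type check $\rho(\{u',u\})=\OL{\Op(u)}$ in the converse direction is well placed; in fact the paper's own proof silently assumes this when it asserts that $P\circ\{u',u\}$ is an $\Op(u)$-alternating path. Your proposed case split resolves it only when $\Op(u)=\Odd$: if $\{u',u\}\in M$ then it is $u'$'s unique matching edge, so the shortest $\Even$-path to $u'$ must end with $\{u,u'\}$, forcing $u$ onto that path and contradicting anti-self-intersection. When $\Op(u)=\Even$, however, the argument breaks down and (P2) as literally stated is false. A concrete witness: take $f$--$a$ (non-matching), $a$--$b$ (matching), $b$--$w$ and $b$--$u'$ (both non-matching), $w$--$u$ (matching), $u'$--$u$ (non-matching), $u'$--$e$ (matching). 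Then $\Op(u)=\Even$, $\dist^{\Even}(u)=4$, $\dist^{\Odd}(u')=3$, so the distance equation holds, yet $\parset(u)=\{w\}$ and $u'\notin\parset(u)$. So your fallback---stating (P2) with the proviso $\rho(\{u',u\})=\OL{\Op(u)}$---is the right fix, and indeed every invocation of (P2) later in the paper (e.g., via Lemma~\ref{lma:fundamental2}) already has that edge-type hypothesis in hand.
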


\begin{proof}
\noindent
\textbf{Proof of (P1)}: By the definition of $\paredgeset(u)$, there exists an alternating path $P$ from $f$ to $u$ of length $\dist^{\Op(u)}(u)$ whose last two vertices are $u'$ and $u$. Hence $P[f, u']$ is an alternating path from $f$ to $u'$ of length less than $\dist^{\Op(u)}(u)$. It implies $\dist^{\Op(u')}(u') < \dist^{\Op(u)}(u)$.

\textbf{Proof of (P2)}: We first prove the direction $\Rightarrow$. Let $P$ be a shortest $\OL{\Op(u)}$-alternating path from $f$ to $u'$. For any vertex $v \in V(P)$, $\dist^{\Op}(v) \leq |P| < \dist^{\Op}(u)$ holds.
Thus, $P$ does not contain $u$, and $P \circ \{u', u\}$ is an $\Op(u)$-alternating path from $f$ to $u$ of length $\dist^{\OL{\Op(u)}}(u') + 1 = \dist^{\Op}(u)$, i.e., it is a shortest orthodox alternating path and thus $u' \in \parset(u)$ holds.

Next, we prove the direction $\Leftarrow$. Let $P$ be any $\Op(u)$-alternating path from $f$ to $u$ terminating with edge $\{u', u\}$. 
Then the prefix $P[f, u']$ must be a shortest $\OL{\Op(u)}$-alternating path from $f$ to $u'$ as follows.
Suppose to the contrary that there exists a shorter $\OL{\Op(u)}$-alternating path $P'$ from $f$ to $u'$.
Then, $P'$ does not contain $u$ as observed in the proof of $\Rightarrow$, and hence one can obtain a shorter $\Op(u)$-alternating path from $f$ to $u$ than $P$ by replacing $P[f, u']$ with $P'$, a contradiction.
Thus, $\dist^{\OL{\Op(u)}}(u') + 1 = \dist^{\Op}(u)$.
\end{proof}

\begin{lemma} \label{lma:fundamental2}
For any edge $e = \{u, v\}$, the following properties hold:
\begin{itemize}
    \item[(P1)] $\dist^{\rho(e)}(u) + 1 \geq \dist^{\Op}(v)$ holds.
    \item[(P2)] If $e \not\in \paredgeset(v)$, $\dist^{\rho(e)}(u) + 1 > \dist^{\Op}(v)$ holds.
\end{itemize}

\end{lemma}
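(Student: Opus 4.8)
The plan is to prove both parts from a single construction. We may assume $\dist^{\rho(e)}(u)<\infty$ (otherwise both statements hold trivially under the usual $\infty$-conventions), and fix a shortest $\rho(e)$-alternating path $P$ from $f$ to $u$, so that $|P|=\dist^{\rho(e)}(u)$. The one elementary fact I would isolate is exactly the intuition already recorded for $\rho$: when $|P|\ge 1$, the last edge of $P$ belongs to $M$ precisely when $|P|$ is even, which is the parity opposite to the one that the value $\rho(e)$ forces on $e$; hence, provided $v\notin V(P)$, the concatenation $P\circ e$ is again an alternating path (when $|P|=0$, i.e.\ $u=f$, the one-edge path $P\circ e$ is trivially alternating). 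Now I would split on whether $v$ lies on $P$: if $v\notin V(P)$ then $P\circ e$ is an alternating $f$--$v$ path of length $\dist^{\rho(e)}(u)+1$; if $v\in V(P)$ then the prefix $P[f,v]$ is an alternating $f$--$v$ path of length at most $|P|-1$ (using $u\ne v$, as $G$ is simple). In either case $v$ admits an alternating path from $f$ of length at most $\dist^{\rho(e)}(u)+1$, and since every alternating $f$--$v$ path has length at least $\dist^{\Op}(v)=\min\{\dist^{\Odd}(v),\dist^{\Even}(v)\}$, part (P1) follows.

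For (P2) I would argue by contradiction. Suppose $e\notin\paredgeset(v)$ but $\dist^{\rho(e)}(u)+1\le\dist^{\Op}(v)$; by (P1) this is an equality, and in particular $\dist^{\rho(e)}(u)<\infty$, so the path $P$ exists as above. The case $v\in V(P)$ is now impossible, since it would produce an alternating $f$--$v$ path of length at most $|P|-1=\dist^{\Op}(v)-2<\dist^{\Op}(v)$. Hence $v\notin V(P)$, so $P\circ e$ is an alternating $f$--$v$ path of length exactly $\dist^{\Op}(v)$ whose last edge is $e$. Because $\dist^{\theta}(v)$ has the same parity as $\theta$ for each $\theta\in\{\Odd,\Even\}$ (an odd-alternating path has odd length, and likewise for even), the length $\dist^{\Op}(v)$ has parity $\Op(v)$, so $P\circ e$ is an orthodox alternating path to $v$; being of length $\dist^{\Op}(v)$ it is a shortest such, and it terminates with $e$. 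Thus $e\in\paredgeset(v)$, contradicting the hypothesis, and we conclude $\dist^{\rho(e)}(u)+1>\dist^{\Op}(v)$.

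I do not anticipate a genuine obstacle: the content is elementary, and the two parts are essentially one argument run twice. The only places I would write carefully rather than hand-wave are the bookkeeping around $\infty$ and the degenerate endpoint $u=f$, and, more importantly, the parity step in (P2). It is tempting there to invoke $\dist^{\Op}(v)<\dist^{\Nop}(v)$ to certify that $P\circ e$ is orthodox, but that is both unnecessary and cleaner to avoid: the parity of $P\circ e$ is simply the parity of its length $\dist^{\Op}(v)$, which equals $\Op(v)$ by the definition of $\dist^{\Op}$. (As a byproduct this also shows $\dist^{\Odd}(v)\ne\dist^{\Even}(v)$ whenever both are finite, so that $\Op(v)$ is always well defined.)
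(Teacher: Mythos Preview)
Your proof is correct and follows essentially the same route as the paper: take a shortest $\rho(e)$-alternating path $P$ to $u$, and either extend by $e$ or truncate at $v$. The only cosmetic difference is in (P2): the paper observes that equality forces $\Op(v)=\overline{\rho(e)}$ and then quotes Lemma~\ref{lma:fundamental}~(P2) to conclude $e\in\paredgeset(v)$, whereas you unfold that step and exhibit $P\circ e$ directly as a shortest orthodox path terminating in $e$.
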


\begin{proof}
\textbf{Proof of (P1)}: Let $P$ be any shortest $\rho(e)$-alternating path from $f$ to $u$. 
Since $P \circ \{u, v\}$ itself or its subpath $P'$ up to the first appearance of $v$ 
is an alternating path to $v$, $\dist^{\rho(e)}(u) + 1 \geq \dist^{\Op}(v)$ holds.

\textbf{Proof of (P2)}: If the equality holds, $\Op(v) = \OL{\rho(e)}$ must be satisfied, which implies $e \in \paredgeset(v)$ by Lemma~\ref{lma:fundamental} (P2). Hence the equality does not hold in the case of $e \not\in \paredgeset(v)$.
\end{proof}

\begin{lemma} \label{lma:volumeBound}
For any $t \in V(T)$, $\dist^{\Op}(t) + \dist^{\Nop}(t) \geq \vlevel(\Minedge_T(t))$ holds.
\end{lemma}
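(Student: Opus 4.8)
The plan is to exhibit a single incoming edge $e$ of $T(t)$ with $\vlevel(e) \le \dist^{\Op}(t) + \dist^{\Nop}(t)$; since $\Minedge_T(t)$ realizes the minimum level over $\Outset_T(t)$ and the level order is the lexicographic order on $(\vlevel(\cdot), \hlevel(\cdot))$, it refines the volume order, so this immediately yields $\vlevel(\Minedge_T(t)) \le \vlevel(e) \le \dist^{\Op}(t) + \dist^{\Nop}(t)$. If $\dist^{\Nop}(t) = \infty$ there is nothing to prove, so assume $\dist^{\Nop}(t) < \infty$; then $\dist^{\Op}(t) < \infty$ as well (if one of the two parity-distances at $t$ were infinite it would be the unorthodox one). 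Fix a shortest orthodox alternating path $R_1$ from $f$ to $t$ and a shortest unorthodox alternating path $R_2$ from $f$ to $t$, so that $|R_1| = \dist^{\Op}(t)$ and $|R_2| = \dist^{\Nop}(t)$.

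First I would pin down $e$. The starting observation is that the parity of an alternating path from $f$ ending with a prescribed edge is determined by that edge: the prefix before the last edge $e'$ has parity $\rho(e')$, hence the whole path has parity $\OL{\rho(e')}$. Since the tree edge $\{\parent_T(t), t\}$ belongs to $\paredgeset(t)$, some shortest \emph{orthodox} path to $t$ ends with it, so every alternating path to $t$ that ends with the tree edge has parity $\Op(t)$; in particular $R_2$, which has parity $\Nop(t)$, does not end with the tree edge, and therefore $R_2$ does not traverse $\{\parent_T(t), t\}$ at all (that edge could only occur as the last edge of the simple path $R_2$, since $t$ is an endpoint). Consequently every boundary edge of $T(t)$ used by $R_2$ is an incoming edge, and since $R_2$ goes from $f \notin V(T(t))$ to $t \in V(T(t))$ it uses at least one. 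Let $e = \{y, z\}$ with $z \in V(T(t))$ be the last incoming edge on $R_2$; after crossing $e$ the path $R_2$ performs no further boundary crossing, hence stays inside $T(t)$ up to $t$, i.e., $V(R_2[z,t]) \subseteq V(T(t))$ and $R_2 = R_2[f,y] \circ e \circ R_2[z,t]$.

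Next I would bound $\vlevel(e) = \dist^{\rho(e)}(y) + \dist^{\rho(e)}(z) + 1$ termwise. For $y$: the prefix $R_2[f,y]$ is followed by $e$, so it has parity $\rho(e)$, giving $\dist^{\rho(e)}(y) \le |R_2[f,y]|$. For $z$: I would glue $R_1$ and $\overline{R_2[z,t]}$ at $t$. The crucial fact here is that $R_1$ never enters $T(t)$: if some $w \in V(R_1) \setminus \{t\}$ lay in $V(T(t))$, then $w$ would be a proper descendant of $t$ in $T$, so by Lemma~\ref{lma:fundamental}~(P1) we would get $\dist^{\Op}(w) > \dist^{\Op}(t) = |R_1| > |R_1[f,w]| \ge \dist^{\Op}(w)$, a contradiction. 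Hence $V(R_1)$ and $V(R_2[z,t])$ meet only in $t$, so $P \Defeq R_1 \circ \overline{R_2[z,t]}$ is an honest simple path from $f$ to $z$. It is alternating: inside $R_1$ and inside $\overline{R_2[z,t]}$ this is clear, and at the junction $t$ the last edge of $R_1$ and the last edge of $R_2$ have opposite membership in $M$ because the parities $\Op(t)$ and $\Nop(t)$ differ. Finally, counting parities modulo $2$ with the identity $|R_2| = |R_2[f,y]| + 1 + |R_2[z,t]|$ shows that $P$ has parity $\rho(e)$, whence $\dist^{\rho(e)}(z) \le |P| = \dist^{\Op}(t) + |R_2[z,t]|$. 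Combining the two bounds, $\vlevel(e) \le |R_2[f,y]| + \bigl(\dist^{\Op}(t) + |R_2[z,t]|\bigr) + 1 = \dist^{\Op}(t) + |R_2| = \dist^{\Op}(t) + \dist^{\Nop}(t)$, which finishes the argument.

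The step I expect to be the main obstacle is guaranteeing that $P = R_1 \circ \overline{R_2[z,t]}$ is a genuine simple path rather than merely an alternating walk: in a general matching system an alternating walk of a given parity from $f$ need not contain an alternating path of that parity, so settling for a walk would not bound $\dist^{\rho(e)}(z)$. Avoiding this is precisely what forces the combination of two structural inputs — Lemma~\ref{lma:fundamental}~(P1), which confines $R_1$ to the complement of $T(t)$, and the choice of $e$ as the \emph{last} incoming edge of $R_2$, which confines $R_2[z,t]$ to $T(t)$. I would also verify that the argument is unaffected in the degenerate situations $y = f$ or $z = t$ (where $R_2[f,y]$ or $R_2[z,t]$ is a single-vertex path), which the same computation handles directly.
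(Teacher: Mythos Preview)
Your proof is correct and follows essentially the same route as the paper: take a shortest unorthodox path, let $e=\{y,z\}$ be its last incoming edge of $T(t)$, use Lemma~\ref{lma:fundamental}~(P1) to confine a shortest orthodox path $R_1$ outside $T(t)$, concatenate $R_1$ with the reversed suffix to get a $\rho(e)$-alternating path to $z$, and sum the two bounds. The paper phrases the choice of $e$ as the last edge in $\Outset_T(t)\setminus\paredgeset(t)$, but as your parity argument on the last edge of $R_2$ implicitly shows, the unorthodox path never uses an edge of $\paredgeset(t)$ anyway, so the two choices coincide.
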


\begin{proof} 
Let $P$ be any shortest unorthodox alternating path from $f$ to $t$. 
By the definition of $\paredgeset$, $P$ does not contain any edges in $\paredgeset(t)$, and thus $P$ must use an edge in $\Outset_T(t) \setminus \paredgeset(t)$ to enter $T(t)$.
Let $e = \{y, z\}$ be the last edge in $(\Outset_T(t) \setminus \paredgeset(t)) \cap E(P)$ so that the suffix $P[z, t]$ lies in $T(t)$. 
We also define $Q$ as any shortest (orthodox) alternating path from $f$ to $t$ (Fig.~\ref{fig:insideIsShorter}). By Lemma~\ref{lma:fundamental} (P1), $Q$ does not intersect 
any vertices in $V(T(t)) \setminus \{t\}$.
Since $P[z, t]$ lies in $T(t)$, the concatenation $Q \circ \OL{P}[t, z]$ is a $\rho(e)$-alternating path from $f$ to $z$ (note that this holds even 
if the length of $\OL{P}[t, z]$ is zero). In addition, $P[f, y]$ is a $\rho(e)$-alternating 
path from $f$ to $y$.
Thus,
\begin{align*}
\dist^{\Op}(t) + \dist^{\Nop}(t) 
&= |Q| + |P| \\
&= |Q \circ \OL{P}[t, z]| + |P[f, y]| + |\{y, z\}| \\
&\geq \dist^{\rho(e)}(z) + \dist^{\rho(e)}(y) + 1 \\
&= \vlevel(e) \\
&\geq \vlevel(\Minedge_T(t)).
\end{align*}
The lemma is proved. 
\end{proof}

\begin{lemma} \label{lma:mieNotParedge}
For any $t \in V(T)$ with $\dist^\Nop(t) < \infty$, there exists an MIE $\Minedge$ of $T(t)$ such that $\Minedge \not\in 
\paredgeset(t)$ holds.
\end{lemma}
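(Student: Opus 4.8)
The plan is to produce one concrete critical incoming edge of $T(t)$ and verify directly that it is an MIE. Concretely, I would let $e = \{y, z\}$, with $z \in V(T(t))$, be a critical incoming edge of $T(t)$ (i.e.\ $e \in \Outset_T(t) \setminus \paredgeset(t)$) of \emph{minimum level} among all critical incoming edges, and claim that this $e$ is an MIE of $T(t)$, which is exactly the assertion of the lemma. Such an $e$ exists: since $\dist^\Nop(t) < \infty$, there is a shortest unorthodox alternating path $P$ from $f$ to $t$, and the argument already carried out in the proof of Lemma~\ref{lma:volumeBound} extracts from $P$ a critical incoming edge of $T(t)$ whose volume is at most $\dist^\Op(t) + \dist^\Nop(t)$. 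In particular $\vlevel(e) \le \dist^\Op(t) + \dist^\Nop(t)$, a fact I will use below.

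Before attacking the claim I would pin down the shape of the non-critical incoming edges. Any $e^\dagger \in \Outset_T(t) \cap \paredgeset(t)$ is incident to $t$, so $e^\dagger = \{u', t\}$ with $u' \in \parset(t)$, and $u' \notin V(T(t))$ because $\dist^\Op(u') < \dist^\Op(t)$ by Lemma~\ref{lma:fundamental}~(P1). By Lemma~\ref{lma:fundamental}~(P2), $\dist^{\OL{\Op(t)}}(u') = \dist^\Op(t) - 1$, and a one-line parity check (a shortest orthodox path to $t$ ends with $e^\dagger$ and its prefix to $u'$ has length $\dist^\Op(t) - 1$) gives $\rho(e^\dagger) = \OL{\Op(t)}$. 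Hence $\vlevel(e^\dagger) = \dist^\Op(t) + \dist^\Nop(t)$, and since $\dist^\Nop(t) > \dist^\Op(t)$ we get $\hlevel(e^\dagger) = \dist^\Nop(t)$; so every non-critical incoming edge of $T(t)$ has the same level $L := n'(\dist^\Op(t) + \dist^\Nop(t)) + \dist^\Nop(t)$.

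Now I would argue by contradiction: if $e$ is not an MIE, some incoming edge has level strictly less than $\level(e)$, and by minimality of $\level(e)$ among critical incoming edges that edge is non-critical, so $L < \level(e)$. If $\vlevel(e) < \dist^\Op(t) + \dist^\Nop(t)$ this is already impossible, since then $\level(e) = n'\,\vlevel(e) + \hlevel(e) < n'(\dist^\Op(t) + \dist^\Nop(t)) \le L$. So $\vlevel(e) = \dist^\Op(t) + \dist^\Nop(t)$, i.e.\ $\dist^{\rho(e)}(y) + \dist^{\rho(e)}(z) = \dist^\Op(t) + \dist^\Nop(t) - 1$. Here the ABT structure enters: Lemma~\ref{lma:fundamental2}~(P1) applied to $e = \{y, z\}$ gives $\dist^{\rho(e)}(y) + 1 \ge \dist^\Op(z)$, while $z \in V(T(t))$ together with Lemma~\ref{lma:fundamental}~(P1) gives $\dist^\Op(z) \ge \dist^\Op(t)$, and trivially $\dist^{\rho(e)}(z) \ge \dist^\Op(z) \ge \dist^\Op(t)$. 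Thus $\dist^{\rho(e)}(y) \ge \dist^\Op(t) - 1$ and $\dist^{\rho(e)}(z) \ge \dist^\Op(t)$, which combined with the displayed sum force $\dist^{\rho(e)}(y) \le \dist^\Nop(t) - 1$ and $\dist^{\rho(e)}(z) \le \dist^\Nop(t)$, so $\hlevel(e) \le \dist^\Nop(t)$ and therefore $\level(e) \le L$, contradicting $L < \level(e)$. Hence $e$ is an MIE.

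The main obstacle will be the tie-break case $\vlevel(e) = \dist^\Op(t) + \dist^\Nop(t)$, where $e$ must not be beaten by a non-critical incoming edge on height. Taking the specific edge supplied by the proof of Lemma~\ref{lma:volumeBound} would not obviously suffice here; the point of choosing $e$ to be the minimum-level critical incoming edge is that the maximality of its volume can be fed back, through Lemma~\ref{lma:fundamental2}~(P1) and the fact that $z$ lies inside $T(t)$, into the bound $\hlevel(e) \le \dist^\Nop(t)$. Everything else — existence of a critical incoming edge, the parity computation giving $\rho(e^\dagger) = \OL{\Op(t)}$, and the resulting values of $\vlevel(e^\dagger)$ and $\hlevel(e^\dagger)$ — is routine.
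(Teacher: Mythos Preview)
Your argument is correct, but it proceeds quite differently from the paper's.

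The paper's proof goes in the forward direction: it fixes an arbitrary edge $e \in \paredgeset(t)$ that happens to be an incoming edge of $T(t)$, then extracts from a shortest $\Nop(t)$-alternating path $P$ the last edge $e' \in (\Outset_T(t) \setminus \paredgeset(t)) \cap E(P)$ and shows, via the same path-concatenation trick as in Lemma~\ref{lma:volumeBound}, that $\vlevel(e') \le \vlevel(e)$ and $\hlevel(e') \le \hlevel(e)$. Thus every incoming edge in $\paredgeset(t)$ is weakly dominated (in level) by some incoming edge outside $\paredgeset(t)$, and the conclusion follows.

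You instead compute the level of \emph{every} incoming edge in $\paredgeset(t)$ explicitly (they all share the common value $L = n'(\dist^\Op(t)+\dist^\Nop(t)) + \dist^\Nop(t)$) and then bound the level of the minimum-level edge $e \in \Outset_T(t)\setminus\paredgeset(t)$ by $L$, using Lemma~\ref{lma:fundamental2}~(P1) and Lemma~\ref{lma:fundamental}~(P1) to squeeze $\hlevel(e)$ in the tie case $\vlevel(e) = \dist^\Op(t)+\dist^\Nop(t)$. This is more computational but avoids any further path manipulation beyond what Lemma~\ref{lma:volumeBound} already supplies; the paper's version is shorter but relies on one more concatenation argument. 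One terminological caution: the paper reserves ``critical'' for edges outside the global $\paredgeset$, not just $\paredgeset(t)$, so your phrase ``critical incoming edge of $T(t)$'' should be read as ``incoming edge of $T(t)$ not in $\paredgeset(t)$'' to avoid a clash---but your parenthetical makes the intended meaning clear, and it is exactly what the lemma asks for.
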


\begin{proof}
Let $e = \{y, t\}$ be any edge in $\paredgeset(t)$. By the definition, $\rho(e) = \Nop(t)$ holds. 
Let $P$ be any shortest $\beta(t)$-alternating path from $f$ to $t$, and as with the previous proof, let $e' = \{y', z'\}$ be the last edge in $(\Outset_T(t) \setminus \paredgeset(t)) \cap E(P)$ so that the suffix 
$P[z', t]$ lies in $T(t)$.
Let $Q$ be a shortest $\alpha(t)$-alternating path from $f$ to $t$ terminating with $e$.
Since $Q[f, y]$ is a shortest $\Nop(t)$-alternating path from $f$ to $y$, we have $\vlevel(e) = |P| + |Q|$. 
In addition, $|P| > |Q|$ obviously holds, and thus $\hlevel(e) = |P|$. Since $Q \circ \OL{P}[t, z']$ is a 
$\rho(e')$-alternating path from $f$ to $z'$ and $P[f, y']$ is a $\rho(e')$-alternating path from $f$ to $y'$, 
we also have $\vlevel(e') \leq |P| + |Q| = \vlevel(e)$. By Lemma~\ref{lma:fundamental} (P1), we have 
$\dist^{\Op}(z') \leq \dist^{\Op}(t)$ (note that we might have the case of $z' = t$). 
It implies $|P[f, z']| \geq |Q|$, and thus $|Q \circ \OL{P}[t, z']| \leq |P|$. Since 
$|P[f, y']| \leq |P|$ obviously holds, it concludes $\hlevel(e') \le \max\{|P[f, y']|,\, |Q \circ \OL{P}[t, z']|\} \leq |P| = \hlevel(e)$.
Then, if $e$ is an MIE of $T(t)$, $e'$ is also an MIE of $T(t)$. The lemma is proved.
\end{proof}

\begin{lemma} \label{lma:oddOrthodoxProp}
Let $t \in V(T)$ be a vertex such that $\alpha(t) = \Odd$, $t'$ be a child of $t$ such that $\{t, t'\}$ is not a matching edge, 
and $e = \{y, z\}$ be any edge incident (not necessarily incoming) to a vertex in $V(T(t'))$ (Fig.~\ref{fig:oddOrthodoxProp}). 
Then $\vlevel(e) > \vlevel(\Minedge_T(t))$ (and thus $\level(e) > \level(\Minedge_T(t))$) holds.
\end{lemma}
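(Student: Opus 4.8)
The statement concerns a vertex $t$ with $\alpha(t) = \Odd$, a non-matching child edge $\{t, t'\}$, and an arbitrary edge $e = \{y, z\}$ touching the subtree $T(t')$; we must show $\vlevel(e) > \vlevel(\Minedge_T(t))$. The plan is to bound $\vlevel(e)$ from below using the distances of its endpoints, and to bound $\vlevel(\Minedge_T(t))$ from above by exhibiting a concrete incoming edge of $T(t)$, namely the matching edge $\{t, t''\}$ where $t'' = \parent_T(t)$ (since $\alpha(t) = \Odd$ means the orthodox parent edge at $t$ is a matching edge, so $t''$ lies outside $T(t)$ and $\{t, t''\}$ is an incoming edge of $T(t)$ of volume $\dist^{\Op}(t) + \dist^{\Op}(t'')+1$, roughly $2\dist^{\Op}(t)$).

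**Key steps.**
First I would pin down the volume of a convenient incoming edge of $T(t)$: using Lemma~\ref{lma:fundamental} (P2) with $\alpha(t) = \Odd$, the parent edge $\{t, t''\}$ is a matching edge with $\dist^{\Even}(t'') + 1 = \dist^{\Op}(t)$, so $\rho(\{t,t''\}) = \Odd = \rho$ of that edge equals the orthodox parity at both ends in the right sense, giving $\vlevel(\{t,t''\}) = \dist^{\Op}(t'') + \dist^{\Op}(t) + 1 = 2\dist^{\Op}(t) - 1$ (or a similarly clean small expression); hence $\vlevel(\Minedge_T(t)) \le 2\dist^{\Op}(t) - 1$. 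Second, I would lower-bound $\vlevel(e)$: write $\vlevel(e) = \dist^{\rho(e)}(y) + \dist^{\rho(e)}(z) + 1$, and note that whichever of $y, z$ lies in $V(T(t'))$ — say $z$ — satisfies $\dist^{\Op}(z) \ge \dist^{\Op}(t') \ge \dist^{\Op}(t) + 1$ by Lemma~\ref{lma:fundamental} (P1) (two applications: $t'$ is a child of $t$, and $z$ is a descendant of $t'$ or $z = t'$). The subtlety is that $\rho(e)$ need not equal $\Op(z)$, so I would argue that $\dist^{\rho(e)}(z) \ge \dist^{\Op}(z)$ always, and moreover that the parity constraint forces the matching-edge structure around $t'$ to make $\dist^{\rho(e)}(z)$ strictly larger than $\dist^{\Op}(t)$: since $\{t,t'\}$ is non-matching and $\alpha(t) = \Odd$, the orthodox path to $t'$ has even length $\dist^{\Op}(t) + 1$, and any alternating path from $f$ to a vertex of $T(t')$ of the parity matching a generic edge $e$ must route through the inside, making its length at least $\dist^{\Op}(t) + 1$. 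For the other endpoint $y$, I use the trivial bound $\dist^{\rho(e)}(y) \ge \dist^{\Op}(y) \ge \dist^{\Op}(f) = 0$, but more usefully I would invoke Lemma~\ref{lma:fundamental2} (P1) to get $\dist^{\rho(e)}(y) + 1 \ge \dist^{\Op}(z) \ge \dist^{\Op}(t) + 1$, hence $\dist^{\rho(e)}(y) \ge \dist^{\Op}(t)$. Combining, $\vlevel(e) \ge \dist^{\Op}(t) + (\dist^{\Op}(t) + 1) + 1 = 2\dist^{\Op}(t) + 2 > 2\dist^{\Op}(t) - 1 \ge \vlevel(\Minedge_T(t))$.

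**The main obstacle.**
The delicate point is the parity bookkeeping around $t'$: a priori $\rho(e)$ is just determined by whether $e \in M$, independent of orthodox parities, so I have to be careful that $\dist^{\rho(e)}(z)$ — the relevant quantity in $\vlevel(e)$ — is genuinely bounded below by $\dist^{\Op}(t) + 1$ and not merely by $\dist^{\Op}(z)$ when $\Op(z)$ happens to disagree with $\rho(e)$. The resolution is that $\dist^{\rho(e)}(z) \ge \dist^{\Op}(z)$ unconditionally (the orthodox distance is the minimum over both parities that are finite, or if $\dist^{\rho(e)}(z) = \infty$ the inequality is vacuous), and $\dist^{\Op}(z) \ge \dist^{\Op}(t) + 1$ from the two-fold application of Lemma~\ref{lma:fundamental} (P1) down the tree from $t$ through $t'$ to $z$. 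I expect the figure (Fig.~\ref{fig:oddOrthodoxProp}) referenced in the statement to make the intended routing transparent, and the final inequality to come out with room to spare, so the write-up should be short once the lower bound on $\dist^{\rho(e)}(z)$ is cleanly stated.
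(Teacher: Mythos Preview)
Your lower bound on $\vlevel(e)$ is essentially the paper's: using Lemma~\ref{lma:fundamental2}~(P1) and $\dist^{\rho(e)}(z)\ge\dist^{\Op}(z)$ together with Lemma~\ref{lma:fundamental}~(P1) gives $\vlevel(e)\ge 2\dist^{\Op}(z)\ge 2\dist^{\Op}(t')$. The gap is entirely in your upper bound on $\vlevel(\Minedge_T(t))$.

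First, the parent edge $\{t,t''\}$ cannot serve as an incoming edge of $T(t)$: incoming edges are by definition non-tree edges, and $\{t,\parent_T(t)\}$ is a tree edge. Second, your parity claim is inverted: since $\alpha(t)=\Odd$, a shortest orthodox path to $t$ has odd length and hence ends with a \emph{non}-matching edge, so $\rho(\{t,t''\})=\Even$. If you compute its volume honestly you get $\dist^{\Even}(t'')+\dist^{\Even}(t)+1 = (\dist^{\Op}(t)-1)+\dist^{\Nop}(t)+1 = \dist^{\Op}(t)+\dist^{\Nop}(t)$, not $2\dist^{\Op}(t)-1$. So any route to an upper bound lands you at $\vlevel(\Minedge_T(t))\le \dist^{\Op}(t)+\dist^{\Nop}(t)$, which is exactly Lemma~\ref{lma:volumeBound}. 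Your inequality $2\dist^{\Op}(t)+2 > \dist^{\Op}(t)+\dist^{\Nop}(t)$ would require $\dist^{\Nop}(t)<\dist^{\Op}(t)+2$, which is false in general.

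The missing idea is that the hypothesis ``$\{t,t'\}$ is a non-matching tree edge and $\alpha(t)=\Odd$'' forces $\dist^{\Op}(t')>\dist^{\Nop}(t)$: a shortest orthodox path to $t'$ ending with $\{t,t'\}$ has as its prefix an $\Even$-alternating (hence unorthodox) path to $t$, so $\dist^{\Nop}(t)\le |P[f,t]|<|P|=\dist^{\Op}(t')$. With this you get $\vlevel(e)\ge 2\dist^{\Op}(t')>2\dist^{\Nop}(t)>\dist^{\Op}(t)+\dist^{\Nop}(t)\ge\vlevel(\Minedge_T(t))$, which is precisely the paper's chain. Your sentence about ``the orthodox path to $t'$ has even length $\dist^{\Op}(t)+1$'' is where the confusion shows: the correct identity is $\dist^{\Op}(t')=\dist^{\Nop}(t)+1$, not $\dist^{\Op}(t)+1$, and that extra strength (relating to $\dist^{\Nop}(t)$ rather than $\dist^{\Op}(t)$) is exactly what closes the argument.
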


\begin{proof}
Let $P$ be any shortest orthodox alternating path from $f$ to $t'$ terminating the edge $\{t, t'\}$. Since 
$P[f, t]$ is an $\Even$-alternating path (i.e., unorthodox alternating path) from 
$f$ to $t$, we have
\begin{align}
\vlevel(e) &= (\dist^{\rho(e)}(y) + 1) + \dist^{\rho(e)}(z) \nonumber \\
&\geq 2\dist^{\Op}(z) & \text{(Lemma~\ref{lma:fundamental2} (P1) and $\dist^{\rho(e)}(z) \geq \dist^{\Op}(z)$)} \nonumber \\
&\geq 2\dist^{\Op}(t') & \text{(Lemma~\ref{lma:fundamental} (P1))} \nonumber \\
&> 2 \dist^{\Nop}(t) & \text{($|P| > |P[f, t]|$)} \nonumber \\
&> \dist^{\Op}(t) + \dist^{\Nop}(t) \nonumber \\ 
&\geq \vlevel(\Minedge_T(t)). & \text{(Lemma~\ref{lma:volumeBound})} \nonumber %
\end{align}
The lemma is proved.
\end{proof}

\begin{figure*}[tb]
\begin{minipage}{0.35\columnwidth}
    \centering
    \includegraphics[width=0.8\columnwidth]{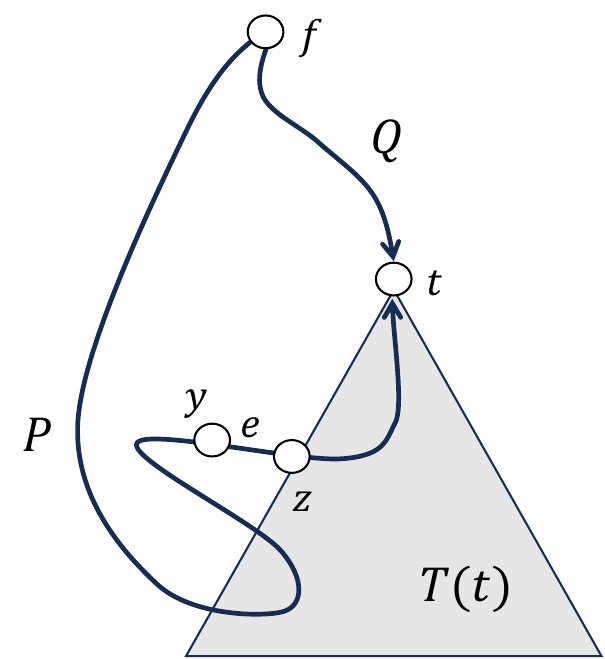}
    \caption{Proof of Lemma~\ref{lma:volumeBound}.}
    \label{fig:insideIsShorter}
\end{minipage}
\begin{minipage}{0.63\columnwidth}
    \centering
    \includegraphics[width=0.8\columnwidth]{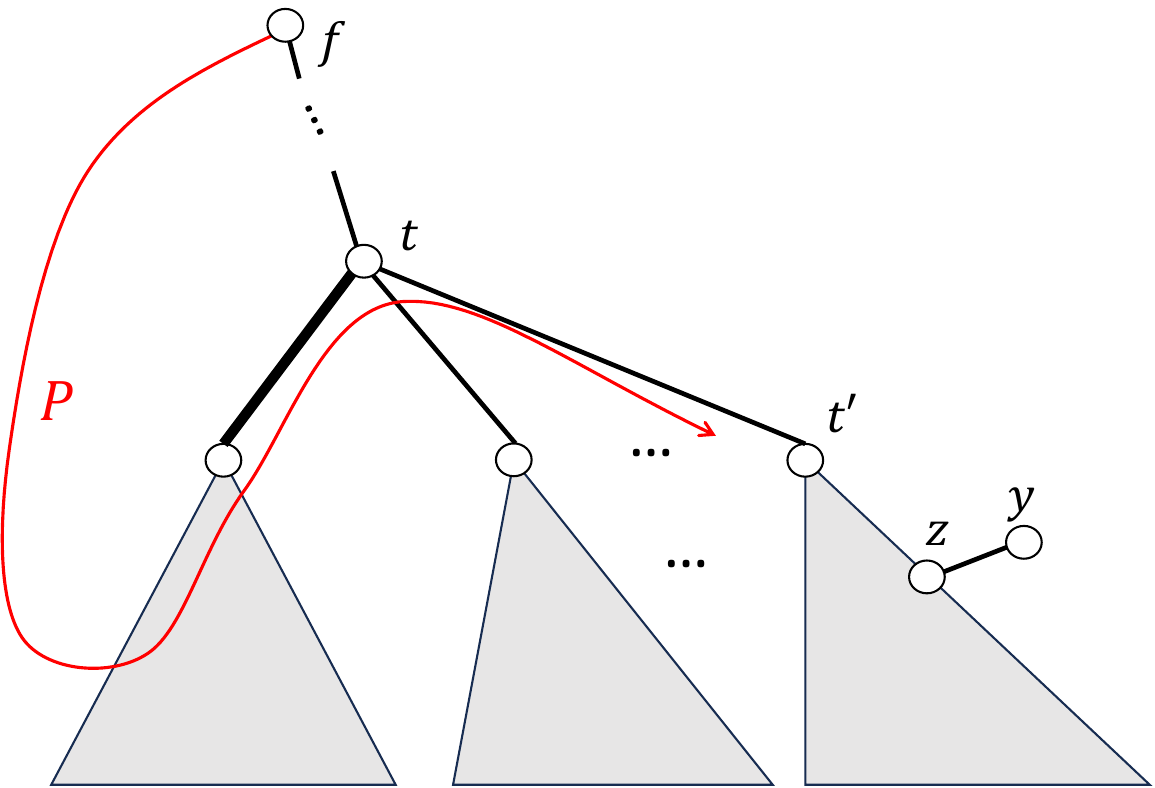}
    \caption{Proof of Lemma~\ref{lma:oddOrthodoxProp}}
 \label{fig:oddOrthodoxProp}
\end{minipage}
\end{figure*}

\begin{lemma} \label{lma:heightBound}
For any $t \in V(T)$ with $\dist^\beta(t) < \infty$, $\dist^{\Nop}(t) > \hlevel(\Minedge_T(t))$ holds. 
\end{lemma}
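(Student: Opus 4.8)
The plan is to mirror the structure of the proof of Lemma~\ref{lma:volumeBound}, but to track the \emph{height} (the max of the two endpoint distances) rather than the volume, and to argue that the two ``halves'' of the decomposed unorthodox path are each strictly shorter than $\dist^\Nop(t)$, so their max is as well. First I would fix a canonical MIE $\Minedge = \Minedge_T(t) = \{y, z\}$ with $z \in V(T(t))$, and recall that by Lemma~\ref{lma:mieNotParedge} we may assume $\Minedge \not\in \paredgeset(t)$, so in particular $\Minedge \ne \{y, t\}$ is not a parent edge of $t$ and $\rho(\Minedge)$ is ``wrong'' for $t$ in the sense of Lemma~\ref{lma:fundamental} (P2). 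I want to show $\hlevel(\Minedge) = \max\{\dist^{\rho(\Minedge)}(y),\, \dist^{\rho(\Minedge)}(z)\} < \dist^\Nop(t)$.

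The key step is to bound each endpoint distance separately. For $z$: since $z$ lies in $T(t) \setminus \{t\}$ or equals $t$, Lemma~\ref{lma:fundamental} (P1) gives $\dist^{\Op}(z) \ge \dist^{\Op}(t)$, but more to the point I need an upper bound. Here I would use that $z$ is reached inside $T(t)$ by the suffix of a shortest unorthodox path: taking $P$ a shortest $\Nop(t)$-alternating path from $f$ to $t$ and $e' = \{y', z'\}$ the last edge of $(\Outset_T(t) \setminus \paredgeset(t)) \cap E(P)$ as in the earlier proofs, the suffix $P[z', t]$ lies in $T(t)$, so $\dist^{\rho(e')}(z') \le |P[f, z']| < |P| = \dist^\Nop(t)$. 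Since $\Minedge$ is a \emph{minimum}-level incoming edge and $e' \in \Outset_T(t) \setminus \paredgeset(t)$, we have $\level(\Minedge) \le \level(e')$, hence $\vlevel(\Minedge) \le \vlevel(e')$, and when $\vlevel$ ties, $\hlevel(\Minedge) \le \hlevel(e')$; combining with the computation in Lemma~\ref{lma:mieNotParedge} (which showed $\hlevel(e') \le \hlevel(e) = |P|$ for $e \in \paredgeset(t)$) I expect to get $\hlevel(\Minedge) \le |P| = \dist^\Nop(t)$, and then I must promote this to a strict inequality. Strictness should come from the fact that $P[f, y']$ and $Q \circ \overline{P}[t, z']$ (with $Q$ a shortest orthodox path to $t$) are both alternating paths \emph{strictly shorter} than $P$: $|P[f, y']| \le |P| - 1$ since $P$ continues past $y'$, and $|Q \circ \overline{P}[t, z']| = |Q| + |P| - |P[f, z']| \le |Q| + |P| - |Q| \cdot(\text{bound from } \dist^\Op(z') \le \dist^\Op(t)) < |P|$ because $|Q| = \dist^\Op(t) < \dist^\Nop(t) = |P|$ and $|P[f,z']| \ge \dist^\Op(t) = |Q|$ forces $|Q \circ \overline P[t,z']| \le |P|$, with the gap being at least the one matching/non-matching edge $\{y',z'\}$ that got removed. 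I would chase these two strict bounds to conclude $\hlevel(e') \le |P| - 1 < \dist^\Nop(t)$, hence $\hlevel(\Minedge) < \dist^\Nop(t)$.

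The main obstacle I anticipate is getting the \emph{strict} inequality cleanly, rather than just $\hlevel(\Minedge) \le \dist^\Nop(t)$ — i.e.\ ruling out the boundary case where some incoming edge has height exactly $\dist^\Nop(t)$. The resolution should be that any incoming edge of $T(t)$ realizing height $\dist^\Nop(t)$ would have to be an edge whose $\rho$-distance endpoint equals $t$ itself with $\dist^{\rho(e)}(t) = \dist^\Nop(t)$, i.e.\ $\rho(e) = \Nop(t)$, which would put $e \in \paredgeset(t)$ by Lemma~\ref{lma:fundamental} (P2) — contradicting that $\Minedge \not\in \paredgeset(t)$. So I would argue: an MIE not in $\paredgeset(t)$ cannot have its realizing endpoint be $t$ at the unorthodox distance, and any other endpoint (either $y$ outside $T(t)$, reached by a strict prefix of $P$, or an internal vertex $z \in V(T(t))\setminus\{t\}$, reached strictly before finishing $P$) has $\rho(e)$-distance strictly below $\dist^\Nop(t)$ by Lemma~\ref{lma:fundamental} (P1) together with the length accounting above. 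That gives $\hlevel(\Minedge_T(t)) < \dist^\Nop(t)$, as claimed.
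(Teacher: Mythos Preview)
Your plan has a genuine gap in the transfer step. You correctly note that for the path-edge $e' = \{y', z'\}$ (the last edge of a shortest unorthodox path $P$ entering $T(t)$) one can bound $\hlevel(e') < \dist^{\Nop}(t)$. But then you try to pass this to $\Minedge$ via ``$\level(\Minedge) \le \level(e')$, hence $\vlevel(\Minedge) \le \vlevel(e')$, and when $\vlevel$ ties, $\hlevel(\Minedge) \le \hlevel(e')$.'' That implication is precisely the problem: when $\vlevel(\Minedge) < \vlevel(e')$ strictly, the level ordering tells you \emph{nothing} about $\hlevel(\Minedge)$ relative to $\hlevel(e')$; the height component of a lexicographically smaller pair can be arbitrarily large. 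So your bound on $\hlevel(e')$ does not propagate to $\hlevel(\Minedge)$. Your final paragraph then tries to argue directly about the endpoints $y,z$ of $\Minedge$, but you describe $y$ as ``reached by a strict prefix of $P$'' and $z$ as ``reached strictly before finishing $P$''---this is the description of $y',z'$, not of $y,z$, which in general lie off $P$ entirely. (Also, $\Minedge \not\in \paredgeset(t)$ does not forbid $z=t$; that case must be handled, and ``$\rho(e)=\Nop(t)$ implies $e\in\paredgeset(t)$'' is not what Lemma~\ref{lma:fundamental} (P2) says.)

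The fix, which is the paper's route, is to work with $\Minedge$ directly and never go through $e'$ at all. From Lemma~\ref{lma:volumeBound} you already have $\dist^{\Nop}(t) \ge \vlevel(\Minedge) - \dist^{\Op}(t) = \dist^{\rho(\Minedge)}(y) + \dist^{\rho(\Minedge)}(z) + 1 - \dist^{\Op}(t)$. Now bound each endpoint: for $y$, replace $\dist^{\rho(\Minedge)}(z)$ by $\dist^{\Op}(z) \ge \dist^{\Op}(t)$ (Lemma~\ref{lma:fundamental} (P1), since $z\in V(T(t))$) to get $\dist^{\Nop}(t) \ge \dist^{\rho(\Minedge)}(y) + 1 > \dist^{\rho(\Minedge)}(y)$. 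For $z$, replace $\dist^{\rho(\Minedge)}(y)+1$ by $\dist^{\Op}(z)$ (Lemma~\ref{lma:fundamental2} (P1)) to get $\dist^{\Nop}(t) \ge \dist^{\rho(\Minedge)}(z) + \dist^{\Op}(z) - \dist^{\Op}(t) \ge \dist^{\rho(\Minedge)}(z)$; strictness comes from the last step if $z\neq t$, and from Lemma~\ref{lma:fundamental2} (P2) at the second step if $z=t$ (using $\Minedge\notin\paredgeset(t)$, which is where Lemma~\ref{lma:mieNotParedge} enters). Taking the max finishes the proof.
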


\begin{proof}
Let $\Minedge = \{y, z\}$ be an MIE of $T(t)$ not contained in $\paredgeset(t)$ (the existence of such an edge
is guaranteed by Lemma~\ref{lma:mieNotParedge}).
We then have 
\begin{align*}
\dist^{\Nop}(t) &\geq \dist^{\rho(\Minedge)}(y) + \dist^{\rho(\Minedge)}(z) + 1 - \dist^{\Op}(t) 
& \text{(Lemma~\ref{lma:volumeBound})} \\
&\geq \dist^{\rho(\Minedge)}(y) + \dist^{\Op}(z) + 1 - \dist^{\Op}(t) &  \\ 
&> \dist^{\rho(\Minedge)}(y). & \text{($z \in V(T(t))$ and Lemma~\ref{lma:fundamental} (P1))}
\end{align*}
We also have 
\begin{align*}
\dist^{\Nop}(t) &\geq \dist^{\rho(\Minedge)}(z) + \dist^{\rho(\Minedge)}(y) + 1 - \dist^{\Op}(t) 
& \text{(Lemma~\ref{lma:volumeBound})} \\
&\geq \dist^{\rho(\Minedge)}(z) + \dist^{\Op}(z) - \dist^{\Op}(t) & \text{(Lemma~\ref{lma:fundamental2} (P1))}\\
&\geq \dist^{\rho(\Minedge)}(z). & \text{($z \in V(T(t))$ and Lemma~\ref{lma:fundamental} (P1))}
\end{align*}
If $z \neq t$ holds, the last inequality becomes strict (i.e., changes to ``$>$'') by 
Lemma~\ref{lma:fundamental} (P1). Otherwise, 
by Lemma~\ref{lma:fundamental2} (P2) and the condition 
$\{y, z\} = \{y, t\} \not\in \paredgeset(t)$,
the second inequality becomes strict. In any case, one can conclude 
$\dist^{\Nop}(t) > \dist^{\rho(\Minedge)}(z)$.
Hence we obtain $\dist^{\beta}(t) > \max\{\dist^{\rho(\Minedge)}(y),\, 
\dist^{\rho(\Minedge)}(z)\} = \hlevel(\Minedge) = \hlevel(\Minedge_T(t))$.
\end{proof}

\begin{lemma} \label{lma:rootIncidence}
For $t \in V(T)$ with $\dist^\beta(t) < \infty$, let $e = \{y, t\}$ be any non-tree edge incident to $t$.
Then $\level(e) > \level(\Minedge_T(t))$ holds
unless both $e \not\in \paredgeset(t)$ and $\rho(e) = \Op(t)$ are satisfied.
Particularly, if $e \not\in \paredgeset(t)$ and $\rho(e) = \Nop(t)$, $\vlevel(e) > \vlevel(\Minedge_T(t))$
holds.
\end{lemma}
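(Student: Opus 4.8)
First I would eliminate the exceptional case stated in the ``unless'' clause. The key remark is that any edge $e \in \paredgeset(t)$ automatically has $\rho(e) = \Nop(t)$: if $P$ is a shortest orthodox alternating path from $f$ to $t$ whose last edge is $e = \{y, t\}$, then its prefix $P[f, y]$ has parity $\rho(e)$, so $P$ itself has parity $\OL{\rho(e)}$, which must be $\Op(t)$. Hence the negation of ``both $e \notin \paredgeset(t)$ and $\rho(e) = \Op(t)$'' is simply ``$\rho(e) = \Nop(t)$'', and it suffices to assume $\rho(e) = \Nop(t)$ and prove $\level(e) > \level(\Minedge_T(t))$, strengthening this to $\vlevel(e) > \vlevel(\Minedge_T(t))$ when in addition $e \notin \paredgeset(t)$. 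The hypothesis $\dist^{\Nop}(t) < \infty$ is used only to ensure $\Minedge_T(t)$ is defined and to license Lemmas~\ref{lma:volumeBound} and~\ref{lma:heightBound}. Note that under $\rho(e) = \Nop(t)$ we have $\dist^{\rho(e)}(t) = \dist^{\Nop}(t)$, so $\vlevel(e) = \dist^{\rho(e)}(y) + \dist^{\Nop}(t) + 1$, and nothing in what follows cares whether $y$ lies inside or outside $T(t)$.

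In the subcase $e \notin \paredgeset(t)$, I would apply Lemma~\ref{lma:fundamental2}~(P2) to $e = \{y, t\}$ with $t$ in the role of $v$, obtaining $\dist^{\rho(e)}(y) + 1 > \dist^{\Op}(t)$, i.e.\ $\dist^{\rho(e)}(y) \ge \dist^{\Op}(t)$. Substituting, $\vlevel(e) \ge \dist^{\Op}(t) + \dist^{\Nop}(t) + 1$, whereas Lemma~\ref{lma:volumeBound} gives $\vlevel(\Minedge_T(t)) \le \dist^{\Op}(t) + \dist^{\Nop}(t)$; hence $\vlevel(e) > \vlevel(\Minedge_T(t))$, which is the ``particularly'' assertion and forces $\level(e) > \level(\Minedge_T(t))$ since in the definition of $\level$ a strictly larger volume dominates the height term (as $n'$ exceeds every possible height).

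In the subcase $e \in \paredgeset(t)$ we have $y \in \parset(t)$, so Lemma~\ref{lma:fundamental}~(P2) gives $\dist^{\rho(e)}(y) = \dist^{\OL{\Op(t)}}(y) = \dist^{\Op}(t) - 1$, whence $\vlevel(e) = \dist^{\Op}(t) + \dist^{\Nop}(t) \ge \vlevel(\Minedge_T(t))$ again by Lemma~\ref{lma:volumeBound}. If this is strict we finish as above; the only remaining case is $\vlevel(e) = \vlevel(\Minedge_T(t))$, where I would compare heights: since $\dist^{\Nop}(t) > \dist^{\Op}(t) > \dist^{\Op}(t) - 1$, we get $\hlevel(e) = \max\{\dist^{\Op}(t) - 1,\, \dist^{\Nop}(t)\} = \dist^{\Nop}(t)$, and Lemma~\ref{lma:heightBound} gives $\hlevel(\Minedge_T(t)) < \dist^{\Nop}(t) = \hlevel(e)$, so $\level(e) > \level(\Minedge_T(t))$ because the volumes agree. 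The only step that is not a direct substitution into the definitions of $\vlevel$, $\hlevel$, $\level$ is this last equal-volume branch: the strict gain at the level comes entirely from the height bound of Lemma~\ref{lma:heightBound} together with the cheap but essential observation that $\dist^{\Op}(t) - 1$ can never be the larger of the two endpoint distances of $e$. I expect that to be the only point requiring any care.
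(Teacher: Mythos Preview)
Your proof is correct and follows essentially the same approach as the paper's: both reduce to $\rho(e) = \Nop(t)$ (the paper derives this within the $e \in \paredgeset(t)$ case rather than up front), then use Lemma~\ref{lma:fundamental2}~(P2) together with Lemma~\ref{lma:volumeBound} for the strict volume inequality when $e \notin \paredgeset(t)$, and Lemma~\ref{lma:fundamental}~(P2) plus Lemma~\ref{lma:heightBound} for the height comparison when $e \in \paredgeset(t)$. The only cosmetic difference is that the paper establishes $\hlevel(e) > \hlevel(\Minedge_T(t))$ unconditionally in the second subcase, whereas you invoke it only on the equal-volume branch; both are fine.
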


\begin{proof}
First, consider the case of $e \not\in \paredgeset(t)$ and $\rho(e) = \Nop(t)$.
Then we have
\begin{align}
\vlevel(e) &= (\dist^{\rho(e)}(y) + 1) + \dist^{\rho(e)}(t)    \nonumber \\
&> \dist^{\Op}(t) + \dist^{\Nop}(t) & \text{(Lemma~\ref{lma:fundamental2} (P2) and 
$e \not\in \paredgeset(t)$)} 
\nonumber \\
&\geq \vlevel(\Minedge_T(t)). & \text{(Lemma~\ref{lma:volumeBound})} \nonumber 
\end{align}
In the case of $e \in \paredgeset(t)$, the parity of any alternating path from $f$ 
terminating with $e$ is $\Op(t)$, which implies $\rho(e) = \Nop(t)$. Applying Lemma~\ref{lma:fundamental2} (P1) 
instead of (P2), we can show $\vlevel(e) \geq \vlevel(e')$ in the same way as above. Hence it suffices to show 
$\hlevel(e) > \hlevel(\Minedge_T(t))$. By the condition of $e \in \paredgeset$ and Lemma~\ref{lma:fundamental} (P2), 
we have $\dist^{\rho(e)}(y) + 1 = \dist^{\Nop(t)}(y) + 1 = \dist^{\Op(t)}(t) < \dist^{\Nop(t)}(t) = \dist^{\rho(e)}(t)$
(recall $\rho(e) = \Nop(t)$). Hence we obtain $\hlevel(e) = \max\{\dist^{\rho(e)}(y),\, \dist^{\rho(e)}(t)\} = 
\dist^{\Nop(t)}(t)$. By Lemma~\ref{lma:heightBound}, we conclude $\hlevel(e) > \hlevel(\Minedge_T(t))$. 
\end{proof}

Now we are ready to prove Theorem~\ref{thm:IKY24}.

\begin{proof}[Proof of Theorem~\ref{thm:IKY24}]
Let $\Minedge = \{y, z\}$. The proof follows the induction on $\dist^{\theta}(t)$. 

\textbf{Basis:}
Consider the case when $\dist^{\theta}(t) = 1$.
In this case, $t$ is a neighbor of $f$ and $\theta = \Op(t)$.
The shortest alternating path from $f$ to $t$ consists of a single tree edge $\{f, t\}$, whose level is zero by definition.
Hence the theorem obviously holds.

\textbf{Induction step:}
Suppose as the induction hypothesis that the theorem holds for any $(t', \theta')$ such that $\dist^{\theta'}(t') < k$, and consider $(t, \theta)$ such that $\dist^{\theta}(t) = k$. 

\textbf{(Case 1)} When $\theta = \Op(t)$:  
Let $Q$ be the maximal upward alternating path from $t$ in $T$, and $v$ be the upper endpoint of $Q$ (Fig.~\ref{fig:IKY24-1}). 
The maximality of $Q$ implies that both $\{v, \parent(v)\}$ and the last edge $\{v, v'\}$ of $Q$ are non-matching edges.
By the definition of $\parset$ and Lemma~\ref{lma:fundamental} (P2), $\OL{Q}$ must be the suffix of a shortest alternating path $P$ from $f$ to $t$.
Since the length of $Q$ is positive, 
$|P[f, v]| < |P| = k$ holds. Then one can apply the theorem for $(v, \Even)$ by the induction hypothesis, which results 
in a shortest $\Even$-alternating path $X$ from $f$ to $v$ of level at most $\level(e')$, where $e'$ is any edge in $\Minset_T(v)$.
By applying Lemma~\ref{lma:oddOrthodoxProp}, we obtain $\vlevel(e') < \vlevel(\Minedge)$, and thus the volume of $X \circ Q$ is less than $\vlevel(\Minedge)$.

\textbf{(Case 2)} When $\theta = \Nop(t)$: By Lemma~\ref{lma:heightBound}, we have 
$\dist^{\Nop}(t) > \hlevel(\Minedge) \ge \dist^{\rho(\Minedge)}(y)$. We apply 
the theorem inductively to $(y, \rho(\Minedge))$. It provides a shortest $\rho(\Minedge)$-alternating path 
$Y$ from $f$ to $y$ of level at most $\level(\Minedge_T(y))$. If $\Minedge$ is not an MIE of $T(y)$, 
the level of $Y$ is smaller than $\level(\Minedge)$. Otherwise, we have $\rho(\Minedge) = \Op(y)$ by 
Lemma~\ref{lma:rootIncidence}. Then, by the condition of this theorem, we have $\vlevel_T(Y) < \vlevel(\Minedge)$.
Thus, in any case, we have $\level_T(Y) < \level(\Minedge)$.
Similarly, one can also obtain
a shortest $\rho(\Minedge)$-alternating path $Z$ from $f$ to $z$ such that $\level_T(Z) < \level(\Minedge)$ holds.
An illustrative example is given in Fig~\ref{fig:IKY24-2}. Since $Y$ and $Z$ have levels less than $\level(\Minedge)$, 
they contain no edge in $\Outset_T(t)$. Hence, $Y$ lies at the outside of $T(t)$, and $Z$ enters $T(t)$ exactly once, using the tree edge $\{\parent_T(t), t\}$. Then $Z[t, z]$ lies at the inside of $T(t)$, i.e., $Y$ and 
$Z[t, z]$ are completely disjoint. It implies that $X = Y \circ \Minedge \circ \OL{Z}[z, t]$ is an unorthodox alternating path from $f$ to $t$ of 
length at most $|Y| + 1 + |Z| - \dist^{\Op}(t) = \vlevel(\Minedge) - \dist^{\Op}(t)$. It is upper bounded by $\dist^{\Nop}(t)$ by Lemma~\ref{lma:volumeBound}, 
and thus the shortest. Since $\level_T(Y) < \level(\Minedge)$ and $\level_T(Z) < \level(\Minedge)$ hold, this path satisfies 
the condition of the theorem.
\end{proof}

As a byproduct, we obtain the following lemma. (Recall that the length of a shortest augmenting path is $2\ell + 1$.)

\begin{lemma} \label{lma:wastingEdge}
Any shortest augmenting path $X$ contains no edge of volume larger than $2\ell + 5$, i.e., 
$\vlevel(X) \leq 2\ell + 5$, and does not contain a vertex $v$ such that $\dist^{\Op}(v) > \ell + 2$.
\end{lemma}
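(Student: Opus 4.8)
The plan is to argue by contradiction on each of the two claims, using Lemma~\ref{lma:volumeBound} (the volume lower bound via MIEs) together with the elementary distance facts in Lemmas~\ref{lma:fundamental} and \ref{lma:fundamental2}. First I would set up notation: let $X$ be a shortest augmenting path, so $|X| = 2\ell + 1$ and $X$ joins two free vertices $r_1, r_2 \in U_\Ical$. Recall that in the augmented graph every vertex $v$ on $X$ satisfies $\dist^{\Op}(v) \le \dist^{\Op}_X(v)$, where $\dist^{\Op}_X(v)$ is the distance along $X$ from the nearer free endpoint; summing these or comparing parities gives the basic inequalities $\dist^{\Op}(v) + \dist^{\Op}(v') \le |X| = 2\ell+1$ whenever $v, v'$ are the two endpoints of an edge of $X$ — actually, more carefully, if $e=\{v,v'\}$ is an edge of $X$ with $v$ closer to $r_1$ and $v'$ closer to $r_2$, then the $X$-distance from $r_1$ to $v$ plus the $X$-distance from $r_2$ to $v'$ equals $|X|-1 = 2\ell$, and the $X$-path from $r_1$ to $v$ is a $\rho(e)$-alternating path, likewise from $r_2$ to $v'$; hence $\dist^{\rho(e)}(v) + \dist^{\rho(e)}(v') \le 2\ell$, i.e. $\vlevel(e) = \dist^{\rho(e)}(v) + \dist^{\rho(e)}(v') + 1 \le 2\ell + 1$ for \emph{tree-consistent} edges.

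The subtlety is that a general edge $e$ of $X$ need not have its two $\rho(e)$-distances realized by the two sub-paths of $X$ — one endpoint might be reached more cheaply on the ``wrong'' side. So the honest bound I expect is weaker: for \emph{any} edge $e=\{v,v'\}$ of $X$, one of $v,v'$ (say $v$) has the $X$-subpath from its nearer free endpoint being a shortest $\rho(e)$-alternating path up to a bounded slack, while the other endpoint contributes at most its $X$-distance, which is at most $2\ell+1$. Combining, I get $\vlevel(e) \le \dist^{\rho(e)}(v) + (2\ell+1) + 1$ and then bound $\dist^{\rho(e)}(v)$ using the fact that $v$ lies on a shortest augmenting path, so $\dist^{\Op}(v) \le \ell$ (it is within $\ell$ of a free vertex counting matching edges). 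This is exactly where the second claim — no vertex $v$ on $X$ has $\dist^{\Op}(v) > \ell+2$ — comes in, and in fact I would prove that claim first: if $v \in V(X)$ then $v$ is at $X$-distance at most $\ell$ matching-edges from one of $r_1, r_2$ along $X$, and that sub-path of $X$ is an orthodox (or close to orthodox) alternating path witnessing $\dist^{\Op}(v) \le \ell$ — with the extra slack of at most $2$ absorbing the parity mismatch and the possibility that the nearer witnessing path ends in the wrong parity. Then the volume bound follows: $\vlevel(e) \le \dist^{\rho(e)}(v) + \dist^{\rho(e)}(v') + 1$, and each $\dist^{\rho(e)}(\cdot)$ is at most $\dist^{\Op}(\cdot) + 1 \le \ell + 3$ — this naively gives $2\ell + 7$, so I would need to be a little sharper, using that the two endpoints are \emph{adjacent on $X$} so their $X$-distances to the free endpoints differ by exactly one, forcing $\dist^{\rho(e)}(v) + \dist^{\rho(e)}(v') \le 2\ell + 4$, hence $\vlevel(e) \le 2\ell + 5$.

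I would carry the steps out in this order: (1) establish $\dist^{\Op}(v) \le \ell$ (or $\le \ell+2$ with slack) for every $v \in V(X)$ by reading off an orthodox alternating path from the nearer free endpoint along $X$; (2) for an arbitrary edge $e=\{v,v'\}$ of $X$, use that $v,v'$ are consecutive on $X$ so their nearer-free-endpoint $X$-distances are consecutive integers whose sum is $2\ell$, split into the side where $\rho(e)$ matches the parity of the $X$-subpath and bound the $\rho(e)$-distances accordingly; (3) conclude $\vlevel(e) \le 2\ell+5$. The main obstacle I anticipate is Step~(2): controlling $\dist^{\rho(e)}(v)$ when the cheap witness for $v$ lives on the side of $X$ \emph{away} from $v$'s nearer endpoint — i.e. ruling out that $v$'s shortest $\rho(e)$-alternating path is much shorter than its $X$-distance in a way that somehow inflates the other endpoint's contribution. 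I expect this is handled cleanly by observing that any shorter alternating path to one endpoint, concatenated through $e$, would give a shorter augmenting path contradicting minimality of $X$ — essentially the same replacement argument used in the proof of Lemma~\ref{lma:fundamental} (P2) and Lemma~\ref{lma:mieNotParedge} — so that the $X$-subpaths really are (near-)shortest of the relevant parity, pinning down the constant $2\ell+5$ exactly.
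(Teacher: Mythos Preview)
Your core idea---use the two sub-paths of $X$ from the free endpoints as alternating-path witnesses---is exactly what the paper does, but you have manufactured difficulties that are not there. If $e=\{v,v'\}$ is an edge of $X$ with $v$ on the $r_1$-side, then $X[r_1,v]$ is \emph{automatically} a $\rho(e)$-alternating path: its last edge is the predecessor of $e$ along $X$, so it has the opposite matching status from $e$, which is precisely what $\rho(e)$ means. There is no ``parity mismatch'' to worry about and no need for the replacement argument you sketch at the end. The constant $2\ell+5$ (rather than the $2\ell+1$ you first compute) comes simply from the super vertex $f$: distances are measured from $f$, so prepending the length-two $f$--$r_i$ path to each side adds $+2$ to each witness, giving $\dist^{\rho(e)}(v)+\dist^{\rho(e)}(v') \le (|X[r_1,v]|+2)+(|X[r_2,v']|+2)=2\ell+4$ and hence $\vlevel(e)\le 2\ell+5$ immediately.

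So the edge-volume bound is one line; you do not need Lemma~\ref{lma:volumeBound}, and you do not need the vertex bound first. The paper in fact runs the argument in the opposite order from your plan: it proves the volume bound directly (by the observation above, phrased via the cycle $\hat{X}$ through $f$ of length $2\ell+5$), and then derives the vertex bound from it by a two-case split on whether the edge out of $v$ already violates the volume bound. Your proposed route for the vertex bound---take the shorter sub-path of $X$ and observe $\dist^{\Op}(v)\le \ell+2$---is also correct and equally direct; either order works, and both claims are elementary once the role of $f$ is clear.
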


\begin{proof}
Let $u_0$ and $u_1$ be the endpoints of $X$, and $\hat{X}$ be the odd cycle obtained 
from $X$ by adding the two length-two paths from $f$ to $u_0$ and from $f$ to $u_1$.
Let $\{y, z\}$ be an edge of volume larger than $2\ell + 5$, and suppose for contradiction that 
$\hat{X}$ is of the form $Y \circ \{y, z\} \circ Z$.  Since $|Y| \geq \dist^{\rho(\{y,z\})}(y)$ and 
$|Z| \geq \dist^{\rho(\{y,z\})}(z)$ hold, $|\hat{X}| \geq 
\dist^{\rho(\{y,z\})}(y) + \dist^{\rho(\{y,z\})}(z) + 1 = \vlevel(\{y,z\}) > 2\ell + 5$, 
i.e., $|X| > 2\ell + 1$. It is a contradiction. 

Suppose for contradiction that $X$ contains a vertex $v$ of $\dist^{\Op}(v) > \ell + 2$.
Let $v'$ be the successor of $v$ in $X$, where we assume that $v$ and $v'$ appears in the order of $u_0, v, v', u_1$ 
without loss of generality. If $\dist^{\rho(\{v, v'\})}(v') > \ell + 1$, the volume of $\{v, v'\}$ is larger
than $2\ell + 5$. Otherwise, $\OL{X}[u_1, v]$ is an alternating path from $u_1$ to $v$, and hence $\dist^{\alpha}(v) \le \dist^{\rho(\{v, v'\})}(v') + 1 \le \ell + 2$. In any case,
we obtain a contradiction.
\end{proof}

This also implies the following corollary, which is used later.

\begin{corollary} \label{corol:IKY24}
Let $T$ be any ABT.
For any vertex $v$ satisfying $\Outset_T(v) \neq \emptyset$, $\vlevel(\Minedge_T(v)) = \dist^{\Op}(v) + \dist^{\Nop}(v)$ holds.
If $v$ is a free vertex, then $\vlevel(\Minedge_T(v)) \ge 2\ell + 5$.
Particularly, $v$ is an endpoint of a shortest augmenting path if and only if $\vlevel(\Minedge_T(v)) = 2\ell + 5$.
\end{corollary}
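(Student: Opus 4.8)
The plan is to establish the three assertions in order, leaning entirely on Lemma~\ref{lma:volumeBound} and on the explicit unorthodox path produced inside the proof of Theorem~\ref{thm:IKY24}, so that no new combinatorial machinery is needed.

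For the first assertion, Lemma~\ref{lma:volumeBound} already gives $\dist^{\Op}(v)+\dist^{\Nop}(v)\ge \vlevel(\Minedge_T(v))$, so only the reverse inequality is missing. When $\dist^{\Nop}(v)<\infty$, I would apply Theorem~\ref{thm:IKY24} with $t=v$ and $\theta=\Nop(v)$: its Case~2 construction yields an unorthodox alternating path from $f$ to $v$ of the form $Y\circ\Minedge_T(v)\circ\OL{Z}[z,v]$ of length at most $\vlevel(\Minedge_T(v))-\dist^{\Op}(v)$, hence $\dist^{\Nop}(v)\le\vlevel(\Minedge_T(v))-\dist^{\Op}(v)$, which is exactly the desired bound. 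The remaining case $\dist^{\Nop}(v)=\infty$ is degenerate: one checks that then no incoming edge of $T(v)$ can have finite volume (feeding such an edge into the same construction would produce a finite-length unorthodox path to $v$), so $\vlevel(\Minedge_T(v))=\infty$ too and the claimed equality holds trivially.

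For the second assertion, recall that a free vertex $v$ is attached to $f$ by the length-two alternating path $f,x_v,v$ with $\{f,x_v\}$ non-matching and $\{x_v,v\}$ matching; hence $\dist^{\Even}(v)=2$ and $\dist^{\Odd}(v)\ge 3$, so $\Op(v)=\Even$, $\dist^{\Op}(v)=2$, and $\dist^{\Nop}(v)=\dist^{\Odd}(v)$. By the first assertion it suffices to prove $\dist^{\Odd}(v)\ge 2\ell+3$. Otherwise $\dist^{\Odd}(v)\le 2\ell+1$; take a shortest odd-alternating path from $f$ to $v$, whose first edge is $\{f,x_w\}$ for some free vertex $w$ and whose second edge is therefore the matching edge $\{x_w,w\}$. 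Its sub-path from $w$ to $v$ then starts and ends with non-matching edges, satisfies $w\ne v$ (the path is simple), and has length $\dist^{\Odd}(v)-2\le 2\ell-1$, i.e.\ it is an augmenting path shorter than $2\ell+1$, contradicting minimality. Thus $\vlevel(\Minedge_T(v))=\dist^{\Op}(v)+\dist^{\Nop}(v)\ge 2+(2\ell+3)=2\ell+5$.

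For the third assertion I would prove both implications by the same surgery around $f$. If $v$ is an endpoint of a shortest augmenting path $X$ between free vertices $v$ and $v'$ (so $|X|=2\ell+1$), then the path $f,x_{v'},v'$ concatenated with $\OL{X}$ is an odd-alternating path from $f$ to $v$ of length $2\ell+3$; being unorthodox it must enter $T(v)$ through an incoming edge (so $\Outset_T(v)\ne\emptyset$), and it gives $\dist^{\Nop}(v)=\dist^{\Odd}(v)\le 2\ell+3$, whence by the first assertion $\vlevel(\Minedge_T(v))\le 2+(2\ell+3)=2\ell+5$; together with the second assertion this forces equality. Conversely, if $\vlevel(\Minedge_T(v))=2\ell+5$, the first assertion together with $\dist^{\Op}(v)=2$ gives $\dist^{\Odd}(v)=2\ell+3<\infty$; stripping the prefix $f,x_w,w$ from a shortest odd-alternating path from $f$ to $v$ exactly as in the second assertion exhibits a $w$--$v$ augmenting path of length $2\ell+1$, i.e.\ a shortest one having $v$ as an endpoint. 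The only real subtlety I anticipate is the $\dist^{\Nop}(v)=\infty$ bookkeeping in the first assertion (making precise that the construction of Theorem~\ref{thm:IKY24} degenerates correctly), together with being careful that the gadget vertices $x_v$ make all of the concatenations around $f$ genuine simple alternating paths; the substantive content is otherwise already packaged in Lemma~\ref{lma:volumeBound} and Theorem~\ref{thm:IKY24}.
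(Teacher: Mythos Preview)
Your proposal is correct and follows the natural derivation the paper has in mind (the paper leaves this corollary without an explicit proof, simply writing ``This also implies the following corollary''). The first assertion is indeed just the combination of Lemma~\ref{lma:volumeBound} with the length bound on the constructed unorthodox path in Case~2 of the proof of Theorem~\ref{thm:IKY24}; your degenerate case $\dist^{\Nop}(v)=\infty$ is handled correctly (note that when $\dist^{\Nop}(y)=\infty$ the hypothesis $\dist^{\rho(\Minedge)}(y)<\infty$ already forces $\rho(\Minedge)=\Op(y)$, so the reliance on Lemma~\ref{lma:rootIncidence} disappears and the construction still goes through). The second and third assertions are exactly the intended gadget surgery around $f$, translating odd $f$--$v$ paths of length $2\ell+3$ into augmenting paths of length $2\ell+1$ and back.
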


\begin{figure*}[tb]
\begin{minipage}{0.48\columnwidth}
    \centering
    \includegraphics[width=0.8\columnwidth]{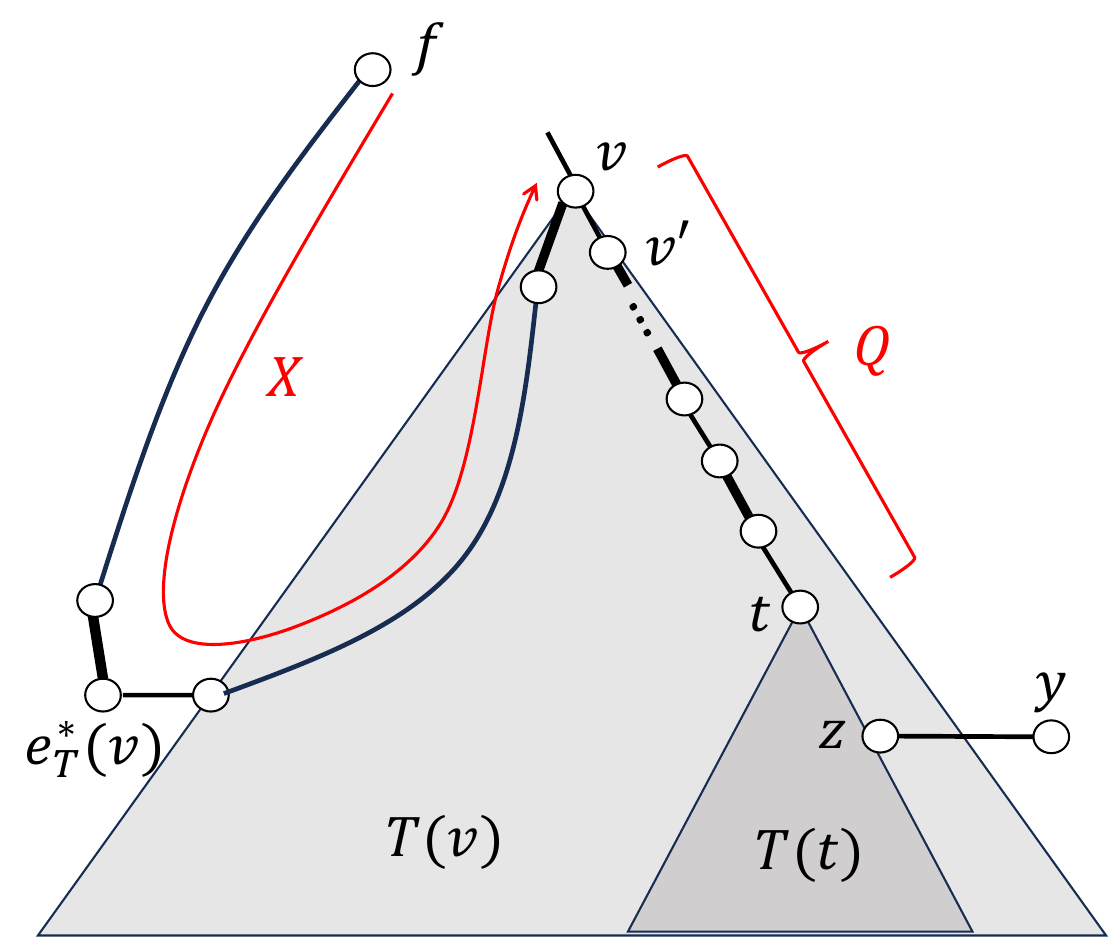}
    \caption{Proof of Theorem~\ref{thm:IKY24} (S1).}
    \label{fig:IKY24-1}
\end{minipage}
\begin{minipage}{0.55\columnwidth}
    \centering
    \includegraphics[width=0.8\columnwidth]{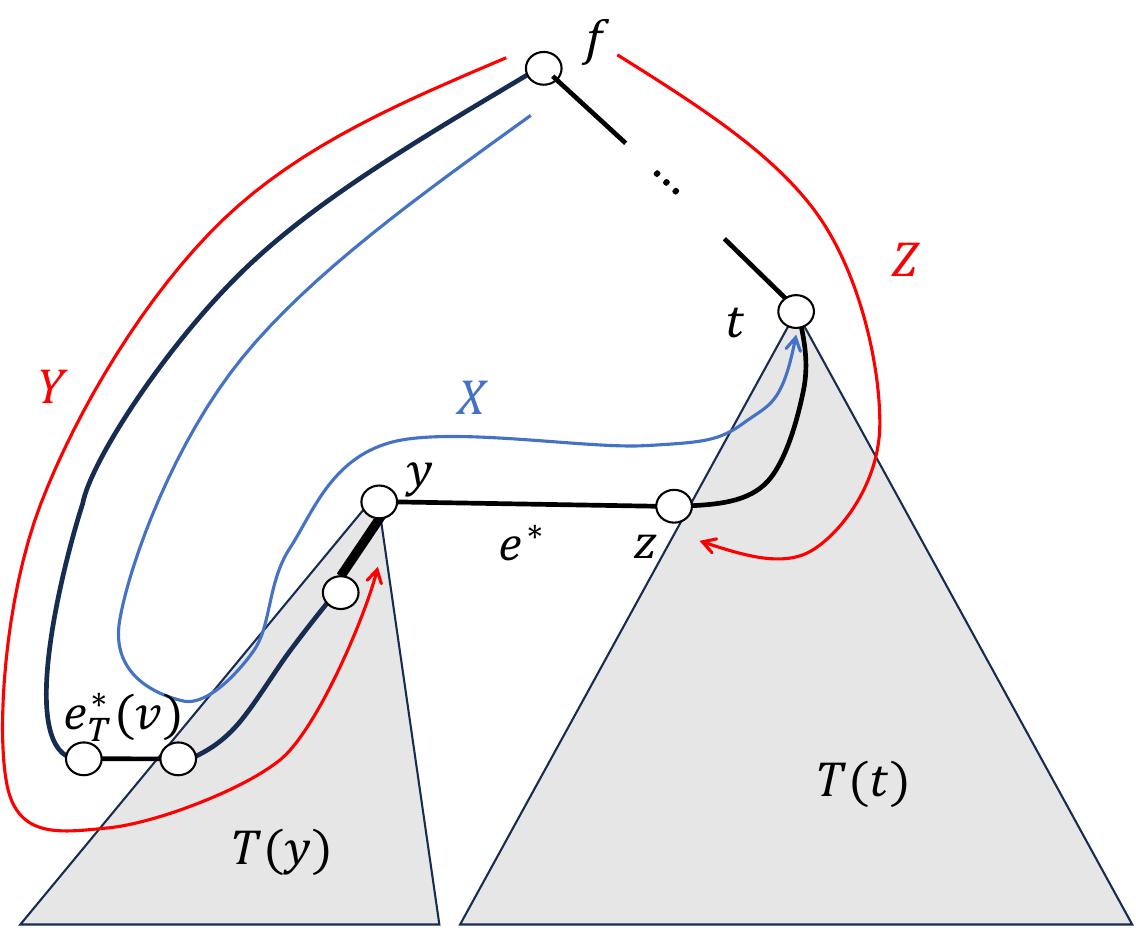}
    \caption{Proof of Theorem~\ref{thm:IKY24} (S2)}
 \label{fig:IKY24-2}
\end{minipage}
\end{figure*}

Note that the proof of Theorem~\ref{thm:IKY24} is constructive: provided that $T$ is given and an MIE 
$\Minedge_T(u) \in \Minset_T(u)$ is known for each $u \in V(G) \setminus \{f\}$, one can construct a shortest 
$\theta$-alternating path $P$ from $f$ to any vertex $t$ (with $\dist^\theta(t) < \infty$) of volume at most $\vlevel(\Minedge_T(t))$. The pseudocode of the algorithm 
is given in Algorithm~\ref{alg:pathConstruction} 
(this algorithm is essentially the same as the one presented in~\cite{IKY24}).
The procedure \textsc{PathConstruction}($T, s, t, \theta$) is called for $s$ and its descendant $t$ in $T$,
and returns the suffix $P[s, t]$ of a shortest $\theta$-alternating path $P$ from $f$ to $t$ shown in Theorem~\ref{thm:IKY24}.

\begin{algorithm}[ht] 
\caption{Procedure \textsc{PathConstruction}}
\label{alg:pathConstruction}
{\setlength{\baselineskip}{14pt}
\begin{algorithmic}[1]
\Procedure{PathConstruction}{$T, s, t, \theta$}
\If {$s = t$}
  \Return $s$
\EndIf
\If {$\theta = \Op(t)$} 
  \State \Return  \Call{PathConstruction}{$T, s, \parent_T(t), \OL{\theta}$}${} \circ \{\parent_T(t), t\}$.
\Else
  \State Let $\Minedge_T(t) = \{y, z\}$.
  \State $Y' \ot$ \Call{PathConstruction}{$T, s, y, \rho(e)$}
  \State $Z' \ot$ \Call{PathConstruction}{$T, t, z, \rho(e)$}
  \State \Return $Y' \circ \{y, z\} \circ \OL{Z'}$
\EndIf
\EndProcedure
\end{algorithmic}
}
\end{algorithm}

\begin{lemma} \label{lma:pathConstruction}
Let $T$ be any ABT, $s$ and $t$ be two vertices such that $t$ is a descendant of $s$ and admits $\theta$-alternating path from $f$. The procedure \textsc{PathConstruction}$(T, s, t, \theta)$ returns the suffix $P[s, t]$ of a shortest $\theta$-alternating path $P$ from $f$ to $t$ of level at most $\level(\Minedge_T(t))$. 
The algorithm runs in $O(|P[s, t]|)$ time. %
\end{lemma}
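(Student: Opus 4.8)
The statement has two parts: correctness (the procedure returns the claimed suffix) and running time. The plan is to prove both by induction on $\level(\Minedge_T(t))$ (equivalently, on $\dist^\theta(t)$, following the structure of the proof of Theorem~\ref{thm:IKY24}), mirroring the recursive structure of the pseudocode. Since Theorem~\ref{thm:IKY24} is already proved constructively and its proof is exactly the recursion implemented by \textsc{PathConstruction}, the bulk of the work is bookkeeping: checking that each recursive call is made on a legitimate (smaller) instance and that the concatenations in lines~5 and~10 are well-defined (i.e.\ the pieces are internally vertex-disjoint except at the shared endpoint, so that the result is indeed a simple path).

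First I would handle the base case $s = t$: the returned single-vertex path is trivially the suffix $P[s,t]$ of any shortest $\theta$-alternating path through $s$, and the claimed level bound holds vacuously. For the induction step I split into the two cases of the pseudocode, matching Cases~1 and~2 in the proof of Theorem~\ref{thm:IKY24}. In Case~1 ($\theta = \Op(t)$), by Lemma~\ref{lma:fundamental}~(P2) the edge $\{\parent_T(t), t\}$ is the last edge of some shortest orthodox alternating path, and $\dist^{\OL\theta}(\parent_T(t)) = \dist^\theta(t) - 1$; moreover $\parent_T(t)$ is still a descendant of $s$ (or equals $s$). So the recursive call \textsc{PathConstruction}$(T, s, \parent_T(t), \OL\theta)$ is on a smaller instance and, by the induction hypothesis, returns the suffix from $s$ of a shortest $\OL\theta$-alternating path $P'$ to $\parent_T(t)$; appending $\{\parent_T(t), t\}$ gives the suffix from $s$ of $P' \circ \{\parent_T(t), t\}$, which is a shortest $\theta$-alternating path to $t$, and the level is unchanged since a tree edge contributes nothing to $\level_T(\cdot)$ — here I would invoke Lemma~\ref{lma:oddOrthodoxProp} (as in Case~1 of Theorem~\ref{thm:IKY24}) to get $\level(\Minedge_T(\parent_T(t))) < \level(\Minedge_T(t))$, so the level bound is maintained (indeed with slack). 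In Case~2 ($\theta = \Nop(t)$), with $\Minedge_T(t) = \{y, z\}$, $z \in V(T(t))$: by Lemma~\ref{lma:heightBound} and Lemma~\ref{lma:fundamental}~(P1), $\dist^{\rho(\Minedge)}(y)$ and $\dist^{\rho(\Minedge)}(z)$ are both strictly smaller than $\dist^\theta(t)$, and $y$ is a descendant of $s$ while $z$ is a descendant of $t$; hence both recursive calls are legitimate. Applying the induction hypothesis (and Lemma~\ref{lma:rootIncidence} to rule out $\Minedge$ being an MIE of $T(y)$ with the wrong parity, exactly as in Theorem~\ref{thm:IKY24}) gives $Y'$ and $Z'$ with $\level_T \cdot < \level(\Minedge)$; the same disjointness argument as in Theorem~\ref{thm:IKY24} ($Y'$ lies outside $T(t)$, $Z'$ inside) shows $Y' \circ \{y,z\} \circ \OL{Z'}$ is a simple path, and it is the suffix from $s$ of the shortest unorthodox path $P = P[f,y] \circ \{y,z\} \circ P[z,t]$ produced by Theorem~\ref{thm:IKY24}, with $\level_T(P) = \level(\Minedge_T(t))$.

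For the running time, I would argue that each recursive call, excluding the work done in its own recursive subcalls, does $O(1)$ work plus an $O(k)$ concatenation where $k$ is the length of the returned subpath; summing a telescoping bound over the recursion tree and using that the edges/vertices appearing in the returned pieces are all distinct (they lie on a single simple path $P[s,t]$) gives total time $O(|P[s,t]|)$. The main subtlety — and the part I would be most careful about — is confirming that the two subpaths in Case~2 are genuinely vertex-disjoint (so the concatenation is a simple path and no vertex is traversed twice, which is also what makes the running-time accounting valid); this is precisely the "happy case" of the IKY theorem and follows from the level bounds forcing $Y'$ outside $T(t)$ and $Z'$ inside, but it must be stated explicitly rather than taken for granted.
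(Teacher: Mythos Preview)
Your plan is the paper's (two-sentence) proof spelled out: correctness is inherited from the constructive proof of Theorem~\ref{thm:IKY24}, and the $O(|P[s,t]|)$ bound is a direct induction on the output length. Your Case~2 disjointness check and the running-time accounting are both fine.

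There is, however, a small gap in your Case~1 argument. You invoke Lemma~\ref{lma:oddOrthodoxProp} to obtain $\level(\Minedge_T(\parent_T(t))) < \level(\Minedge_T(t))$ at \emph{every} step of the Case~1 recursion, but that lemma requires $\alpha(\parent_T(t))=\Odd$ and $\{t,\parent_T(t)\}\notin M$. When the recursive call from Case~1 remains in Case~1 (i.e.\ $\overline{\theta}=\Op(\parent_T(t))$), these two hypotheses are incompatible: $\{t,\parent_T(t)\}\notin M$ forces $\Op(t)=\Odd$, while staying in Case~1 forces $\Op(\parent_T(t))=\overline{\Op(t)}=\Even$. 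So the per-step inequality is simply not available along the interior of the upward alternating tree-path, and your step-by-step induction does not close there. Lemma~\ref{lma:oddOrthodoxProp} applies precisely at the vertex $v$ where the recursion first leaves Case~1 for Case~2 --- the top of the maximal upward alternating path --- which is exactly where Theorem~\ref{thm:IKY24}'s Case~1 applies it. The easy fix is to unroll the Case~1 recursion to $v$ in one shot (as the theorem's proof does) and invoke the lemma once, rather than arguing one tree edge at a time.
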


\begin{proof}
The correctness follows from the proof of Theorem~\ref{thm:IKY24}. The running time is easily analyzed by the straightforward 
inductive argument on $|P[s, t]|$.  
\end{proof}

\subsection{From a Double Path to a Shortest Augmenting Path}\label{subsec:DPtoSAP}
\begin{definition}[crossable edge]\label{def:crossable}
An edge $e = \{y, z\}$ is called \emph{crossable} if $\vlevel(e) = 2\ell + 5$ and there exists an ABT $T$ such that $\freeroot_T(y) \neq \freeroot_T(z)$ holds; then we say that $T$ \emph{respects} $e$.
We denote the set of all crossable edges that are critical (i.e., not in $\paredgeset$) by $\Ccset$.
\end{definition}

We formalize the \emph{alternating base DAG} $H$ for $\Ical$ and the notion of 
\emph{double paths} which are mentioned in Theorem~\ref{thm:structuralTheorem}. 

\begin{definition}[Alternating Base DAG]
An \emph{alternating base DAG} (ABD) $H$ for $\Ical$ is defined as follows: 
\begin{itemize}
    \item The vertex set $V(H)$ is $V(G)$ minus $f$ and its neighbors. 
    \item The edge set $E(H)$ is the set of directed edges in $\paredgeset$ with upward orientation. More precisely, $E(H) = \{ (u, v) \mid u \in V(H) \setminus U,\ v \in \parset(u) \}$. It implies that each vertex in $U$ becomes a sink.
\end{itemize}
\end{definition}

\begin{definition}[double path]
A \emph{double path} of $(H, \Ccset)$ is a 4-tuple 
$(P, Q, y, z)$, where $\{y, z\} \in \Ccset$ and $P$ and $Q$ are two vertex-disjoint paths in $H$ respectively starting from $y$ and $z$ and terminating with two distinct vertices in $U$.
\end{definition}

While the entries $y$ and $z$ are redundant, we include them in the definition for
clarification of the edge in $\Ccset$ associated with it; e.g., we say that an ABT $T$ \emph{respects} a double path $(P, Q, y, z)$ if it respects the crossable edge $e = \{y, z\}$.

We present the algorithm \textsc{DoubleToAug} transforming a given double path $(P, Q, y, z)$ into 
a shortest augmenting path. The pseudocode of the algorithm is 
shown in Algorithm~\ref{alg:AugPathConstruction}. Let $u$ and $u'$ be the last free vertices 
of $P$ and $Q$. It first constructs an ABT $T'$ respecting $(P, Q, y, z)$ by 
choosing the parents of vertices in $V(P) \cup V(Q)$ along $P$ and $Q$, which results in the ABT $T'$
where $\{y, z\}$ bridges $T(u)$ and $T(u')$ (i.e., $\freeroot_T(y) = u$ and $\freeroot_T(z) = u'$). 
By applying \textsc{PathConstruction}, we obtain two shortest $\rho(e)$-alternating paths $Y$ 
from $\freeroot_T(y)$ to $y$ and $Z$ from $\freeroot_T(z)$ to $z$. The algorithm outputs the path 
$Q = Y \circ \{y, z\} \circ \OL{Z}$. It is guaranteed that the constructed path is localized within 
$T(\freeroot_T(y))$ and $T(\freeroot_T(z))$.

\begin{algorithm}[ht] 
\caption{Procedure \textsc{DoubleToAug}}
\label{alg:AugPathConstruction}
{\setlength{\baselineskip}{14pt}
\begin{algorithmic}[1]
\Procedure{DoubleToAug}{$P, Q, y, z$}
\State Construct any ABT $T$ respecting $(P, Q, y, z)$.
\State $Y \ot$ \Call{PathConstruction}{$T, \freeroot_T(y), y, \rho(\{y, z\})$}
\State $Z \ot$ \Call{PathConstruction}{$T, \freeroot_T(z), z, \rho(\{y, z\})$}
\State \Return $Y \circ \{y, z\} \circ \OL{Z}$
\EndProcedure
\end{algorithmic}
}
\end{algorithm}

\begin{theorem} \label{thm:augPathConst}
Let $(P, Q, y, z)$ be any double path. Then, \textsc{DoubleToAug}$(P, Q, y, z)$ 
outputs a shortest augmenting path $X$ from $\freeroot_T(y)$ to $\freeroot_T(z)$ satisfying 
$V(X) \subseteq V(T(\freeroot_T(y))) \cup V(T(\freeroot_T(z)))$, where $T$ is the ABT constructed
by the algorithm.
\end{theorem}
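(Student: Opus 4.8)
\textbf{Proof plan for Theorem~\ref{thm:augPathConst}.}
The plan is to decompose the claim into three parts: (i) the ABT $T$ constructed in Line~2 is well-defined and does respect $(P,Q,y,z)$; (ii) the output $X = Y \circ \{y,z\} \circ \OL{Z}$ is a valid alternating path connecting $\freeroot_T(y)$ and $\freeroot_T(z)$; and (iii) $X$ is a \emph{shortest} augmenting path, with the containment property $V(X) \subseteq V(T(\freeroot_T(y))) \cup V(T(\freeroot_T(z)))$.

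For (i), I would observe that every edge used along $P$ and $Q$ lies in $\paredgeset$ (since $P, Q$ are paths in $H$), so choosing the parent of each vertex $u \in V(P) \cup V(Q)$ to be its successor along $P$ (resp.\ $Q$) is a legal choice in the ABT definition; for all remaining vertices we pick parents arbitrarily from $\parset(\cdot)$. Because $P$ and $Q$ are vertex-disjoint and each ends at a distinct vertex of $U$, this produces a genuine spanning tree (using Lemma~\ref{lma:fundamental} (P1) to rule out cycles), and in $T$ we have $\freeroot_T(y) = u$ and $\freeroot_T(z) = u'$ with $u \neq u'$, i.e.\ $y$ and $z$ sit in two distinct components of $F_T$. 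Moreover $\{y,z\}$ is critical and crossable by definition of $\Ccset$, so $T$ respects $e = \{y,z\}$, and $e$ is a non-tree incoming edge of both $T(\freeroot_T(y))$ and $T(\freeroot_T(z))$ (with $\vlevel(e) = 2\ell + 5$).

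For (ii) and (iii), the key point is to invoke the second case of Theorem~\ref{thm:IKY24} (equivalently the behavior of \textsc{PathConstruction} as guaranteed by Lemma~\ref{lma:pathConstruction}) with target vertices $\freeroot_T(y)$ and $\freeroot_T(z)$ and the crossable edge $e$. Apply the argument of Case~2 of Theorem~\ref{thm:IKY24} to $t = \freeroot_T(y)$: $e$ is an incoming edge of $T(\freeroot_T(y))$, and I would first argue it is in fact a minimum incoming edge (an MIE), since $\vlevel(e) = 2\ell + 5$ and by Corollary~\ref{corol:IKY24} any MIE of a free vertex has volume at least $2\ell + 5$; hence $e$ is minimum-volume, and since $e$ is critical there is no competition on the height tie-breaker with a parent edge. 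Then \textsc{PathConstruction}$(T, \freeroot_T(y), y, \rho(e))$ returns a shortest $\rho(e)$-alternating path $Y$ from $\freeroot_T(y)$ to $y$ with $\level_T(Y) < \level(e)$, so $Y$ uses no incoming edge of $T(\freeroot_T(y))$ and is contained in $T(\freeroot_T(y))$; symmetrically $Z \subseteq T(\freeroot_T(z))$. Since $T(\freeroot_T(y))$ and $T(\freeroot_T(z))$ are vertex-disjoint and $\{y,z\}$ bridges them, $Y$, $\{y,z\}$, $\OL{Z}$ concatenate into a simple path $X$ from $\freeroot_T(y)$ to $\freeroot_T(z)$; the parity bookkeeping ($\rho(e)$ chosen so that $Y \circ e$ and $Z \circ e$ extend to alternating paths) makes $X$ an alternating path between two free vertices, i.e.\ an augmenting path, and $V(X) \subseteq V(T(\freeroot_T(y))) \cup V(T(\freeroot_T(z)))$ as claimed. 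Finally, $|X| = |Y| + 1 + |Z|$; using $|Y| \le \dist^{\rho(e)}(y)$, $|Z| \le \dist^{\rho(e)}(z)$ (these are shortest) one gets $|X| \le \vlevel(e) = 2\ell + 5$, and after contracting the two length-two pendant paths to $f$ this corresponds to an augmenting path of length $\le 2\ell + 1$ in $\Ical$; combined with the lower bound $2\ell + 1$ (definition of $\ell$), $X$ is a shortest augmenting path.

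\textbf{Main obstacle.} The delicate step is justifying that $e = \{y,z\}$ plays the role of an MIE of $T(\freeroot_T(y))$ (and of $T(\freeroot_T(z))$) so that Theorem~\ref{thm:IKY24} Case~2 applies cleanly — in particular ruling out that some other incoming edge of volume $2\ell+5$ and smaller height exists. I expect this to follow from Corollary~\ref{corol:IKY24} (free vertices have $\vlevel(\Minedge_T(\cdot)) \ge 2\ell+5$, with equality exactly at endpoints of shortest augmenting paths) together with the criticality of $e$, but pinning down the height comparison — especially checking that no parent-type incoming edge undercuts $e$ — is the part that needs care; once that is in place, the rest is a routine application of Theorem~\ref{thm:IKY24} and bookkeeping on path parities and lengths.
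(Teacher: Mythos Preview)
Your obstacle is real, and in fact it is where your approach diverges from the paper's and runs into trouble. You try to apply Theorem~\ref{thm:IKY24} Case~2 at $t = \freeroot_T(y)$, which requires $e = \{y,z\}$ to be an MIE of $T(\freeroot_T(y))$. But Corollary~\ref{corol:IKY24} only tells you that every incoming edge of $T(\freeroot_T(y))$ has volume $\ge 2\ell+5$, so $e$ is a minimum-\emph{volume} incoming edge; nothing prevents another incoming edge of the same volume from having strictly smaller height, and then $e$ is \emph{not} minimum-\emph{level}. Your remark that ``since $e$ is critical there is no competition on the height tie-breaker with a parent edge'' only rules out edges in $\paredgeset(\freeroot_T(y))$, not other critical incoming edges. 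So Theorem~\ref{thm:IKY24} Case~2 is simply not available at $t=\freeroot_T(y)$, and the bound $\level_T(Y) < \level(e)$ you want does not follow; worse, even if you weakened it to $\level_T(Y) < \level(e)$ via some other minimum-volume argument, that would still allow $Y$ to use another incoming edge of $T(\freeroot_T(y))$ of volume $2\ell+5$ and smaller height, breaking containment.

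The paper sidesteps this by analyzing at $t = y$ (the actual target of the call \textsc{PathConstruction}$(T,\freeroot_T(y),y,\rho(e))$), not at $\freeroot_T(y)$. By Lemma~\ref{lma:pathConstruction}, the returned path $Y$ has level at most $\level(\Minedge_T(y))$. Now split on $\rho(e)$ versus the orthodox parity of $y$: if $\rho(e)=\Nop(y)$, then since $e$ is critical (so $e\notin\paredgeset(y)$), Lemma~\ref{lma:rootIncidence} at $t=y$ gives $\vlevel(e) > \vlevel(\Minedge_T(y)) \ge \vlevel_T(Y)$; if $\rho(e)=\Op(y)$, the first bullet of Theorem~\ref{thm:IKY24} gives $\vlevel_T(Y) < \vlevel(\Minedge_T(y)) \le \vlevel(e)$, the last inequality holding because $e\in\Outset_T(y)$. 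Either way $\vlevel_T(Y) < \vlevel(e) = 2\ell+5$, and since every incoming edge of $T(\freeroot_T(y))$ has volume at least $2\ell+5$, $Y$ cannot leave $T(\freeroot_T(y))$. This is the containment you need, obtained with no MIE claim at the free-root level. Your parts (i) and the length count in (iii) are fine and match the paper (with $|Y| = \dist^{\rho(e)}(y)-2$, giving $|X| = \vlevel(e)-4 = 2\ell+1$ directly).
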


\begin{proof}
Let $e = \{y, z\}$ for short. By Corollary~\ref{corol:IKY24} and the fact of $\vlevel(e) = 2\ell + 5$, to prove $V(X) \subseteq V(T(\freeroot_T(y))) \cup V(T(\freeroot_T(z)))$,
it suffices to show that the output path $Y$ (resp.\ $Z$) satisfies 
$\vlevel_T(Y) < \vlevel(e)$ (resp.\ $\vlevel_T(Z) < \vlevel(e)$). By symmetry, we prove only 
$\vlevel_T(Y) < \vlevel(e)$. By Lemma~\ref{lma:rootIncidence}
and the fact of $\{y, z\} \not\in \paredgeset$, if $\rho(\{y, z\}) = \Nop(y)$, we have 
$\vlevel(e) > \vlevel(\Minedge_T(y))$. Then $\textsc{PathConstruction}$ outputs the path $Y$ of volume 
$\vlevel(\Minedge_T(y)) < \vlevel(e)$. Otherwise, $Y$ is an $\Op(y)$-alternating path, and thus 
Theorem~\ref{thm:IKY24} guarantees $\vlevel_T(Y) < \vlevel(e)$. 

Since $Y$ and $Z$ are 
the shortest, $|X| = |Y| + 1 + |Z| = (\dist^{\rho(e)}(y) - 2) + 1 + (\dist^{\rho(e)}(z) - 2) = 
\vlevel(e) - 4 = 2\ell + 1$, i.e., $X$ is a shortest augmenting path.
\end{proof}

\subsection{Constructing a Maximal Set of Disjoint Shortest Augmenting Paths}
\label{sec:maximalSet}

As explained in Section~\ref{subsec:maximalAugPath}, the procedure \textsc{DoubleToAug} can be safely 
applied simultaneously to a set $\Dcal$ of vertex-disjoint double paths, which provides a set 
of disjoint shortest augmenting paths (referred to as $\mathsf{Aug}(\Dcal)$). In this section, we present that 
such a transformation preserves the maximality, i.e., $\mathsf{Aug}(\Dcal)$ is a maximal 
set of disjoint shortest augmenting paths.

\begin{lemma} \label{lma:uniqueCrossableEdge1}
Let $X$ be any shortest augmenting path from $u_1$ to $u_2$ $(u_1, u_2 \in U)$.
If $X$ contains two edges $e_1$ and $e_2$ of volume $2\ell + 5$, at least one of them 
belongs to $\paredgeset$. 
\end{lemma}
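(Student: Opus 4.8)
The plan is to reason about the structure of the odd cycle $\hat{X}$ obtained from $X$ by attaching the two length-two paths from $f$ to the endpoints $u_1,u_2$, exactly as in the proof of Lemma~\ref{lma:wastingEdge}. If $e_i=\{y_i,z_i\}$ has volume $2\ell+5$, then writing $\hat{X}=Y_i\circ e_i\circ Z_i$ forces $|Y_i|=\dist^{\rho(e_i)}(y_i)$ and $|Z_i|=\dist^{\rho(e_i)}(z_i)$, i.e.\ the two arcs of $\hat{X}$ split at $e_i$ are each a \emph{shortest} $\rho(e_i)$-alternating path from $f$ to the corresponding endpoint. So an edge of volume $2\ell+5$ on $X$ behaves like a ``tight bridge'': $\hat{X}$ restricted to either side realizes the orthodox/unorthodox distance of its endpoint.

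First I would fix the two edges $e_1,e_2$ and note they divide the cycle $\hat{X}$ into (at most) three arcs; by relabeling, say that along $\hat{X}$ we meet $f$, then $e_1=\{y_1,z_1\}$, then $e_2=\{y_2,z_2\}$, with $y_1,z_1,y_2,z_2$ in this cyclic order. Then the arc of $\hat{X}$ from $f$ to $z_1$ not through $e_1$ is a shortest $\rho(e_1)$-alternating path, call it $R_1$, and the arc from $f$ to $y_2$ not through $e_2$ is a shortest $\rho(e_2)$-alternating path $R_2$; moreover the middle arc $\hat{X}[z_1,y_2]$ is a common subpath of both. The goal is to show one of $e_1,e_2$ lies in $\paredgeset$. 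Suppose for contradiction that neither does. Then by Lemma~\ref{lma:fundamental2}(P2) applied to $e_1$ (as an edge incident to $z_1$ with $e_1\notin\paredgeset(z_1)$) we get $\dist^{\rho(e_1)}(y_1)+1>\dist^{\Op}(z_1)$; but tightness of $e_1$ gives $\dist^{\rho(e_1)}(y_1)+1 = |Y_1|+1$, and $|Z_1| = \dist^{\rho(e_1)}(z_1)\ge \dist^{\Op}(z_1)$. I expect the key identity to be that $\rho(e_1)$ must in fact equal $\Nop(z_1)$ whenever $e_1\notin\paredgeset(z_1)$ \emph{and} $e_1$ is tight — because if $\rho(e_1)=\Op(z_1)$ then $Z_1$ is an orthodox path ending with $e_1$, so $e_1\in\paredgeset(z_1)$, contradiction. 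The same holds at $y_1$, at $y_2$, at $z_2$. So both $e_1$ and $e_2$ are ``doubly critical'': at each of their four endpoints the incident edge has parity $\Nop$ of that endpoint and is outside $\paredgeset$.

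Now I would derive a contradiction by a volume/length count on the middle arc. The subpath $\hat{X}[z_1,y_2]$ is simultaneously a suffix of the shortest $\rho(e_1)$-path to $z_1$ read backwards (so it is a shortest $z_1$-to-$y_2$ alternating path continuing toward $f$) and a prefix of the shortest $\rho(e_2)$-path to $y_2$. Using $\dist^{\Op}$ along this arc (which must strictly increase away from $f$ by Lemma~\ref{lma:fundamental}(P1)-type monotonicity, since the arc is a shortest alternating path to one of its endpoints) together with the fact that $z_1$ carries unorthodox parity $\rho(e_1)$ and $y_2$ carries unorthodox parity $\rho(e_2)$, I get $\dist^{\Op}(z_1)<\dist^{\Nop}(z_1)=|R_1|$ and similarly $\dist^{\Op}(y_2)<\dist^{\Nop}(y_2)=|R_2|$; but $R_1$ and $R_2$ overlap in the long middle arc, and comparing $|R_1|+|R_2|$ against $|\hat{X}|=2\ell+3$ forces $\vlevel$ of one of $e_1,e_2$ to exceed $2\ell+5$, contradicting Lemma~\ref{lma:wastingEdge}. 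The cleanest route may instead be: take $e_1$ to be the one of the two that is \emph{farther} from $f$ along $X$ in the orthodox sense, apply Lemma~\ref{lma:fundamental2}(P1) to the edge of $X$ immediately past the near endpoint of $e_2$ and bound $\dist^\Op$ at that endpoint by $\ell+2$ via Lemma~\ref{lma:wastingEdge}, then observe that having $e_1$ strictly beyond it with $\dist^\Nop$ unorthodox and critical pushes $\vlevel(e_1)>2\ell+5$.

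The main obstacle will be bookkeeping the cyclic order and the orientations: the cycle $\hat{X}$ has a symmetry swapping $u_1\leftrightarrow u_2$, and for each of $e_1,e_2$ there is a choice of which endpoint is ``$y$'' and which is ``$z$'', so I must be careful that the parity labels $\rho(e_i)$, the sides $Y_i,Z_i$, and the distances are assigned consistently before invoking Lemma~\ref{lma:fundamental2}. I expect once the statement ``a tight edge on $X$ not in $\paredgeset$ at endpoint $w$ has $\rho = \Nop(w)$ and the arc to $f$ realizes $\dist^\Nop(w)$'' is established at all four endpoints, the contradiction via Lemma~\ref{lma:wastingEdge} (no vertex of $X$ has $\dist^\Op > \ell+2$, and no edge has volume $>2\ell+5$) will be short.
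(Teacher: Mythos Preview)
Your setup is right: splitting $\hat{X}$ at a volume-$(2\ell+5)$ edge $e_i$ forces both arcs to realize the shortest $\rho(e_i)$-distances to its endpoints. But the argument derails at the step where you claim that if $e_1\notin\paredgeset$ then $\rho(e_1)=\Nop(z_1)$. Your justification, ``if $\rho(e_1)=\Op(z_1)$ then $Z_1$ is an orthodox path ending with $e_1$'', is wrong: the arc $Z_1$ (equivalently $\overline{Z_1}$, the shortest $\rho(e_1)$-path from $f$ to $z_1$) does \emph{not} end with $e_1$; it ends with the edge of $X$ on the other side of $z_1$. So you cannot conclude $e_1\in\paredgeset(z_1)$ this way. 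More importantly, the claim itself does not follow from tightness plus $e_1\notin\paredgeset$ alone: Lemma~\ref{lma:fundamental2}(P2) gives only $\dist^{\rho(e_1)}(y_1)\ge\dist^{\Op}(z_1)$ and $\dist^{\rho(e_1)}(z_1)\ge\dist^{\Op}(y_1)$, which is consistent with $\rho(e_1)=\Op(z_1)$. The downstream ``contradiction via Lemma~\ref{lma:wastingEdge}'' never gets off the ground (and note $|\hat{X}|=2\ell+5$, not $2\ell+3$): both edges already have volume exactly $2\ell+5$, so comparing $|R_1|+|R_2|$ to $|\hat{X}|$ cannot push either volume above $2\ell+5$.

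The paper's proof uses an idea you are missing: pick the vertex $v_1$ on the \emph{middle segment} $X[z_1,y_2]$ with minimum $\dist^{\Op}$. By minimality, any shortest orthodox path $W$ from $f$ to $v_1$ is disjoint from $X[z_1,v_1]$ except at $v_1$, so $W\circ\overline{X}[v_1,z_1]$ is a $\rho(e_1)$-alternating path to $z_1$; tightness of $e_1$ then bounds $|W|$ from below. Substituting into the total length $2\ell+1$ shows the arc from $u_2$ to $v_1$ along $X$ is itself a shortest orthodox path, hence the edge $\{v_1,v_2\}$ (with $v_2$ the successor of $v_1$ toward $u_2$) lies in $\paredgeset(v_1)$; minimality of $\dist^{\Op}(v_1)$ on $[z_1,y_2]$ forces $v_2\notin X[z_1,y_2]$, so $\{v_1,v_2\}=e_2$. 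That ``min-$\dist^{\Op}$ vertex in the middle'' device is the engine you need.
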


\begin{proof}
\noindent
Let $e_1 = \{y_1, z_1\}$ and $e_2 = \{y_2, z_2\}$.  
Without loss of generality, we assume that the order of these endpoints along $X$ is $y_1, z_1, y_2, z_2$.
Let $v_1$ be the vertex in $V(X[z_1, y_2])$ such that $\dist^{\Op}(v_1)$ is the minimum, 
and $v_0$ and $v_2$ be its immediate predecessor and successor in $X$, respectively (see Fig.~\ref{fig:augToDownup}). 
Obviously, $\rho(\{v_0, v_1\}) \neq \rho(\{v_1, v_2\})$ holds. By symmetry, we assume 
$\Op(v_1) = \rho(\{v_0, v_1\})$ without loss of generality. Let $W$ be any shortest 
orthodox path from $f$ to $v_1$. 
By the choice of $v_1$, any vertex $v' \in V(X[z_1, v_1])$ satisfies $\dist^{\Op}(v') \geq \dist^{\Op}(v_1) = |W|$, 
and any vertex $w' \in V(W) \setminus \{v_1\}$ satisfies $\dist^{\Op}(w') < |W|$. Hence $V(W) \cap V(\OL{X}[v_1, z_1]) = \{v_1\}$ holds and $W \circ \OL{X}[v_1, z_1]$ is a $\rho(e_1)$-alternating path from $f$ to $z_1$, i.e., $|W| + |\OL{X}[v_1, z_1]| \geq 
\dist^{\rho(e_1)}(z_1)$. Then we have
\begin{align*}
|\OL{X}[u_2, v_1]|
&= 2\ell + 1 - |\OL{X}[v_1, z_1]| - |X[u_1, z_1]| \\
&= \vlevel(e_1) - 4 - |\OL{X}[v_1, z_1]| - |X[u_1, z_1]| & \text{($\vlevel(e_1) = 2\ell + 5$)} \\
&= (\dist^{\rho(e_1)}(y_1) + \dist^{\rho(e_1)}(z_1) + 1 - 4) \\
&  \hspace*{2cm} - |\OL{X}[v_1, z_1]| - (|X[u_1, y_1]| + 1) \\
&\leq \dist^{\rho(e_1)}(z_1) - |\OL{X}[v_1, z_1]| - 2 & \text{($|X[u_1, y_1]| + 2 \geq \dist^{\rho(e_1)}(y_1)$)} \\
&\leq |W| - 2.
\end{align*}
Letting $Q$ be the length-two alternating path from $f$ to $u_2$,
the inequality above implies that $|Q \circ \OL{X}[u_2, v_1]|$ is a shortest orthodox alternating path from $f$ to $v_1$, and thus $\{v_1, v_2\} \in \paredgeset(v_1)$ holds. It also implies $\dist^{\Op}(v_2) < \dist^{\Op}(v_1)$ by 
Lemma~\ref{lma:fundamental} (P1). By the choice of $v_1$, we have 
$v_2 \not\in V(X[z_1, y_2])$. That is, $\{v_1, v_2\} = \{y_2, z_2\}$ and thus 
$\{y_2, z_2\} \in \paredgeset(y_2)$ holds. 
\end{proof}

\begin{figure*}[tb]
    \centering
    \includegraphics[width=0.7\columnwidth]{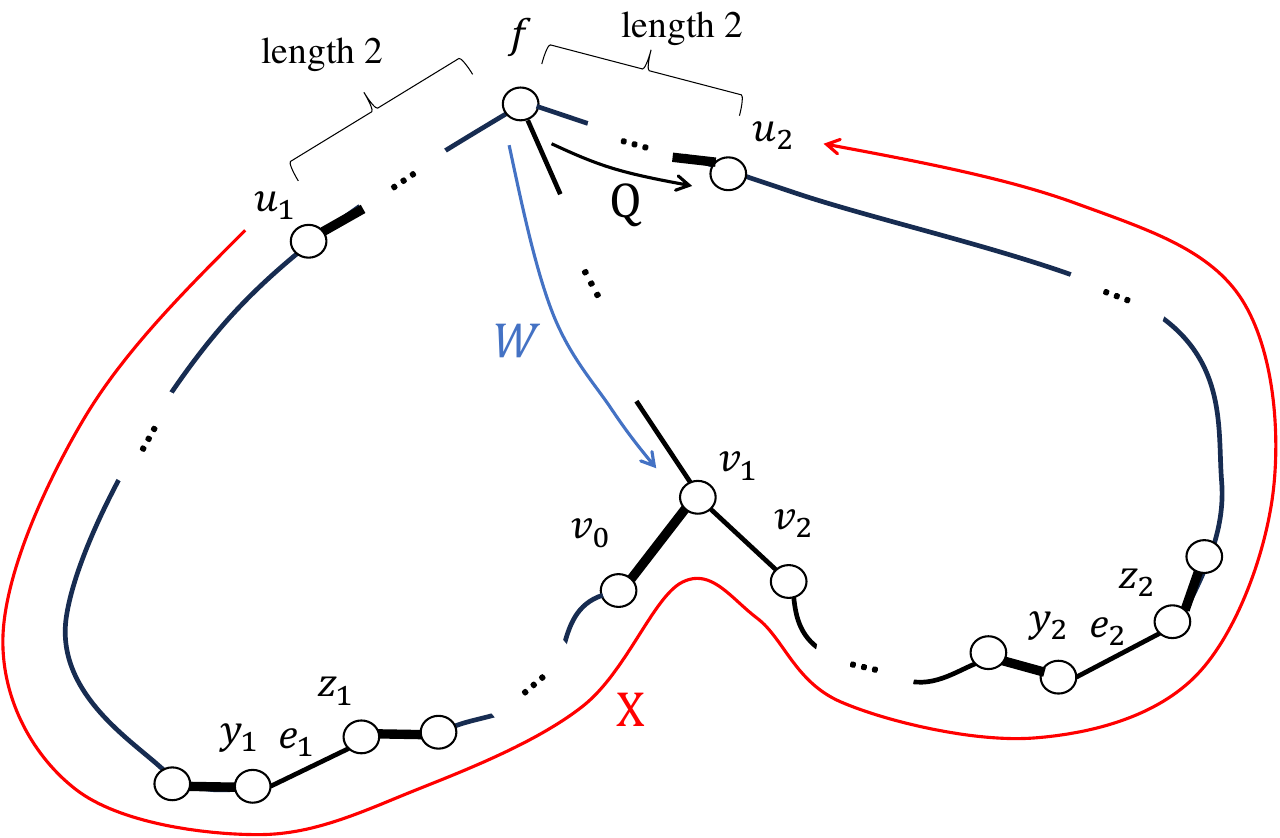}
    \caption{Proof of Lemma~\ref{lma:uniqueCrossableEdge1}.}
    \label{fig:augToDownup}
\end{figure*}

\begin{lemma} \label{lma:uniqueCrossableEdge2}
Any shortest augmenting path $Q$ contains exactly one edge in $\Ccset$ (i.e., crossable and critical).
\end{lemma}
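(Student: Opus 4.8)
The plan is to establish both ``at least one'' and ``at most one'' for crossable-and-critical edges on a shortest augmenting path $Q$. The ``at most one'' direction is essentially handed to us: by Lemma~\ref{lma:wastingEdge} every edge of $Q$ has volume at most $2\ell+5$, so the only candidates for membership in $\Ccset$ are the edges of $Q$ with volume \emph{exactly} $2\ell+5$; and Lemma~\ref{lma:uniqueCrossableEdge1} tells us that among any two such edges at least one lies in $\paredgeset$, hence is not critical and not in $\Ccset$. So $Q$ carries at most one edge of $\Ccset$. The real work is the ``at least one'' direction.

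For existence, first I would argue that $Q$ must contain at least one edge of volume exactly $2\ell+5$. Consider the odd cycle $\hat Q$ obtained from $Q$ by appending the two length-two alternating paths from $f$ to the endpoints of $Q$, so $|\hat Q| = 2\ell+5$. For an edge $e=\{y,z\}$ of $Q$ (with $\rho=\rho(e)$) and the decomposition $\hat Q = A \circ e \circ B$ from $f$ around, $A$ is a $\rho$-alternating path from $f$ to one endpoint and $B$ (reversed) a $\rho$-alternating path from $f$ to the other, so $|\hat Q| \ge \vlevel(e)$, i.e.\ every edge of $Q$ has volume $\le 2\ell+5$ (this is just Lemma~\ref{lma:wastingEdge}). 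Conversely I want one edge attaining equality. Take $v^*$ to be the vertex of $\hat Q$ minimizing $\dist^{\Op}(\cdot)$; since $f$ has $\dist^{\Op}(f)=0$ and the two free endpoints of $Q$ have $\dist^{\Op}=2$, we have $v^* \notin \{$the two free endpoints$\}$ in the interesting case, and the two edges of $\hat Q$ incident to $v^*$ point ``downward'' on both sides, so both prefixes of $\hat Q$ from $f$ to $v^*$ (going either way around) are \emph{shortest} orthodox alternating paths to $v^*$. Splitting $\hat Q$ at $v^*$ and at $f$ gives $\hat Q = W_1 \circ W_2$ where $W_1$ is a shortest orthodox path $f\to v^*$; the ``return half'' $W_2$ reversed, concatenated appropriately, is an orthodox path to $v^*$ of the same length, forcing the edge of $\hat Q$ adjacent to $v^*$ on that side into $\paredgeset(v^*)$. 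Pushing this bookkeeping through (mirroring the computation in the proof of Lemma~\ref{lma:uniqueCrossableEdge1}) one isolates a single edge $e$ of $Q$ with $\vlevel(e)=2\ell+5$, which moreover is \emph{not} in $\paredgeset$: if it were, one of the two halves of $\hat Q$ through $e$ would be a shortest unorthodox path, contradicting minimality of $\dist^{\Op}(v^*)$ along $\hat Q$ just past $e$.

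Next I must upgrade ``$\vlevel(e)=2\ell+5$ and $e\notin\paredgeset$'' to ``$e$ is crossable'', i.e.\ produce an ABT $T$ with $\freeroot_T(y)\ne\freeroot_T(z)$. Here I would build $T$ greedily so that it contains, as tree paths, the two subpaths of $Q$ from $e$ outward to the two free endpoints $u_1,u_2$ of $Q$: every edge of $Q\setminus\{e\}$ adjacent in the natural way lies in $\paredgeset$ of the appropriate vertex (this is where I use that $e$ is the \emph{unique} volume-$(2\ell+5)$ non-$\paredgeset$ edge, together with Lemma~\ref{lma:wastingEdge} bounding $\dist^{\Op}$ along $Q$ by $\ell+2$, which forces the parity structure along $Q$ to be ``orthodox going outward from $e$''). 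Completing the parent choices for all other vertices arbitrarily from $\parset(\cdot)$ yields an ABT $T$ in which the $y$-side of $Q$ lies inside $T(u_1)$ and the $z$-side inside $T(u_2)$; since $u_1\ne u_2$, $\freeroot_T(y)=u_1\ne u_2=\freeroot_T(z)$, so $e$ is crossable, hence $e\in\Ccset$.

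The main obstacle I anticipate is the last step: verifying that the subpaths of $Q$ on either side of $e$ really do consist of orthodox ``parent'' edges and hence can be realized as tree paths of a genuine ABT. This requires showing that, as one walks along $Q$ away from $e$ toward a free endpoint, the parity alternates in lockstep with $\dist^{\Op}$ decreasing by one at each step — equivalently, that each such edge $\{u',u\}$ satisfies $\dist^{\OL{\Op(u)}}(u')+1=\dist^{\Op}(u)$ so that $u'\in\parset(u)$ by Lemma~\ref{lma:fundamental}~(P2). The content here is that if some edge of $Q$ off of $e$ \emph{failed} to be a parent edge, then by Lemma~\ref{lma:fundamental2}~(P2) its volume would strictly exceed $\dist^{\Op}(\cdot)+\dist^{\Nop}(\cdot)$ at one endpoint, and a short computation (again in the style of Lemma~\ref{lma:uniqueCrossableEdge1}, using $|Q|=2\ell+1$) would bump the volume above $2\ell+5$ or produce a \emph{second} non-$\paredgeset$ edge of volume $2\ell+5$, both impossible. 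So the routine-looking parity/length accounting is where the real care is needed; everything else is assembling already-proved lemmas.
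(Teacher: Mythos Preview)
Your ``at most one'' direction is fine and matches the paper. The ``at least one'' direction, however, is over-engineered and contains a real gap.

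The gap is in the last step, exactly where you flag it. You want both subpaths of $Q$ on either side of your distinguished edge $e$ to consist entirely of parent edges (edges in $\paredgeset$), so that they can be realized as tree paths in an ABT. But this need not hold: a vertex $v_i$ on $Q$ can have $\dist^{\Op}(v_i)$ strictly smaller than its distance along $Q$ from either free endpoint (there may be a shortcut through the rest of the graph), and then the adjacent $Q$-edge is not in $\paredgeset(v_i)$. Your proposed rescue via Lemma~\ref{lma:fundamental2}~(P2) does not give what you claim: from $e'=\{u,v\}\notin\paredgeset$ you only get $\dist^{\rho(e')}(u)+1>\dist^{\Op}(v)$ and symmetrically, which yields $\vlevel(e')>\dist^{\Op}(u)+\dist^{\Op}(v)-1$, not $\vlevel(e')>\dist^{\Op}+\dist^{\Nop}$. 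So nothing forces a second critical edge on $Q$ to have volume $2\ell+5$, and the contradiction you want does not materialize. (Separately, your $v^\ast$ step is garbled: if $v^\ast$ is taken over $\hat Q$ then $v^\ast=f$, which is useless; you presumably mean a local argument on $Q$ itself, but as written it does not pin down the edge $e$.)

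The paper bypasses all of this by reversing the order of the two steps. Instead of first identifying $e$ and then trying to thread an ABT through $Q\setminus\{e\}$, it simply observes that at each interior vertex $v_i$ of $Q$ at most one of $e_i,e_{i+1}$ can lie in $\paredgeset(v_i)$ (they have opposite parity), so one can choose parents so that \emph{every} edge of $E(Q)\cap\paredgeset$ is a tree edge of some ABT $T'$. Since the two endpoints $v_0,v_k$ of $Q$ lie in different components of $F_{T'}$, some edge of $Q$ must be a non-tree edge bridging two distinct components; that edge is critical (all $\paredgeset$-edges of $Q$ are already tree edges) and is crossable by construction ($T'$ itself witnesses $\freeroot_{T'}(y)\ne\freeroot_{T'}(z)$), and its volume equals $2\ell+5$ by Lemma~\ref{lma:wastingEdge} and Corollary~\ref{corol:IKY24}. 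No claim about the remaining edges of $Q$ being parent edges is needed.
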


\begin{proof}
By the definition of $\Ccset$ and Lemma~\ref{lma:uniqueCrossableEdge1}, $Q$ does not 
contain two edges in $\Ccset$.
Hence it suffices to show that $Q$ contains at least one edge in $\Ccset$.
Let $Q = v_0, e_1, v_1, e_2, v_2, \dots, e_k, v_k$ (where $k = |Q|$). 
For any $i$, the parities of $e_i$ and $e_{i+1}$ are different, and thus at most one of $e_i$ and 
$e_{i+1}$ can belong to $\paredgeset(v_i)$. Hence one can construct an ABT $T'$ containing all the edges in $E(Q) \cap \paredgeset$. In that $T'$, 
there exists a non-tree edge in $E(Q)$ bridging $T'(v_0)$ and $T'(v_k)$. It is crossable and critical by Lemma~\ref{lma:wastingEdge} and Corollary~\ref{corol:IKY24}. %
\end{proof}

This lemma also provides the following corollary (the construction of $T'$ follows from the proof of 
Lemma~\ref{lma:uniqueCrossableEdge2}).

\begin{corollary} \label{corol:uniqueCrossableEdge2}
Let $Q$ be any shortest augmenting path from $u_0$ to $u_1$.
Then one can construct an ABT $T'$ from any ABT $T$ by the following rule: For any $\{v, v'\} \in E(Q)$, if $(v, v') \in \paredgeset(v)$ is an edge in $Q$, $v$ changes its parent to $v'$. In the constructed ABT $T'$, the (unique) edge $e \in E(Q) \cap \Ccset$ bridges $T'(u_0)$ and $T'(u_1)$ (i.e., $T'$ respects $e$).
\end{corollary}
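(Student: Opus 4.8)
The corollary simply makes explicit the ABT implicitly constructed inside the proof of Lemma~\ref{lma:uniqueCrossableEdge2}, so the plan is to check that the stated rewiring rule is unambiguous, that it still produces an ABT, and that the unique $\Ccset$-edge of $Q$ ends up bridging the two prescribed subtrees.

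First I would argue the rule is well posed and yields an ABT. Fix a vertex $v$ on $Q$: it has at most two incident edges on $Q$, these have distinct $\rho$-values because $Q$ is alternating, and every edge of $\paredgeset(v)$ has $\rho$-value $\Nop(v)$ (a shortest orthodox path to $v$, starting with a non-matching edge at $f$, ends with a matching edge iff it has even length, so its last edge $e$ satisfies $\rho(e)=\OL{\Op(v)}$; cf.\ Lemma~\ref{lma:fundamental}~(P2)). Hence at most one edge of $E(Q)$ incident to $v$ can lie in $\paredgeset(v)$, so the rule reassigns each vertex at most one new parent. Each reassigned vertex $v$ receives a parent $v'$ with $(v,v')\in\paredgeset(v)$, i.e.\ $v'\in\parset(v)$, and every other vertex retains $\parent_T(v)\in\parset(v)$; thus in $T'$ every vertex's parent lies in $\parset$, and by Lemma~\ref{lma:fundamental}~(P1) the parent map strictly decreases $\dist^{\Op}$, so it has no cycles and $T'$ is a spanning tree rooted at $f$, i.e.\ an ABT, in which every edge of $E(Q)\cap\paredgeset$ is a tree edge. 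Note also that the endpoints $u_0,u_1\in U$ have as their $T$-parents the auxiliary vertices on their length-two alternating paths from $f$, and the $Q$-edges incident to them are not these tree edges; so the rule leaves $u_0,u_1$ (and their parents) alone, whence $\freeroot_{T'}(u_i)=u_i$.

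Next, let $e=\{y,z\}$ be the unique edge of $E(Q)\cap\Ccset$ (Lemma~\ref{lma:uniqueCrossableEdge2}), oriented so that $Q$ runs $u_0,\dots,y,z,\dots,u_1$. Since $e\notin\paredgeset$ while every tree edge of an ABT joins a vertex to an element of its $\parset$ and hence lies in $\paredgeset$, the edge $e$ is a non-tree edge of $T'$. To see that $e$ bridges $T'(u_0)$ and $T'(u_1)$ I would show that every edge of $Q$ other than $e$ lies in $\paredgeset$, and that at each internal vertex $v\notin\{y,z\}$ the $Q$-edge of $\paredgeset(v)$ is the one leading toward the nearer endpoint of $Q$; together with Lemma~\ref{lma:fundamental}~(P1) this forces the $T'$-parent path out of $y$ to march monotonically along $Q[u_0,y]$ down to $u_0$, and likewise the parent path out of $z$ along $Q[z,u_1]$ down to $u_1$. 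Since $\vlevel(e)=2\ell+5$ ($e$ being crossable), this establishes that $e$ connects $T'(u_0)$ to $T'(u_1)\neq T'(u_0)$, i.e.\ $T'$ respects $e$.

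\textbf{The main obstacle} is the claim that every edge of $Q$ other than $e$ belongs to $\paredgeset$ with the correct orientation. I expect this to follow from the machinery already at hand: by Lemma~\ref{lma:wastingEdge} every $Q$-edge has volume at most $2\ell+5$, while an edge of $Q$ outside $\paredgeset$ must have volume exactly $2\ell+5$ — were it smaller, splicing shortest $\rho$-alternating paths onto its two endpoints together with the length-two tails to $f$ would produce a closed alternating walk through $f$ of odd length below $2\ell+5$, and extracting an augmenting path from it contradicts $|Q|=2\ell+1$ — and such an edge is then crossable (exhibit a respecting ABT by applying the same rewiring to the relevant prefix/suffix of $Q$), hence lies in $\Ccset$ and coincides with $e$ by Lemma~\ref{lma:uniqueCrossableEdge2}. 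Combined with Lemma~\ref{lma:uniqueCrossableEdge1}, which already rules out a second volume-$(2\ell+5)$ edge outside $\paredgeset$, this pins down $e$ as the only non-$\paredgeset$ edge of $Q$, and the local analysis of which $Q$-edge at an internal vertex lies in $\paredgeset(v)$ proceeds exactly as in the proof of Lemma~\ref{lma:uniqueCrossableEdge1}. The delicate point throughout is handling the case where the spliced shortest alternating paths are not internally disjoint, where a careful parity argument on alternating cycles is needed to still extract a strictly shorter augmenting path.
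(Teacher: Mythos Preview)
Your verification that the rewiring rule is unambiguous and yields an ABT is correct and matches the paper's reasoning in the proof of Lemma~\ref{lma:uniqueCrossableEdge2}. The divergence comes in the second half: you try to prove the stronger claim that \emph{every} edge of $Q$ other than $e$ lies in $\paredgeset$, and you correctly flag this as the main obstacle. Your sketch for it --- splicing shortest $\rho$-alternating paths to the endpoints of a low-volume non-$\paredgeset$ edge and extracting a shorter augmenting path --- is not completed, and the ``delicate point'' you raise (non-disjointness of the spliced paths) is a genuine obstruction: the union of two alternating paths from $f$ need not decompose into alternating cycles through $f$, so there is no immediate way to extract an augmenting path of length below $2\ell+1$.

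The paper's route avoids this stronger claim entirely. Once $T'$ contains all of $E(Q)\cap\paredgeset$, walk along $Q$ starting at $u_0\in T'(u_0)$. The first edge $e'=\{v_j,v_{j+1}\}$ at which $Q$ leaves $T'(u_0)$ is a non-tree $Q$-edge, hence $e'\notin\paredgeset$; it is an incoming edge of $T'(u_0)$, so by Corollary~\ref{corol:IKY24} (free-vertex case) $\vlevel(e')\ge 2\ell+5$, while Lemma~\ref{lma:wastingEdge} gives $\vlevel(e')\le 2\ell+5$. Thus $e'$ is crossable (witnessed by $T'$) and critical, i.e.\ $e'\in\Ccset$. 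The same argument applied to the last edge entering $T'(u_1)$ puts that edge in $\Ccset$ as well. By the uniqueness established in Lemma~\ref{lma:uniqueCrossableEdge2}, both coincide with $e$, so $e$ bridges $T'(u_0)$ and $T'(u_1)$. No claim about the remaining edges of $Q$ is needed; in particular, $Q$ may well contain non-$\paredgeset$ edges whose two endpoints lie in the \emph{same} subtree $T'(u_i)$, and these are harmless for the conclusion.
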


\begin{lemma} \label{lma:crossableIncidentEdge}
Let $v$ be a vertex with a crossable edge in $\paredgeset(v)$.
Then, for any ABT $T$, $\vlevel(\Minedge_T(v)) = 2\ell + 5$ holds. 
\end{lemma}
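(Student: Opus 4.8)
The plan is to reduce the claim to the ABT\emph{-independent} quantity $\dist^{\Op}(v)+\dist^{\Nop}(v)$ and then invoke Corollary~\ref{corol:IKY24}. Fix a crossable edge $e=\{v,w\}\in\paredgeset(v)$; by Definition~\ref{def:crossable} we have $\vlevel(e)=2\ell+5$. First I would pin down the parities. Since $e\in\paredgeset(v)$, there is a shortest orthodox alternating path $P$ from $f$ to $v$ whose last edge is $e$; its prefix $P[f,w]$ has length $\dist^{\Op}(v)-1$, and since $P[f,w]\circ e$ is alternating, $P[f,w]$ is a $\rho(e)$-alternating path. Comparing parities of its length gives $\dist^{\Op}(v)-1\equiv\rho(e)\pmod 2$, hence $\rho(e)=\OL{\Op(v)}=\Nop(v)$. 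Moreover $w\in\parset(v)$, so Lemma~\ref{lma:fundamental} (P2) yields $\dist^{\Nop(v)}(w)+1=\dist^{\Op}(v)$, i.e.\ $\dist^{\rho(e)}(w)=\dist^{\Op}(v)-1$.

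Next I would substitute these into the definition of volume:
\[
2\ell+5=\vlevel(e)=\dist^{\rho(e)}(v)+\dist^{\rho(e)}(w)+1=\dist^{\Nop}(v)+\bigl(\dist^{\Op}(v)-1\bigr)+1=\dist^{\Op}(v)+\dist^{\Nop}(v).
\]
In particular $\dist^{\Nop}(v)<\infty$. Since the right-hand side is independent of any tree, this already determines the common value $\dist^{\Op}(v)+\dist^{\Nop}(v)=2\ell+5$.

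Finally, fix an arbitrary ABT $T$. Because $\dist^{\Nop}(v)<\infty$, there is a shortest unorthodox alternating path from $f$ to $v$; exactly as in the proof of Lemma~\ref{lma:volumeBound}, such a path contains no edge of $\paredgeset(v)$ yet must pass from $f\notin V(T(v))$ into $v\in V(T(v))$, so it uses an edge of $\Outset_T(v)$, giving $\Outset_T(v)\neq\emptyset$. Corollary~\ref{corol:IKY24} then gives $\vlevel(\Minedge_T(v))=\dist^{\Op}(v)+\dist^{\Nop}(v)=2\ell+5$, as desired.

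I do not expect a real obstacle; the only delicate point is the parity bookkeeping (confirming $\rho(e)=\Nop(v)$ and that $P[f,w]$ realizes $\dist^{\Nop(v)}(w)=\dist^{\Op}(v)-1$ via Lemma~\ref{lma:fundamental} (P2)), together with the observation that $\Outset_T(v)\neq\emptyset$ so that Corollary~\ref{corol:IKY24} applies. Note that only $\vlevel(e)=2\ell+5$ and $e\in\paredgeset(v)$ are used — the ``there exists an ABT respecting $e$'' clause of crossability is not needed here.
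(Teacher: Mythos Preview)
Your proof is correct and takes a genuinely different, more direct route than the paper's own argument.

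The paper's proof is structural: it fixes the MIE $\Minedge_T(v)=\{y,z\}$ and argues, via reachability in the ABD $H$ and an articulation-point analysis, that some ABT $T'$ separates $y$ and $z$ into different free-rooted subtrees; this forces $\{y,z\}$ to be an incoming edge of a free-vertex subtree in $T'$, whence $\vlevel(\{y,z\})\geq 2\ell+5$ via Corollary~\ref{corol:IKY24}, and the reverse inequality is handled alongside. Your argument is purely algebraic: from $e\in\paredgeset(v)$ you extract $\rho(e)=\Nop(v)$ and $\dist^{\rho(e)}(w)=\dist^{\Op}(v)-1$ (both already implicit in the proof of Lemma~\ref{lma:rootIncidence} and Lemma~\ref{lma:fundamental}~(P2)), plug into the volume formula to get $\dist^{\Op}(v)+\dist^{\Nop}(v)=2\ell+5$, and then invoke the tree-independent identity $\vlevel(\Minedge_T(v))=\dist^{\Op}(v)+\dist^{\Nop}(v)$ from Corollary~\ref{corol:IKY24}. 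This is shorter, avoids the ABD/articulation-point machinery entirely, and makes transparent that the ``respecting ABT'' clause of crossability is irrelevant here --- only $\vlevel(e)=2\ell+5$ and $e\in\paredgeset(v)$ matter. The paper's approach, on the other hand, yields slightly more: it shows that the MIE $\Minedge_T(v)$ itself is crossable, which is a stronger conclusion than just computing its volume.
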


\begin{proof}
Let $\Minedge_T(v) = \{y, z\}$.
By Corollary~\ref{corol:IKY24}, it suffices to show that there exists an ABT $T'$ such that $\freeroot_T(y) \neq \freeroot_T(z)$.
Let $S_y$ and $S_z$ be the set of all vertices reachable from $y$ or $z$ in $H$, and $H'$ be the subgraph of $H$ induced by
$S_y \cup S_z$. Note that $H'$ necessarily contains $v$. If $H'$ admits a double path 
$D = (Y, Z, y, z)$, one can obtain $T'$ such that $\{y, z\}$ bridges two components in $F_{T'}$, and the lemma is proved.
Otherwise, $H'$ has an articulation point $x$ that is reachable from both $y$ and $z$.
If $x$ locates between $z$ and $v$ (including $v$ itself), $\Minedge_T(v)$ cannot become an incoming edge of $T(v)$.
Hence $x$ is not $v$ and is reachable from $v$.
However, it contradicts that $v$ has a crossable edge $e \in \paredgeset(v)$ because both endpoints of $e$ always belongs to $T'(x)$ for any ABT $T'$.
The lemma is proved.
\end{proof}

Now we are ready to prove (the formal version of) Theorem~\ref{thm:structuralTheorem}.

\let\temp\thetheorem
\renewcommand{\thetheorem}{\ref*{thm:structuralTheorem}}
\begin{theorem}[formal]
For any maximal set $\Dcal$ of vertex-disjoint double paths, 
$\Augt(\Dcal)$ is a maximal set of disjoint shortest augmenting paths.
\end{theorem}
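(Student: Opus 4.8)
The statement has two parts: (i) $\Augt(\Dcal)$ is a set of \emph{vertex-disjoint} shortest augmenting paths, and (ii) it is \emph{maximal}. Part (i) is essentially already handled by the machinery: by Theorem~\ref{thm:augPathConst}, for each double path $(P,Q,y,z)\in\Dcal$ the procedure \textsc{DoubleToAug} produces a shortest augmenting path $X$ with $V(X)\subseteq V(T(\freeroot_T(y)))\cup V(T(\freeroot_T(z)))$, where $T$ is \emph{one common ABT} respecting all double paths in $\Dcal$ (such a $T$ exists because $\Dcal$ is vertex-disjoint, so the parent choices dictated by the distinct $P$'s and $Q$'s do not conflict). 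Since distinct double paths in $\Dcal$ touch disjoint vertex sets, they lie in disjoint pairs of subtrees $\{T(\freeroot_T(y)),T(\freeroot_T(z))\}$ of $F_T$, hence the resulting augmenting paths are vertex-disjoint. I would state this first as a short setup paragraph.

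For part (ii), I argue by contradiction: suppose $\Augt(\Dcal)$ is not maximal, so there is a shortest augmenting path $X'$ vertex-disjoint from every path in $\Augt(\Dcal)$. The plan is to extract from $X'$ a double path that could have been added to $\Dcal$, contradicting the maximality of $\Dcal$. By Lemma~\ref{lma:uniqueCrossableEdge2}, $X'$ contains exactly one edge $e=\{y,z\}\in\Ccset$. By Corollary~\ref{corol:uniqueCrossableEdge2}, there is an ABT $T'$ respecting $e$, in which $X'$ decomposes along $e$ into an orthodox prefix from $\freeroot_{T'}(y)$ to $y$ lying inside $T'(\freeroot_{T'}(y))$ and an orthodox suffix from $z$ to $\freeroot_{T'}(z)$ lying inside $T'(\freeroot_{T'}(z))$; reversing/reading these as upward paths in the ABD $H$ gives two vertex-disjoint $y$–$U$ and $z$–$U$ paths $P',Q'$ in $H$ ending at distinct free vertices, i.e. a double path $D'=(P',Q',y,z)$ with $V(D')\subseteq V(X')$. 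The key claim is that $D'$ is vertex-disjoint from every double path in $\Dcal$. If not, $D'$ shares a vertex $w$ with some $D=(P,Q,y_D,z_D)\in\Dcal$; then $w\in V(X')$ and $w$ lies in one of the two subtrees $T(\freeroot_T(y_D)),T(\freeroot_T(z_D))$ carrying the augmenting path $X_D\in\Augt(\Dcal)$. I need to show this forces $V(X')\cap V(X_D)\neq\emptyset$, contradicting the choice of $X'$.

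The main obstacle is precisely this last step: converting "shares a vertex with the double path $D$" into "shares a vertex with the augmenting path $X_D$ produced from $D$." The double path $D$ lives in $H$, whereas $X_D$ is the \textsc{DoubleToAug} output, which uses \textsc{PathConstruction} and hence may route through non-tree edges and through vertices of $T(\freeroot_T(y_D))$ and $T(\freeroot_T(z_D))$ not on $D$ itself. So I cannot simply say $V(D)\subseteq V(X_D)$. Instead I expect the right argument to run through the subtrees: a vertex of $D$ belongs to $V(T(\freeroot_T(y_D)))\cup V(T(\freeroot_T(z_D)))$, and more importantly, since $\Dcal$ is disjoint one can build the common ABT $T$ so that each subtree $T(\freeroot_T(\cdot))$ associated with a double path contains \emph{only} that double path's vertices among all of $\Dcal$'s vertices; then if $X'$ (hence $D'\subseteq X'$) meets a vertex of such a subtree, I want to derive that $X'$ must in fact traverse that whole subtree in a way forcing intersection with $X_D$, or else relocate the analysis so that $X'$ and $X_D$ compete for the same crossable edge or the same free root. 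A cleaner route, which I would try first, is to show directly that the double path $D'$ extracted from $X'$ can be added to $\Dcal$: since $X'$ is disjoint from all of $\Augt(\Dcal)$ and each $X_D$ contains $V(D_{\text{extracted from }X_D})$ — wait, that is the very thing that fails — so I fall back to: pick the common ABT $T$ respecting $\Dcal\cup\{D'\}$ simultaneously (possible once I verify $D'$ is disjoint from $\Dcal$), note the free roots $\freeroot_T(y),\freeroot_T(z)$ of $D'$ are distinct from all free roots used by $\Dcal$ (a vertex of $U$ on $D'$ that coincided with one on some $D$ would already be a shared vertex), and conclude $\Dcal\cup\{D'\}$ is a strictly larger vertex-disjoint family of double paths — contradiction. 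Thus the crux reduces to the single combinatorial lemma: \emph{if a shortest augmenting path $X'$ is vertex-disjoint from $\Augt(\Dcal)$, then the extracted double path $D'$ is vertex-disjoint from $\Dcal$}, which I would prove by showing any shared vertex propagates, via the subtree containment of Theorem~\ref{thm:augPathConst} and the uniqueness of the crossable edge (Lemma~\ref{lma:uniqueCrossableEdge2}, Lemma~\ref{lma:crossableIncidentEdge}), into a shared vertex of the corresponding augmenting paths.
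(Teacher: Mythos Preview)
Your overall structure matches the paper's: set up a common ABT respecting $\Dcal$, argue disjointness of $\Augt(\Dcal)$ via Theorem~\ref{thm:augPathConst}, then suppose a shortest augmenting path $X'$ is disjoint from $\Augt(\Dcal)$ and derive a contradiction. Your handling of the easy case (when $X'$ is already vertex-disjoint from every double path in $\Dcal$) is essentially the paper's Case~1.

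The genuine gap is in the hard case, where $X'$ intersects some $D\in\Dcal$. Your plan is to first extract a double path $D'$ from $X'$ via Corollary~\ref{corol:uniqueCrossableEdge2} and then argue that any shared vertex between $D'$ and $D$ propagates to a shared vertex between $X'$ and $X_D$. There are two problems. First, the claim $V(D')\subseteq V(X')$ is not justified: the extraction in Corollary~\ref{corol:uniqueCrossableEdge2} gives tree paths in an ABT $T'$ built by redirecting parents along $X'$, but at vertices of $X'$ where neither incident edge of $X'$ lies in $\paredgeset$ the tree path may leave $V(X')$. Second, and more seriously, to extract $D'$ you must build $T'$ respecting $X'$, which in general destroys the property of respecting $\Dcal$; without a common ABT you have no handle relating $D'$, $D$, and $X_D$, and the vague ``propagates via subtree containment'' does not go through.

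The paper does \emph{not} extract $D'$ in this case. Instead it builds an ABT $T'$ that still respects $\Dcal$ and respects the prefix $P_1$ of $X'$ \emph{only where this does not conflict with $\Dcal$} (vertices already on some double path keep their parent). It then looks at the \emph{first} edge $\{v_j,v_{j+1}\}$ of $P_1$ that leaves $T'(u_0)$. This edge is crossable (it bridges two free-root subtrees) but lies in $\paredgeset$ (since $P_1$ has no $\Ccset$-edge), so by the construction of $T'$ one of $v_j,v_{j+1}$ must lie on a double path $D\in\Dcal$. If it is $v_j$, the free root of $D$ is $u_0\in V(X')$, already contradicting disjointness of $X'$ and $\Augt(\Dcal)$. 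If it is $v_{j+1}$, Lemma~\ref{lma:crossableIncidentEdge} gives $\vlevel(\Minedge_{T'}(v_{j+1}))=2\ell+5$, forcing the alternating path produced by \textsc{DoubleToAug} from $D$ to pass through $v_{j+1}\in V(X')$. The key device you are missing is this ``respect $\Dcal$ first, then $X'$ as much as possible'' ABT together with the first-exit-edge analysis.
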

\let\thetheorem\temp
\addtocounter{theorem}{-1}

\begin{proof}
Suppose for contradiction that $\Augt(\Dcal)$ is not maximal. Then, there exists a shortest augmenting path $P$ such that $\{P\} \cup \Augt(\Dcal)$ is still a set of 
vertex-disjoint shortest augmenting paths. Let $u_0$ and $u_1$ be the first and last vertices of $P$. 
We consider the following two cases:

\textbf{(Case 1)} When $P$ does not intersect any double paths in $\Dcal$:
By Lemma~\ref{lma:uniqueCrossableEdge2}, $P$ contains a unique edge in $E(P) \cap \Ccset$,
which is denoted by $\{y, z\}$. Let $T$ be any ABT respecting 
$\Dcal$, and $T'$ be any ABT containing all the edges of volume $2\ell + 5$ in 
$\paredgeset \cap E(P)$, which is constructed from $T$ by Corollary~\ref{corol:uniqueCrossableEdge2}. 
The paths $Y$ from $y$ to $u_0$ and $Z$ from $z$ to $u_1$ in $T'$ forms a double path 
$D = (Y, Z, y, z)$ in $H$. By Corollary~\ref{corol:uniqueCrossableEdge2}, the construction of $T'$ 
changes only the parents of the vertices in $P$. Hence $T'$ still respects $\Dcal$. That is, 
$\{D\} \cup \Dcal$ is a set of disjoint double paths, but it contradicts the maximality of 
$\Dcal$. 

\textbf{(Case 2)} When $P$ intersects a double path in $\Dcal$: Letting $e$ be the unique edge in $E(P) \cap \Ccset$,
we represent $P$ as $P = P_1 \circ e \circ P_2$, where $P_1$ and $P_2$ do not contain any edge in $\Ccset$.
By symmetry, we assume that $P_1$ intersects a double path in $\Dcal$, and define 
$P_1$ as $u_0 = v_0, e_1, v_1, e_2, \dots, e_{k}, v_{k}$.
Let $T$ be any ABT respecting $\Dcal$.
We obtain an ABT $T'$ from $T$, which still respects $\Dcal$ and 
respects $P_1$ ``as much as possible'': for any $v_i \in V(P_1) \setminus \{v_0, v_{k}\}$, 
if $v_i$ is not a vertex in $V(\Dcal)$ and $v_{i+1} \in \parset(v_i)$ (resp.\ $v_{i-1} \in \parset(v_i)$) holds, $v_i$ chooses $v_{i+1}$ (resp.\ $v_{i-1}$) as 
its parent. Let $\{v_j, v_{j+1}\}$ be the first edge in $\Outset_{T'}(u_0) \cap E(P_1)$, so that
$P_1[u_0, v_j]$ is contained in $T'(u_0)$.
Since $\freeroot_{T'}(v_{j+1}) \neq \freeroot_{T'}(v_j) = u_0$, $\{v_j, v_{j+1}\}$ is a crossable edge (by Lemma~\ref{lma:wastingEdge} and Corollary~\ref{corol:IKY24}).
As $\{v_j, v_{j+1}\} \not\in \Ccset$, we have $\{v_j, v_{j+1}\} \in \paredgeset$ by definition.
Then, by the construction of $T'$, there exists a double path $D = (Y', Z', y', z') \in \Dcal$ hitting $v_j$ or $v_{j+1}$.
Without loss of generality, we assume that $Y'$ is the path hitting $v_j$ or $v_{j+1}$. 

Suppose that $Y'$ hits $v_{j}$. Since $T'$ respects $D$, $Y'$ is a path in $T'$ as well as in $T$, and thus 
$\freeroot_T(y') = \freeroot_{T'}(y') = u_0$ holds. It contradicts that $P$ is disjoint from $\Augt(\Dcal)$.
Otherwise, $Y'$ hits $v_{j+1}$.
If $v_{j+1} \in \parset(v_j)$, we should have $\parent_{T'}(v_j) = v_{j+1}$ (as $Y'$ does not hit $v_j$), contradicting $\freeroot_{T'}(v_j) \neq \freeroot_{T'}(v_{j+1})$.
Thus we obtain $v_{j} \in \parset(v_{j+1})$ and then $\vlevel(\Minedge_{T'}(v_{j+1})) = 2\ell + 5$ by Lemma~\ref{lma:crossableIncidentEdge}.
In the construction by \textsf{DoubleToAug}, 
$Y'$ is transformed into an alternating path $Q$ from $y'$ to $\freeroot_{T'}(y')$ of volume less than 
$2\ell + 5$. Hence $Q$ leaves $T'(v_{j+1})$ through the edge $\{\parent_{T'}(v_{j+1}), v_{j+1}\}$, i.e., $Q$ must contain $v_{j+1}$. It contradicts that $P$ is disjoint from $\Augt(\Dcal)$, again. 
\end{proof}

\subsection{Auxiliary Lemmas}
In addition to the main theorem, we further present a few auxiliary lemmas, which will be used in Section~\ref{subsec:AugAndHit}.

\begin{lemma} \label{lma:hittingset}
Let $P$ be any shortest augmenting path from $u_1$ to $u_2$ with edge $\{y, z\} \in \Ccset$, where $u_1, y, z, u_2$ appear in this order along $P$, %
and $X$ be any subgraph of $H$ such that each connected component in $X$ is a path terminating with a vertex in $U$ or an in-tree rooted by a 
vertex in $U$. If $V(P) \cap V(X) = \emptyset$, then $H$ admits two disjoint paths $Q_y$ from $y$ to $u_1$ and $Q_z$ from $z$ to $u_2$ which are also disjoint from $V(X)$. 
\end{lemma}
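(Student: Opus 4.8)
The plan is to take the two halves of $P$, namely (the reverse of) $P[u_1,y]$ and $P[z,u_2]$, which are vertex-disjoint and already avoid $V(X)$, and repair them into genuine directed paths $Q_y$ and $Q_z$ of the ABD $H$ by local reroutings that stay outside $V(X)$. First I would collect the facts about $P$ that are needed. Since $P$ is an augmenting path it avoids $f$ and its neighbours, so $V(P)\subseteq V(H)$; and since $e:=\{y,z\}\in\Ccset$ gives $\vlevel(e)=2\ell+5$ while $|P|=2\ell+1$, a direct volume computation, analogous to the one in the proof of Theorem~\ref{thm:augPathConst}, forces $P[u_1,y]$ (resp.\ $P[z,u_2]$), extended by the canonical length-two prefix through $f$, to be a \emph{shortest} $\rho(e)$-alternating path to $y$ (resp.\ to $z$); in particular all orthodox distances along $P$ are bounded by Lemma~\ref{lma:wastingEdge}. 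Next, starting from an arbitrary ABT and reassigning parents along $P$ as in Corollary~\ref{corol:uniqueCrossableEdge2}, I obtain an ABT $T'$ respecting $e$, in which (using that no critical edge of $P$ other than $e$ can bridge two free-rooted components, by Corollary~\ref{corol:IKY24} and Lemma~\ref{lma:uniqueCrossableEdge1}) one has $y\in V(T'(u_1))$ and $z\in V(T'(u_2))$; the tree paths of $T'$ from $y$ to $u_1$ and from $z$ to $u_2$ follow parent edges, hence are directed paths of $H$, and already form a double path of $(H,\Ccset)$ for $e$. Thus the whole lemma reduces to choosing the parents of $T'$ --- which, apart from those pinned by $\paredgeset\cap E(P)$, are free to be any member of $\parset(\cdot)$ --- so that these two tree paths avoid $V(X)$.

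Concretely I would build $Q_y$ as a walk upward from $y$ in $H$: follow the reverse of $P[u_1,y]$ as long as the traversed edge is a parent edge of the current vertex (this is automatic at every vertex $w$ on $P[u_1,y]$ whose $P$-predecessor toward $u_1$ lies in $\parset(w)$, and those vertices avoid $V(X)$ because $V(P)\cap V(X)=\emptyset$); and each time the walk reaches a vertex $w$ whose $P$-predecessor is not in $\parset(w)$, follow a detour through parent edges --- which strictly decreases $\dist^{\Op}(\cdot)$ by Lemma~\ref{lma:fundamental}~(P1), hence terminates --- choosing at every step a parent outside $V(X)$, until the walk rejoins $P[u_1,y]$ strictly nearer $u_1$ (or reaches $u_1$). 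Fixing the remaining parents of $T'$ consistently with these detours forces $Q_y$ into $V(T'(u_1))$ and the symmetric $Q_z$ into $V(T'(u_2))$, so that $Q_y$ and $Q_z$ are automatically vertex-disjoint and end at $u_1$ and $u_2$ respectively.

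The crux --- and the step I expect to be hardest --- is to prove that such an $X$-avoiding detour always exists; equivalently, that no vertex $w$ visited by the $y$-walk is ``trapped'' by $X$, in the sense that every directed $H$-path from $w$ to $V(P[u_1,y])\cup U$ meets $V(X)$ (and symmetrically for the $z$-walk). This is where the hypothesis on $X$ is essential: since each connected component of $X$ is a path terminating at a vertex of $U$ or an in-tree rooted at a vertex of $U$, each component ``funnels'' $H$-reachability toward one free vertex and therefore cannot cut $w$ off from all of $U$ unless it contains a vertex of $P$. Making this precise, I would show that a trapped $w$ allows one to splice a prefix of $P$ with the offending component of $X$ and with the canonical shortest paths supplied by Theorem~\ref{thm:IKY24} to obtain either a $\rho(e)$-alternating path to $y$ shorter than $P[u_1,y]$, contradicting the minimality above, or a shortest augmenting path through $e$ meeting $V(X)$, which by the uniqueness of the crossable edge on a shortest augmenting path (Lemma~\ref{lma:uniqueCrossableEdge2}) contradicts $V(P)\cap V(X)=\emptyset$. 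One may also package this as a Menger argument --- adjoin a super-source adjacent to $y$ and $z$ and edges from $u_1$ and $u_2$ into a super-sink, delete $V(X)$, and observe that the desired double path fails only if a single vertex separates source from sink --- and then rule out such a separator in the same way; translating the separator into an obstruction for $P$, and reconciling it with the parents already forced along $P$, is the delicate point.
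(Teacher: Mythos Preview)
Your plan has a real gap at exactly the step you flag as the ``crux'': you never prove that an $X$-avoiding detour exists, and the sketched contradictions are not convincing as stated. Splicing ``a prefix of $P$ with the offending component of $X$'' is undefined when $V(P)\cap V(X)=\emptyset$, and the Menger reformulation just restates the question --- you would still have to rule out a one-vertex separator in $H\setminus V(X)$, which is precisely the difficulty. Nothing in Theorem~\ref{thm:IKY24} or Lemma~\ref{lma:uniqueCrossableEdge2} controls where an \emph{arbitrary} ABT places the vertices of $X$, so once you start from an arbitrary $T'$ and only reassign parents along $P$, components of $X$ may well sit inside $T'(u_1)$ or $T'(u_2)$ and block every parent edge out of some $w$ on your walk.

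The paper's proof avoids all of this with a single observation you are missing: choose the initial ABT to respect $X$ \emph{before} you modify it along $P$. Because every connected component of $X$ is an in-tree (or path) rooted at a free vertex and the components are pairwise disjoint, you can pick parents so that all edges of $X$ are tree edges. Then apply Corollary~\ref{corol:uniqueCrossableEdge2} to swing parents along $P$; since $V(P)\cap V(X)=\emptyset$, no vertex of $X$ changes its parent, so the resulting ABT $T^{\ast}$ still respects $X$. Now each component of $X$ lies entirely in $T^{\ast}(u)$ for its own root $u\in U$, and since $u_1,u_2\notin V(X)$ these roots are different from $u_1$ and $u_2$; hence $T^{\ast}(u_1)$ and $T^{\ast}(u_2)$ contain no vertex of $X$, and the tree paths from $y$ to $u_1$ and from $z$ to $u_2$ are the desired $Q_y$, $Q_z$. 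No detours, no Menger, no contradiction argument --- the structural hypothesis on $X$ is used exactly once, to make $X$ compatible with \emph{some} ABT.
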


\begin{proof}
The proof almost follows the argument of (Case 1) in the proof of Thorem~\ref{thm:structuralTheorem}. 
Since the connected components in $X$ are disjoint, one can construct an ABT $T$ which respects every component in $X$ (where
``$T$ respects a tree $T'$'' means that $T$ contains $T'$ as a subgraph). Let $T^{\ast}$ be any ABT 
containing all the edges of volume $2\ell + 5$ in $\paredgeset \cap E(P)$, which is constructed from $T$ by 
Corollary~\ref{corol:uniqueCrossableEdge2}. Due to the disjointness of $P$ and $X$, $T^{\ast}$ still respects $X$, and admits the paths 
$Q_y$ from $y$ to $u_1$ and $Q_z$ from $z$ to $u_2$ along $T^{\ast}$. Since any component in $X$ contains one vertex in $U$, the 
subtrees $T^{\ast}(u_1)$ and $T^{\ast}(u_2)$ does not contain any vertex in $V(X)$. That is, $Q_y$ and $Q_z$ satisfy the condition 
of the lemma.
\end{proof}

We define a \emph{partial double path} to $v$ as a tuple $(P, Q, y, z)$ such that $P$ and $Q$ are disjoint, $\{y, z\} \in \Ccset$, and
$P$ and $Q$ are respectively a $y$--$U$ path and a $z$--$v$ path. We have the following lemma.

\begin{lemma} \label{lma:partialreconstruction}
Let $T$ be any ABT, $u \in U$ be any free vertex, and $v$ be any vertex in $V(T(u))$. If $v$ has an edge $(v, v') \in \paredgeset$ 
such that $v' \not \in V(T(u))$ and there exists a partial double path $(P, Q, y, z)$ to $v$ satisfying $V(P) \subseteq V(T(u))$ and 
$V(Q) \subseteq V(T(u))$, there exists a shortest $\Nop(v)$-alternating path from $f$ to $v$ containing $u$ and $\Nop(v) = \Even$. 
\end{lemma}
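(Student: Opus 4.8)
The plan is to use the "double path to shortest augmenting path" machinery from Theorem~\ref{thm:augPathConst} combined with the locality guarantees of Algorithm~\ref{alg:AugPathConstruction}, applied to an ABT that simultaneously respects the given partial double path $(P, Q, y, z)$ and the tree edges along the path from $v$ up to $u$ in $T$.

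First I would build an auxiliary ABT $T'$ that respects $(P,Q,y,z)$ (choosing parents along $P$ and $Q$ as in \textsc{DoubleToAug}) and also keeps the upward tree-path from $v$ to $u$ in $T$ intact. This is possible because $V(P) \cup V(Q) \subseteq V(T(u))$, so the re-rooting only touches vertices strictly inside $T(u)$, and we can arrange it so that $\freeroot_{T'}(v) = u$ and $\freeroot_{T'}(y) = \freeroot_{T'}(z) = u$ as well (all of $P, Q$ live inside $T(u)$). Now $(v, v') \in \paredgeset$ with $v' \notin V(T(u))$; since parents never leave their own subtree's "free-root" component in an ABT, $v'$ lies in a different component of $F_{T'}$, so $v' \in \parset(v)$ with $\freeroot_{T'}(v') \neq u$. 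By Lemma~\ref{lma:fundamental} (P2), $\rho(\{v,v'\}) = \Nop(v)$ and $\dist^{\OL{\Op(v)}}(v') + 1 = \dist^{\Op}(v)$, so $\{v, v'\}$ is an edge that could serve as the last edge of an orthodox path to $v$ — but I actually want to go the other way and produce an \emph{unorthodox} path to $v$.

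The key step is to run \textsc{PathConstruction}$(T', u, v, \Nop(v))$, or equivalently to trace through the $\theta = \Nop(t)$ branch of the proof of Theorem~\ref{thm:IKY24}: I need a crossable critical edge on the boundary of $T'(v)$ that is an MIE of $T'(v)$, so that the resulting shortest $\Nop(v)$-alternating path enters $T'(v)$ exactly once through the tree edge $\{\parent_{T'}(v), v\}$. Since $(P, Q, y, z)$ is a partial double path with $\{y,z\}\in\Ccset$ and $Q$ is a $z$--$v$ path inside $T'(v)$, the edge $\{y,z\}$ is an incoming edge of $T'(v)$ of volume $2\ell + 5$; by Corollary~\ref{corol:IKY24} (applied to the fact that $\Outset_{T'}(v)\neq\emptyset$ and, via Lemma~\ref{lma:crossableIncidentEdge} or a direct argument, $\vlevel(\Minedge_{T'}(v)) = 2\ell + 5$) this $\{y,z\}$ is in fact an MIE of $T'(v)$. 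Theorem~\ref{thm:IKY24} then yields a shortest $\Nop(v)$-alternating path of the form $P[f,y]\circ\{y,z\}\circ P'[z,v]$ where the suffix $P'[z,v]$ lies inside $T'(v)$; chaining up through the tree edges of $T'$ from $v$'s position up to $u$ shows this path passes through $u$. Finally, since the path enters $T'(v) \subseteq T'(u)$ only by descending through $u$, and $u$ is a free vertex adjacent to $f$ by a length-two alternating path whose $f$-incident edge is non-matching, the parity count forces the alternating path from $f$ through $u$ into $T'(v)$ to be even at $v$, giving $\Nop(v) = \Even$ (equivalently $\Op(v) = \Odd$).

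\textbf{The main obstacle} I expect is verifying cleanly that $\{y,z\}$ is genuinely a \emph{minimum} incoming edge of $T'(v)$ (not just an incoming edge of the right volume), and that the re-rooting producing $T'$ can be done without destroying the upward path $v \leadsto u$ — these two requirements interact, because respecting $P$ and $Q$ forces certain parent choices inside $T(u)$ while respecting $v\leadsto u$ forces others. The resolution is that $P, Q$ avoid the specific $v$--$u$ tree path by disjointness of double paths together with $v$ being the endpoint of $Q$ (so $Q$ stops exactly at $v$ and does not climb further), so the two sets of forced parent choices are compatible. Once $T'$ is fixed, Lemma~\ref{lma:wastingEdge} and Corollary~\ref{corol:IKY24} pin down $\vlevel(\Minedge_{T'}(v)) = 2\ell+5$, and since every incoming edge has volume at least this (it is a free-vertex subtree), $\{y,z\}$ attains the minimum volume and we may invoke Theorem~\ref{thm:IKY24} with it (choosing it as the canonical MIE). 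The parity conclusion $\Nop(v) = \Even$ is then immediate from the structure of the path through $u$ and $f$.
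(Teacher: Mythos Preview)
Your plan diverges from the paper's at the single most important step: you never actually \emph{use} the edge $(v,v')$ to restructure the tree. The paper's proof splits $T(u)$ into $T_1 \supseteq V(P)$ and $T_2 \supseteq V(Q)$ and then \emph{moves $T_2$ out of $T(u)$} by reparenting $v$ to $v'$; in the resulting ABT $T'$ the edge $\{y,z\}$ bridges the two free-root subtrees $T'(u)=T_1$ and $T'(\freeroot_{T'}(v'))\supseteq T_2$, so one can call $\textsc{PathConstruction}(T',u,y,\rho(\{y,z\}))$ and $\textsc{PathConstruction}(T',v,z,\rho(\{y,z\}))$ exactly as in \textsc{DoubleToAug}/Theorem~\ref{thm:augPathConst}. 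The first call automatically yields a $u$--$y$ path inside $T_1$ (so $u$ is on the path by construction, not by an a-posteriori argument), and the second a $v$--$z$ path inside $T_2$; concatenation through $\{y,z\}$ gives the desired shortest $\Nop(v)$-path lying in $T_1\cup T_2 = V(T(u))$.

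By contrast, you keep $v$ inside $T'(u)$ and try to force the IKY path for $T'(v)$ to pass through $u$. This has two concrete gaps. First, your $T'$ need not exist: respecting $P$, respecting $Q$, and preserving the original $v$--$u$ tree path impose parent choices on overlapping vertex sets, and ``disjointness of double paths'' only gives $V(P)\cap V(Q)=\emptyset$, not disjointness from the $v$--$u$ path in $T$. Second, even granting such a $T'$, you only get that $\{y,z\}$ is minimum-\emph{volume} among incoming edges of $T'(v)$, not minimum-\emph{level}, so it need not be an MIE; and whichever MIE you use, the prefix $P[f,y^*]$ of the IKY path is only guaranteed to have level below $\level(\Minedge_{T'}(v))$, which does not prevent it from entering $T'(u)$ through some \emph{other} incoming edge of $T'(u)$ (of volume $2\ell+5$ but smaller height) rather than through $u$. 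Your parity argument for $\Nop(v)=\Even$ is also circular (it presupposes the path goes through $u$); the paper instead observes that $|\paredgeset(v)|\ge 2$ (both $\parent_T(v)$ and $v'$ lie in $\parset(v)$), which forces the parent edge to be non-matching and hence $\Op(v)=\Odd$.
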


\begin{proof}
One can split $T(u)$ into two subtrees $T_1$ and $T_2$ such that $V(T_1) \cup V(T_2) = V(T(u))$, $V(P) \subseteq V(T_1)$, and $V(Q) \subseteq V(T_2)$. Due to the existence of $(v, v')$, we obtain the ABT $T'$ by moving $T_2$ 
to the subtree containing $v'$. By applying $\textsc{PathConstruction}(T', u, y, \rho(\{y, z\}))$ and 
$\textsc{PathConstruction}(T', v, z, \rho(\{y,z\}))$, we obtain a shortest $\Nop(v)$-alternating path from $f$ to $v$ lying in $V(T_1) \cup V(T_2)$. Since we assume $|\paredgeset(v)| \geq 2$, the edge from $v$ to 
its parent is a non-matching edge. Hence $\Nop(v) = \Even$ holds. 
The lemma is proved.
\end{proof}

We also present a slightly generalized version of Corollary~\ref{corol:uniqueCrossableEdge2}.

\begin{lemma} \label{lma:uniqueCrossableEdge3}
Let $T$ be any ABT, $u \in U$ be any free vertex, and $v$ be any vertex in $V(T(u))$.
If $v$ has an edge $(v, v') \in \paredgeset$ such that $v' \not \in V(T(u))$ and there exists 
a shortest $\Nop(v)$-alternating path $X$ from $u$ to $v$ such that $V(X) \subseteq V(T(u))$, then there exists a partial double path to $v$ lying in $T(u)$. %
\end{lemma}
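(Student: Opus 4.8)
Let $T$ be any ABT, $u \in U$ a free vertex, and $v \in V(T(u))$. Suppose $v$ has an edge $(v,v')\in\paredgeset$ with $v'\notin V(T(u))$, and there is a shortest $\Nop(v)$-alternating path $X$ from $u$ to $v$ with $V(X)\subseteq V(T(u))$. Then there exists a partial double path to $v$ lying in $T(u)$.

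\medskip

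The plan is to reverse-engineer a partial double path $(P,Q,y,z)$ from the shortest $\Nop(v)$-alternating path $X$ inside $T(u)$, mimicking the argument already used in Lemma~\ref{lma:uniqueCrossableEdge2} and Corollary~\ref{corol:uniqueCrossableEdge2} but carried out relative to the free root $u$ rather than $f$. First I would append to $X$ the length-two alternating path from $f$ to $u$ (whose incident edge at $u$ is a non-matching edge), obtaining an $f$--$v$ alternating path $\tilde X$ of length $2+|X| = 2 + \dist^{\Nop}(v)$ that still lies in $V(T(u))\cup\{f,\text{nbr}(f)\}$; note $\tilde X$ is a shortest $\Nop(v)$-alternating path from $f$ since $X$ was shortest from $u$. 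Now I would close $\tilde X$ into an odd cycle by using the edge $(v,v')$ together with a shortest orthodox path from $f$ to $v'$; more precisely, since $\tilde X$ together with $\{v,v'\}$ and a shortest path to $v'$ behaves exactly like the odd cycle $\hat X$ from the proof of Lemma~\ref{lma:wastingEdge}, every edge of $\tilde X$ has volume at most $2\ell+5$, and by the same parity argument as in Lemma~\ref{lma:uniqueCrossableEdge2} at most one of any two consecutive edges of $\tilde X$ at an internal vertex can belong to $\paredgeset$ of that vertex.

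\medskip

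With that in hand, I would invoke the construction of Corollary~\ref{corol:uniqueCrossableEdge2} applied to $\tilde X$: starting from the given $T$, re-root the vertices along $\tilde X$ so that $T$ contains every edge of $E(\tilde X)\cap\paredgeset$; this changes only the parents of vertices on $\tilde X$, all of which lie in $V(T(u))$, so the modified tree $T'$ still has $T'(u)$ spanning the same vertex set and still satisfies $v'\notin V(T'(u))$. In $T'$ the path $\tilde X$ has a unique edge $\{y,z\}$ bridging $T'(u)$ and the component containing $v'$ (or more precisely bridging the $u$-side from the $v$-side after the re-rooting — here I need to be slightly careful because the far endpoint is not a free root but rather the re-attached subtree, so I would instead argue directly: walk along $\tilde X$ from the $f$-end; the edges in $\paredgeset$ point ``up'' toward $f$ in $T'$, the parity alternates, and the first edge of $\tilde X$ that is not in the current subtree of $u$ is the desired crossable-critical edge $\{y,z\}$, with $y$ on the $u$-side and $z$ on the $v$-side). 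Then $P := T'$-path from $y$ up to $u$ and $Q := T'$-path from $z$ down toward $v$ along $\tilde X$ give the two disjoint paths; $P$ is a $y$--$U$ path in $H$, $Q$ is a $z$--$v$ path in $H$, both lie in $T(u)$, and $\{y,z\}\in\Ccset$ because it is crossable (the ABT $T'$ witnesses $\freeroot_{T'}(y)\ne\freeroot_{T'}(z)$, using the edge $(v,v')$ to pull the $z$-side out of $T'(u)$) and critical (not in $\paredgeset$, by the choice as the bridging edge, exactly as in Lemma~\ref{lma:uniqueCrossableEdge2}).

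\medskip

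The main obstacle I anticipate is the bookkeeping around the ``far endpoint'' of $\tilde X$: unlike Corollary~\ref{corol:uniqueCrossableEdge2}, where both endpoints of the augmenting path are free roots, here the $v$-end of $\tilde X$ terminates at an internal vertex $v$ and the crossing is realized only after attaching $T'(v)$'s relevant subtree to the side of $v'$ via the edge $(v,v')\in\paredgeset$. I would handle this by performing the re-rooting of Corollary~\ref{corol:uniqueCrossableEdge2} along $\tilde X$ \emph{and then} additionally moving the subtree hanging off $v$ on the $z$-side out through $(v,v')$ — essentially the same surgery as in Lemma~\ref{lma:partialreconstruction} — so that in the final ABT the vertex set $V(T'(u))$ has genuinely split, one sub-path reaching $u$ and the other reaching $v$ and then leaving through $(v,v')$; the edge $\{y,z\}$ is then crossable by Definition~\ref{def:crossable}. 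Everything else (volumes bounded by $2\ell+5$ via Lemma~\ref{lma:wastingEdge}, uniqueness and criticality of the bridge via Lemma~\ref{lma:uniqueCrossableEdge2}, disjointness of $P$ and $Q$ from the tree structure, containment in $T(u)$ from the fact that only parents of vertices of $\tilde X\subseteq V(T(u))$ were altered) is routine and parallels the proofs already given.
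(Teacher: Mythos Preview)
Your proposal is correct and follows essentially the same route as the paper: re-root $T$ along $X$ via the construction of Corollary~\ref{corol:uniqueCrossableEdge2}, locate a non-$\paredgeset$ edge $\{y,z\}$ of $X$ that separates the $u$-side from the $v$-side, and use the escape edge $(v,v')$ to certify crossability. The paper's framing is slightly cleaner on the one point you flagged as an obstacle: rather than searching for ``the first edge of $\tilde X$ leaving the current subtree of $u$'' (awkward, since both ends of $X$ lie in $T(u)$), it directly takes $\{y,z\}$ to be the \emph{last incoming edge of $T'(v)$} along $X$, and then argues that any edge crossing between $T'(v)$ and $V(T'(u))\setminus V(T'(v))$ is crossable---because $T'(v)$ can be detached through $(v,v')$ into another free-root component---hence has volume exactly $2\ell+5$ by Corollary~\ref{corol:IKY24} and Lemma~\ref{lma:wastingEdge}, hence lies in $\Ccset$. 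This sidesteps your ``far endpoint'' bookkeeping and the extra surgery you borrow from Lemma~\ref{lma:partialreconstruction}; otherwise the two arguments coincide.
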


\begin{proof}
Following the approach of Corollary~\ref{corol:uniqueCrossableEdge2}, one can construct an ABT $T'$ where all edges in $E(X) \cap \paredgeset$ of volume $2\ell + 5$ are contained in $T'$.
Let $e = \{y, z\}$ be last incoming edge of $T'(v)$ along $X$, which is not an edge in $\paredgeset$. 
Due to the existence of $(v, v')$, for any edge $\{y', z'\}$ crossing between $V(T'(u)) \setminus V(T'(v))$ and $T'(v)$, there exists an ABT $T''$ such that $\freeroot_{T''}(y') \neq \freeroot_{T''}(z')$. It implies that the volume of $\{y', z'\}$ is at least $2\ell + 5$ by Corollary~\ref{corol:IKY24}.
Since $X$ cannot contain any edge of volume larger than $2\ell + 5$ by Lemma~\ref{lma:wastingEdge}, the volume of $\{y, z\}$ is $2\ell + 5$, i.e., $e$ is crossable, and thus $e \in \Ccset$ holds.
Then the paths $P$ from $y$ to $u$ and $Q$ from $z$ to $v$ along $T'$ form a partial double path to $v$.
In the construction of $T'$ from $T$, a vertex $x$ in $X$ changes its parent to $x'$ only when $(x, x')$ is an edge in $X$.
As $V(X) \subseteq V(T(u))$, we have $V(T'(u)) = V(T(u))$, and hence both $P$ and $Q$ lie in $T(u)$.
\end{proof}

\section{The Details of Our MCM Algorithm}

\subsection{Implementation of Step 1}
\label{subsec:compDistDetails}

The pseudocode of our distance computation algorithm is presented in Algorithm~\ref{alg:constABT}. 
In the algorithm, each vertex $v$ manages variables $\Varparent[v]$, 
$\Vardist^{\Even}[v]$, and $\Vardist^{\Odd}[v]$, which are respectively 
initialized with $\infty$. We also introduce two auxiliary variables $\Varort[v]$ and $\Varvol[e]$ indicating the computed orthodox 
parity of $u$ and edge volume of $e$ respectively. Formally, they are defined as $\Varort[u] = \argmin_{\theta \in \{\Odd, \Even\}} \Vardist^{\theta}[u]$ and $\Varvol[\{y,z\}] = \Vardist^{\rho(\{y,z\})}[y] + \Vardist^{\rho(\{y,z\})}[z] + 1$.
The values of these variables are computed from $\Varparent[\cdot]$, $\Vardist^{\Even}[\cdot]$, and 
$\Vardist^{\Odd}[\cdot]$, and thus they are not explicitly stored in Algorithm~\ref{alg:constABT}.
Intuitively, the algorithm (implicitly) follows rounds $r = 0, 1, 2, \dots $ and keeps 
the following invariant at the beginning of each round $r$: 
\begin{enumerate}
\item[(I1)] For each $\theta \in \{\Odd, \Even\}$ and $v \in V(G)$, $\Vardist^{\theta}[v] = \dist^{\theta}(v)$ if $\dist^{\theta}(v) < r$.
\item[(I2)] For each $v \in V(G)$, $\Varparent[v] = v'$ for some $\{v, v'\} \in E(G)$ with $\rho(\{v, v'\}) = \OL{\Op(v)}$ and $\dist^{\OL{\Op(v)}}(v') = \dist^{\Op}(v) - 1$ if $0 < \dist^{\Op}(v) \le r$.
\end{enumerate}
We denote the graph induced by the edges $\{v, \Varparent[v]\}$ for all vertices with $\Vardist^{\Op(v)}(v) \leq r$ 
by $T^r$. Supposing (I2), it is a subtree of some ABT $T$.
Note that $T^r$ contains the vertex $v$ such that $\dist^{\Op}(v) = r$ holds, but $\Vardist^{\Op}[v]$ 
does not store $r$ yet at the beginning of round $r$. The algorithm also 
manages the tree $\tilde{T}^r$ obtained from $T^r$ by contracting all the edges $\{v, \Varparent[v]\}$ 
such that $\dist^{\Nop}(v) < r$ holds. Throughout the algorithm, that tree is implicitly managed 
by the union-find data structure $\Varcontraction$ storing the family of the vertex sets shrunk into 
the same single vertex. Obviously, each set of vertices shrunk into a single vertex $x$ induces a connected 
subtree of $T^r$. For simplicity, we assume that $\textsc{find}(v)$ operation of $\Varcontraction$ 
returns the root vertex of the contracted subtree containing $v$. Each vertex $v$ further manages 
a priority queue $\Varnontree[v]$, which stores the set of the non-tree edges $e$ incident to $v$ in 
$\tilde{T}^r$ such that its volume $\vlevel(e)$ is fixed. The priority of each edge follows the value of 
$\level$ (smaller value has higher priority). Each $\Varnontree[v]$ is implemented by any meldable heap 
of supporting \textsc{find-min}, \textsc{merge}, \textsc{add}, and \textsc{delete-min} with $O(\log n)$-amortized 
time.

\begin{algorithm}[t!] 
\caption{Procedure \textsc{ComputeDist}}
\label{alg:constABT}
{\setlength{\baselineskip}{14pt}
\begin{algorithmic}[1]
\State $\Varchecklist.\textsc{add}(f, 0, \Even)$ \Comment{Initialization}
\While {$\Varchecklist \neq \emptyset$} \Comment{Rounds $r = 0, 1, 2, \dots$}
  \State $(v, r, \theta) \ot \Varchecklist.\textsc{delete-min}()$
  \If {$\Vardist^{\theta}[v] = \infty$} \Comment{Deciding the distance of parity $\theta$}
    \State $\Vardist^{\theta}[v] \ot r$    
    \ForAll {$e = \{x, v\}$ incident to $v$}
      \If {$\Varparent[x] = \infty$ and $\rho(e) = \theta$} \Comment{Join $x$ into the tree}
        \State $\Varparent[x] \ot v$; $\Varchecklist.\textsc{add}((x, r + 1, \OL{\theta}))$ 
      \ElsIf {$\max\{\Vardist^{\rho(e)}[v], \Vardist^{\rho(e)}[x]\} = r$} \Comment{$\level(\{x, v\})$ is decided}
        \State $(x', v') \ot (\Varcontraction.\textsc{find}(x), \Varcontraction.\textsc{find}(v))$
        \State $\Varnontree[v'].\textsc{add}((x, v))$
        \State $\Varnontree[x'].\textsc{add}((v, x))$
        \If {$\Vardist^{\OL{\Varort[v']}}[v'] = \infty$} $\Varchecklist.\textsc{add}(v', \Varvol[\{x, v\}] - \Vardist^{\Varort[v']}[v'] - 1, \OL{\Varort[v']})$ \EndIf
        \If {$\Vardist^{\OL{\Varort[x']}}[x'] = \infty$} $\Varchecklist.\textsc{add}(x', \Varvol[\{x, v\}] - \Vardist^{\Varort[x']}[x'] - 1, \OL{\Varort[x']})$ \EndIf
      \EndIf
    \EndFor
    \If {$\theta = \OL{\Varort[v]}$} \Comment{Contraction}
      \State $\Varcontraction.\textsc{union}(v, \Varparent[v])$
      \State $v' \ot \Varcontraction.\textsc{find}(v)$
      \State $\Varnontree[v'].\textsc{merge}(\Varnontree[v])$ 
      \State$(y, z) \ot \Varnontree[v'].\textsc{find-min}()$
      \While {\Varcontraction.\textsc{find}($y$) = \Varcontraction.\textsc{find}($z$)} \Comment{Delete self-loops}
        \State $\Varnontree[v']$.\textsc{delete-min}() 
        \State $(y, z) \ot \Varnontree[v'].\textsc{find-min}()$  
      \EndWhile
      \If {$\Vardist^{\OL{\Varort[v']}}[v'] = \infty$} $\Varchecklist.\textsc{add}(v', \Varvol(\{y,z\}) - \Vardist^{\Varort[v']}[v'] - 1, \OL{\Varort[v']})$ \EndIf
    \EndIf
  \EndIf
\EndWhile
\end{algorithmic}
}
\end{algorithm}

The algorithm also manages the global priority queue $\Varchecklist$, which manages the set of the vertices 
processed in the subsequent execution. Each entry in $\Varchecklist$ is a triple $(v, r, \theta)$ for 
$v \in V(G)$, $r \in \mathbb{N}$, and $\theta \in \{\Even, \Odd\}$, which means that $\Vardist^{\theta}[v]$ 
must be decided with $r$ if it is still undecided at round $r$. Note that $(v, r, \theta) \in \Varchecklist$ does not
necessarily imply that $\dist^{\theta}(v) = r$ because some 
other entry $(v, r', \theta)$ of $r' < r$ might exists (in such a case, $(v, r, \theta)$ becomes a wasting entry). Intuitively, $(v, r, \theta)$ is added to $\Varchecklist$ when (1) a neighbor $x$ of $v$ joined into $T^{r-1}$ as a child of $v$ at round $r - 1$, or (2) $v$ obtains a new non-tree edge $e = \{y, z\} \in \Varnontree[v]$ in $\tilde{T}^{\hat{r}}$ ($T^{\hat{r}}$ after contracting all the edges $\{v', \Varparent[v']\}$ with $\dist^\Nop(v') < \hat{r}$) for some $\hat{r}$ such that $\theta = \OL{\Varort[v]}$ and $r = \Varvol[e] - \Vardist^{\Varort[v]}[v] - 1$ holds.
The second case occurs when $\Varvol[e]$ of some non-tree edge $e$ incident to $v$ is determined or $\Varnontree[v]$ is updated by contraction.
The beginning of round $r$ is defined as the first timing when no tuple $(\cdot, r', \cdot)$ of $r' < r$ is contained in $\Varchecklist$.
Note that $r$ in triples $(\cdot, r, \cdot)$ added to $\Varchecklist$ is actually nondecreasing throughout the algorithm, but we only need a weaker property shown in Lemmas~\ref{lma:orthodxfix}--\ref{lma:unorthodoxfix} because once $\Vardist^\theta[v]$ is fixed (which is correct by the lemmas) it will never change.

At each step, the algorithm dequeues one entry $(v, r, \theta)$ from $\Varchecklist$ (line 3).
If $\Vardist^{\theta}[v]$ has already been decided, the entry is simply skipped (line 4).
Otherwise, decides $\Vardist^{\theta}[v] = r$ (line 5).
Then it checks each neighbor $x$ of $v$. If $\rho(\{x, v\}) = \theta$ and 
$x$ is not involved into the tree $T^r$ (i.e., $\Varparent[x] = \infty$), set $\Varparent[x] = v$ and add $(x, r+1, \OL{\theta})$ to $\Varchecklist$ (lines 7--8), which corresponds to determining the orthodox distance $\dist^{\Op}(x)$.
Otherwise, $\{x, v\}$ becomes a non-tree edge.
If $\max\{\Vardist^{\rho(e)}[x],\, \Vardist^{\rho(e)}[v]\} = r$ holds, $\Varvol[\{x, v\}]$ is fixed at round $r$. For $v'$ and $x'$ into which $v$ and $x$ are shrunk, respectively, the algorithm adds $\{x, v\}$ to $\Varnontree[v']$ and $\Varnontree[x']$ (lines 10--12).
Since it means that $x'$ and $v'$ obtain a new non-tree incident edge whose volume is fixed, unless $\Vardist^{\OL{\Varort[v']}}[v']$ (resp.\ $\Vardist^{\OL{\Varort[x']}}[x']$) has already been decided, $(v', \Varvol[\{x, v\}] - \Vardist^{\Varort[v']}[v'] - 1, \OL{\Varort[v']})$ (resp.\ $(x', \Varvol[\{x, v\}] - \Vardist^{\Varort[x']}[x'] - 1, \OL{\Varort[x']})$) is added to $\Varchecklist$ (lines 13--14).
Finally, if $\theta = \OL{\Varort[v]}$ holds (i.e., $\theta$ is the unorthodox parity of $v$), the algorithm contracts the edge $\{v, \Varparent[v]\}$ (lines 16--18).
Then delete all the self-loops incident to the root $v'$ of the contracted subtree (lines 19--22), which correspond to the edges incident to a
descendant of $v'$ but not incoming edges of $T^r(v')$.
By the contraction, $v'$ obtains a new incident edge. The entry corresponding to the minimum-level incident edge $\{y, z\}$ of $v'$ is added to $\Varchecklist$ unless $\Vardist^{\OL{\Varort[v']}}[v']$ has already been decided (line 23).

We proceed to the correctness proof of the algorithm \textsc{ComputeDist}. 
For $\theta \in \{\Odd, \Even\}$ and $v \in V(G)$ such that $\dist^{\theta}(v) = r$ holds, 
we say that $(v, \theta)$ is \emph{precisely fixed} if $(v, r', \theta)$ with $r' < r$ is
never added to $\Varchecklist$ and $(v, r, \theta)$ is added to $\Varchecklist$, where $\Varparent(v) = v'$ for some $\{v, v'\} \in E(G)$ with $\rho(\{v, v'\}) = \OL{\Op(v)}$ and $\dist^{\OL{\Op(v)}}(v') = \dist^{\Op}(v) - 1$ in addition if $\theta = \Op(v)$,
until the beginning of round $r$. We also say that \emph{everything is precisely fixed} at round 
$r$ if $(v, \theta)$ for all $v \in V(G)$ and $\theta \in \{\Odd, \Even\}$ with $\dist^\theta(v) \leq r$ 
are precisely fixed. We first present two fundamental propositions, which are obvious from the algorithm.

\begin{proposition} \label{prop:preciselyfixed}
Suppose that everything is precisely fixed at round $r$.
Then for any $(v, \theta)$ such that $\dist^{\theta}(v) \leq r$, $\mbox{\rm\Vardist}^{\theta}(v) = \dist^{\theta}(v)$ holds at the beginning of round $r + 1$. 
\end{proposition}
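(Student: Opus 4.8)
The plan is to follow the single line of \textsc{ComputeDist} that ever writes to $\Vardist^{\theta}[v]$ and to check that it fires correctly and early enough. Fix a pair $(v,\theta)$ with $d := \dist^{\theta}(v) \le r$; we must show $\Vardist^{\theta}[v] = d$ at the beginning of round $r+1$. The variable $\Vardist^{\theta}[v]$ starts at $\infty$ and is modified only on line~5, i.e., when some triple $(v, r'', \theta)$ is dequeued from $\Varchecklist$ while $\Vardist^{\theta}[v] = \infty$, in which case it is set to $r''$; afterwards the guard on line~4 forbids any further change. So it suffices to establish (i) that the first triple of the form $(v, \cdot, \theta)$ ever dequeued has round value $d$, and (ii) that this dequeue occurs before the beginning of round $r+1$.

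For (i) I would argue as follows. Since $d \le r$, the hypothesis that everything is precisely fixed at round $r$ gives that $(v,\theta)$ is precisely fixed, so the triple $(v, d, \theta)$ is inserted into $\Varchecklist$ before the beginning of round $d$, while no triple $(v, d', \theta)$ with $d' < d$ is inserted before the beginning of round $d$. Combining this with the fact that the round value of triples inserted into $\Varchecklist$ is nondecreasing over the run (noted right before Algorithm~\ref{alg:constABT}), no triple of round value strictly below $d$, in particular no $(v, d', \theta)$ with $d' < d$, is inserted after the beginning of round $d$ either. Hence $d$ is the minimum round value among all triples $(v, \cdot, \theta)$ ever inserted. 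Since $\Varchecklist$ is a min-priority queue on the round value and the while-loop runs until $\Varchecklist$ is empty, every inserted triple is eventually dequeued, so $(v, d, \theta)$ is dequeued and is the first $(v,\cdot,\theta)$-triple to leave the queue. At the moment of that dequeue $\Vardist^{\theta}[v]$ is still $\infty$ (no earlier $(v,\cdot,\theta)$-dequeue took place), so line~5 executes and sets $\Vardist^{\theta}[v] \ot d = \dist^{\theta}(v)$, where it remains. For (ii): the triple $(v, d, \theta)$ has round value $d \le r$, and by definition the beginning of round $r+1$ is the first instant at which $\Varchecklist$ holds no triple of round value $\le r$; since $(v,d,\theta)$ is inserted and must eventually be removed, its removal precedes that instant.

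I do not expect a genuine obstacle, as the statement is indeed ``obvious from the algorithm''; the step that needs the most care is ruling out a premature or wrong write to $\Vardist^{\theta}[v]$, which is exactly where precise-fixedness (no smaller-round-value triple for $(v,\theta)$) is combined with the monotonicity of inserted round values. A minor bookkeeping remark, needed only for completeness, is that repeated insertions of $(v,d,\theta)$, or wasting insertions $(v, d'', \theta)$ with $d'' > d$, are harmless thanks to the guard on line~4.
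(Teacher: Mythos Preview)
Your argument is correct and is precisely the unpacking of what the paper leaves implicit: the paper offers no proof at all beyond ``obvious from the algorithm,'' and your write-up spells out exactly why---$(v,d,\theta)$ is present by the start of round $d$, nothing of the form $(v,d',\theta)$ with $d'<d$ can ever appear (precise-fixedness before round $d$, the asserted monotonicity of inserted round values afterwards), and the priority queue then forces line~5 to fire with value $d$ before the start of round $d+1\le r+1$. One small remark: the monotonicity fact you invoke is stated \emph{after} (not before) Algorithm~\ref{alg:constABT} in the text, and the paper explicitly flags it as stronger than what the later lemmas need; your use of it here is legitimate since the paper asserts it outright, but this is the only step that goes beyond reading off the control flow.
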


\begin{proposition} \label{prop:treeexistence}
Suppose that everything is precisely fixed at round $r$. Then there exists an ABT $T$ containing $T^r$ as its subgraph.
\end{proposition}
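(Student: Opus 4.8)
The plan is to show that the partial parent assignment encoded by $T^r$ is already a valid fragment of an ABT parent assignment, and then to complete it arbitrarily to a full one; by the definition of an ABT this immediately yields the desired tree.

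First I would unpack the hypothesis. Since everything is precisely fixed at round $r$, Proposition~\ref{prop:preciselyfixed} gives $\mbox{\rm\Vardist}^{\theta}(v) = \dist^{\theta}(v)$ for every $v$ and $\theta$ with $\dist^{\theta}(v) \le r$, so the vertices occurring in $T^r$ are exactly those with $\dist^{\Op}(v) \le r$. Moreover, the definition of ``precisely fixed'' for the pair $(v, \Op(v))$ records that $\Varparent[v] = v'$ for some neighbor $v'$ of $v$ with $\rho(\{v, v'\}) = \OL{\Op(v)}$ and $\dist^{\OL{\Op(v)}}(v') = \dist^{\Op}(v) - 1$ (this is exactly invariant (I2)). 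By Lemma~\ref{lma:fundamental} (P2), the condition ``$v'$ is a neighbor of $v$ with $\dist^{\OL{\Op(v)}}(v') + 1 = \dist^{\Op}(v)$'' is equivalent to $v' \in \parset(v)$. Hence the map $v \mapsto \Varparent[v]$, defined on the vertices of $T^r$, chooses for each such $v$ an element of $\parset(v)$ — precisely the kind of choice used in the definition of an ABT.

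Next I would extend this partial assignment to all of $V(G) \setminus \{f\}$: for every vertex $v \neq f$ with $\dist^{\Op}(v) > r$ (i.e.\ every $v$ not already in $T^r$) I pick an arbitrary element of $\parset(v)$ as $\parent_T(v)$, which is possible because every vertex admits an orthodox alternating path from $f$, so $\parset(v) \neq \emptyset$. Let $T$ be the graph induced by all the edges $\{v, \parent_T(v)\}$. By Lemma~\ref{lma:fundamental} (P1), orthodox distance strictly decreases along each parent pointer, so $T$ is acyclic; since every vertex other than $f$ has exactly one parent and iterating parents strictly decreases orthodox distance down to $0$ (attained only at $f$), every vertex is connected to $f$, so $T$ is a spanning tree of $G$ rooted at $f$ obtained by choosing a parent in $\parset(\cdot)$ for each non-root vertex, i.e.\ an ABT. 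Because $\parent_T$ agrees with $v \mapsto \Varparent[v]$ on the vertices of $T^r$, every edge of $T^r$ is an edge of $T$, so $T^r$ is a subgraph of $T$, as required.

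I expect the only delicate point to be the bookkeeping in the second paragraph: one must confirm that ``precisely fixed at round $r$'' genuinely delivers invariant (I2) for the recorded parent pointers (so that the distance equation, hence Lemma~\ref{lma:fundamental} (P2), applies), and that the standing assumption that every vertex has finite orthodox distance is in force — otherwise some $\parset(v)$ could be empty and no spanning ABT would exist at all. Everything after that is a direct appeal to Lemma~\ref{lma:fundamental} and the definition of an ABT.
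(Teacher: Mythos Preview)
Your argument is correct and is exactly the natural verification: the paper itself declares this proposition ``obvious from the algorithm'' and gives no proof, so your write-up simply makes explicit what the paper leaves implicit --- that invariant (I2) (equivalently, the ``precisely fixed'' condition for $(v,\Op(v))$) forces $\Varparent[v]\in\parset(v)$ via Lemma~\ref{lma:fundamental} (P2), after which any completion of the parent map yields an ABT extending $T^r$. Your caveat about vertices with infinite orthodox distance is apt; the paper tacitly restricts attention to vertices reachable from $f$ by an alternating path.
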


The following lemmas are key technical properties.
\begin{lemma} \label{lma:distComp}
Let $T$ be any ABT, $t$ be any vertex in $V(T)$ such that 
$\dist^{\Nop}(t) < \infty$, and $\Minedge = \{y, z\}$ be any MIE of $T(t)$. Then the following two properties
hold:
\begin{itemize}
    \item[(P1)] For the path $P$ from $z$ to $t$ along $T$, any vertex $v \in V(P) \setminus \{t\}$ satisfies $\dist^{\Nop}(v) < \dist^{\Nop}(t)$.
    \item[(P2)] $\hlevel(\Minedge) < \dist^{\Nop}(t)$.
\end{itemize}    
\end{lemma}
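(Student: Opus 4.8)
The plan is to obtain (P2) almost for free from Lemma~\ref{lma:heightBound}, and to prove (P1) by a short volume identity built on Corollary~\ref{corol:IKY24}. First I would record the bookkeeping fact that $\level$ realizes the lexicographic order on the pair $(\vlevel, \hlevel)$ --- this is exactly why $\level(e) = n' \cdot \vlevel(e) + \hlevel(e)$ is defined with $n' > |V(G)| > \hlevel(e)$ whenever $\hlevel(e)$ is finite --- so that two edges of the same level have the same volume and the same height. Since every MIE of $T(t)$ attains the minimum level among the incoming edges of $T(t)$, all of them --- in particular the given $\Minedge$ and the canonical $\Minedge_T(t)$ --- share one common volume and one common height. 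Hence $\hlevel(\Minedge) = \hlevel(\Minedge_T(t)) < \dist^{\Nop}(t)$ by Lemma~\ref{lma:heightBound}, whose hypothesis is precisely our assumption $\dist^{\Nop}(t) < \infty$; this is (P2).

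For (P1), I would fix $v \in V(P) \setminus \{t\}$ (if $z = t$ there is nothing to prove, so assume $z \neq t$). As $P$ is the tree path from $z \in V(T(t))$ up to $t$, the vertex $v$ is a \emph{proper} descendant of $t$ with $z \in V(T(v))$. The crucial first step is that $\Minedge = \{y, z\}$ remains an incoming edge of $T(v)$: indeed $z \in V(T(v))$ while $y \notin V(T(t)) \supseteq V(T(v))$. In particular $\Outset_T(v) \neq \emptyset$, so Corollary~\ref{corol:IKY24} applies at $v$ and gives $\dist^{\Op}(v) + \dist^{\Nop}(v) = \vlevel(\Minedge_T(v))$; since the right-hand side is finite this incidentally confirms $\dist^{\Nop}(v) < \infty$. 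Next, $\vlevel(\Minedge_T(v)) \le \vlevel(\Minedge)$ because $\Minedge_T(v)$ has minimum level among incoming edges of $T(v)$ and $\Minedge$ is one such edge, and $\vlevel(\Minedge) = \vlevel(\Minedge_T(t)) = \dist^{\Op}(t) + \dist^{\Nop}(t)$ by the volume-equality of MIEs noted above together with Corollary~\ref{corol:IKY24} at $t$ (which applies since $\Minedge \in \Outset_T(t)$). Finally Lemma~\ref{lma:fundamental} (P1) gives $\dist^{\Op}(v) > \dist^{\Op}(t)$ since $v$ is a proper descendant of $t$. Chaining these,
\[
\dist^{\Nop}(v) = \vlevel(\Minedge_T(v)) - \dist^{\Op}(v) \le \bigl(\dist^{\Op}(t) + \dist^{\Nop}(t)\bigr) - \dist^{\Op}(v) < \dist^{\Nop}(t),
\]
which is (P1).

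I expect no serious obstacle: both parts are one- or two-line consequences of results already proved. The only points needing care are the two bookkeeping observations --- that all minimum incoming edges of a fixed subtree share volume and height (used to carry Lemma~\ref{lma:heightBound} and the equality $\vlevel(\Minedge) = \vlevel(\Minedge_T(t))$ over from the canonical MIE to the arbitrary one appearing in the statement), and that $\Minedge$ stays an incoming edge of every $T(v)$ along the tree path from $z$ to $t$ (which is precisely what makes Corollary~\ref{corol:IKY24}, hence the volume identity at $v$, available). Once those are in place the rest is the displayed inequality.
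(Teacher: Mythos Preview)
Your proof is correct and follows essentially the same route as the paper: (P2) is Lemma~\ref{lma:heightBound} verbatim, and (P1) is the chain $\dist^{\Nop}(v) = \vlevel(\Minedge_T(v)) - \dist^{\Op}(v) \le \vlevel(\Minedge) - \dist^{\Op}(v) < \vlevel(\Minedge) - \dist^{\Op}(t) = \dist^{\Nop}(t)$ via Corollary~\ref{corol:IKY24} and Lemma~\ref{lma:fundamental}~(P1). You are in fact slightly more careful than the paper's own write-up, which silently writes the first step as an equality rather than the inequality you justify by observing that $\Minedge$ remains an incoming edge of $T(v)$ but need not be minimal there.
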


\begin{proof}
The property (P1) is proved as follows:
\begin{align*}
\dist^{\Nop}(v) &= \vlevel(\Minedge) - \dist^{\Op}(v) & \text{(Corollary~\ref{corol:IKY24})} \\
&< \vlevel(\Minedge) - \dist^{\Op}(t) & \text{(Lemma~\ref{lma:fundamental} (P1))} \\
&= \dist^{\Nop}(t). & \text{(Corollary~\ref{corol:IKY24})}
\end{align*}
For (P2), $\hlevel(\Minedge) < \dist^{\Nop}(t)$ holds by Lemma~\ref{lma:heightBound}.
\end{proof}

\begin{lemma} \label{lma:orthodxfix}
Let $t$ be any vertex, and $r = \dist^{\Op}(t)$. 
Suppose that everything is precisely fixed at round $r - 1$.
Then $(t, \Op(t))$ is precisely fixed.
\end{lemma}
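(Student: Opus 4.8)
Recall $r = \dist^{\Op}(t)$. Showing that $(t, \Op(t))$ is precisely fixed amounts to verifying, for the run of \textsc{ComputeDist} up to the beginning of round $r$, that (i) the triple $(t, r, \Op(t))$ is inserted into $\Varchecklist$, (ii) no triple $(t, r', \Op(t))$ with $r' < r$ is ever inserted, and (iii) $\Varparent[t]$ is set to some neighbour $q$ of $t$ with $\rho(\{q,t\}) = \OL{\Op(t)}$ and $\dist^{\OL{\Op(t)}}(q) = r-1$.

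Two preparations. First, $r < \infty$ gives $\paredgeset(t) \neq \emptyset$, so pick $p \in \parset(t)$; by Lemma~\ref{lma:fundamental}~(P2), $\rho(\{p,t\}) = \OL{\Op(t)}$ and $\dist^{\OL{\Op(t)}}(p) = r-1$, and by precise-fixedness of $(p, \OL{\Op(t)})$ the triple $(p, r-1, \OL{\Op(t)})$ already sits in $\Varchecklist$ before round $r-1$ while no triple for $(p, \OL{\Op(t)})$ of smaller round is ever inserted. Second, the technical core: I would prove a \emph{no-premature-write} claim --- assuming everything is precisely fixed at round $r-1$, any line-5 assignment $\Vardist^{\theta}[v] \ot \hat r$ made in a round $\hat r \le r-1$ has $\hat r = \dist^{\theta}(v)$. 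The ``$\le$'' part follows from precise-fixedness (for $\dist^\theta(v) \le r-1$ the correct value is written during round $\dist^\theta(v)$, so a later first write is blocked at line~4); the substantive ``$\ge$'' part is an induction on $\hat r$. A triple $(v, \hat r, \theta)$ forcing an honest write was enqueued at line~8 (while a neighbour $w$ was processed at round $\hat r-1$ with $\Vardist^{\OL\theta}[w]$ just set to $\hat r-1$ and $\rho(\{w,v\}) = \OL\theta$) or at lines~13--14/23 with parity the unorthodox parity of a contracted root. In the first case the inductive hypothesis gives $\dist^{\OL\theta}(w) = \hat r-1$; extending a shortest $\OL\theta$-alternating path to $w$ by the edge $\{w,v\}$ produces a walk to $v$ of length $\hat r$ and parity $\theta$, and a two-way case split (a simple $\theta$-path, yielding $\dist^{\theta}(v) \le \hat r$; or a revisit of $v$, yielding a strictly shorter path to $v$ of one of the two parities --- in the orthodox-revisit subcase one also notes $\Varparent[v]$ would already be set, so line~8 could not have fired) contradicts $\hat r < \dist^{\theta}(v)$. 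In the second case Corollary~\ref{corol:IKY24} and Lemma~\ref{lma:volumeBound} bound the round parameter below by $\dist^{\Nop}(\cdot)-1 \ge \hat r$ once volumes are correct (which the same induction establishes), excluding an early write; the induction also shows every fixed $\Varvol[\cdot]$ and every $\Varparent[\cdot]$ set during rounds $\le r-1$ is correct.

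Granting the claim, (ii) and (iii) are a short trace of line~8. Since $\dist^{\Op}(t) = r$ and $\dist^{\Nop}(t) \ge r$, neither $\Vardist^{\Op(t)}[t]$ nor $\Vardist^{\Nop(t)}[t]$ is written during rounds $\le r-1$; and by Lemma~\ref{lma:fundamental}~(P1) every descendant of $t$ in the partial ABT (a subtree of an ABT by Proposition~\ref{prop:treeexistence}) has orthodox distance $> r$ and hence unorthodox distance $> r-1$, so is not contracted during rounds $\le r-1$. Thus $t$ is a singleton union-find class throughout those rounds, no enqueue for $t$ arises from lines~13--14/23, and the only triples for $t$ inserted before round $r$ come from line~8. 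Such a triple, inserted while a neighbour $w$ with $\rho(\{w,t\}) = \theta_w$ is processed at some round $\hat r \le r-1$, equals $(t, \hat r + 1, \OL{\theta_w})$ and carries $\Varparent[t] \ot w$ exactly when $\Varparent[t]$ was still $\infty$. By the claim $\dist^{\theta_w}(w) = \hat r$, and extending a shortest $\theta_w$-path to $w$ by $\{w,t\}$ gives (using $\dist^{\cdot}(t) \ge r$ for both parities, so no revisit of $t$ is possible) $\dist^{\OL{\theta_w}}(t) \le \hat r + 1 \le r$; together with $\dist^{\OL{\theta_w}}(t) \ge r$ and $\dist^{\Op}(t) < \dist^{\Nop}(t)$ this forces $\hat r = r-1$, $\OL{\theta_w} = \Op(t)$, $\rho(\{w,t\}) = \OL{\Op(t)}$, and $\dist^{\OL{\Op(t)}}(w) = r-1$. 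Hence every line-8 enqueue for $t$ is exactly $(t, r, \Op(t))$ and, whenever it writes $\Varparent[t]$, it writes a valid parent --- establishing (ii) and (iii). For (i): the first processing of $(p, r-1, \OL{\Op(t)})$ occurs during round $r-1$ and, scanning the edge $\{p,t\}$, triggers line~8 (setting $\Varparent[t] \ot p$ and enqueuing $(t, r, \Op(t))$) unless $\Varparent[t]$ was set earlier, in which case $(t,r,\Op(t))$ was already enqueued by the argument above.

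The main obstacle is exactly the no-premature-write claim. The hypothesis ``everything precisely fixed at round $r-1$'' only constrains distances $\le r-1$, so one must separately rule out the algorithm running ahead on some distance $\ge r$ during rounds $\le r-1$; this is what forces the simultaneous round-induction over distance writes, fixed volumes, and parent pointers outlined above. Everything after that is a careful but routine walk through the pseudocode near round $r-1$.
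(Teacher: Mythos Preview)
Your proposal is correct and follows the same overall route as the paper --- verify the three ingredients of ``precisely fixed'' by tracing how triples for $t$ can enter $\Varchecklist$ --- but you are considerably more careful on one point the paper glosses over.

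The paper's proof simply asserts that ``if $(t,r',\Op(t))$ is added to $\Varchecklist$, it can occur only in line~8'' and that, for the triggering neighbour $t'$, one has $\Vardist^{\OL{\Op(t)}}[t'] = \dist^{\OL{\Op(t)}}(t') = r''$ ``by the assumption of the lemma and Proposition~\ref*{prop:preciselyfixed}.'' But the hypothesis \emph{everything precisely fixed at round $r-1$} only constrains pairs $(v,\theta)$ with $\dist^\theta(v)\le r-1$; it does not by itself rule out that some pair of larger distance had its $\Vardist$ written early during a round $\le r-1$. If such a premature write occurred for $(t',\OL{\Op(t)})$, the paper's equality fails; and if the very first enqueue for $t$ at line~8 happened to carry parity $\Nop(t)$, then $\Varort[t]$ would later be $\Nop(t)$ and lines~13--14/23 \emph{could} enqueue $(t,\cdot,\Op(t))$. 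Your no-premature-write claim (proved by its own induction on the round of the write) closes exactly this gap, after which your deductions that $t$ stays a singleton with both $\Vardist^\theta[t]=\infty$ through round $r-1$, and hence that only line~8 contributes triples for $t$, go through cleanly. The remaining trace of line~8 is then identical to the paper's.

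One small caution on your sketch of the lines~13--14/23 branch of the no-premature-write induction: you should make explicit that (a) by the inductive hypothesis $\Varort[v']=\Op(v')$ whenever it is defined during rounds $<\hat r$ (so the enqueued parity is indeed $\Nop(v')$), (b) the enqueuing happens at a round strictly smaller than $\hat r$ (so the IH applies to the volumes feeding $\Varvol[e]$), and (c) the edge $e$ really is an incoming edge of the subtree rooted at $v'$ rather than a self-loop, so that $\vlevel(e)\ge\vlevel(\Minedge_T(v'))$ and Corollary~\ref*{corol:IKY24} applies. These are all true, but each needs a sentence.
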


\begin{proof}
We first show that $(t, r, \Op(t))$ is added to $\Varchecklist$ at round $r - 1$.
By the assumption of the lemma, we have $\Varparent(t) = t'$ for some $\{t, t'\} \in E(G)$ with $\rho(\{t, t'\}) = \OL{\Op(t)}$ and $\dist^{\OL{\Op(t)}}(t') = r - 1$.
This can occur only in line 8 when a triple $(t', r - 1, \OL{\Op(t)}) \in \Varchecklist$ is processed at round $r - 1$.
Thus, $(t, r, \Op(t))$ is added to $\Varchecklist$ at that time. 

Next, we show that $(t, r', \Op(t))$ for $r' < r$ is never added to $\Varchecklist$.
If $(t, r', \Op(t))$ is added to $\Varchecklist$, it can occur only in line 8. 
Let $(t', r'', \theta')$ be the triple such that $(t, r', \Op(t))$ is added when processing it.
Then $\theta' = \OL{\Op(t)}$, $r' = r'' + 1$, $\{t, t'\} \in E(G)$, and $\rho(\{t, t'\}) = \OL{\Op(t)}$. By the assumption of the lemma and
Proposition~\ref{prop:preciselyfixed}, we have $\Vardist^{\OL{\Op(t)}}(t') = \dist^{\OL{\Op(t)}}(t') = r''$. 
Then $t'$ is a parent of $t$ in $T^{r}$, and thus we obtain $\dist^{\Op(t)}(t) = \dist^{\OL{\Op(t)}}(t')  + 1 = r' 
< r$. It is a contradiction.
\end{proof}

\begin{lemma} \label{lma:unorthodoxfix}
Let $t$ be any vertex, and $r = \dist^{\Nop}(t) < \infty$. 
Suppose that everything is precisely fixed at round $r - 1$.
Then $(t, \Nop(t))$ is precisely fixed.
\end{lemma}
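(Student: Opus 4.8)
The plan is to verify the two clauses in the definition of ``$(t,\Nop(t))$ is precisely fixed'': (a) that the triple $(t,r,\Nop(t))$ is added to $\Varchecklist$ before the beginning of round $r$, and (b) that no triple $(t,r',\Nop(t))$ with $r'<r$ is ever added. Since everything is precisely fixed at round $r-1$, Proposition~\ref{prop:treeexistence} furnishes an ABT $T$ containing $T^{r-1}$, and Proposition~\ref{prop:preciselyfixed} guarantees that every value $\Vardist^\theta[\cdot]$ with $\dist^\theta(\cdot)<r$ (together with the parent pointers that produced them) is already correct at the beginning of round $r$. I would then fix a canonical MIE $\Minedge=\{y,z\}$ of $T(t)$ with $z\in V(T(t))$; by Corollary~\ref{corol:IKY24} we have $\vlevel(\Minedge)=\dist^{\Op}(t)+\dist^{\Nop}(t)$, so the round label the algorithm attaches to an entry for $t$ produced from an incident edge of volume $\vlevel(\Minedge)$ is exactly $\vlevel(\Minedge)-\dist^{\Op}(t)=r$ --- this is the arithmetic that makes the target value come out right.

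For clause (a): Lemma~\ref{lma:distComp}~(P2) gives $\hlevel(\Minedge)<r$, so $\dist^{\rho(\Minedge)}(y)$ and $\dist^{\rho(\Minedge)}(z)$ are both below $r$, hence already settled, and $\Varvol[\Minedge]$ is fixed before round $r$; in particular $y,z\in T^{r-1}$. Lemma~\ref{lma:distComp}~(P1) says every vertex other than $t$ on the tree path from $z$ to $t$ has $\dist^{\Nop}<r$, so by the inductive hypothesis each of these vertices has had its unorthodox entry processed and been contracted (lines~16--18) into its parent's blob by round $r-1$; chaining the contractions puts $z$ into the blob whose root is $t$ (the contraction absorbing $t$ itself occurs only at round $\dist^{\Nop}(t)=r$). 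Thus, by round $r-1$, $\Minedge$ lies in $\Varnontree[t]$ with its volume fixed, and it is never discarded as a self-loop because $y\notin V(T(t))$. At the first step where this happens (an edge insertion in lines~10--14 or a heap merge in lines~19--23) $\Vardist^{\Nop}[t]$ is still undecided; granting (b), no incident edge of smaller volume has ever entered $\Varnontree[t]$, so the minimum is $\Minedge$ (or another incoming edge of $T(t)$ of the same volume), and the algorithm enqueues $(t,r,\Nop(t))$.

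Clause (b) is the heart of the matter. An entry $(t,r',\Nop(t))$ can be created only in lines~13--14 or line~23, so $r'=\vlevel(e)-\dist^{\Op}(t)$ for some edge $e$ incident to $t$'s blob at that moment, and $r'<r$ forces $\vlevel(e)<\vlevel(\Minedge_T(t))$ by Corollary~\ref{corol:IKY24}. If $e$ has an endpoint outside $V(T(t))$ it is an incoming edge of $T(t)$, contradicting the minimality of $\Minedge_T(t)$; so the only issue is an edge $e=\{a,b\}$ with $a,b\in V(T(t))$ and, say, $a$ already in $t$'s blob when $e$ is registered. Here I would first record the general bound $\vlevel(e)\ge\vlevel(\Minedge_T(a))$, which follows from Corollary~\ref{corol:IKY24} together with the two relations $\dist^{\OL{\rho(e)}}(a)\le\dist^{\rho(e)}(b)+1$ and $\dist^{\OL{\rho(e)}}(b)\le\dist^{\rho(e)}(a)+1$ forced by the edge $e$. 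Since the MIE volume need not be monotone along $T$, this alone does not close the argument; the delicate point is to show that such an edge cannot be inserted into $\Varnontree[t]$ as a live (non-self-loop) element before round $r$ --- one argues that if $a$ has already reached $t$'s blob by round $\hlevel(e)$ then $\hlevel(e)\ge\dist^{\Nop}(a)$, which (using $\dist^{\Nop}(a)>\dist^{\Op}(a)$, Lemma~\ref{lma:fundamental}~(P1), and the above relations, via a path-surgery of the kind used for Lemmas~\ref{lma:volumeBound} and \ref{lma:distComp}) pushes $\vlevel(e)$ up to at least $\vlevel(\Minedge_T(t))$; and in the case where $a$ and $b$ are simultaneously in the blob when $e$ is registered, $e$ is a self-loop, the same parity/distance inequalities rule out its insertion into $\Varnontree[t]$ via lines~10--14, and lines~19--22 remove it before it can be selected. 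Combining (a) and (b) shows $(t,\Nop(t))$ is precisely fixed; together with Lemma~\ref{lma:orthodxfix} this is exactly the step that propagates ``everything precisely fixed at round $r-1$'' to round $r$. I expect the timing/volume interplay in the last paragraph --- controlling when an interior edge of $T(t)$ of small volume can reach $t$'s blob's heap relative to when its volume is fixed --- to be the main obstacle; everything else is bookkeeping about the contraction data structure.
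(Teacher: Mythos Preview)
Your treatment of clause~(a) matches the paper's second part almost exactly: you invoke Lemma~\ref{lma:distComp}~(P1) and~(P2) to argue that $\Minedge$'s volume is settled and that every vertex on the $z$--$t$ tree path (except $t$) has been contracted before round~$r$, so $\Minedge$ reaches $\Varnontree[t]$ in time.  The paper does the same and then explicitly splits into which of the three quantities $\max_{v}\dist^{\Nop}(v),\dist^{\rho(\Minedge)}(y),\dist^{\rho(\Minedge)}(z)$ realises $r'$ to locate the line (13, 14, or 23) at which the entry is enqueued.

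For clause~(b) you diverge substantially from the paper, and this is where your argument develops a gap.  The paper's proof is a single sentence: it asserts that any entry $(t,x,\Nop(t))$ added at round $\hat r<r$ is caused by an \emph{incoming} edge $e'$ of $T^{\hat r}(t)$ (hence of $T(t)$) becoming incident to $t$ in $\tilde T^{\hat r}$, so $\vlevel(e')\ge\vlevel(\Minedge_T(t))$ immediately yields $x\ge r$.  You instead try to handle the ``interior'' case $a,b\in V(T(t))$ explicitly, and in doing so you invoke a ``general bound'' $\vlevel(e)\ge\vlevel(\Minedge_T(a))$ that you justify via the relations $\dist^{\overline{\rho(e)}}(a)\le\dist^{\rho(e)}(b)+1$ and symmetrically for $b$.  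These relations are \emph{not} what Lemma~\ref{lma:fundamental2}~(P1) gives: that lemma only yields $\dist^{\Op}(a)\le\dist^{\rho(e)}(b)+1$, i.e.\ the \emph{orthodox} distance, not $\dist^{\overline{\rho(e)}}(a)$.  When $\rho(e)=\Op(a)$ your claimed inequality becomes $\dist^{\Nop}(a)\le\dist^{\rho(e)}(b)+1$, which need not hold (if $a$ lies on every shortest $\rho(e)$-path to $b$, appending $e$ does not produce a simple path).  Consequently your bound $\vlevel(e)\ge\vlevel(\Minedge_T(a))$ is not established, and since you yourself note that MIE volumes are not monotone along $T$, the remaining ``path-surgery'' sketch --- which you leave as an allusion to the techniques of Lemmas~\ref{lma:volumeBound} and~\ref{lma:distComp} --- has no firm foundation and is too vague to carry the conclusion.

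In short: clause~(a) is fine and matches the paper; clause~(b) is where you should align with the paper's much shorter argument rather than attempting a case analysis whose first nontrivial step rests on an inequality that Lemma~\ref{lma:fundamental2} does not supply.
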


\begin{proof}
We first show that no $(t, x, \Nop(t))$ of $x < r$ is added to $\Varchecklist$ at round $\hat{r} < r$. 
By the definition of the algorithm, $(t, x, \Nop(t))$ is added to $\Varchecklist$ only when an incoming edge $e'$
of $T^{\hat{r}}(t)$ (and thus that of $T(t)$) with volume $\Varvol[e'] = x + \Vardist^{\Op(t)}(t) + 1$ becomes incident to 
$t$ in $\tilde{T}^{\hat{r}}$. By assumption that everything is precisely fixed at round $\hat{r} \leq r - 1$, 
if such an edge exists and $(t, x, \beta(t))$ is added in round $\hat{r}$, then $\Vardist^{\rho(e')}(y') = \dist^{\rho(e')}(y')$, $\Vardist^{\rho(e')}(z') = \dist^{\rho(e')}(z')$, and $\Vardist^{\Op(t)}(t) = \dist^{\Op}(t)$ hold.
Then we obtain $\vlevel(e') = \Varvol[e'] < r + \Vardist^{\Op(t)}(t) + 1 = \vlevel(\Minedge)$. It is a contradiction. 

Next, we show that $(t, r, \Nop(t))$ is added to $\Varchecklist$ at round $r-1$ or earlier. 
Let $T$ be any ABT containing $T^r$ as a subgraph (such a $T$ necessarily exists by 
Proposition~\ref{prop:treeexistence}), and $\Minedge = \{y, z\}$ be any MIE of $T(t)$. 
Let $P$ be the path from $z$ to $t$ in $T$, and %
let $$r' = \max\left\{\max_{v \in V(P) \setminus \{t\}} \dist^{\Nop}(v),\, \dist^{\rho(\Minedge)}(y),\, \dist^{\rho(\Minedge)}(z) \right\}.$$ %
By Lemma~\ref{lma:distComp} (P1) and (P2), $r' \leq r - 1$ holds. 
By Proposition~\ref{prop:preciselyfixed} and the assumption of the lemma, it implies 
$\Varvol[\Minedge] = \vlevel(\Minedge)$ and $\Vardist^{\Op(t)}(t) = \dist^{\Op}(t)$ at the beginning of round $r$. By 
Corollary~\ref{corol:IKY24}, we have $\Varvol[\Minedge] - \Vardist^{\Op(t)}(t) - 1 = \dist^{\Nop}(t) = r$.
By Lemma~\ref{lma:distComp} (P1) and the assumption of the lemma, every vertex $v \in V(P)\setminus \{t\}$ satisfies $\Vardist^{\Nop(v)}(v) = \dist^{\Nop}(v) \leq r'$. Hence they are contracted at round $r'-1$ or earlier, and thus $\Minedge$ is an edge incident to $t$ until the beginning of round $r'$.
If $r' = \max_{v \in V(P) \setminus \{t\}} \dist^{\Nop}(v)$ holds, $(t, r, \Nop(t))$ is added to $\Varchecklist$ at line 23 when contracting the last vertex in $V(P) \setminus \{t\}$.
If $r' = \dist^{\rho(\Minedge)}(y)$ (resp.\ $r' = \dist^{\rho(\Minedge)}(z)$) holds, it is added at line 13 or 14 when $(y, r', \rho(\Minedge))$ (resp.\ $(z, r', \rho(\Minedge))$) is processed.
\end{proof}

\begin{theorem} \label{thm:constABT}
The algorithm \textsc{ComputeDist} satisfies $\mbox{\rm\Vardist}^{\theta}(v) = \dist^{\theta}(v)$ for all $v \in V(G)$ and
$\theta \in \{\Even, \Odd\}$ at the end of the algorithm, and runs in $O(m\log n)$ time.
\end{theorem}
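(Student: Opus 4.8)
The plan is to prove the two assertions separately: first correctness of the computed distances, then the $O(m\log n)$ running-time bound.

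\emph{Correctness.} The heart of the argument is a single induction on the round index $r\ge 0$ establishing the predicate ``everything is precisely fixed at round $r$''. The base case $r=0$ is immediate: the only pair $(v,\theta)$ with $\dist^\theta(v)\le 0$ is $(f,\Even)$, line~1 adds exactly $(f,0,\Even)$ to $\Varchecklist$, and the parent clause is vacuous since $f$ is the root with $\dist^\Op(f)=0$. For the inductive step, assume everything is precisely fixed at round $r-1$. Every pair $(v,\theta)$ with $\dist^\theta(v)\le r-1$ stays precisely fixed, since that property is a statement about $\Varchecklist$ only up to the beginning of round $\dist^\theta(v)\le r-1$. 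For a pair $(t,\theta)$ with $\dist^\theta(t)=r$ we invoke Lemma~\ref{lma:orthodxfix} when $\theta=\Op(t)$ and Lemma~\ref{lma:unorthodoxfix} when $\theta=\Nop(t)$; in both cases the hypothesis ``everything is precisely fixed at round $r-1$'' is exactly our induction hypothesis. The subtle point is that these invocations create no circular dependency \emph{within} round $r$: the proof of Lemma~\ref{lma:orthodxfix} uses only that the parent $t'$ of $t$ has $\dist^{\OL{\Op(t)}}(t')=r-1$, and that of Lemma~\ref{lma:unorthodoxfix} uses only --- via Lemma~\ref{lma:distComp} --- that $\dist^\Op(t)<\dist^\Nop(t)=r$, that the endpoints $y,z$ of a canonical MIE of $T(t)$ satisfy $\dist^{\rho(\Minedge)}(y),\dist^{\rho(\Minedge)}(z)\le\hlevel(\Minedge)<r$, and that the internal vertices of the tree path from $z$ to $t$ have unorthodox distance strictly below $r$; all of these quantities are $\le r-1$, hence already settled. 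This closes the induction, so everything is precisely fixed at every round, and Proposition~\ref{prop:preciselyfixed} yields $\Vardist^\theta(v)=\dist^\theta(v)$ for every $v$ with $\dist^\theta(v)<\infty$.

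It remains to handle vertices with $\dist^\theta(v)=\infty$. Here I would argue, by a routine induction on the sequence of operations, that the algorithm never fabricates a distance: whenever $(v,r,\theta)$ is inserted into $\Varchecklist$ (at line~8, 13, 14, or~23), $\Ical$ admits a $\theta$-alternating path from $f$ to $v$. For line~8, the guard $\Varparent[x]=\infty$ together with the correctness of the orthodox distances just proved forces $x$ to lie off a shortest $\theta$-alternating path to $v$, so appending the edge $\{x,v\}$ (whose $\rho$-value is $\theta$) produces an $\OL\theta$-alternating path to $x$; for lines~13, 14, 23 the witness comes from the existence of an incoming edge of finite volume together with Corollary~\ref{corol:IKY24}. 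Hence $\Vardist^\theta[v]$ is never overwritten from its initial value $\infty$ when $v$ admits no $\theta$-alternating path, so $\Vardist^\theta(v)=\dist^\theta(v)$ in this case too.

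\emph{Running time.} I would bound the number of operations on each data structure and multiply by the $O(\log n)$ amortized cost per operation. Each vertex is assigned a parent once and is contracted into its parent once, so $\Varcontraction$ incurs $O(n)$ \textsc{union} and $O(m)$ \textsc{find} calls (one per endpoint examined in the loop of lines~6--15, whose total length is $\sum_v\deg(v)=O(m)$ because each $(v,\theta)$ is decided at most once, by the line-4 guard). Each non-tree edge is added to two heaps $\Varnontree[\cdot]$ (lines~11--12), giving $O(m)$ \textsc{add} operations; the heaps are merged $O(n)$ times (one per contraction, line~18); and every \textsc{delete-min} in lines~20--22 removes an edge that has become a self-loop, which happens at most once per edge, for $O(m)$ total (with one accompanying \textsc{find-min}). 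The queue $\Varchecklist$ receives $O(n)$ entries from line~8 (bounded by the number of parent assignments), $O(m)$ from lines~13--14 (at most a constant per edge, triggered when its volume is fixed), and $O(n)$ from line~23 (one per contraction), hence $O(m)$ \textsc{add} and \textsc{delete-min} operations. Summing, all data-structure operations total $O(m)$, each at $O(\log n)$ amortized cost, and the remaining work is linear; thus the algorithm runs in $O(m\log n)$ time.

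The conceptual weight is carried entirely by the earlier lemmas, so the main obstacle here is organizational: confirming that the hypotheses of Lemmas~\ref{lma:orthodxfix} and~\ref{lma:unorthodoxfix} hold simultaneously for all pairs at level $r$ with no circular reasoning --- which is precisely what Lemma~\ref{lma:distComp} secures --- and carrying out the amortized accounting so that each edge contributes only $O(1)$ operations to each data structure even though it may be revisited across contractions.
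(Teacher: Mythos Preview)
Your proposal follows essentially the same route as the paper: the correctness half is the same induction on $r$ invoking Lemmas~\ref{lma:orthodxfix} and~\ref{lma:unorthodoxfix} (the paper does not spell out your ``no circular dependency'' remark, but it is implicit in those lemmas), and your running-time accounting matches the paper's almost line by line.

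One point of difference: you explicitly treat the case $\dist^\theta(v)=\infty$, which the paper silently absorbs into the phrase ``it suffices to show that everything is precisely fixed at all rounds.'' Your instinct to address this is sound, but the sketch you give is not quite airtight. For line~8 the argument works, but for lines~13, 14, 23 you appeal to Corollary~\ref{corol:IKY24}, and that corollary concerns the \emph{minimum} incoming edge, not the particular edge that triggered the insertion; moreover, at lines~13--14 the edge $\{x,v\}$ need not literally be an incoming edge of $T(v')$ in the contracted tree (it could become a self-loop, or be the tree edge $\{\parent_T(v),v\}$ when processing $(v,\dist^\Nop(v),\Nop(v))$). The clean fix is to argue, as in the first half of the proof of Lemma~\ref{lma:unorthodoxfix}, that any tuple $(t,x,\Nop(t))$ inserted at round $\hat r$ arises from an edge that is genuinely incoming to $T^{\hat r}(t)$ and hence witnesses $\Outset_T(t)\neq\emptyset$, from which $\dist^\Nop(t)<\infty$ follows via the construction in Theorem~\ref{thm:IKY24}. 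This is a minor patch rather than a structural issue, and the paper's own proof is equally terse on this point.
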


\begin{proof}
To show the correctness, it suffices to show that everything is precisely fixed at all round $r$. We prove
it by the induction on $r$. (Basis) For $r = 0$, we only need to care the vertex $f$. It is obvious from line 1 
of the algorithm. (Induction step) Suppose as the induction hypothesis that everything is precisely fixed at 
round $r - 1$, and consider the case of round $r$. By Lemmas~\ref{lma:orthodxfix} and \ref{lma:unorthodoxfix}, 
one can easily see that everything is precisely fixed at round $r$. 

Finally we show $O(m \log n)$ running-time bound. 
The for loop of lines 6--14 when processing $(v, r, \theta)$ takes the (amortized) cost 
of $O(\text{(the degree of $v$}) \cdot \log n)$. Since that part is executed only when $\Vardist^{\theta}(v) = \infty$,
it is executed at most twice for each vertex. Hence the total running time of that part is $O(m \log n)$. It also
implies that the number of tuples added to $\Varchecklist$ at lines 13 and 14 is $O(m)$, and the total 
number of edges managed by $\Varnontree$ is $O(m)$. Then the total running time of the while loop of lines 20--22
is $O(m \log n)$. The number of tuples added to $\Varchecklist$ at line 23 is equal to the number of edge contractions,
which is obviously $O(n)$. Hence the total number of tuples added to $\Varchecklist$ is $O(m)$, and thus the outer
while loops takes $O(m)$ iterations. Putting all together, we obtain the running time bound of $O(m \log n)$.
\end{proof}

\subsection{Sequential Algorithm for Step 2}

The DDFS algorithm we utilize is the same as the one proposed in~\cite{MV1980,vazirani2024theory}.
We explain here its outline for self-containedness. Let $H$ be the input DAG, and $\phi\colon V(H) \to \mathbb{N}$
be any function such that $(v_1, v_2) \in E(H)$ implies $\phi(v_1) > \phi(v_2)$. In our use, the value of 
$\dist^{\Op}(\cdot)$ can be used as the function of $\phi$. 
The DDFS on $H$ starting from vertices $y$ and $z$ executes two parallel DFSs with root $y$ and $z$
in a coordinated manner. We denote by $T_y$ and $T_z$ the constructed DFS trees respectively rooted by $y$ and $z$, 
and $h_y$ and $h_z$ be their search heads. The coordination strategy is summarized as follows:
\begin{itemize}
\item At each step, the algorithm proceeds the tree growth of $T_x$ for $x \in \{y, z\}$ such that $x = 
\argmax_{x \in \{y, z\}} \phi(h_x)$ holds. The tree growth process follows the standard DFS procedure:
The algorithm finds a neighbor of $h_x$ not occupied by any of $T_y$ and $T_z$. 
If it is found, $T_x$ is expanded. Otherwise, it performs backtrack. 

\item Suppose that $\phi(h_z) < \phi(h_y)$ holds. Then the construction of $T_y$ continues.
If $h_y$ failed to find a path to $U$, $h_y$ finally goes back to $y$. %
At this timing, it is hasty to conclude that $\{y, z\}$ does not admit any double path, because an $x$--$U$ path is found by the backtrack of $h_z$ and renouncing the vertex $v$ at which $h_z$ stays might provide a $y$--$U$ path.
Hence if $h_y$ goes back to $y$, the tree $T_y$ steals the vertex $v$, with backtracking of $h_z$.
Under the condition that we appropriately process an omissible set in each of the previous iterations so that for each remaining vertex $v'$ there exists at least one $v'$--$U$ path in $H$ disjoint from any double paths already found, $T_y$ necessarily has a neighbor of $v$; thus adding $v$ to $T_y$ is seen as an expansion of $T_y$ following the standard DFS procedure.
The intuitive explanation of this fact is as follows:
Let $B$ be the set of the immediate predecessors of $v$ in $H$.
Obviously, $T_z$ contains a vertex in $B$.
If $T_y$ does not contain any vertex in $B$, $T_z$ cannot grow up for containing $v$, because $h_y$ searches only the predecessors of $B$ in $H$ and thus $\phi(h_y) > \phi(h_z)$ holds as long as $h_z$ is within $B$.

After $v$ is taken by $T_y$, $h_z$ tries to find a path to $U$ not containing $v$.
If it fails and $h_z$ is backtracked to $z$, one can conclude that both $T_y$ and $T_z$ need to contain $v$ for their further expansion.
Hence $v$ is the bottleneck, and we shrink the associated omissible set $W = (V(T_y) \cup V(T_z)) \setminus \{v\}$ into $v$.
Otherwise, the DDFS search continues. In the subsequent execution, $v$ behaves as the root of $y$, for avoiding that $h_y$ backtracks the predecessor of $v$.
When both $h_y$ and $h_z$ reach $U$, a double path $(P, Q, y, z)$ is found.
Due to the fundamental property of the standard DFS, any vertex $v'$ backtracked does not have a $v'$--$U$ path not intersecting vertices already visited.
Hence $V(T_y) \cup V(T_z)$ should be included in the associated omissible set $W$.
Furthermore, while there exists a vertex $v' \in V(H) \setminus (U \cup W)$ such that for every edge $(v', v'')$ the head $v''$ is in $W$ or has been removed in the previous iterations, we add such a $v'$ to $W$; this procedure guarantees that each remaining vertex has at least one path to $U$ in $H$ after removing $W$ (cf.~\textsc{RECURSIVE REMOVE} in \cite{vazirani2024theory}).
After that, we remove $W$ from $H$.
\end{itemize}

The precise correctness of the DDFS algorithm above and its running time bound follow 
the prior literature~\cite{vazirani2024theory}. %
By the argument explained in Section~\ref{subsec:DDFS}, we obtain the following lemma.

\begin{lemma} \label{lma:constDoublePaths}
There exists an algorithm of outputting a maximal set of disjoint double paths for $(H, \Ccset)$, taking 
any alternating base DAG $H$ and the set of all critical edges of volume $2\ell + 5$. The algorithm runs
in $O(m)$ time.
\end{lemma}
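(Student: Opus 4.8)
The plan is to realise the algorithm as a single outer loop wrapped around the DDFS subroutine of Section~\ref{subsec:DDFS}. We keep a working DAG $H$ (initially the given ABD), a collection $\Dcal$ of vertex-disjoint double paths (initially empty), and a union--find structure recording which vertices of the original $H$ have been contracted into a common super-vertex. We then process the critical edges of volume $2\ell+5$ (i.e.\ the elements of $\Ccset$) in an arbitrary order. For the current edge $e=\{y,z\}$: if one endpoint has already been deleted, or both endpoints currently lie in the same contracted super-vertex, skip $e$; otherwise run DDFS from the representatives of $y$ and $z$. By the properties of DDFS recalled in Section~\ref{subsec:DDFS}, the run returns either a double path $D$ together with an omissible set $W\supseteq V(D)$ --- in which case we append $D$ to $\Dcal$ and delete $W$ from $H$ --- or a bottleneck $v$ together with an omissible set $W$ --- in which case we contract $W\cup\{v\}$ into $v$, lazily discarding the self-loops thus created. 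Once every edge of $\Ccset$ has been processed, we output $\Dcal$.

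For correctness I would carry the following invariant through the loop: $\Dcal$ is always a set of vertex-disjoint double paths of the original ABD whose source edges lie in $\Ccset$; every vertex deleted so far lies on no $U$--path that is vertex-disjoint from $V(\Dcal)$; and every vertex contracted into a super-vertex $v$ lies on no $U$--path avoiding both $V(\Dcal)$ and $v$. The first clause is immediate because each newly found $D$ is contained in its deleted omissible set $W$, so later double paths cannot reuse its vertices. The remaining two clauses are precisely what the omissible-set machinery (the bottleneck-stealing step and \textsc{RECURSIVE REMOVE}) is engineered to preserve, and they are exactly the precondition that licenses the next DDFS call to behave as in the MV analysis; for the internal correctness of one DDFS call --- in particular, that the head $h_y$ always has a neighbour of a stolen bottleneck, and that a reported $D$ or bottleneck is genuine --- I would simply cite~\cite{vazirani2024theory}, since here DDFS is run on a plain ABD and needs no blossom bookkeeping.

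The substance of the proof is \textbf{maximality}, and this is the step I expect to be the main obstacle. Suppose for contradiction that after the loop there is a double path $D^\star=(P^\star,Q^\star,y^\star,z^\star)$ of the original ABD with $\{D^\star\}\cup\Dcal$ still vertex-disjoint. Its source edge $e^\star=\{y^\star,z^\star\}$ is an element of $\Ccset$ by the definition of a double path, so it was visited by the loop. Since $D^\star$ avoids $V(\Dcal)$, the invariant forbids any vertex of $D^\star$ from being among the deleted vertices; hence when $e^\star$ was processed, both $y^\star$ and $z^\star$ were still present. If $e^\star$ was skipped because $y^\star$ and $z^\star$ lay in one super-vertex $v$, then both $P^\star$ and $Q^\star$, which start at $y^\star,z^\star$ and end in $U$, are $U$--paths from contracted vertices, so by the invariant each passes through $v$ (it cannot instead hit $V(\Dcal)$), contradicting their disjointness. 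If DDFS was actually run, it either returned a double path whose omissible set contains $y^\star,z^\star$ and was deleted (contradicting survival), or it returned a bottleneck $v$ and contracted $W\cup\{v\}\ni y^\star,z^\star$, and then the same argument as above forces $v\in V(P^\star)\cap V(Q^\star)$, again a contradiction. Making this "tracing" fully rigorous is where one must pin down the exact reachability guarantee the omissible-set construction maintains in the reduced/contracted DAG and match it against the definition of a double path; this is essentially the import of the MV-style analysis of~\cite{vazirani2024theory}, now in the simpler ABD setting.

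Finally, the $O(m)$ running-time bound follows by charging. A single DDFS call costs $O(|E_W|)$, where $E_W$ is the set of edges incident to the vertices it explores, and those vertices form the omissible set $W$ returned by the call. A vertex placed in a deleted omissible set never reappears in a later call; a vertex placed in a contracted omissible set thereafter contributes to $W$ only through the super-vertex $v$, and the edges internal to that contraction become self-loops that are each removed at most once by the lazy \textsc{delete-min}. Hence the sets $E_W$ over successive calls are, up to $O(1)$ overcounting per edge, disjoint, so the DDFS work totals $O(m)$; maintaining the contractions with an incremental-tree set-union structure (whose union pattern is a tree) adds only $O(1)$ amortised time per operation, and the skipped edges cost $O(1)$ each. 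Summing everything gives $O(m)$, which proves Lemma~\ref{lma:constDoublePaths}.
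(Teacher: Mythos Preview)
Your proposal is correct and follows essentially the same approach as the paper. The paper's own proof is extremely terse: it describes the same queue-and-DDFS loop, gives the identical $O(|E_W|)$ charging argument for the running time, and then defers all correctness (including maximality) with the single sentence ``The correctness of the algorithm follows the properties of DDFS.'' Your write-up expands considerably on this---in particular, you spell out an explicit reachability invariant and a case analysis for maximality that the paper leaves entirely implicit in its citation of \cite{vazirani2024theory}---but the algorithm, the charging scheme, and the ultimate appeal to the MV DDFS analysis are the same.
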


\begin{proof}
First, we add all the critical edges of volume $2\ell + 5$ into a queue. The algorithm dequeues each edge $\{y, z\}$ 
one by one, and tries to find a double path $(P, Q, y, z)$ using the DDFS. If it succeeds, all the vertices 
in the associated omissible set $W$ are removed.  Otherwise, all the vertices in the omissible set $W$ are shrunk to 
the bottleneck vertex. Since this algorithm runs in $O(|E_W|)$ time, the total running time is $O(m)$.
The correctness of the algorithm follows the properties of DDFS. 
\end{proof}

\section{Approximate MCM in CONGEST and Semi-Streaming Settings}

This section presents how our framework is utilized for leading an efficient algorithm to compute an approximate maximum matching in distributed and semi-streaming settings.
Our aim is to obtain a matching of size at least $(1 - \epsilon)\mu(G)$.
For simplicity, we assume that $\epsilon$ is represented as $\epsilon = 2^{-x}$ for some integer $x > 0$.

\subsection{Models}

\paragraph{CONGEST Model}
The CONGEST model is one of the standard models in
designing distributed graph algorithms. The system is modeled by a simple connected undirected graph 
$G$ of $n$ vertices and $m$ edges. Vertices and edges are uniquely identified by $O(\log n)$-bit integer values, and 
the computation follows round-based synchronous message passing. In one round, each vertex $v$ sends and receives 
$O(\log n)$-bit messages through each of its incident edges, and executes local computation following its internal 
state, local random bits, and all the received messages. A vertex can send different messages to different neighbors 
in one round, and all the messages sent in a round are necessarily delivered to their 
destinations within the same round. 

\paragraph{Semi-Streaming Model}
The semi-streaming model is a popular one in the graph-stream algorithm, where the algorithm can use the space of the vertex-size
order (i.e., $\tilde{O}(n)$ space), and the edge set arrives as a data stream. Intuitively, the model can be seen as the
one which is sufficiently rich to store the output of the algorithm, but not enough to store whole graph information. Our study
considers its multi-path variant, where the algorithm can read the input stream multiple times. In this settings, the efficiency
of the algorithm is measured by the number of \emph{passes} (i.e., the number of times the input stream is read). The model our 
algorithm assumes on input streams is the weakest, where the order of edges in the input stream can be arbitrary, and even can 
be different among different passes.

\subsection{Subroutines} \label{subsec:subroutine}

We first present the CONGEST/semi-streaming implementations of our framework.

\begin{lemma} \label{lma:subroutines}
Let $(G, M)$ be the matching system, and $2\ell + 1$ be the length of shortest augmenting paths in 
$(G, M)$. Suppose that the information of $M$ is given to each vertex, i.e., each vertex knows the incident edge 
in $M$ (if it exists). There exist the following deterministic CONGEST algorithms:
\begin{itemize}
    \item Algorithm $\textsc{CcomputeDist}_\ell$: Each vertex $v$ outputs the value of $\dist^{\theta}(v)$ for $\theta \in \{\Even, \Odd\}$ 
    if $\dist{^\theta}(v) \leq 2\ell + 5$ holds. The algorithm runs in $O(\ell^2)$ rounds.
    \item Algorithm $\textsc{CcomputeMIE}_\ell$: Initially, each vertex $v$ knows its parent in a given ABT $T$ of height 
    at most $\ell + 3$, and it outputs an MIE of $T(v)$, which runs in $O(\ell)$ rounds.
    \item Algorithm $\textsc{CdoubleToAug}_\ell$: Initially, each vertex knows
    the incident edges in $E(\Qcal)$ of a given set $\Qcal$ of disjoint double paths, and the output of $\textsc{CcomputeDist}_\ell$. Each vertex 
    outputs the incident edges contained in the corresponding augmenting path set $\Augt(\Qcal)$. The algorithm runs in $O(\ell)$ rounds.
\end{itemize}
\end{lemma}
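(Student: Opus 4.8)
The plan is to obtain all three algorithms by \emph{lifting the sequential procedures of Sections~\ref{subsec:compDistDetails} and~\ref{sec:maximalSet} to the CONGEST model}, using throughout the fact that every tree we manipulate has depth $O(\ell)$ in the range we explore: we only grow $T^r$ up to $r=2\ell+5$, the ABT passed to $\textsc{CcomputeMIE}_\ell$ has height $\le\ell+3$ by hypothesis, and by Lemma~\ref{lma:wastingEdge} and Corollary~\ref{corol:IKY24} every MIE that matters has volume at most $2\ell+5$. Two primitives will be reused: (a) a \emph{pipelined convergecast} over a rooted tree of depth $O(\ell)$ in which each vertex contributes an $O(\ell)$-entry vector and one takes the entrywise minimum along the way, which finishes in $O(\ell)$ rounds; and (b) having every vertex learn the depth-annotated list of its $O(\ell)$ ancestors, done by a top-down broadcast in $O(\ell)$ rounds, with no congestion since a vertex sends the same list down all of its child edges.

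For $\textsc{CcomputeDist}_\ell$ I would simulate Algorithm~\ref{alg:constABT} round by round, running only rounds $r=0,\dots,2\ell+5$, hence $O(\ell)$ simulated rounds. In simulated round $r$ the work is to extend $T^r$ by one level, to fix the volume of each non-tree edge of height $r$ and register it with the two contracted components it touches, and to contract each tree edge $\{v,\parent_T(v)\}$ with $\dist^{\Nop}(v)=r$ and re-evaluate the minimum-level incident non-self-loop edge of the resulting component. The union--find $\Varcontraction$ becomes the convention ``the root of a contracted subtree is its leader'', and each per-component heap $\Varnontree$ is emulated by a convergecast of the minimum-level incident edge toward that leader; since a contracted component is a connected subtree of $T^r$ and $T^r$ has depth $O(\ell)$, one such convergecast costs $O(\ell)$ rounds, which is the sole reason the final bound is $O(\ell^2)$ rather than $O(\ell)$. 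Correctness transfers verbatim from Theorem~\ref{thm:constABT}: the CONGEST simulation reproduces the same $\Vardist^{\theta}[\cdot]$ and $\Varparent[\cdot]$ values at the start of every round.

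For $\textsc{CcomputeMIE}_\ell$ I would use that $e=\{a,b\}$ is an incoming edge of $T(v)$ exactly when $v$ lies strictly below $\mathrm{lca}(a,b)$ on the tree path from $a$ or from $b$ to $\mathrm{lca}(a,b)$; thus $\level(\Minedge_T(v))$ is a minimum over such edges. So: first each vertex learns its ancestor list by primitive (b); next, for each non-tree edge $e$ the two endpoints exchange ancestor lists along $e$ and compute the depth of $\mathrm{lca}(a,b)$ locally (only the endpoints of $e$ use the link $e$, so this is $O(\ell)$ rounds with no congestion); finally we run a bottom-up convergecast in which $e$ injects a token of value $\level(e)$ at each of $a,b$ tagged with the ``stop depth'' $\mathrm{depth}(\mathrm{lca}(a,b))+1$, and each vertex forwards to its parent the $O(\ell)$-entry vector ``stop depth $\mapsto$ minimum level of a token still en route with that stop depth'' — an instance of primitive (a), so $O(\ell)$ rounds — after which $v$ outputs the minimum level of a token whose stop depth is at most $\mathrm{depth}(v)$. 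For $\textsc{CdoubleToAug}_\ell$, every vertex already holds its incident $\Qcal$-edges and the needed $\dist^{\theta}$ values, so it can compute $\parset(\cdot)$ locally by Lemma~\ref{lma:fundamental}~(P2) and pick its parent along the double path through it (or arbitrarily in $\parset$ otherwise), consistently building an ABT $T$ respecting $\Qcal$; then call $\textsc{CcomputeMIE}_\ell$ on $T$, and finally realize Algorithms~\ref{alg:AugPathConstruction} and~\ref{alg:pathConstruction} as trace tokens that walk $T$ exactly as \textsc{PathConstruction} prescribes, marking traversed edges, with the $\Nop$-branch at $t$ forwarding tokens to $\Minedge_T(t)=\{y',z'\}$ by routing down to the inner endpoint $z'$ along the convergecast route remembered above and then crossing $\{y',z'\}$ to $y'$. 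By Theorem~\ref{thm:augPathConst} the resulting augmenting paths lie in pairwise-disjoint subtrees, so distinct double paths never contend, and within one double path the segments traced by distinct \textsc{PathConstruction} calls are edge-disjoint of total length $2\ell+1$, which keeps the whole process within $O(\ell)$ rounds.

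The hard part will be making ``one simulated round of $\textsc{CcomputeDist}_\ell$ costs $O(\ell)$ rounds'' precise: I must argue that re-evaluating the minimum incident edge of a just-enlarged component never requires re-examining tree parts finalized in earlier rounds — which is exactly what the contract-on-fixing strategy and the deletion of self-loops in Algorithm~\ref{alg:constABT} buy us (cf.\ Lemma~\ref{lma:distComp}) — so the convergecast work attributable to one simulated round is genuinely $O(\text{current depth})=O(\ell)$. The secondary obstacle is the routing accounting in $\textsc{CdoubleToAug}_\ell$: showing that the cumulative cost of the ``route-down to the inner endpoint'' steps does not inflate the bound to $O(\ell^2)$, which reduces to the edge-disjointness (hence $O(\ell)$ total length) of the traced segments noted above.
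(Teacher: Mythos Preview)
Your proposal is correct and follows essentially the same approach as the paper: simulate \textsc{ComputeDist} round by round with each simulated round costing $O(\ell)$ CONGEST rounds via aggregation over depth-$O(\ell)$ contracted subtrees, and realize the other two primitives by convergecasts over the ABT plus a token-based emulation of \textsc{PathConstruction}. The only difference is expository: the paper delegates the details of $\textsc{CcomputeMIE}_\ell$ and the CONGEST \textsc{PathConstruction} to the prior work~\cite{IKY24}, whereas you spell out concrete implementations (ancestor-list exchange, stop-depth-indexed pipelined min-convergecast, and trace tokens), which is fine and arguably more self-contained.
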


\begin{proof}
The algorithm $\textsc{CcomputeMIE}_\ell$ is given in~\cite{IKY24} (see Appendix B). For $\textsc{CdoubleToAug}$,
one can also use the CONGEST implementation of $\textsc{PathConstruction}$ shown in~\cite{IKY24} (which is
referred to as the algorithm $\textsf{EXTPATH}$ there). It runs in $O(\ell)$ rounds, and when it transforms a path 
$P$ lying in a subtree $T(u)$ ($u \in U$) to an alternating path, only the vertices in $T(u)$ join the execution
of the algorithm.
Hence it can be run in parallel for all $T(u)$, yielding a CONGEST implementation of $\textsc{DoubleToAug}$. 

The CONGEST implementation of $\textsc{ComputeDist}_\ell$ is obtained naturally from Algorithm~\ref{alg:constABT}.
One can regard Algorithm~\ref{alg:constABT} as a round-by-round message-passing algorithm. 
In the algorithm, each vertex $v$ manages the variables
associated to it. More precisely, it manages $\Vardist^{\theta}[v]$ for $\theta \in \{\Odd, \Even\}$  
and $\Varparent[v]$. In addition, the information of the union-find date structure $\Varcontraction$ is simply implemented 
with vertex labels. That is, each vertex $v$ stores the value of $\Varcontraction.\textsc{find}(v)$ into the corresponding 
local variable $\Varcontraction[v]$. At every round, the information of $\Vardist^{\theta}[v]$ for both 
$\theta \in \{\Odd, \Even\}$ is exchanged with its neighbors. 
It allows each vertex to identify the levels of incident edges, as well as the information on the edge set added to $\Varnontree$.
The addition to $\Varnontree$ and merging two $\Varnontree$ priority queues are not explicitly implemented. Instead, to 
access $\Varnontree[v]$, $v$ checks all the edges incident to the vertices $v'$ such that $\Varcontraction[v'] = \Varcontraction[v]$ holds.
The operation $\Varnontree[v].\textsc{find-min}$ is implemented by the standard aggregation over the subtree
induced by the vertices $v'$ satisfying $v' = \Varcontraction[v]$, which is implemented by $O(\ell)$ rounds. The operation  $\Varnontree[v].\textsc{delete-min}$ is implicitly implemented by omitting
the edges $\{y, z\}$ such that $\Varcontraction[y] = \Varcontraction[z]$ holds in the search process of $\textsc{find-min}$.

In the simulation, one can regard the queue $\Varchecklist$ as the buffer of 
in-transmission messages. If $(v, r, \theta) \in \Varchecklist$ holds, it implies that a message 
$(v, r, \theta)$ to $v$ is in-transmission, which will be delivered exactly at round $r$. Note that when 
$(v, r, \theta)$ is added in processing another entry $(v', r', \theta')$, the component induced by the vertices shrunk into $v$
and that by $v'$ is connected by an edge. Hence one can route the message from $v'$ to $v$ within $O(\ell)$ rounds. 
To compute the distance up to $2\ell + 5$, running $\textsc{ComputeDist}$ up to round $2\ell + 5$ suffices. 
As argued above, simulating one round of $\textsc{ComputeDist}$ needs $O(\ell)$ CONGEST rounds. Hence the total running time of
the algorithm is $O(\ell^2)$ rounds.
\end{proof}

\begin{lemma} \label{lma:Csubroutines}
Let $(G, M)$ be the matching system, and $2\ell + 1$ be the length of shortest augmenting paths in 
$(G, M)$. Suppose that the vertex set $V(G)$ and the current matching $M$ are always stored in the memory. There 
exist the following deterministic semi-streaming algorithms of using $O(n \log n)$ bits:
\begin{itemize}
    \item Algorithm $\textsc{ScomputeDist}_\ell$: This algorithm
    outputs the value of $\dist^{\theta}(v)$ for $\theta \in \{\Even, \Odd\}$ such that $\dist{^\theta}(v) \leq 2\ell + 5$ holds. 
    The algorithm takes $O(\ell^2)$ passes.
     \item Algorithm $\textsc{ScomputeMIE}_\ell$: Initially, any ABT $T$ of height at most $\ell + 3$ and the output of 
    $\textsc{ScomputeDist}_\ell$ are stored in the memory. The algorithm outputs an MIE of $T(v)$ for all $v \in V(T)$ with one pass.
    \item Algorithm $\textsc{SdoubleToAug}_\ell$: Initially, a set $\Qcal$ of disjoint double paths and the output of 
    $\textsc{ScomputeDist}_\ell$ are stored in the memory. The algorithm outputs $\Augt(\Qcal))$ of the corresponding augmenting 
    paths with $O(1)$ passes.

\end{itemize}
\end{lemma}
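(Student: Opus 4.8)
The plan is to emulate, for each of the three algorithms, a procedure already built in the preceding sections --- for $\textsc{ScomputeDist}_\ell$ the CONGEST algorithm $\textsc{CcomputeDist}_\ell$ of Lemma~\ref{lma:subroutines}, and for the other two the corresponding sequential constructions --- using the $O(n\log n)$-bit budget to keep all per-vertex bookkeeping resident in memory and spending passes over the edge stream only to ``read'' the graph.

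For $\textsc{ScomputeDist}_\ell$ I would run $\textsc{CcomputeDist}_\ell$ and emulate one CONGEST round per pass. In memory we keep, for each vertex $v$, the values $\Vardist^{\Even}[v]$, $\Vardist^{\Odd}[v]$, $\Varparent[v]$, its current $\Varcontraction$-label, and $O(1)$ scratch words; this is $O(\log n)$ bits per vertex, so $O(n\log n)$ bits in total, and together with the resident matching it is precisely the state the CONGEST vertices hold. To emulate one round we make one pass: on reading an edge $\{u,v\}$ we know from memory the full state of both endpoints, hence the $O(\log n)$-bit message $u$ sends to $v$ and the one $v$ sends to $u$, and we ``deliver'' each by folding it into a constant-size accumulator at the recipient. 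The point is that every per-round computation of $\textsc{CcomputeDist}_\ell$ --- deciding one's distance from neighbours' distances, choosing a parent, the subtree aggregations that realise \textsc{find-min} (with same-component edges skipped via the exchanged $\Varcontraction$-labels), and the routing of $\Varchecklist$ entries along contracted subtrees --- is an associative reduction of the incoming messages (routed, if necessary, to one of a constant number of accumulators according to whether the sender is a parent, a child, or neither), so it can be applied incrementally and finalised at the end of the pass. Since $\textsc{CcomputeDist}_\ell$ runs in $O(\ell^2)$ rounds, this gives $O(\ell^2)$ passes and $O(n\log n)$ bits. (Alternatively one may emulate Algorithm~\ref{alg:constABT} directly, keeping a single minimum-$\level$ incident-edge record per contracted component; this is sound --- by Lemma~\ref{lma:distComp} the MIE of $T(t)$ is already incident to the component of $t$ before round $\dist^{\Nop}(t)$, so it is never dropped --- but re-deriving those records after the self-loop eliminations caused by a \textsc{merge} is what forces $O(\ell)$ passes per round, again $O(\ell^2)$ overall.)

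For $\textsc{ScomputeMIE}_\ell$ a single pass suffices once the ABT $T$ of height at most $\ell+3$ is resident: on reading an edge $\{y,z\}$, walk upward in $T$ from $y$ and from $z$ until reaching $\mathrm{lca}_T(y,z)$, and at each vertex $t$ visited strictly below it fold $\{y,z\}$ into a running ``best incoming edge'' record at $t$. This is correct because $\{y,z\}$ is an incoming edge of $T(t)$ exactly for the $t$ on the two root-paths below the $\mathrm{lca}$, and the height bound keeps each walk $O(\ell)$ long; at the end of the pass each vertex holds an MIE of its subtree, using $O(1)$ words per vertex. For $\textsc{SdoubleToAug}_\ell$ I would first, with no pass, reorient the pointers $\Varparent[\cdot]$ along the vertex-disjoint double paths of $\Qcal$ to obtain an ABT $T'$ respecting them, then call $\textsc{ScomputeMIE}_\ell$ on $T'$ ($O(1)$ passes) to obtain the MIEs, and finally execute $\textsc{DoubleToAug}$ (Algorithm~\ref{alg:AugPathConstruction}), i.e.\ the recursion $\textsc{PathConstruction}$ of Algorithm~\ref{alg:pathConstruction}, entirely in memory: it only follows tree edges of $T'$, the recorded MIEs, and the quantities $\rho(\cdot)$ and $\Op(\cdot)$ derivable from the resident distances. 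Correctness of the output is Theorem~\ref{thm:augPathConst}, and the procedure uses $O(1)$ passes.

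I expect the main obstacle to be the emulation in the first item: one must check that every round of $\textsc{CcomputeDist}_\ell$ is indeed an associative reduction of the received messages, so that a vertex need remember only a constant-size aggregate rather than which neighbour sent what; and, for the direct-emulation variant, that one minimum-edge record per contracted component --- refreshed by a single pass after each batch of contractions and protected from loss by Lemma~\ref{lma:distComp} --- faithfully reproduces the \textsc{find-min}/\textsc{merge}/\textsc{delete-min} behaviour of the meldable heaps in Algorithm~\ref{alg:constABT}. Once this is settled, the $O(\ell^2)$ pass bound follows from the $O(\ell^2)$ round bound of Lemma~\ref{lma:subroutines} together with the $2\ell+5$ cutoff justified by Lemma~\ref{lma:wastingEdge}, and the remaining two items are routine given that the tree and its $O(\ell)$ height bound are resident in memory.
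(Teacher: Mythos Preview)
Your proposal is correct and reaches the stated bounds, but for $\textsc{ScomputeDist}_\ell$ it takes a more indirect route than the paper. You simulate the CONGEST algorithm $\textsc{CcomputeDist}_\ell$ one round per pass, inheriting its $O(\ell^2)$ round count; the paper instead emulates the sequential Algorithm~\ref{alg:constABT} directly, storing $\Varchecklist$ globally (only the smallest $r$ per $(v,\theta)$, hence $O(n\log n)$ bits) and implementing each \textsc{find-min} by a single brute-force pass over the edges (walking up to the LCA and updating per-vertex candidates). This yields two passes per round of \textsc{ComputeDist}, i.e.\ $O(\ell)$ passes --- stronger than the stated $O(\ell^2)$. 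The concern you flag (``is every round an associative reduction?'') is exactly what the paper's approach sidesteps: because the semi-streaming model has global memory, the CONGEST-specific artifacts --- routing $\Varchecklist$ entries along contracted subtrees and the $O(\ell)$-round pipelined subtree aggregations --- are unnecessary, and your alternative of direct emulation does not in fact need $O(\ell)$ passes per round as you feared, since by Lemmas~\ref{lma:orthodxfix}--\ref{lma:unorthodoxfix} every entry with the correct $r$ is already present in $\Varchecklist$ before round $r$ begins.

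For $\textsc{ScomputeMIE}_\ell$ and $\textsc{SdoubleToAug}_\ell$ your proposal matches the paper almost verbatim (LCA-walk updates in one pass; build a respecting ABT, compute MIEs, run \textsc{PathConstruction} in memory). The only minor difference is that you reuse the $\Varparent[\cdot]$ pointers left over from $\textsc{ScomputeDist}_\ell$ to form the ABT with no pass, whereas the paper spends one pass to construct the respecting ABT from scratch; both are fine within the $O(1)$-pass budget.
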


\begin{proof}
We first consider the implementation of $\textsc{ScomputeMIE}$. The algorithm associates each vertex $v$ with a variable
$\textsf{MIE}_v$ storing the MIE information of $T(v)$. When an edge $e = \{y, z\}$ arrives, the algorithm checks if 
it is a non-tree edge or not. If it is a non-tree edge, for vertices $v$ such that $e$ becomes an incoming edge of 
$T(v)$, compare $\level(e)$ with $\level(\textsf{MIE}_v)$. Note that the $\level(e)$ is computed only from the stored information
(i.e., the output of $\textsc{ScomputeDist}_{\ell}$). If $\level(e)$ is smaller, $\textsf{MIE}_v$ is updated with $e$.
Obviously this algorithm computes MIEs for all $T(v)$, and takes only one pass. Next we consider the implementation of
$\textsc{SdoubleToAug}$. First, it computes an ABT $T$ respecting all double paths in $\Qcal$.
More precisely, when an edge $e = \{y, z\}$ arrives, the algorithm includes it into the constructed tree if $e \in E(\Qcal)$ holds or $e \in \paredgeset(z)$ and the parent of $z$ is not decided yet; note that if some $e' \in \paredgeset(z)$ has already been added and $e \in E(\Qcal) \cap \paredgeset(z)$ arrives later, we remove $e'$. 
This process takes one pass. After the construction of $T$, it identifies all MIEs for $T$ 
using $\textsc{ScomputeMIE}$. Since $\textsc{PathConstruction}$ is implemented only by the information of $T$
and its MIEs, the transformation from $\Qcal$ to $\Augt(\Qcal)$ is attained only by the stored information, i.e., it takes
no pass. In total, $O(1)$ passes suffice. 

Similar to the CONGEST case, the implementation of $\textsc{ScomputeDist}_\ell$ also follows Algorithm~\ref{alg:constABT}.
Basically, one round is implemented by taking one pass. The memory manages the information of $\Vardist^{\theta}[v]$ and for 
$\theta \in \{\Odd, \Even\}$, $\Varparent[v]$, and $\Varcontraction$. The operation of $\Varnontree[v].\textsc{find-min}$ 
(and $\Varnontree[v].\textsc{delete-min}$) is implemented by the brute-force check of all edges incident to vertices $v'$ satisfying
$\Varcontraction[v'] = v$. More precisely, each vertex $v$ invoking $\Varnontree[v].\textsc{find-min}$ manages its own output 
candidate. When $e = \{y, z\}$ arrives, the algorithm checks the ancestors $v$ 
of $y$ and $z$ currently invoking $\Varnontree[v].\textsc{find-min}$ in $T$ up to the lowest common ancestor of $y$ and $z$.
If $\level(e) < \level(e')$ holds for the current output candidate $e'$ of $v$, it is updated (recall that $\level(e)$ is
computed from the stored information). After the one-pass, the output candidates precisely hold the desired outputs.
The algorithm also saves the size of $\Varchecklist$ with omitting unnecessary entries. In the algorithm $\textsc{ComputeDist}$, 
once $\dist^{\theta}(v) < \infty$ holds, the remaining tuple $(v, r', \theta)$ becomes unnecessary by line 4 of Algorithm~\ref{alg:constABT}. 
Hence for each $v \in V(G)$ and $\theta \in \{\Odd, \Even\}$ it suffices to store only $(v, r, \theta) \in \Varnontree$ with the smallest 
$r$, which requires only $O(n \log n)$-bit space. 

The algorithm simulates one round of $\textsc{ComputeDist}$ with two passes.
The first pass is spent for processing lines 4--14,
particularly checking all neighbors at line 6.
The second pass is spent for processing $\textsc{find-min}$ operations at lines 19--22.
Hence the total number of passes is bounded by $O(\ell)$.
\end{proof}

\subsection{Algorithm $\textsc{Amplifier}_\alpha$}\label{subsec:Amplifier}

In this section, we prove our main result on approximation (Theorem~\ref{thm:mainApproximate}), assuming that there exists an algorithm $\textsc{AugToHit}$ satisfying Lemma~\ref{lma:AugAndHit}, which is proved in Section~\ref{subsec:AugAndHit}.

\begin{lemma} \label{lma:AugAndHit}
Let $(G, M)$ be any matching system. There exists an algorithm $\textsc{AugAndHit}_{\ell}(G, M)$ which outputs a set $\Qcal$
of disjoint shortest augmenting paths in $(G, M)$ and a hitting set $B$ of size at most $16|\Qcal|(\ell + 1)^2 + \frac{|M|}{4(\ell + 1)}$ for $(G, M)$ if 
the length of shortest augmenting paths in $(G, M)$ is not larger than $2\ell + 1$ (otherwise, it outputs $\Qcal = \emptyset$). 
It is implemented in both CONGEST and semi-streaming models, which respectively attains the following properties:
\begin{itemize}
    \item In the CONGEST model, each vertex initially knows the edge of $M$ incident to itself (if any), and outputs the incident edges in $E(\Qcal)$ and the one-bit flag of indicating $v \in B$ or not. The running time is $O(\ell^2 \mbox{\rm\textsf{MM}}(n))$ rounds, where $\mbox{\rm\textsf{MM}}(n)$ is the running time
    of computing a maximal matching of graphs on $n$ vertices. The algorithm is deterministic
    except for the part of computing maximal matchings in CONGEST.
    \item In the semi-streaming model, the algorithm is deterministic, uses $O(n\log n)$-bit space, and run with $O(\ell^2)$ passes. Initially the information on $V(G)$ and $M$ is stored in the memory, and outputs the set $\Qcal$ and $B$. 
\end{itemize} 
\end{lemma}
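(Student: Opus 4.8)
The plan is to reduce the task, via Theorem~\ref{thm:structuralTheorem} and Theorem~\ref{thm:augPathConst}, to producing a set $\Dcal$ of vertex-disjoint double paths of $(H,\Ccset)$ together with a vertex set $B$ meeting every shortest augmenting path of $(G,M)$; then $\Qcal:=\Augt(\Dcal)$ and $B$ are the outputs. First run $\textsc{CcomputeDist}_\ell$ in CONGEST (resp.\ $\textsc{ScomputeDist}_\ell$ in semi-streaming), so that every vertex learns $\dist^\theta(\cdot)$ up to $2\ell+5$ (Lemmas~\ref{lma:subroutines} and \ref{lma:Csubroutines}). By Corollary~\ref{corol:IKY24} this reveals whether $(G,M)$ has a shortest augmenting path of length at most $2\ell+1$ (some free $u$ with $\dist^\Op(u)+\dist^\Nop(u)\le 2\ell+5$); if not, output $\Qcal=\emptyset$ and the claim on $B$ is vacuous. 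Otherwise set $\ell\gets\ell_\Ical$ (no larger than the input value, so all bounds only improve) and build $H$ and $\Ccset$ locally from the distance data (Lemma~\ref{lma:fundamental}, Definition~\ref{def:crossable}).

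\textbf{Localization by a low-diameter decomposition.} A direct distributed/streaming emulation of the sequential DDFS of Section~\ref{subsec:DDFS} is hopeless because DDFS is inherently sequential, so we first carve the relevant part of the instance into low-diameter pieces. By Lemma~\ref{lma:wastingEdge}, every non-endpoint vertex of a shortest augmenting path has $\dist^\Op\le\ell+2$, and because the internal vertices of any alternating path are matched, deleting all matched vertices of a BFS layer disconnects the matched graph. We therefore grow BFS balls and repeatedly remove a thin layer: among any $4(\ell+1)$ consecutive layers one contains at most $\frac{|M|}{4(\ell+1)}$ matched vertices, so removing it (into a separator $S$, hence into $B$) leaves pieces of matched-diameter $O(\ell)$; iterating keeps $|S|\le\frac{|M|}{4(\ell+1)}$ overall. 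Thus every shortest augmenting path of $(G,M)$ either meets $S$ or has all of its internal vertices in a single cluster. This carving costs $O(\ell)$ CONGEST rounds / passes per batch of layers and $O(\ell)$ batches.

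\textbf{Per-cluster construction of double paths.} Inside each cluster $C$ (with its incident free vertices) we build a set of vertex-disjoint double paths of the induced sub-instance that is maximal relative to $C$. Since the clusters partition the matched vertices off $S$, these runs do not interfere, and since $C$ has diameter $O(\ell)$ we can afford a blocking-flow-style procedure: maintain disjoint partial double paths and, in $O(\ell)$ rounds, advance every live head by one $\dist^\Op$-layer, using a maximal matching of the ``extension-request'' bipartite graph to resolve competition for the next vertex and backtracking dead heads; the maximal-matching calls are the only randomized ingredient in CONGEST (a deterministic one-pass maximal matching suffices in semi-streaming). Let $\Dcal$ be the union of the per-cluster collections. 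Applying $\textsc{CdoubleToAug}_\ell$ / $\textsc{SdoubleToAug}_\ell$ (which construct an ABT respecting $\Dcal$, compute its MIEs, and invoke Theorem~\ref{thm:augPathConst}) yields the disjoint shortest augmenting paths $\Qcal=\Augt(\Dcal)$, and we set $B:=S\cup V(\Dcal)\cup V(\Qcal)$, suitably enlarged (see below). Composing the round/pass counts of all subroutines gives $O(\ell^2\,\mbox{\rm\textsf{MM}}(n))$ rounds in CONGEST and $O(\ell^2)$ passes with $O(n\log n)$ bits in semi-streaming.

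\textbf{Correctness, size, and the main obstacle.} For the hitting property, take any shortest augmenting path $P$ of $(G,M)$ with $P\cap B=\emptyset$. Then $P$ avoids $S$, so all internal vertices of $P$ lie in one cluster $C$, and $P$ avoids $V(\Dcal)$. Applying Lemma~\ref{lma:hittingset} with $X$ the disjoint union of the paths of $\Dcal$ inside $C$ (each terminating in $U$) produces a double path disjoint from $\Dcal$ inside $C$, contradicting maximality relative to $C$; here Lemmas~\ref{lma:partialreconstruction} and \ref{lma:uniqueCrossableEdge3} are the glue that makes ``a double path localized to $C$'' interchangeable with ``a genuine shortest augmenting path localized to $C$'', so that cluster-local maximality is the correct local notion. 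For the size bound, $|S|\le\frac{|M|}{4(\ell+1)}$, each path of $\Qcal$ has at most $2\ell+2$ vertices and each double path of $\Dcal$ has $O(\ell)$ vertices; folding in the additional vertices that must enter $B$ so that the efficient blocking construction truly certifies cluster-local maximality (rather than mere blockingness) yields the stated $16|\Qcal|(\ell+1)^2+\frac{|M|}{4(\ell+1)}$. The main obstacle is precisely this accounting: charging the localized, blocking-flow construction against $|\Qcal|$ with only an $O((\ell+1)^2)$ overhead per augmenting path, while still proving that $B$ meets \emph{every} globally shortest augmenting path --- i.e.\ reconciling the global structure theorem with the necessarily local and approximate construction that CONGEST and semi-streaming permit.
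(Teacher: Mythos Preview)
Your proposal has a genuine gap, and you correctly identify it yourself in the final paragraph: the accounting that yields the $16|\Qcal|(\ell+1)^2$ term is never carried out, and the per-cluster ``blocking-flow-style procedure'' is not specified precisely enough to analyze. Your hitting set is $B=S\cup V(\Dcal)\cup V(\Qcal)$ ``suitably enlarged'', which as written has size $|S|+O(\ell)|\Qcal|$, not $O(\ell^2)|\Qcal|$; the quadratic term must come from somewhere, and ``folding in additional vertices'' is not an argument. Separately, your BFS-layer carving is problematic: the claim that one of $4(\ell+1)$ consecutive layers has at most $|M|/(4(\ell+1))$ matched vertices is a per-ball pigeonhole, and you do not explain why iterating over all balls keeps the global separator within the same bound. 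Finally, the clusters you produce are defined via graph-metric BFS, whereas the objects you need to localize (double paths in $H$, the shortest-augmenting-path structure governed by $\dist^{\Op}$) live in the ABD; you assert that a shortest augmenting path avoiding $S$ has all internal vertices in one cluster, but you never verify that BFS-layer removal actually separates augmenting paths in the required sense.

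The paper's construction avoids all of this by a different decomposition. Instead of carving the graph by BFS shells, it runs a Goldberg--Plotkin--Vaidya-style \emph{parallel DFS} in the reverse of $H$ simultaneously from every free vertex, for exactly $L=16(\ell+1)^2$ iterations: in each step every search head either advances to an idle out-neighbor (conflicts resolved by a single maximal-matching call on the auxiliary bipartite graph) or backtracks. This directly yields disjoint ``regions'' $R_u\subseteq V(H)$, one per free vertex, each of size at most $L$. The vertices still \textsf{Active} at the end are few (a counting argument gives at most $|V(H)\setminus U|/(8(\ell+1))=|M|/(4(\ell+1))$) and go into $B$; this is the source of the additive term. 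Then the algorithm builds an auxiliary graph on the regions, with an edge whenever two regions are \emph{mergeable} (bridged by an edge of $\Ccset$, or by an $H$-edge leaving a double-reachable vertex, the latter detected via Lemma~\ref{lma:IdentificationDouble}), computes a maximal matching there, and for each matched pair finds one augmenting path inside the union and dumps both regions into $B$. This is where the $16|\Qcal|(\ell+1)^2$ term comes from: one augmenting path per merged pair, each region of size at most $L$. The hitting property (Lemma~\ref{lma:correctnessAugAndHit}) then follows by a short case analysis on how a hypothetical double path disjoint from $B$ would sit among the regions, using Lemma~\ref{lma:hittingset} only at the very end to pass from double paths to arbitrary shortest augmenting paths. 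In short, the parallel DFS \emph{is} the localization step and simultaneously supplies the size-$L$ bound; no separate low-diameter decomposition is needed.
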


The key building block of our algorithm is $\textsc{Amplifier}_{\alpha}$.
We recall that it takes a parameter $\alpha > 0$ (represented as a power of two) and any matching $M$ such that $|M| = (1 - \alpha')\Mmax(G)$ holds for some $\alpha' \leq \alpha$, and outputs a matching $M'$ of $G$ whose approximation factor is at least $\min\{(1 - \alpha' + \Theta(\alpha^2)), (1- \alpha/2)\}$ (Lemma~\ref{lma:amplifier}).
Iteratively applying $\textsc{Amplifier}_{\alpha}$ $O(\alpha^{-1})$ times, we obtain a $(1 - \alpha/2)$-approximate matching.
Starting with any maximal matching (which is a $1/2$-approximate matching) and calling that iteration process $O(\log \epsilon^{-1})$ times with $\alpha = 1/2, 1/4, 1/8, \dots 1/\epsilon^{-1}$, we finally obtain a $(1 - \epsilon)$-approximate matching.

The execution of $\textsc{Amplifier}_\alpha$ consists of repetition of $K = 4\alpha^{-1}$ \emph{phases} with modification of the input graph. Let $(G_i, M_i)$ be the matching system at the beginning of 
the $i$-th phase. The behavior of the $i$-th phase consists of the following three steps:
\begin{enumerate}
    \item \emph{Path finding}: Call $\textsc{AugAndHit}_K(G_i, M_i)$ and find a set of shortest augmenting paths $\Qcal_i$ and a hitting set $B_i$. 
    \item \emph{Length stretch}: For each vertex $v \in B_i$, if $v$ is not free, the algorithm subdivides the matching edge incident to $v$ into a length-three alternating path such that the middle edge is a non-matching edge. Otherwise, we rename $v$ into $v'$, and add a length-two alternating path to $v'$ where the edge incident to $v'$ is a matching edge and the endpoint of the path other than $v'$ becomes the new $v$. The resultant graph and matching is denoted by $(G_{i+1}, M'_i)$. We refer to this transformation as $\sigma_i$, i.e.,
    $\sigma_i(G_i, M_i) = (G_{i+1}, M'_i)$. In addition, we abuse $\sigma_i$ for transforming any alternating path $P$ in $(G_i, M_i)$. That is, if $P$ does not contain any vertex in $B_i$, $\sigma_i(P)$ returns $P$ itself. Otherwise, $\sigma_i(P)$ returns the transformed path in $(G_{i+1}, M'_i)$ obtained by appropriately subdividing matching edges in $P$. 
    \item \emph{Augmentation}: Augment $\sigma_i(Q)$ for all $Q \in \Qcal_i$. The matching after the augmentation becomes $M_{i+1}$. 
\end{enumerate}
After the $K$ phases, the algorithm finally executes the \emph{recovery} step. %
We first define the reverse transformation $\hat{\sigma}_i$. The transformation $\hat{\sigma}_i(G_{i+1}, M')$ returns the matching system 
$(G_i, M)$ constructively defined as follows: any path $v_1, e_1, v_2, e_2, v_3, e_3, v_4$ of length three in $G_{i+1}$ subdivided by 
$\sigma_i$ is restored back to a single edge $e$ in $G_i$. If $e_1$ and $e_3$ are non-matching edge in $M'$, $e$ becomes a non-matching edge in 
$(G_i, M)$. Otherwise (i.e., $e_1$ and $e_3$ are matching edges), $e$ becomes a matching edge in $(G_i, M)$. Note that 
it never happens that only one of $e_1$ and $e_3$ is a matching edge because the degrees of $v_2$ and $v_3$ are two and 
thus any augmenting path contains both $e_1$ and $e_3$ or neither $e_1$ nor $e_3$. In addition, any path $v_1, e_1, v_2, e_2, v_3$ of length two
added to a free vertex in $G_{i}$ with $\sigma_i$ is deleted.
Let us define $(\hat{G}_{K+1}, \hat{M}_{K+1}) = (G_{K+1}, M_{K+1})$ and $(\hat{G}_i, \hat{M}_i) = \hat{\sigma}_i(\hat{G}_{i+1}, \hat{M}_{i+1})$. 
Finally, the procedure $\textsc{Amplifier}_{\alpha}$ outputs $\hat{M}_1$ as a matching of $\hat{G}_1 = G_1$.

The following lemma claims the correctness of Algorithm $\textsc{Amplifier}_\alpha$.

\begin{lemma} \label{lma:amplifier}
There exists a constant $c > 0$ such that if $|M_1| = (1 - \alpha')\Mmax(G_1)$ holds for some $\alpha' \leq \alpha$,
then $|\hat{M}_1| \geq \min\{(1 - \alpha' + c\alpha^2), (1 - \alpha/2)\}\Mmax(G_1)$ holds.
\end{lemma}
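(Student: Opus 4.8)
The plan is to follow the matching size and the maximum matching size through the $K=4\alpha^{-1}$ phases, combine a per-phase monotonicity of the shortest-augmenting-path length with the Hopcroft--Karp bound \cite{HK73}, and then do a case analysis on how many augmenting paths were found in total.

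I would start with bookkeeping. Each stretch operation that $\sigma_i$ applies to a vertex of $B_i$ (subdividing one matching edge into a length-three path, or appending a length-two alternating path at a free vertex) increases both $|M|$ and $\Mmax(\cdot)$ by exactly one, and $\hat{\sigma}_i$ reverses it; augmenting along $\sigma_i(\Qcal_i)$ increases $|M|$ by $|\Qcal_i|$ and leaves $\Mmax$ unchanged. One also checks directly that $\hat{\sigma}_i(M_{i+1})$ is exactly $M_i$ augmented along $\Qcal_i$. Chaining these identities over the $K$ phases gives $|\hat{M}_1| = |M_1| + \sum_{i=1}^K |\Qcal_i|$ and $\Mmax(G_{K+1}) = \Mmax(G_1) + \sum_{i=1}^K |B_i|$, hence $\Mmax(G_1) - |\hat{M}_1| = \Mmax(G_{K+1}) - |M_{K+1}|$. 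Thus the only ``real'' gain is $\sum_i |\Qcal_i|$, while the stretches are scaffolding that is removed in the recovery step.

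The core of the argument is the claim that, writing $\ell_i := \ell_{(G_i,M_i)}$, one has $\ell_{i+1} \ge \ell_i + 1$ whenever the $i$-th phase does something (and trivially otherwise). To prove it, pull back an arbitrary augmenting path $P$ of $(G_{i+1}, M_{i+1})$ through $\hat{\sigma}_i$ into $(G_i, \hat{\sigma}_i(M_{i+1}))$; since $\hat{\sigma}_i(M_{i+1})$ is $M_i$ augmented along the disjoint shortest augmenting paths $\Qcal_i$, the standard Hopcroft--Karp fact gives that its shortest augmenting path has length $\ge 2\ell_i+1$, so $|\hat{\sigma}_i(P)| \ge 2\ell_i+1$. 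If $P$ traverses any subdivided structure of $\sigma_i$ then $|P| \ge |\hat{\sigma}_i(P)| + 2 \ge 2\ell_i+3$; if it does not, then (using that internal vertices of subdivisions have degree two, and that a free vertex of $B_i$ is separated from its original neighborhood by the added length-two path) $P$ avoids $B_i \supseteq V(\Qcal_i)$ entirely, so $P$ is in fact an augmenting path of $(G_i,M_i)$, and since $B_i$ hits every shortest augmenting path of $(G_i,M_i)$ it cannot be a shortest one, giving $|P| \ge 2\ell_i+3$ again. Either way $\ell_{i+1}\ge\ell_i+1$. I expect this step to be the main obstacle: it hinges on the precise interaction between stretching at $B_i$ and augmenting along $\Qcal_i$, in particular on $B_i$ containing $V(\Qcal_i)$ (or the corresponding guarantee being extracted from $\textsc{AugAndHit}$ in Lemma~\ref{lma:AugAndHit}), and on checking the pull-back is a genuine simple path.

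Telescoping this claim, together with the observation that the process stalls once the shortest augmenting path length exceeds $2K+1$ (then $\textsc{AugAndHit}_K$ returns empty and nothing changes), yields $\ell_{K+1}\ge K = 4\alpha^{-1}$, so Hopcroft--Karp gives $\Mmax(G_{K+1}) - |M_{K+1}| \le \frac{1}{K+1}\Mmax(G_{K+1}) < \frac{\alpha}{4}\Mmax(G_{K+1})$. Finally I split on $\sum_i|\Qcal_i|$. If $\sum_i|\Qcal_i|\ge c\alpha^2\Mmax(G_1)$, then $|\hat{M}_1| = |M_1| + \sum_i|\Qcal_i| \ge (1-\alpha'+c\alpha^2)\Mmax(G_1)$. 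Otherwise, using $|M_i|\le\Mmax(G_i)\le\Mmax(G_{K+1})$ and the size bound $|B_i|\le 16|\Qcal_i|(K+1)^2 + \frac{|M_i|}{4(K+1)}$, one gets $\sum_i|B_i| \le 16(K+1)^2\sum_i|\Qcal_i| + \frac{K}{4(K+1)}\Mmax(G_{K+1}) < 400c\,\Mmax(G_1) + \frac14\Mmax(G_{K+1})$ (using $K+1\le 5\alpha^{-1}$); plugging into $\Mmax(G_{K+1}) = \Mmax(G_1) + \sum_i|B_i|$ and choosing $c$ small enough (e.g.\ $c=1/800$) forces $\Mmax(G_{K+1})\le 2\Mmax(G_1)$, hence $\Mmax(G_1) - |\hat{M}_1| = \Mmax(G_{K+1}) - |M_{K+1}| \le \frac{\alpha}{4}\Mmax(G_{K+1}) \le \frac{\alpha}{2}\Mmax(G_1)$, i.e.\ $|\hat{M}_1|\ge(1-\alpha/2)\Mmax(G_1)$. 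The two cases together give $|\hat{M}_1| \ge \min\{(1-\alpha'+c\alpha^2),(1-\alpha/2)\}\Mmax(G_1)$; the remaining case $\alpha'\le\alpha/2$ is immediate since then $|\hat{M}_1|\ge|M_1| = (1-\alpha')\Mmax(G_1)\ge(1-\alpha/2)\Mmax(G_1)$.
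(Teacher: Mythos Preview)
Your proposal is correct and follows essentially the same skeleton as the paper: establish $|\hat{M}_1| = |M_1| + \sum_i |\Qcal_i|$ and $\Mmax(G_1) - |\hat{M}_1| = \Mmax(G_{K+1}) - |M_{K+1}|$, branch on whether $q = \sum_i |\Qcal_i|$ exceeds $c\alpha^2\Mmax(G_1)$, and in the small-$q$ case combine a per-phase length increase with Hopcroft--Karp and a bound $\Mmax(G_{K+1}) \le 2\Mmax(G_1)$.

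Two minor differences are worth noting. First, your treatment of the claim $\ell_{i+1} \ge \ell_i + 1$ is more explicit than the paper's: the paper only argues that $\sigma_i$ stretches every shortest augmenting path of $(G_i,M_i)$ by at least two and then jumps directly to ``$(G_{K+1}, M_{K+1})$ does not admit an augmenting path of length at most $2K-1$,'' leaving the interaction with the augmentation step implicit. Your pull-back argument through $\hat{\sigma}_i$ plus the case split on whether $P$ touches a subdivided structure cleanly closes that gap; your flagged reliance on $V(\Qcal_i)\subseteq B_i$ is legitimate and is indeed guaranteed by the construction of $\textsc{AugAndHit}$ (all vertices of the merged regions, which contain the output augmenting paths, are added to $B$), though it is not stated in Lemma~\ref{lma:AugAndHit} itself. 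Second, for the bound $\Mmax(G_{K+1}) \le 2\Mmax(G_1)$ you use the monotonicity $|M_i|\le \Mmax(G_i)\le \Mmax(G_{K+1})$ to get a single self-referential inequality and solve for $\Mmax(G_{K+1})$, whereas the paper proves $\Mmax(G_j)\le 2\Mmax(G_1)$ by induction on $j$ using $|M_i|\le 2\Mmax(G_1)$; this leads to a slightly different admissible constant ($c=1/800$ versus the paper's $c=1/648$), but both routes are equally valid.
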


\begin{proof}
Let $c = 1/648$.
Since the output matching size is not smaller than the input size, the lemma obviously holds if $\alpha' \leq \alpha/2$. Hence in the following argument we assume $\alpha/2 < \alpha' \leq \alpha$. 
Let $\InvMmax(G, M) = \Mmax(G) - |M|$. Given a matching system $(G, M)$ and any maximum matching $M^{\max}$ of $G$, the symmetric difference of $M^{\max}$ and $M$ induces a subgraph containing $\InvMmax(G, M)$ 
disjoint augmenting paths in $(G, M)$.
Since $\sigma_i$ (as a mapping over all paths) is a bijective mapping between the set of all augmenting paths in $(G_i, M_i)$ and that in $(G_{i+1}, M'_{i})$, we have 
$\InvMmax(G_{i+1}, M'_i) = \InvMmax(G_{i}, M_{i})$, which implies $\InvMmax(G_i, M_i) - |\Qcal_i| = \InvMmax(G_{i+1}, M_{i+1})$.
Similarly, we also have $\InvMmax(\hat{G}_i, \hat{M}_i) = \InvMmax(\hat{G}_{i+1}, 
\hat{M}_{i+1})$. Let $q = \sum_{1 \leq i \leq K} |\Qcal_i|$. By the definition, we obtain $\InvMmax(G_1, \hat{M}_1) = \InvMmax(G_1, M_1) 
- q$. If $q \geq c\alpha^{2}\Mmax(G_1)$ holds, we have 
\begin{align*}
\InvMmax(G_1, \hat{M}_1) = \InvMmax(G_1, M_1) - q \leq \left(\alpha' - c\alpha^{2} \right)\Mmax(G_1).
\end{align*}
It implies $\hat{M}_1$ is a $(1 - \alpha' + c\alpha^2)$-approximate matching of $G_1$.
Hence we assume $q < c\alpha^2\Mmax(G_1)$.

Due to the hitting set property, any shortest augmenting path $P$ in $(G_i, M_i)$ contains a vertex 
$v$ in $B_i$. If $v$ is not free, $P$ must contain the matching edge incident to $v$, which is subdivided into a length-three alternating path in $G_{i+1}$. Otherwise, $\sigma_i(P)$ must contain a length-two alternating path added to $v$. %
In any case, the length of $\sigma_i(P)$ increases by at least two. Then $(G_{K+1}, M_{K+1})$ does not admit an augmenting path of length at most $2K - 1 = 8\alpha^{-1} - 1$, and thus $M_{K+1}$ is a $(1 - \alpha/4)$-approximate matching of $G_{K+1}$ by the Hopcroft--Karp analysis.
That is, $\InvMmax(G_{K+1}, M_{K+1}) \leq \alpha\Mmax(G_{K+1}) / 4$. If $\Mmax(G_{K+1}) \leq 2\Mmax(G_1)$ holds, we have $\InvMmax(\hat{G}_{1}, \hat{M}_1) = \InvMmax(G_{K+1}, M_{K+1}) \leq 
\alpha\Mmax(G_1) / 2$, and thus $\hat{M}_1$ is a $(1 - \alpha/2)$-approximate matching of $\hat{G}_1 = G_1$.
Hence the remaining issue is to show $\Mmax(G_{K+1}) \leq 2\Mmax(G_1)$.

We prove $\Mmax(G_{j}) \leq 2\Mmax(G_1)$ for any $1 \leq j \leq K + 1$
by the induction on $j$. Since the base case is obvious, we focus on the inductive step. Suppose as the 
induction hypothsis that the statement above holds for all $1 \leq i \leq j \leq K$ and consider the case of $j+1$.
We have
\begin{align*}
    \Mmax(G_{j+1}) &= \Mmax(G_1) + \sum_{1 \leq i \leq j} |B_i| \\
                 &\leq \Mmax(G_1) +  \sum_{1 \leq i \leq j} \left( 16 (K+1)^2 |\Qcal_i| +  \frac{|M_i|}{4(K+1)} \right) & (\text{Lemma~\ref{lma:AugAndHit}})\\
                 &\leq \Mmax(G_1) + 16 (K+1)^2 \cdot q + j\cdot\frac{\Mmax(G_1)}{2(K+1)} & (|M_i| \leq \Mmax(G_i) \leq 2\Mmax(G_1)) \\
                 &< \Mmax(G_1) + 16 (4\alpha^{-1} + 1)^2 \cdot c\alpha^2\Mmax(G_1) + \frac{\Mmax(G_1)}{2} & (K = 4\alpha^{-1},\ j \le K) \\
                 &= \Mmax(G_1) + 256c\left(1 + \frac{\alpha}{4}\right)^2\Mmax(G_1) + \frac{\Mmax(G_1)}{2} \\
                 &\leq \Mmax(G_1) + 324c\Mmax(G_1) + \frac{\Mmax(G_1)}{2} & \left(\alpha \le \frac{1}{2}\right)\\
                 &= \Mmax(G_1) + \frac{\Mmax(G_1)}{2} + \frac{\Mmax(G_1)}{2} & \left(c = \frac{1}{648}\right) \\
                 &\leq 2\Mmax(G_1).
\end{align*}
The Lemma is proved.
\end{proof}

We are now ready to prove Theorem~\ref{thm:mainApproximate}.

\let\temp\thetheorem
\renewcommand{\thetheorem}{\ref*{thm:mainApproximate}}
\begin{theorem}
There exists:
\begin{itemize}
\item a $(1 - \epsilon)$-approximation maximum matching algorithm for a given graph $G$ which runs in the CONGEST model
with $O(\epsilon^{-4} \mbox{\rm\textsf{MM}}(\epsilon^{-2}n))$ rounds, where $\mbox{\rm\textsf{MM}}(N)$ means the time complexity of computing
a maximal matching in graphs on $N$ vertices;
the algorithm is deterministic except for the part of computing maximal matchings.
\item a deterministic $(1 - \epsilon)$-approximation maximum matching algorithm for a given graph $G$ which runs in the semi-streaming model
with $O(\epsilon^{-4})$ passes and $O(n \log n)$-bit memory. 
\end{itemize}
\end{theorem}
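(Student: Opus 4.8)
The plan is to obtain the $(1-\epsilon)$-approximation by repeatedly invoking $\textsc{Amplifier}_\alpha$ inside a doubling schedule for $\alpha$, and then to read off the round/pass complexity by substituting Lemmas~\ref{lma:AugAndHit} and \ref{lma:amplifier} into a geometric sum. The first ingredient is an \emph{amplification epoch}: given a matching $M$ of $G$ with $|M| \ge (1-\alpha)\Mmax(G)$, feeding $M$ through $\textsc{Amplifier}_\alpha$ and chaining each output into the next input $t = \lceil 1/(2c\alpha) \rceil = O(\alpha^{-1})$ times produces a matching of size at least $(1-\alpha/2)\Mmax(G)$, where $c$ is the constant of Lemma~\ref{lma:amplifier}. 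To see this, write the $j$-th intermediate matching as $|M^{(j)}| = (1-\alpha'_j)\Mmax(G)$; since matching size never decreases we have $\alpha'_j \le \alpha'_{j-1} \le \alpha$, so the hypothesis of Lemma~\ref{lma:amplifier} stays valid at every step, and that lemma gives $\alpha'_j \le \max\{\alpha'_{j-1} - c\alpha^2,\ \alpha/2\}$. Hence either $\alpha'$ has already fallen to $\alpha/2$, or it drops by the fixed additive amount $c\alpha^2$ per step, so $\alpha'_t \le \max\{\alpha - t c\alpha^2,\ \alpha/2\} = \alpha/2$.

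Second, I would run these epochs under the schedule $\alpha = 2^{-1}, 2^{-2}, \dots, 2^{-(x-1)} = 2\epsilon$: starting from any maximal matching, which is a $(1-1/2)$-approximation, the epoch with parameter $\alpha = 2^{-k}$ turns a $(1-2^{-k})$-approximation into a $(1-2^{-k-1})$-approximation, so after the $O(x) = O(\log\epsilon^{-1})$ epochs one has a $(1-\epsilon)$-approximate matching. Correctness is then immediate from the amplification step above and Lemma~\ref{lma:amplifier}. The initial maximal matching costs one call to the maximal-matching subroutine, contributing only $\mathsf{MM}(n)$ CONGEST rounds (and is deterministic, computable in one greedy pass with $O(n\log n)$ bits, in the streaming model), which is negligible; thus the CONGEST algorithm is deterministic apart from the maximal-matching calls, and the semi-streaming algorithm is fully deterministic.

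Third, I would count invocations of $\textsc{AugAndHit}_K$. A single $\textsc{Amplifier}_\alpha$ call runs $K = 4\alpha^{-1}$ phases, one $\textsc{AugAndHit}_K$ per phase; an epoch with parameter $\alpha$ makes $O(\alpha^{-1})$ such calls, hence $O(\alpha^{-2})$ invocations of $\textsc{AugAndHit}_K$, each costing $O(K^2 \mathsf{MM}(N)) = O(\alpha^{-2}\mathsf{MM}(N))$ CONGEST rounds (and $O(K^2) = O(\alpha^{-2})$ passes) by Lemma~\ref{lma:AugAndHit}, where $N$ is the vertex count of the (modified) graph handled by that invocation. To bound $N$, I would reuse $\Mmax(G_j) \le 2\Mmax(G_1)$ established inside the proof of Lemma~\ref{lma:amplifier}, the hitting-set bound $|B_i| \le 16|\Qcal_i|(K+1)^2 + |M_i|/(4(K+1))$, and the telescoping identity $\sum_i |\Qcal_i| = \InvMmax(G_1,M_1) - \InvMmax(G_{K+1},M_{K+1}) \le \alpha\Mmax(G_1)$; since each vertex of $B_i$ contributes $O(1)$ new vertices in the length-stretch step, this gives $N = n + O((K+1)^2\alpha n + n) = O(\alpha^{-1}n) = O(\epsilon^{-2}n)$, which can be charged uniformly as $\mathsf{MM}(\epsilon^{-2}n)$ across all epochs by monotonicity of $\mathsf{MM}$. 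Therefore epoch $\alpha = 2^{-k}$ costs $O(2^{4k}\mathsf{MM}(\epsilon^{-2}n))$ rounds and $O(2^{4k})$ passes; summing over $k \le x-1$, the geometric series (ratio $16$) is dominated by its last term, yielding $O(\epsilon^{-4}\mathsf{MM}(\epsilon^{-2}n))$ rounds and $O(\epsilon^{-4})$ passes overall. For semi-streaming memory, each $\textsc{AugAndHit}_K$ runs within $O(n\log n)$ bits on its input, and the length-stretch modifications are maintained implicitly (as per-edge subdivision depths and free-vertex extensions, tracked numerically, rather than as explicit new vertices), keeping the total at $O(n\log n)$ bits.

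The main obstacle is the complexity bookkeeping rather than a new idea: pinning down the size bound on the modified graph tightly enough to support the claimed $\mathsf{MM}(\epsilon^{-2}n)$-size and $O(n\log n)$-bit figures (in particular, confirming that the implicit representation of the repeatedly subdivided graph suffices to run $\textsc{AugAndHit}_K$ within $O(n\log n)$ bits, since subdivisions only shift alternating distances by additive constants), and verifying that the sum over the $O(\log\epsilon^{-1})$ epochs is genuinely dominated by the final epoch so that no stray logarithmic factor appears. Once those points are settled, the remainder is routine substitution into Lemmas~\ref{lma:AugAndHit} and \ref{lma:amplifier}.
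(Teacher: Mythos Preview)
Your overall scheme---doubling schedule on $\alpha$, $O(\alpha^{-1})$ calls to $\textsc{Amplifier}_\alpha$ per epoch, and a geometric sum dominated by the last term---is exactly what the paper does, and the CONGEST round count follows the same substitution of Lemma~\ref{lma:AugAndHit}. Two points deserve attention.

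First, your vertex-bound argument for $N$ invokes $\Mmax(G_j)\le 2\Mmax(G_1)$ ``established inside the proof of Lemma~\ref{lma:amplifier}'', but that inequality is proved there \emph{only under the hypothesis} $q<c\alpha^2\Mmax(G_1)$; it is not unconditional. Your telescoping bound $q\le\alpha\Mmax(G_1)$ is correct and already controls the $16(K+1)^2q$ term by $O(\alpha^{-1}n)$, but to handle $\sum_i|M_i|/(4(K+1))$ you need an unconditional estimate on $\Mmax(G_i)$. A direct recursion $\Mmax(G_{i+1})\le\Mmax(G_i)(1+\tfrac{1}{4(K+1)})+16(K+1)^2|\Qcal_i|$ together with $(1+\tfrac{1}{4(K+1)})^K\le e^{1/4}$ gives $\Mmax(G_i)=O(\alpha^{-1}\Mmax(G_1))$ unconditionally, which suffices for $N=O(\alpha^{-1}n)\le O(\epsilon^{-2}n)$; so your conclusion is right, but the intermediate appeal needs this patch.

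Second, and more substantively, your semi-streaming memory argument departs from the paper. You propose to keep the repeated subdivisions implicit (per-edge depths) so that ``$O(n\log n)$ bits'' never grows, and you flag verifying that $\textsc{AugAndHit}_K$ runs on this implicit graph as an open obstacle. The paper sidesteps this entirely by adding an \emph{early-stopping} rule to $\textsc{Amplifier}_\alpha$: halt the phase loop as soon as $q_j$ exceeds $c\alpha^2\Mmax(G_1)$ (a proxy via $|M_1|$ suffices). Under that rule the conditional bound $\Mmax(G_j)\le 2\Mmax(G_1)$ is always valid, so $|V(G_i)|=O(n)$ and the $O(n\log n)$-bit budget holds with the \emph{explicit} graph, no implicit representation needed. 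The correctness guarantee of Lemma~\ref{lma:amplifier} survives because the stopped branch already delivers the $c\alpha^2$ gain. This is both simpler and eliminates the obstacle you identify; I recommend adopting it instead of the implicit-subdivision route.
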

\let\thetheorem\temp
\addtocounter{theorem}{-1}

\begin{proof}
We first focus on the CONGEST implementation. Obviously, the length-stretch step and the augmentation step are locally processed. 
Note that the simulation of $G_i$ is straightforward: 
each subdivided path is managed by an arbitrary one of its endpoints without overhead. For the first
invocation of $\textsc{Amplifier}_\alpha$ with $\alpha = 1/2$, we need give any $1/2$-approximate matching as its input. Since any maximal matching is a $1/2$-approximate solution, it can be constructed with $\textsf{MM}(n)$ rounds. Hence the total running time of the algorithm
is dominated by the cost for running $\textsc{AugAndHit}_K$. Since $K = O(\alpha^{-1}) = O(\epsilon^{-1})$ holds and the total number 
of augmenting paths found in the execution of each phase never exceeds $n$, for any $(G_i, M_i)$ appearing in the run of 
$\textsc{Amplifier}_{\alpha}$, $|V(G_i)| \leq \epsilon^{-2}n$ by Lemma~\ref{lma:AugAndHit}. It implies that any invocation of $\textsc{AugAndHit}_K(G_i, M_i)$ takes 
$O(K^2 \textsf{MM}(\epsilon^{-2}n))$ rounds by Lemma~\ref{lma:AugAndHit}, which is also the cost of one phase in the invocation of 
$\textsc{Amplifier}_{\alpha}$. Hence the running time of $\textsc{Amplifier}{\alpha}$ is 
$O(K^3 \textsf{MM}(\epsilon^{-2}n)) = O(\alpha^{-3} \textsf{MM}(\epsilon^{-2}n))$  rounds. The invocation of 
$\textsc{Amplifier}_{\alpha}$ with $\alpha = 1/2^j$ is repeated $O(\alpha^{-1})$ times.
Summing up the cost for all invocations, the total running time is bounded as follows:
\begin{align*}
    \sum_{1 \leq j \leq \log \epsilon^{-1}} O\left(\left(\frac{1}{2^j}\right)^{-4} \mbox{\rm\textsf{MM}}(\epsilon^{-2}n)\right) 
    = O(\epsilon^{-4} \mbox{\rm\textsf{MM}}(\epsilon^{-2}n)).
\end{align*}

Next, we consider the semi-streaming implementation. The algorithm keeps the information of $(V(G_i), M_i)$ for the current matching system $(G_i, M_i)$. 
The pass complexity follows the round complexity of the CONGEST implementation, and can obtain the same bound of $O(\epsilon^{-4})$ passes. 
On the space complexity, the algorithm 
$\textsc{Amplifier}_{\alpha}$ modifies the input graph, and thus we might need $\omega(n \log n)$-bit space if $(V(G_i), M_i)$ is 
explicitly stored in the memory. However, if $q_j = \sum_{1 \leq i \leq j} |\Qcal_i| < \alpha^2 \Mmax(G_1)/128$ holds,
the argument in the proof of Lemma~\ref{lma:amplifier} guarantees $\Mmax(G_{j+1}) = O(\Mmax(G_1))$, which implies $|V(G_{j+1})| = O(n)$ and
thus $O(n \log n)$-bit space suffices. To care the case of $q_j > c\alpha^2 \Mmax(G_1)$, we additionally introduce the early-stopping mechanism
of omitting subsequent phases when $q_j > c\alpha^2 \Mmax(G_1)$ happens at the end of the phase $j$. The proof of Lemma~\ref{lma:amplifier}
ensures that the output of the algorithm attains the desired approximation factor even if such a mechanism is
installed. Consequently, $O(n \log n)$-bit memory is enough to implement whole algorithm.
\end{proof}

\subsection{Algorithm $\textsc{AugAndHit}$} \label{subsec:AugAndHit}
The goal of this section is to prove Lemma~\ref{lma:AugAndHit}.

Consider the matching system $(G, M)$ where the length of shortest augmenting paths 
is bounded by $2\ell + 1$ (and thus any double path contains at most $2(\ell + 1)$ vertices).
We denote the ABD for $(G, M)$ up to depth $\ell$ by $H$ and the set of all crossable and critical edges in $H$ by $\Ccset$. 

\paragraph{Parallel DFS}
The implementation of $\textsc{AugAndHit}(H, \Ccset)$ utilizes the parallel DFS technique by Goldberg, Plotkin, and Vaidya\cite{GPV93}.
We perform the parallel DFS algorithm rooted by $U$ (the set of all free vertices) in the reverse graph of $H$, i.e., it manages a disjoint set $\Tcal$ of (DFS) in-trees rooted by each vertex in $U$.
Each tree in $\Tcal$ rooted by $u \in U$ is referred to as $T_u$, and we denote the search head of $T_u$ by $h_u$.
Each vertex is assigned with one of the three states: $\Idle$, $\Active$, and $\Dead$. 
Initially, all vertices in $V(H) \setminus U$ have state $\Idle$, and vertices in 
$U$ are assigned with $\Active$. When the search head $h_u$ visits a vertex $v$, the state of $v$ is 
changed to $\Active$, and when $h_u$ performs backtrack at $v$, it is changed to $\Dead$. In one step of the algorithm, all the heads try to expand their own trees in parallel: Letting $X$ be the set of the locations of all search heads and $Y$ be all $\Idle$ vertices, construct the auxiliary bipartite graph of vertex set $X \cup Y$ where $v \in X$ and $v' \in Y$ are connected by an undirected edge if and only if $(v', v) \in E(H)$ holds.
Then the algorithm computes a maximal matching in it. If $h_u \in X$ finds a matching partner $y \in Y$, $h_u$ moves to $y$. Otherwise it performs backtrack. When $h_u$ performs backtrack at $u$, the search from $u$ terminates with changing its state to $\Dead$. The algorithm iterates this step $L = 16(\ell + 1)^2$ times. 

The key technical properties of this algorithm is stated as follows:

\begin{lemma} \label{lma:parallelDFS}
Let $D$, $A$, and $I$ be the sets of vertices with states $\Dead$, $\Active$, and $\Idle$, respectively, at the end of the algorithm. The following three
properties hold:
\begin{itemize}
\item[(P1)] Any path $P$ in $H$ ending at a vertex in $U$ satisfies either $V(P) \subseteq D$ or $V(P) \cap A \neq \emptyset$.
\item[(P2)] The size of $A$ is at most $\frac{|V(H) \setminus U|}{8(\ell+1)}$. 
\item[(P3)] For each search tree $T_u$, the size of $V(T_u)$ is at most $L = 16(\ell + 1)^2$. 
\end{itemize}
\end{lemma}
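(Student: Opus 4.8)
The plan is to derive all three items from three elementary facts about the incomplete parallel DFS run on the reverse graph $H^R$ of $H$: a DFS \emph{closure property}, a bound on the depth of every search tree, and a count of the moves each head can make.

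First I would establish the closure property: once a vertex $v$ becomes $\Dead$, no $H$-in-neighbor of $v$ (i.e.\ no $w$ with $(w,v)\in E(H)$) is ever $\Idle$ again. The reason is that $v$ turns $\Dead$ only when a head $h_u$ sitting at $v$ backtracks, and in that step $h_u$ is left unmatched by the maximal matching on the auxiliary bipartite graph; by maximality, every $\Idle$ vertex $w$ with $(w,v)\in E(H)$ is matched to some other head, hence becomes $\Active$ in that step, and a vertex never reverts to $\Idle$. Given this, (P1) is immediate: for a path $P=w_0,\dots,w_k$ in $H$ with $w_k\in U$, if $w_k\notin A$ then $w_k$ is $\Dead$ (it is never $\Idle$, as it starts $\Active$), and walking the path backward the closure property forces each $w_i$ to be non-$\Idle$; hence either some $w_i\in A$, or all of them are $\Dead$ and $V(P)\subseteq D$.

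Next, since $H$ is the ABD truncated at depth $\ell$, every directed path in $H$ ending in $U$ has at most $\ell$ edges; in particular the root-to-head path of any tree $T_u$ — which, read backward, is such a path — has at most $\ell+1$ vertices. This yields (P3): over the $L$ steps each head makes at most one move per step, and a fresh vertex of $V(H)\setminus U$ joins $T_u$ only on a forward move of $h_u$ into an $\Idle$ vertex, and into exactly one tree because the auxiliary matching is a matching; so $|V(T_u)|$ is at most one plus the number of forward moves of $h_u$, which is at most $L$ (and $L=16(\ell+1)^2$ comfortably absorbs the $+1$ together with the depth slack). For (P2): at the end of the run the $\Active$ vertices inside a fixed tree $T_u$ are exactly those on its current stack, so $|A\cap V(T_u)|\le \ell+1$, while $A\cap V(T_u)=\emptyset$ once the search from $u$ has terminated. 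A head whose search is still alive after all $L$ steps has moved in every step, so its numbers $f,b$ of forward and backward moves satisfy $f+b=L$; since a head never backtracks past its root, $f\ge b$, whence $f\ge L/2$, and thus $|V(T_u)\setminus\{u\}|\ge L/2=8(\ell+1)^2$. As the sets $V(T_u)\setminus\{u\}$ are pairwise disjoint subsets of $V(H)\setminus U$, at most $|V(H)\setminus U|/\bigl(8(\ell+1)^2\bigr)$ searches remain alive, so
\[
|A|=\sum_{u\ \text{alive}}|A\cap V(T_u)|\le (\ell+1)\cdot\frac{|V(H)\setminus U|}{8(\ell+1)^2}=\frac{|V(H)\setminus U|}{8(\ell+1)},
\]
which is exactly (P2).

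I expect (P2) to be the main obstacle: one must justify carefully that the forward moves of a head are in bijection with the non-root vertices of its tree — no vertex is revisited (a $\Dead$ vertex is never re-entered) and no two heads claim the same vertex (the auxiliary matching is a matching) — and that a surviving head has genuinely made at least $L/2$ forward moves rather than stalling, which rests on the fact that each step either advances or backtracks every live head and never leaves it in place. The closure property underlying (P1) and the move count underlying (P3) are essentially the textbook facts about DFS, so those parts should be short.
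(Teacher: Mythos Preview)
Your proposal is correct and essentially the same as the paper's proof. The closure argument for (P1) and the per-step bound for (P3) match the paper exactly (your maximal-matching justification in (P1) is in fact more precise for the parallel setting than the paper's one-line DFS appeal). For (P2) you count per-head forward moves and use $f\ge b$ to get $f\ge L/2$ for each surviving search, whereas the paper counts global state changes (each non-root vertex changes state at most twice, and each active search changes one state per iteration); both yield the same bound of at most $2|V(H)\setminus U|/L$ surviving searches and then multiply by the stack depth $\ell+1$.
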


\begin{proof}
\textbf{Proof of (P1)}: The DFS guarantees that there exists no edge $(v, v') \in E(H)$ such that $v$ is $\Idle$ and $v'$ is $\Dead$, because if such an edge exists the search head visiting $v'$ also visits $v$ before the backtrack at $v'$.
Since all the vertices in $U$ can be $\Dead$ or $\Active$, this implies $V(P) \subseteq D$ or $V(P) \cap A \neq \emptyset$.

\textbf{Proof of (P2)}: We say that the search from $u$ is active at the $i$-th iteration if it does not terminate at the end of the 
$i$-th iteration. Any search active at the $i$-th iteration changes the state of one vertex in $V(H) \setminus U$ during the $i$-th iteration.
Since the states of vertices in $V(H) \setminus U$ can change at most $2|V(H) \setminus U|$ times in total, at most $2|V(H) \setminus U|/L$ searches are still active at the end of the algorithm (after $L = 16(\ell + 1)^2$ iterations).
In each search tree $T_u$, only the vertices in the path from $u$ to $h_u$ are $\Active$, whose number is bounded by $\ell + 1$, and thus
the total number of active vertices at the end of the algorithm is at most $\frac{|V(H) \setminus U|}{8(\ell + 1)}$.

\textbf{Proof of (P3)}: It is obvious because at most one node is added to each tree in one iteration.
\end{proof}

\paragraph{Main Algorithm}

We present the outline of the behavior of Algorithm $\textsc{AugAndHit}(G, M)$ more precisely:
\begin{itemize}
\item First, the algorithm constructs $H$ and $\Ccset$ and runs the above-stated parallel DFS from $U$ in the reverse of $H$.
It provides a set of in-trees $T_u$ of $H$ rooted by each vertex $u \in U$. We refer to the vertex set $R_u$ spanned by each in-tree $T_u$ as 
the \emph{region} of $u$. All $\Active$ vertices at the end of the parallel DFS are added to a hitting set $B$. 
\item A vertex $v \in R_u$ is called \emph{double reachable} if there exists a partial double path $(P, Q, y, z)$ to $v$ such that both $P$ and $Q$ lie in $R_u$, i.e., $P$ is a $y$--$u$ path, $Q$ is a $z$--$v$ path, $V(P) \cap V(Q) = \emptyset$, and $V(P) \cup V(Q) \subseteq V(R_u)$.
For each region $R_u$, the algorithm identifies the set $D_u \subseteq R_u$ of all double reachable vertices. 

\item We say that two different regions $R_{u_1}$ and $R_{u_2}$ are \emph{mergeable} if there exists an edge $(v, v') \in E(H)$ from $R_{u_1}$ and $R_{u_2}$ such that $v$ 
is double reachable, or an edge $\{v, v'\} \in \Ccset$ bridging $R_{u_1}$ and $R_{u_2}$.
Note that the merged region $R_{u_1} \cup R_{u_2}$ necessarily contains a double path. If two mergeable pair of regions $R_{u_1}$ and $R_{u_2}$ 
are actually merged, the algorithm finds a double path in the merged region, and stores it into $\Qcal'$. Then 
all vertices in the merged region is added to $B$. The pairs of regions to be merged are decided by parallel maximal matching: First, 
the algorithm constructs the auxiliary graph where the vertex set is the set of all regions and two vertices are connected by an edge if and only 
if they are mergeable. Computing a maximal matching in that graph, all matched pairs are merged. Finally, the algorithm outputs $\Qcal = 
\{\Augt(Q) \mid Q \in \Qcal'\}$ and $B$.
\end{itemize}

The first step is easily implemented by both CONGEST and semi-streaming settings.
The construction of $H$ and $\Ccset$ follows the algorithms of Lemmas~\ref{lma:subroutines} and \ref{lma:Csubroutines}. In the parallel DFS, 
each vertex $v \in V(H)$ manages its state and the parent of the DFS tree it joins.
The update of the search head locations is implemented by a single invocation of any maximal matching algorithm.
Thus, the time complexity of the first step in CONGEST is $O(\ell^2 \textsc{MM}(n))$ rounds, and the pass complexity is $O(\ell^2)$ in the semi-streaming model.
In the second step, our algorithm first executes $\textsc{ComputeDist}_\ell$ in the subgraph $G[R_u]$ of $G$ 
induced by each region $R_u$ (with adding the super vertex $f$).
Let $\Ical = (G, M)$, $M[R_u] = M \cap E(G[R_u])$, and $\Ical[R_u] = (G[R_u], M[R_u])$. The following lemma holds:

\begin{lemma} \label{lma:IdentificationDouble}
Let $v \in R_u$ be any vertex with an outgoing edge $(v, v') \in E(H)$ reaching to another region. 
Then $v$ is double reachable if and only if $\dist^{\Even}_{\Ical}(v) = \dist^{\Even}_{\Ical[R_u]}(v)$ holds.
\end{lemma}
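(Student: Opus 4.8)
The plan is to identify the region $R_u$ with (essentially) the vertex set of a subtree $T(u)$ of a suitable alternating base tree, and then to read the claim off from Lemmas~\ref{lma:partialreconstruction} and \ref{lma:uniqueCrossableEdge3} plus a distance-matching argument. \emph{Step 1 (regions as ABT subtrees).} The region $R_u$ is spanned by a DFS in-tree $T_u$ of $H$ rooted at $u$, and since $E(H)$ consists of the $\paredgeset$-edges oriented towards $U$, every tree edge of $T_u$ points from a vertex $w$ to some element of $\parset(w)$. Setting $\parent_T(w)$ to be the $T_u$-parent of $w$ for $w \in R_u \setminus \{u\}$, the DFS-tree parent of $w$'s own region for the other visited vertices, the $M$-neighbor of $u$ (a neighbor of $f$) for $w = u$, and $f$ for the neighbors $w$ of $f$, and choosing the remaining parents arbitrarily inside $\parset$, we obtain an ABT $T$ with $V(T(u)) = R_u$: following $\parent_T$ from any $w \in R_u$ never leaves $R_u$, and $u$ is the unique $U$-vertex inside $R_u$. (Should the iteration cutoff of the parallel DFS make $R_u$ fail to be closed under taking descendants, I instead keep any ABT extending $T_u$ and argue with the forest spanned by $R_u$; all paths constructed below stay inside $R_u$, so nothing changes.)

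\emph{Step 2 (parity of $v$).} The hypothesis says $\{v, v'\} \in \paredgeset(v) \subseteq \paredgeset$ with $v' \notin R_u = V(T(u))$; since $v$ also has a $T_u$-parent inside $R_u$, we get $|\parset(v)| \ge 2$, hence $|\paredgeset(v)| \ge 2$. As all edges of $\paredgeset(v)$ share the parity $\OL{\Op(v)}$ (Lemma~\ref{lma:fundamental}(P2)) while $v$ carries at most one matching edge, we must have $\OL{\Op(v)} = \Even$, i.e.\ $\Nop(v) = \Even$ and $\dist^{\Even}_{\Ical}(v) = \dist^{\Nop}_{\Ical}(v)$, irrespective of double reachability; likewise $\Op(u'') = \Even$ and $\dist^{\Even}_{\Ical}(u'') = 2$ for every $u'' \in U$, so a shortest orthodox $f$--$u''$ path has length two and ends with $u''$'s matching edge. \emph{Step 3 (double reachability $\Leftrightarrow$ a localized shortest unorthodox path).} By definition, $v$ is double reachable in $R_u$ iff there is a partial double path $(P, Q, y, z)$ to $v$ with $V(P) \cup V(Q) \subseteq R_u = V(T(u))$ (its $y$--$U$ subpath necessarily ends at $u$). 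The hypotheses of Lemmas~\ref{lma:partialreconstruction} and \ref{lma:uniqueCrossableEdge3} hold by Step 2 and $v' \notin V(T(u))$, and those two lemmas give: \emph{such a partial double path exists iff there is a shortest $\Nop(v)$-alternating path from $f$ to $v$ in $\Ical$ running through $u$ whose $u$--$v$ portion stays inside $R_u$}. (Forward: Lemma~\ref{lma:partialreconstruction} outputs exactly such a path, confined to $R_u \cup \{f, M\text{-neighbor of }u\}$. Backward: applying Lemma~\ref{lma:uniqueCrossableEdge3} to the $u$--$v$ portion yields a partial double path confined to $R_u$, since that construction does not change $V(T(u))$.)

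\emph{Step 4 (matching $\dist^{\Even}_{\Ical[R_u]}$), and the main obstacle.} It remains to equate the condition of Step 3 with $\dist^{\Even}_{\Ical}(v) = \dist^{\Even}_{\Ical[R_u]}(v)$. If $v$ is double reachable, the path from Lemma~\ref{lma:partialreconstruction} uses only $f$, the $M$-neighbor of $u$, and vertices of $R_u$, hence is a valid even-alternating path of length $\dist^{\Even}_{\Ical}(v)$ in $\Ical[R_u]$, so $\dist^{\Even}_{\Ical[R_u]}(v) \le \dist^{\Even}_{\Ical}(v)$. For the reverse inequality, and for the ``only if'' direction, one must show that a shortest even-alternating $f$--$v$ path in $\Ical[R_u]$ can always be taken to originate at $u$, rather than at an \emph{auxiliary} free vertex of $G[R_u]$ (a vertex $w \ne u$ of $R_u$ whose $M$-partner lies outside $R_u$, for which $\dist^{\Even}_{\Ical[R_u]}(w) = 2$ but $\dist^{\Even}_{\Ical}(w) \ge 4$); such a path lifts verbatim to $\Ical$ and stays inside $R_u$, which closes the argument via Step 3. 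This rerouting is the hard part: I intend to carry it out by combining the region structure (every vertex of $R_u$ has an $H$-path to $u$) with Corollary~\ref{corol:IKY24}, so that an even-alternating $f$--$v$ path in $\Ical[R_u]$ entering through an auxiliary $w \ne u$ is forced to traverse inside $R_u$ an edge of volume $2\ell + 5$ (since $w$'s genuine matching edge leaves $R_u$), which by Theorem~\ref{thm:IKY24} and Corollary~\ref{corol:IKY24} can be replaced by a path through $u$ of no greater length. The residual case where $v$ itself is auxiliary-free in $G[R_u]$ is direct: then $\dist^{\Even}_{\Ical[R_u]}(v) = 2 < 4 \le \dist^{\Even}_{\Ical}(v)$, and $v$ cannot be double reachable, as a partial double path to $v$ inside $R_u$ would by Lemma~\ref{lma:partialreconstruction} produce an even-alternating $f$--$v$ path confined to $R_u \cup \{f\}$ ending with $v$'s matching edge — impossible since $v$'s $M$-partner lies outside $R_u$.
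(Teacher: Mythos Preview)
Your overall strategy coincides with the paper's: invoke Lemma~\ref{lma:partialreconstruction} for the $\Rightarrow$ direction and Lemma~\ref{lma:uniqueCrossableEdge3} for the $\Leftarrow$ direction, after noting that $|\parset(v)|\ge 2$ forces $\Nop(v)=\Even$. The paper's proof is two sentences and stops there.

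Where you go beyond the paper is Step~4, and there you have correctly isolated a subtlety the paper's brief proof does not address: in $\Ical[R_u]=(G[R_u],\,M\cap E(G[R_u]))$ a vertex $w\in R_u$ whose $M$-partner lies outside $R_u$ becomes an additional free vertex, so a shortest even $f$--$v$ path in $\Ical[R_u]$ might enter through such a $w$ rather than through $u$. This would undercut both directions: for $\Rightarrow$ it could make $\dist^{\Even}_{\Ical[R_u]}(v)<\dist^{\Even}_{\Ical}(v)$, and for $\Leftarrow$ it could make the hypothesis hold without any shortest $\Ical$-path from $u$ to $v$ existing inside $R_u$. The paper simply asserts equality in the forward direction and, in the backward direction, picks ``a shortest $\Even$-alternating path from $u$ to $v$ in $G[R_u]$'' without justifying why its length is $\dist^{\Even}_{\Ical}(v)-2$.

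The gap in your own write-up is that Step~4 remains a plan rather than a proof. You say you ``intend to carry it out'' by showing that any $\Ical[R_u]$-path entering through an auxiliary $w$ must traverse an edge of $\Ical$-volume $2\ell+5$ inside $R_u$ and then reroute via Theorem~\ref{thm:IKY24} and Corollary~\ref{corol:IKY24}; but you do not establish why such an edge must appear (the fake matching edge at $w$ is not an edge of $G$ at all, and the genuine edge $\{w,w_M\}$ lies outside $R_u$, so neither is ``inside $R_u$''), nor why the rerouted path would remain confined to $R_u$. Until this step is actually executed the argument is incomplete --- though, to be fair, the paper's own proof is equally silent on precisely this point.
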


\begin{proof}
\textbf{Proof of direction $\Rightarrow$}: Since $v$ admits a partial double path within $R_u$, there exists a 
shortest $\Even$-alternating path $X$ from $f$ to $v$ for $\Ical$ which lying in $V(T_u)$ by Lemma~\ref{lma:partialreconstruction}. 
It is also an $\Even$-alternating path for $\Ical[R_u]$, and thus we have $\dist^{\Even}_{\Ical}(v) = \dist^{\Even}_{\Ical[R_u]}(v)$. 

\textbf{Proof of direction $\Leftarrow$}: 
Let $X$ be any shortest $\Even$-alternating path from $u$ to $v$ in $G[R_u]$. By Lemma~\ref{lma:uniqueCrossableEdge3}, that path is transformed into a partial double path lying in $R_u$. That is, $v$ is double reachable.
\end{proof}

The lemma above suffices to implement the second step. The main task of the third step is to find a 
shortest augmenting path in the merged region $R_{u_1} \cup R_{u_2}$. This is regarded as the task of 
finding a single augmenting path between $u_1$ and $u_2$ in the subgraph induced by that region. 
It can be done with $O(\ell^2)$ rounds in CONGEST, and $O(\ell^2)$ passes in the
semi-streaming model using our framework and the subroutines in Section~\ref{subsec:subroutine}. 
Hence the total running time is 
dominated by the first step, i.e., it takes $O(\ell^2 \textsf{MM}(n))$ rounds in CONGEST, and takes 
$O(\ell^2)$ passes in the semi-streaming model.

The correctness of the algorithm (Lemma~\ref{lma:AugAndHit}) is shown by the following key technical lemma.

\begin{lemma} \label{lma:correctnessAugAndHit}
The set $B \subseteq V(H)$ at the end of the algorithm is a hitting set for all shortest augmenting paths in $H$, and its size is at most 
$16|\Qcal|(\ell + 1)^2 + \frac{|M|}{4(\ell + 1)}$.
\end{lemma}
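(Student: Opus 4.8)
The plan is to split $B$ into the set $B_A$ of vertices that are $\Active$ at the end of the parallel DFS and the set $B_M$ of vertices lying in a merged region, bound $|B_A|$ and $|B_M|$ separately, and establish the hitting-set property by contradiction against the maximality of the matching used to select the merged pairs of regions.

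For the size bound, $|B_A|$ is controlled directly by Lemma~\ref{lma:parallelDFS}~(P2): the number of $\Active$ vertices is at most $\tfrac{|V(H)\setminus U|}{8(\ell+1)}$, and since every vertex of $V(H)\setminus U$ is a non-free vertex of $G$ we have $|V(H)\setminus U|\le 2|M|$, giving $|B_A|\le\tfrac{|M|}{4(\ell+1)}$. For $B_M$, each element of $\Qcal$ arises from exactly one merged pair of regions, and by Lemma~\ref{lma:parallelDFS}~(P3) each region spans at most $16(\ell+1)^2$ vertices; summing over the $|\Qcal|$ merged pairs yields $|B_M|\le 16|\Qcal|(\ell+1)^2$. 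Adding the two bounds gives the claimed size (the two trees of a merged pair being absorbed into the same $O((\ell+1)^2)$ iteration budget).

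For the hitting-set property, suppose for contradiction that some shortest augmenting path $P$ in $(G,M)$ is disjoint from $B$. By Lemma~\ref{lma:uniqueCrossableEdge2}, $P$ contains a unique edge $e=\{y,z\}\in\Ccset$; write the order along $P$ as $u_1,\dots,y,z,\dots,u_2$ with $u_1,u_2\in U$ and $u_1\ne u_2$. Apply Lemma~\ref{lma:hittingset} with $X$ taken to be the union of all \emph{merged} in-trees $R_u$ together with, for each \emph{unmerged} region $R_u$, its $\Active$ part (the path from the root $u$ to the search head $h_u$, which in $H$ is a path terminating at $u\in U$). The components of $X$ are pairwise vertex-disjoint and each has the form required by Lemma~\ref{lma:hittingset}, and $V(X)\subseteq B$, so $V(P)\cap V(X)=\emptyset$. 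Hence Lemma~\ref{lma:hittingset} supplies vertex-disjoint $H$-paths $Q_y$ from $y$ to $u_1$ and $Q_z$ from $z$ to $u_2$, both disjoint from $V(X)$, hence disjoint from every $\Active$ vertex and from every merged region. Since $Q_y$ and $Q_z$ end at vertices of $U$ and contain no $\Active$ vertex, Lemma~\ref{lma:parallelDFS}~(P1) forces every vertex of $Q_y\cup Q_z$ to be $\Dead$, hence to belong to some (necessarily unmerged) region.

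It remains to contradict the maximality of the matching selecting the merged pairs, which is where the real work lies. Let $R_{w_y}\ni y$ and $R_{w_z}\ni z$ be the unmerged regions containing the endpoints of $e$. If $w_y\ne w_z$, then $e\in\Ccset$ bridges the two unmerged regions $R_{w_y}$ and $R_{w_z}$, so they are mergeable, contradicting maximality. If $w_y=w_z$, then $Q_y$ and $Q_z$ cannot both remain inside this single region (they end at the distinct roots $u_1\ne u_2$), so some edge along $Q_y$ or $Q_z$ crosses from an unmerged region into a \emph{different} unmerged region; taking the first such crossing edge $(v,v')$ on $Q_y$, one shows using $e\in\Ccset$, the sub-path $Q_y[y,v]$ that stays inside $R_{w_y}$, a $z$--$U$ path inside $R_{w_y}$, and Lemma~\ref{lma:uniqueCrossableEdge3} together with Lemma~\ref{lma:partialreconstruction}, that the tail $v$ is double reachable, so $R_{w_y}$ and the region of $v'$ are mergeable — again a contradiction. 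The delicate point, and the main obstacle, is exactly this last step: verifying that the two $H$-paths inside $R_{w_y}$ needed to certify double reachability of $v$ can be chosen vertex-disjoint, which requires combining the disjointness of $Q_y,Q_z$ from Lemma~\ref{lma:hittingset} with the in-tree structure of the DFS region and the distance-preservation characterization $\dist^{\Even}_{\Ical}=\dist^{\Even}_{\Ical[R_u]}$ of double reachability (Lemma~\ref{lma:IdentificationDouble}).
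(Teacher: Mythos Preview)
Your size bound and the overall architecture of the hitting-set argument are fine and essentially match the paper: you split $B$ into the active part (bounded via (P2) and $|V(H)\setminus U|=2|M|$) and the merged part (bounded via (P3)), and for the hitting-set claim you reduce, via Lemma~\ref{lma:hittingset}, to a double path $(Q_y,Q_z,y,z)$ disjoint from $B$ and then seek a contradiction with maximality of the region-matching. The paper does the same thing in the reverse order (first show $B$ hits every double path, then invoke Lemma~\ref{lma:hittingset}), so after your reduction you are standing exactly where the paper's double-path argument begins.

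The genuine gap is in your case $w_y=w_z$. You only treat the sub-case where one of $Q_y,Q_z$ stays entirely inside $R_w$: then the other path's first exit vertex $v$ is double reachable because the staying path furnishes the required $z$--$w$ (or $y$--$w$) companion disjoint from the prefix up to $v$. But if \emph{both} $Q_y$ and $Q_z$ exit $R_w$ before reaching $w$, you have no $z$--$U$ path inside $R_w$ to pair with $Q_y[y,v]$, and none of Lemmas~\ref{lma:partialreconstruction}, \ref{lma:uniqueCrossableEdge3}, or \ref{lma:IdentificationDouble} manufactures one: those lemmas translate between partial double paths and short even alternating paths, but they do not produce disjointness that was not already present. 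The ``in-tree structure'' does not help either --- the DFS tree path from $z$ to $w$ can perfectly well intersect $Q_y[y,v]$.

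The paper closes exactly this sub-case with a different idea that you are missing. Let $(v_y,v'_y)$ and $(v_z,v'_z)$ be the first exits of $Q_y$ and $Q_z$ from $R_w$, and suppose neither $v_y$ nor $v_z$ is double reachable. Then every $v_y$--$w$ path in $R_w$ must meet $Q_z[z,v_z]$ (otherwise, concatenated with $Q_y[y,v_y]$ and shortcut to a simple path, it would certify $v_z$ double reachable), and symmetrically every $v_z$--$w$ path meets $Q_y[y,v_y]$. Picking one such intersection on each side and stitching the four sub-paths together yields a directed closed walk in $H$; since every edge of $H$ strictly decreases $\dist^{\Op}$, this forces all four vertices to coincide, contradicting the disjointness of $Q_y$ and $Q_z$. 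This DAG/cycle contradiction is the missing ingredient --- once you add it, your argument is complete and coincides with the paper's.
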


\begin{proof}
By Lemma~\ref{lma:parallelDFS} (P2) and (P3), the size of $B$ is bounded by $16|\Qcal|(\ell + 1)^2 + \frac{|V(H) \setminus U|}{8(\ell+1)}$. Since $2|M| = |V(H) \setminus U|$ holds, we obtain the stated bound.
Hence it suffices to show that $B$ is certainly a hitting set for all shortest augmenting paths.

We first show that $B$ is a hitting set for all double paths.
Suppose for contradiction that a double path $(P, Q, y, z)$ in $H$ does not intersect $B$.
By Lemma~\ref{lma:parallelDFS} (P1), both $P$ and $Q$ lie in the subgraph induced by $\Dead$ vertices.
If $y$ and $z$ respectively belong to different regions 
$R_{u_1}$ and $R_{u_2}$, at least one of $R_{u_1}$ and $R_{u_2}$ is merged with some region (otherwise, $R_{u_1}$ and $R_{u_2}$ must be
merged into one with edge $\{y, z\}$), which implies $y \in B$ or $z \in B$. If both $y$ and $z$ belong to a common region $R_u$, we 
consider the following two cases:

\textbf{(Case 1)} Either $P$ or $Q$ is contained in $R_u$: Without loss of generality, we assume $V(P) \subseteq R_u$. 
Then $P$ terminates with $u$, and $Q$ necessarily goes out from $R_u$. Let $(v_q, v'_q)$ be the first edge in $Q$ which goes out from 
$R_u$. Due to the existence of paths $P$ (from $y$ to $u$) and $Q[z, v_q]$ (from $z$ to $v_q$), one can conclude
that $v_q$ is double reachable.
Hence at least one of $R_u$ and the region containing $v'_q$ must be merged, which implies $Q$ intersects $B$.
It is a contradiction.

\textbf{(Case 2)} None of $P$ and $Q$ is contained in $R_u$: Let $(v_p, v'_p)$ (resp.\ $(v_q, v'_q)$) be the
first edge in $P$ (resp.\ $Q$) going out from $R_u$.
If either $v_p$ or $v_q$ is double reachable, one can lead a contradiction
with the same argument as Case 1. Otherwise, any path $P'$ from $v_p$ to $u$ intersects $Q[z, v_q]$, 
because if it does not intersect, $v_q$ must be double reachable due to the existence of disjoint paths $P[y, v_p] \circ P'$ and 
$Q[z, v_q]$. Similarly, any path $Q'$ from $v_q$ to $u$ intersects $P[y, v_p]$. Then, for any vertex
$x_1 \in V(P') \cap V(Q[z, v_q])$ and $x_2 \in V(Q') \cap V(P[y, v_p])$, $P'[v_p, x_1] \circ Q[x_1, v_q] \circ Q'[v_q, x_2] \circ P[x_2, v_p]$
forms a cycle, contradicting that $H$ is a DAG.

Finally, we show that $B$ is a hitting set of all shortest augmenting paths.
Suppose for contradiction that there exists a shortest augmenting path $X$ 
from $u_1$ to $u_2$ with edge $\{y, z\} \in \Ccset$.
From the construction, $B$ consists of in-trees rooted by vertices in $U$ (some of them may be paths formed by active vertices), which are all mutually disjoint.
By Lemma~\ref{lma:hittingset}, there exists two disjoint paths in $H$ respectively from $y$ to $u_1$
and from $z$ to $u_2$ which do not intersect $B$. It contradicts the fact that $B$ is a hitting set for all double paths. 
The lemma is proved.
\end{proof}

\section{Concluding Remarks}

In this paper, we presented a new structure theorem on shortest alternating paths in general graphs, which bypasses
most of the complication caused by the blossom argument. A key technical ingredient is the notion of alternating
base trees~\cite{KI22,IKY24}. Our new structure theorem extends it so that the framework can address 
the construction of a maximal set of disjoint shortest augmenting paths, whereas the original theorem focuses on
finding a single (shortest) augmenting path in general graphs to obtain efficient distributed message-passing 
algorithms for MCM in general graphs. It includes a short and concise proof of the original theorem in~\cite{IKY24}. Following our new structure theorem, we presented a new framework of finding a maximal set of disjoint shortest augmenting paths, which yields a new MCM algorithm under the standard sequential computation model.
The proposed algorithm is slightly slower but more implementable and much easier to confirm 
its correctness than the currently known fastest algorithm (i.e., the MV algorithm). We also presented two
$(1 - \epsilon)$-approximation matching algorithms in the CONGEST and semi-streaming models, which substantially
improve the best known upper bounds for those models.

The authors believe that our framework can yield new improved MCM algorithms in more various 
settings such as PRAM, MPC, dynamic algorithms, weighted graphs, and so on. The application and/or extension of
our framework for those scenarios is a natural open question to be addressed. Another intriguing open 
question is to refine the framework for bringing more shorter and concise algorithms and proofs. 
Obviously, it is also a highly challenging open problem if one can beat the $O(m\sqrt{n})$ time bound 
for sequential MCM computation in general graphs resorting to our framework.

\subsection*{Acknowledgment} 
Taisuke Izumi was supported by JSPS KAKENHI Grant Number 23K24825.
Naoki Kitamura was supported by JSPS KAKENHI Grant Number 23K16838.
Yutaro Yamaguchi was supported by JSPS KAKENHI Grant Number 25H01114.
This work was also supported by JST CRONOS Japan Grant Number JPMJCS24K2 and by JST ASPIRE Japan Grant Number JPMJAP2302.

\bibliographystyle{plain}
\bibliography{reference}

\end{document}